\newif\ifnotes
\newif\ifsubmit     % hide comments?
\newif\ifllncs      % blarf
\newif\ifexabs      % extended abstract
\newif\ifblind  

\submittrue
\ifllncs
\documentclass[runningheads,a4paper]{llncs}

\else 
\documentclass[11pt]{article}
\usepackage{libertine}
\usepackage{fullpage}
\usepackage[letterpaper,top=2cm,bottom=2cm,left=3cm,right=3cm,marginparwidth=1.75cm]{geometry}
\fi

\usepackage{packages}
\usepackage{general-macros}
\usepackage{project-macros}

\title{On Best-Possible One-Time Programs}
\ifblind
\author{}
\ifllncs
\institute{}
\fi
\else
\author{Aparna Gupte\\ MIT \and Jiahui Liu\\Fujitsu Research \and Luowen Qian\\NTT Research \and Justin Raizes\\NTT Research \and Bhaskar Roberts\\University of California, Berkeley \and Mark Zhandry\\Stanford University}
\fi
\date{}

\begin{document}

\maketitle

\begin{abstract}
One-time programs (OTPs) aim to let a user evaluate a program on a single input while revealing nothing else. Classical OTPs require hardware assumptions, and even with quantum information, deterministic functionalities remain impossible due to gentle-measurement attacks (Broadbent, Gutoski and Stebila, 2013).
While recent works achieve positive results for randomized functionalities with high-entropy outputs, the fundamental limits and the strongest achievable security notions remain poorly understood.

Inspired by analogous successes in the classical obfuscation setting, we ask for a ``best-possible'' analogue of obfuscation for OTPs: a generic transformation that, for any functionality, achieves the strongest one-time security achievable by any construction.
Our first result is negative.
We show that a generic best-possible one-time compiler cannot exist even for classical randomized functionalities.
We prove this under the assumption that lossy encryption schemes exist (e.g.\ from either the Learning with Errors or weakly pseudorandom group actions).
Our proof identifies computationally indistinguishable families for which any best-possible transformation would be forced to behave incompatibly.

Given this impossibility, we introduce a natural subclass of one-time compilers called ``testable one-time program'' compilers, which output quantum states augmented with reflection oracles for themselves.
We show that best-possible security for this subclass, i.e.\ best-possible \emph{testable} one-time compilers, are most likely achievable.
For this, we give two results.
(1) We formulate a simplified, generalized \emph{Single-Effective-Query (SEQ)} simulation security notion for quantum channels and show that SEQ security implies best-possible testable one-time security.
(2) We construct SEQ-secure OTPs for \emph{all} quantum functionalities in the classical oracle model, yielding the first positive results for arbitrary quantum channels beyond classical randomized functionalities.
Thus, SEQ security could serve as a testable one-time analogue of virtual black-box (VBB) security in the many-time obfuscation setting.

Finally, we propose stateful quantum indistinguishability obfuscation (stateful quantum iO) --- quantum state obfuscation for stateful quantum programs.
We show that (1) stateful quantum iO implies best-possible testable OTPs and (2) stateful quantum iO is also achievable in the classical oracle model.
These results identify stateful quantum iO as a promising approach towards best-possible testable OTPs.
\end{abstract}
%\mnote{I'm not sure what security ``ceiling'' means. Does it mean that SEQ/testable is the best we could hope for? Do we actually show this? Maybe we relax the last sentence to say ``... SEQ and stateful quantum iO as a central target for best-possible testable OTPs''} \luowen{I've split the previously unclear last sentence into two last sentences in the last 2 paragraphs}

\ifllncs \else
\clearpage
\tableofcontents
\clearpage

\section{Introduction}

The notion of one-time programs (OTPs) was introduced by Goldwasser, Kalai, and Rothblum \cite{goldwasser2008one}.
An OTP allows a user to evaluate the program on a single input of their choice, while preventing them from learning anything else about the program.
OTPs can be thought of as a strengthening of obfuscation where the only information revealed is the output on a single input. If realized, OTPs would have many applications throughout cryptography such as software protection.
%\mnote{Since QIP isn't necessarily a crypto audience, it may be good to highlight the importance of OTPs by giving some quick example applications.}

It is not hard to see that OTPs cannot be achieved in the plain model, as a user can always copy the program and evaluate it on multiple inputs.
Thus the original work of Goldwasser et al.\ \cite{goldwasser2008one} considered constructing OTPs in the hardware token model.
Subsequently, Broadbent, Gutoski, and Stebila \cite{BGS13} ruled out constructing OTPs with the aid of quantum information for any deterministic classical functionality. Although quantum mechanics forbids cloning, \cite{BGS13} showed that multiple evaluations can still be achieved generically via gentle measurements. Therefore, hardware assumptions remain necessary for secure OTPs even with quantum information.

By contrast, a %\mnote{It's not really recent; the work first appeared in 2016} 
work by Ben-David and Sattath \cite{BS23} constructed “quantum one-time signature tokens,” which can be viewed as a quantum one-time program for the signing functionality. This result circumvents the impossibility of \cite{BGS13} by implicitly leveraging the fact that the signing functionality is a \emph{randomized} (classical) functionality with high entropy outputs. More recently, two works \cite{gupte2025quantum,gunn2024quantum} revisited the definitions of quantum OTPs.
Specifically, they proposed and proved the security for a quantum construction of OTPs for general randomized (classical) functionalities with entropic outputs, the first general positive result without hardware assumptions\footnote{From this point onwards, we will exclusively focus on OTPs without hardware assumptions.}.

Despite these exciting new results, if we look more closely at the intuitive ideal security goal of only ``revealing information about a single input'', we see that even randomized functionalities do not completely circumvent the impossibility of \cite{BGS13}.
In particular, as pointed out in \cite{gupte2025quantum}, if it were possible to deterministically extract some piece of information from the probabilistic output (say, all possible outputs for a given input had the same parity), then this information can be extracted from multiple inputs by a generalization of \cite{BGS13}.
This seemingly violates the intuition for what a one-time program should be.
The prior works \cite{gupte2025quantum,gunn2024quantum} do give some one-time security guarantee but it is unclear what this actually means for the one-time security of general programs, and whether their notions are the strongest security one could hope for. In particular, \cite{gupte2025quantum} propose a strong simulation security notion of one-time security, but leave open the question whether it is the strongest notion of one-time security one could hope for.
%\luowen{You mean best-possible testable one-time?}
For example, it would be ideal to have a security guarantee that says that the program protected is one-time except for certain classes of attacks such as the one above.
%The prior definitions and analyses such as \cite{gupte2025quantum,gunn2024quantum} nevertheless apply in this setting, and would claim a ``secure'' one-time program in such a setting. This shows that the prior definitions are are not strong enough to truly capture what a one-time program should be.
We further discuss the security definition from prior works in \Cref{sec:more_discussions_prior_defs}.

Given the discussion, the following fundamental question naturally arises.
\begin{center}
    \emph{For a given family of functionalities,\\ what is the strongest achievable one-time security notion for quantum OTPs?}
\end{center}
%Is there a satisfactory definition we could hope to achieve in the plain model?

For inspiration, let us momentarily turn our attention to an analogous problem but for classical obfuscation, the goal of hiding the implementation of a program while maintaining its input-output behaviors.
Here, a similar issue arises. Virtual black-box (VBB) obfuscation is the natural ideal notion for obfuscation: the behavior of any adversary receiving the obfuscated program cannot be distinguished from that of a simulator with black-box access to an oracle computing the functionality.
But it is provably impossible \cite{barak2012possibility} for certain functionalities. Due to this general impossibility, two approaches have been taken. First, \cite{barak2012possibility} propose a weaker notion of indistinguishability obfuscation (iO), which roughly states that the obfuscations of two equivalent programs are indistinguishable. This notion avoids the impossibility, and is even potentially achievable (e.g.~\cite{garg2016candidate,JLS20-io-wellfounded} in the pre-quantum setting, or~\cite{BGMZ18,HsiJaiLin25}). On the other hand, it is a priori unclear what sorts of guarantees iO provides, as we usually care about the security of a single program, and it is not clear we gain an insights into the security of a particular program by looking at equivalent programs.

The other direction is to stick with VBB, but show that it \emph{is} achievable for some very specific functionalities \cite{canetti2008obfuscating, lynn2004positive,wichs2017obfuscating,goyal2017lockable}. Unfortunately, there is a wide gulf between what is known to be VBB obfuscatable and what is known to be un-obfuscatable.

These two directions both have major limitations. Fortunately, the work of Goldwasser and Rothblum \cite{goldwasser2007best} provides a satisfying way to unify both approaches through the lens of ``best possible'' obfuscation. Instead of trying to determine whether a given functionality can be obfuscated, they instead just try to give an obfuscator which is ``best possible'', in the intuitive sense that if a functionality can be obfuscated security by \emph{any} obfuscator, then the given obfuscator also obfuscates that functionality securely. Surprisingly, they show that such best-possible obfuscation is actually \emph{equivalent} to iO. With the subsequent emergence of iO constructions, we now have best-possible obfuscation for all programs. Now, in order to VBB obfuscate some functionality for which VBB obfuscation is possible, all that is necessary is to show that \emph{some} obfuscator exists, which then implies that any iO scheme is in particular a VBB obfuscator for that functionality.

%showing that the strongest achievable security notion for classical obfuscation can always be achieved via \emph{indistinguishability obfuscation (iO)}. Formally, they show that the notion of best-possible obfuscation is equivalent to iO. The importance of this equivalence is that even though we do not know to what extent the information obfuscation can protect the information in the circuit, we would know that regardless iO would give us the strongest possible guarantee. This paves the way for later works investigating the applications of obfuscation since they could simply focus on the security guaranteed by iO \cite{}.

Inspired by the success of best-possible obfuscation and iO, it is natural to ask if there is also an analog of a best-possible obfuscator for quantum OTPs, i.e.\ whether there exists a generic quantum OTP transformation that always achieves the strongest possible one-time security for any possible functionality. More succinctly,
\begin{center}
    \emph{Is there a best-possible one-time compiler?}
\end{center}
%\mnote{I expanded the above discussion a bit. I think what was missing was the part that iO offers a way to ``obfuscate'' programs, but it is unclear a priori what it means. This in some way parallels what happens with SEQ, where it was unclear from GLR+ what it actually meant. I think this helps frame SEQ as a really natural choice for the analog of iO for OTPs.}\luowen{lgtm; however, SEQ is the analog of VBB for testable OTPs, not iO for OTPs :)}

\subsection{Our Results on Best-Possible One-Time Programs}
The results in this work on best-possible one-time programs are three-fold.
First, our work gives convincing evidence that perhaps surprisingly, there is rather \emph{no} generic way to achieve best-possible one-time programs (with or without oracles).
Second, we also show that best-possible security is achievable via a simulation security for a very natural class of quantum one-time compilers called ``testable one-time programs'' in the oracle model.
% \anote{Maybe we should say here that this new notion of security is equivalent to SEQ for the case of classical randomized functions?} \luowen{I am moving this to the next subsection}
Finally, we point to a plausible approach for constructing best-possible testable quantum one-time programs in the plain model.
% \anote{I think we should also add a line about definitions for QSIO here: Interestingly, our results point us to the right definition of quantum state indistinguishability obfuscation for sampling programs.} \luowen{It's in Discussions right now. Putting it here would be distracting}

%\mnote{One nitpick: for the positive result, we say that security is achievable for a ``class of quantum one-time programs.'' Certainly what we mean is a class of program obfuscators, not the class of programs that are being obfuscated. The current phrasing may not make this distinction clear. Second, in the final sentence, we say ``constructing best-possible one-time programs for that class'' The use of ``for'' actually seems to suggest the class refers to the programs being obfuscated. If we want to keep with ``one-time programs'' referring to the obfuscator, I think ``for'' should replaced with ``in''. but we may also want to generally call them ``one-time program obfuscators'', just to be clear.}\luowen{good points. fixed}

\paragraph{Impossibility of Best-Possible One-Time Programs.}
First, we introduce the definition of best-possible one-time programs, \`a la ``best-possible obfuscation'' of \cite{goldwasser2007best}. 
For this result, we state impossibility for randomized classical functionalities.
This already rules out generic best-possible compilers for \emph{all} quantum functionalities, since randomized classical functionalities are a special case.

\begin{definition}[Informal; \Cref{def:bp-sim}]
    A one-time compiler $\OTP^*$ is \emph{best-possible} for a family $\calF$ of sampling functionalities if there exists a simulator $\Sim$ such that for any sampling functionality $f \in \calF$, and for any quantum program $P$ that \emph{one-time} implements $f$, $\OTP^*(f, 1^{|P|})$ is computationally indistinguishable from $\Sim(P, 1^{|f|})$.
\end{definition}
Intuitively, this definition captures the idea that among all the programs $P$ that implement the sampling functionality $f$, $\OTP^*(f)$ leaks the minimum amount.
The length of the program is also unavoidably leaked similar to indistinguishability obfuscation and best-possible obfuscation \cite{goldwasser2007best}.
%\mnote{Should we add a one-sentence remark saying that this is analogous to the GR07 definition of best-possible obfuscation? Personally, I find this way of formalizing best-possible to be less intuitive, but it makes sense to use it given the precendent of GR07.} \luowen{fixed (why less intuitive?)}\mnote{It's a fine definition, but personally it takes some time to parse and get a sense for what it means. If I had written GR07, I would have defined best-possible obfuscation in in a game-based way, saying that for any game if any obfuscation is secure in this game, then so is the best-possible obfuscation. My definition is certainly more complicated, but it seems more clear why it captures "best possible". To be clear, I'm not saying we should change anything about our paper, since we want to parallel GR07 to the extent possible. Just offering my perspective on the definitions.}\luowen{I think we've thought about this and this variant appears much harder to formalize. Which family of security games do you choose?}

Our main theorem rules out best-possible one-time program compilers that work for any program, even if restricted to randomized classical functionalities.
%In fact, our proof shows that even the weaker notion of achieving one-time VBB if possible is also impossible, under rather reasonable computational assumptions.
%\jiahui{Should we explain what we mean by "one time obfuscators" first? Or simply call them one-time programs?}
%\luowen{How about one-time program obfuscator?}
%\luowen{Is it better to defer talking about oracles here?}
%\jiahui{The current statemment looks good to me. Maybe add a sentence that we can prove our result in the non-relativized setting and extend it to the all-oracle relativized seeting. Since proving the latter does not imply the former}
%\luowen{I reordered the two sentences. This is better?} \jiahui{looks good to me}
\begin{theorem}[Informal; \Cref{thm:impossibility-of-bp-otp-from-lossy-pke}]
    Best-possible one-time compilers do not exist unless lossy encryptions do not exist.
    This holds relative to all oracles.
\end{theorem}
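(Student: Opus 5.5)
The plan is to argue by contradiction. Suppose $\OTP^*$ is a best-possible one-time compiler with simulator $\Sim$. We will exhibit two computationally indistinguishable families of sampling functionalities, built from a lossy encryption scheme, such that the two families force $\OTP^*$ to behave incompatibly.

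The idea is to exploit the fact that best-possibility quantifies over \emph{all} programs $P$ that one-time implement $f$. Let $(\mathsf{KeyGen},\mathsf{Enc},\mathsf{Dec})$ be lossy, with injective keys $pk_{\mathrm{inj}}$ and lossy keys $pk_{\mathrm{loss}}$ indistinguishable. Define the sampling functionality $f_{pk,s}(x)$ to output a fresh encryption $\mathsf{Enc}(pk, (x,s); r)$ with uniform $r$, possibly padded with extra randomness. The two regimes then behave differently:
\begin{itemize}
\item When $pk$ is lossy, the output distribution of $f_{pk,s}$ is statistically independent of $s$. Moreover, $f$ is essentially a fixed high-entropy sampler, so there is a trivially secure program $P_{\mathrm{loss}}$ implementing it (the prior OTP constructions for entropic randomized functionalities, or even a classical description of $pk$ with no secret). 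From it, $\Sim(P_{\mathrm{loss}})$ reveals nothing about $s$.
\item When $pk$ is injective, the outputs determine $(x,s)$ via the secret key. We want a program $P_{\mathrm{inj}}$ that one-time implements $f$ but whose every copy necessarily lets an evaluator, or a holder of $sk$, extract $s$. Alternatively, we use the implicit requirement that $\OTP^*(f)$ must correctly sample $f$ on \emph{every} input. A gentle-measurement style argument in the spirit of \cite{BGS13} then shows that any correct compiled program lets one learn, say, $\mathsf{Dec}_{sk}$-invariant information on many inputs.
\end{itemize}

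The cleaner route I would actually pursue is a ``forced leakage versus forced hiding'' pair. First, choose $f$ whose output on input $x$ under an injective key determines a deterministic function $g(x)$ (namely $\mathsf{Dec}(sk,\cdot)$ restricted to a component). By the generalized BGS attack, anyone holding $\OTP^*(f)$ together with $sk$ can recover $g$ on two inputs, while a correct one-time implementation $P$ fed to $\Sim$ must already make this possible, which is consistent. Second, swap $pk$ to lossy. Now $g$ is undefined and $f$ is input-independent up to statistics. A one-time-secure $P$ exists that hides the description entirely, so $\Sim(P, 1^{|f|})$ hides $s$. Third, compare $\OTP^*(f_{pk_{\mathrm{inj}}})$ with $\OTP^*(f_{pk_{\mathrm{loss}}})$. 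These are indistinguishable because $\OTP^*$ is efficient and key lossiness is indistinguishable. By best-possibility in the lossy case, the first is also indistinguishable from a state hiding $s$. Yet in the injective case, correctness plus $sk$ lets one decrypt outputs and read off $s$. The reduction needs $sk$, so I would set it up so that the distinguisher receives $sk$ via an injective-mode trapdoor, while the lossy-mode indistinguishability is argued with $sk$ replaced by an equivalently distributed lossy key. The point to verify is that key indistinguishability holds even given whatever auxiliary information the distinguisher needs. To avoid needing $sk$ at all, I would instead make $s$ a message whose ciphertext is embedded in the \emph{functionality description} and have the injective regime leak $s$ through the sizes or structure of the outputs.

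The main obstacle is this third step: producing a contradiction without the distinguisher knowing $sk$. The plan is to use the simulator itself. $\Sim$ takes a program $P$, and in the injective case I would feed it a program $P_{\mathrm{inj}}$ that hardcodes $sk$ and still one-time implements $f$. Then $\Sim(P_{\mathrm{inj}})\approx\OTP^*(f)\approx\OTP^*(f_{\mathrm{loss}})\approx\Sim(P_{\mathrm{loss}})$, and I would derive a distinguisher from the functional gap between what $P_{\mathrm{inj}}$ allows and what $P_{\mathrm{loss}}$ allows. Relativization is then immediate, since every step is black-box in the primitives and the compiler, so the argument goes through relative to any oracle.
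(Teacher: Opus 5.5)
Your proposal has a genuine gap: the contradiction promised in the third step never materializes. The only consequence you draw from best-possibility in lossy mode is that $\OTP^*(f_{\pk,s})$ hides $s$. That step is valid: use a program encrypting $(x,0)$ instead of $(x,s)$. But hiding $s$ is perfectly consistent with injective mode. A lossy scheme is IND-CPA in injective mode, so even an ideal one-time program for $f_{\pk_{\mathrm{inj}},s}$ reveals nothing about $s$ to anyone lacking $\sk$.

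Your only way to exhibit leakage is decryption with $\sk$, and such a distinguisher cannot pass through the mode-indistinguishability step. That step holds for $\pk$ alone: given $\sk$, an encrypt-then-decrypt check separates the modes, and in lossy mode there is nothing to decrypt. Handing $\Sim$ a program $P_{\mathrm{inj}}$ with $\sk$ hardcoded does not repair this. Best-possibility only says $\Sim(P_{\mathrm{inj}})$ is indistinguishable from an adversary's view of $\OTP^*(f)$, so it forces nothing to leak. The ``functional gap'' between $P_{\mathrm{inj}}$ and $P_{\mathrm{loss}}$ is never turned into a concrete efficient test.

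The paper's proof rests on two ideas you are missing.

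First, it uses a \emph{trapdoor-free} test separating constant-output programs from injective ones. Prepare $\sum_x \ket{x}\ket{x}$, run the program on the first half, copy the output, uncompute, and check whether the EPR state is restored. A correct compilation of a constant function passes with overwhelming probability. A correct compilation of $\Enc(\pk_{\mathrm{inj}},\cdot\,;\cdot)$ fails with overwhelming probability, simply because ciphertexts of distinct messages have disjoint supports. Since the test needs no $\sk$, mode indistinguishability forces $\OTP^*(\Enc(\pk_{\mathrm{loss}},\cdot\,;\cdot))$ to fail too.

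Second, it uses the fact that best-possibility quantifies over \emph{mixed} implementations. In lossy mode, $d=\Enc(\pk_{\mathrm{loss}},\cdot\,;\cdot)$ is one-time implemented by the uniform mixture over $r$ of the constant programs $P_r\colon x\mapsto\Enc(\pk_{\mathrm{loss}},0;r)$. Put these constant functions into $\calF$. Then correctness of $\OTP^*$ on them, plus best-possibility, makes each $\Sim(P_r)$ pass the test. Linearity of $\Sim$ in its mixed input makes $\Sim$ on the mixture pass. Best-possibility for $d$ then makes the adversary's view of $\OTP^*(d)$ pass, which is the contradiction.

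Your $P_{\mathrm{loss}}$ samples fresh encryptions rather than being a mixture of constants, so it pins down no observable behaviour that conflicts with the injective world. In your argument best-possibility is only ever used to derive hiding, and hiding contradicts nothing here.
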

%Combining this with constructions of lossy encryptions (see \Cref{thm:lossy-pke-from-LWE,thm:lossy-PKE-from-group-actions}), we obtain the following.
\begin{corollary}[Informal; \Cref{thm:impossibility-of-bp-otp}]
\label{cor:impossibility-lwe}
    Assuming either Learning with Errors (LWE) is hard or weak pseudorandom effective group actions\footnote{The impossibility from LWE is slightly weaker: in that case, we only rule out best-possible one-time programs among those that one-time implements the same functionality up to a negligible statistical distance; the reason is basically that the lossiness of the LWE-based construction is only statistically close. The impossibility from group actions rules out the weaker notion of best-possible one-time programs among those that one-time implements the exact same functionality (with no error).}, best-possible one-time compilers do not exist.
\end{corollary}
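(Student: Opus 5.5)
The corollary follows by combining the main impossibility theorem (\Cref{thm:impossibility-of-bp-otp-from-lossy-pke}) with known constructions of lossy encryption. The plan is to exhibit, under each assumption, a lossy public-key encryption scheme with the exact flavor of lossiness the theorem consumes, and then invoke the theorem.

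\textbf{Step 1 (LWE case).} I would use a Regev-style scheme. The public key $(A, b = sA + e)$ is computationally indistinguishable from a uniform pair $(A,u)$; in lossy mode, encryptions of $0$ and $1$ are statistically close by the leftover hash lemma. Here lossiness holds only up to negligible statistical distance, not perfectly. So when the theorem's proof builds two functionality families from the injective and lossy keys and argues they "one-time implement the same functionality," this equality holds only up to negligible statistical error.

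\textbf{Step 2 (group-action case).} For weak pseudorandom effective group actions, I would take lossy encryption in which the lossy mode is perfectly lossy. One candidate is a key consisting of pairs $(x, g\cdot x)$ versus independent random set elements. With a transitive (regular) action, a random element is exactly $h\cdot x$ for uniform $h$, so ciphertexts in lossy mode can be made exactly independent of the message. This gives the stronger, zero-error version of the impossibility.

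\textbf{Step 3 (combine).} I would plug each scheme into \Cref{thm:impossibility-of-bp-otp-from-lossy-pke}. In the LWE case, the best-possible definition must be read with "implements up to negligible statistical distance." I would check that the theorem's reduction tolerates this relaxation by running a hybrid over the negligible statistical error. For the relativized claim, both constructions are oracle-independent, so only the base theorem needs to relativize.

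The main obstacle is matching the lossiness notion, exact versus statistical, to the precise "one-time implements" condition in \Cref{def:bp-sim}. I would state the two cases separately, as the footnote indicates.
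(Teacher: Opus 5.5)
Your proposal is correct and has the same structure as the paper's proof. You obtain statistically lossy PKE from LWE and perfectly lossy PKE from weakly pseudorandom group actions, then invoke \Cref{thm:impossibility-of-bp-otp-from-lossy-pke}. That theorem is already stated in matched versions: statistical lossiness rules out statistical equivalence, and perfect lossiness rules out perfect equivalence.

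Your group-action sketch is the paper's folklore construction. The lossy key is $(x_0, g\star x_0, h\star x_0, (g+h)\star x_0)$, the injective key ends in $k\star x_0$ with $k\neq g+h$, and bit $b$ is encrypted by applying a random $u$ to the $b$-th pair.

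The one genuine difference is the LWE instantiation. You take Regev's scheme with a uniform public key as the lossy mode, a route the paper only mentions in a footnote. The paper instead composes Peikert--Waters lossy trapdoor functions with a pairwise-independent permutation and applies the crooked leftover hash lemma (\Cref{thm:lossy-pke-from-LWE}). Your route is more direct and standard; the paper's shows that any LTF suffices.

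If you use Regev, check the exact requirements of \Cref{def:lossy-pke}:
\begin{itemize}
\item Bound the errors so that, on most injective keys, decryption is correct for \emph{all} encryption randomness. \Cref{thm:A-e-outputs-1} relies on the supports $\calY_x$ being disjoint.
\item The theorem is instantiated with message space $\{0,1\}^\secp$, so encrypt bitwise. The same remark applies to the one-bit group-action scheme.
\item Convert the average-case leftover-hash bound into one that holds with overwhelming probability over the key, via Markov.
\end{itemize}

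One correction to your account of the theorem's internals. The injective-key and lossy-key families are only computationally indistinguishable; the proof never claims they implement the same functionality. The lossiness error enters only in \Cref{thm:d-is-lossy}. There, the uniform mixture over the constant programs outputting $\Enc(\pk,0^\secp;r)$ implements the lossy-key function $d$ with error $\mu$. This is absorbed by taking $\mu$ in \Cref{def:bp-sim} to be the lossiness error, so no extra hybrid is needed.

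This is also why the LWE case only rules out the statistical-equivalence notion. Under perfect equivalence, the simulator is never obligated on that approximately-correct mixture program, so a hybrid over the statistical error does not close the gap.
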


Even though we state our theorem as the impossibility of the best-possible one-time compiler as we defined above, our proof does not actually rely on the subtleties in defining best-possible one-time programs.
In particular, we prove this by identifying two classes of randomized programs that are computationally indistinguishable yet the best-possible one-time compilers must work very differently, leading to a contradiction.
We elaborate on this in the technical overview and even further in \Cref{sec:intro-discuss}.

\paragraph{Best-Possible Testable One-Time Programs.}
We now turn to a positive direction.
Instead of comparing against all one-time implementations, we compare within a natural restricted class and ask for best-possible security among what we call ``testable one-time compilers''.

\begin{definition}[Informal; \Cref{def:testable-prog}]
    A one-time compiler $\OTP(P)$ is \emph{testable} if it (possibly randomly) outputs two programs $(\ket{\tilde P}, R)$ such that $\ket{\tilde P}$ is the one-time program that one-time implements $P$ and $R$ implements the reflection unitary $I - 2\ketbra{\tilde P}$.
\end{definition}

Intuitively, this is the class of one-time compilers whose outputs are well-defined pure states: the pure state $\ket{\tilde P}$ that one-time implements the functionality is ``defined'' by the reflection program $R$.
We emphasize that the obfuscator's output can still be a mixed state, in which case the mixed state should be entirely supported on testable pure state programs.
Crucially, this definition disallows a mixed state program that \emph{on average} one-time implements $P$.
Looking ahead, this restriction is exactly what breaks the counterexample construction in the impossibility proof: that construction uses mixed implementations that are only correct on average, which testability excludes.
We further expand on this subtlety later in \Cref{sec:intro-discuss}.

The main motivation for this definition is that there are concrete and recurring ways to modify most existing one-time compiler constructions to become testable\footnote{The only counterexample that we are aware of is given in this work in the impossibility result, which is arguably contrived.}.
For starters, a trivial case is the classical-state case: if $\ket{\tilde P}$ is always a computational-basis string, then one can first read out that string (equivalently, measure in the computational basis) and then efficiently implement the exact reflection $I-2\ketbra{\tilde P}$ coherently via an equality check and phase kickback.
The most common pattern for designing a one-time compiler is to issue an uncloneable token that collapses upon use, along with an obfuscated classical program that only works when the two-basis measurement outcome on the token is correct; implementing the reflection in this case is also straightforward by simply making use of the obfuscated classical program to check if the token is undisturbed.
More generally, the reflection oracle can usually be implemented by simply gently measuring if the one-time program is still functional.
In fact, we conjecture that the program implicitly having a reflection oracle might be an unavoidable property for many natural function classes, since an approximate version of this test is always possible if the output of the functionality is verifiable, such as a signature token or a one-time NIZK.
%\mnote{Maybe I'm missing something, but this doesn't seem right to me. Any test just based on the functionality alone may accept a general subspace, right? In which case, it wouldn't satisfy the reflection property as defined.} \luowen{Yes, that's why it's a ``conjecture'' not a fact/theorem}
%\justin{I remember discussing at some point the possibility of reflecting around a subspace as our ``testable'' definition. I think there was a specific reason we didn't do it, but I don't recall what it was.}\luowen{Probably because we don't know how to prove anything about it then?}

Our next contribution shows that best-possible testable one-time programs, i.e.\ one-time programs that are best-possible among testable ones, are achievable in the oracular setting via a simulation-based security, which is a revised and generalized version of Single Effective Query (SEQ) security from \cite{gupte2025quantum}.
We further elaborate on this in \Cref{sec:intro-quantum-functionality}.
% \anote{There is a mismatch between the definition (testable one-time \textit{obfuscators}) and this paragraph (testable one-time \textit{programs})}
% \luowen{I am thinking of it as the analogue as program obfuscators vs obfuscated programs}\mnote{Right, but definition 2 referrs to ``one-time program obfuscator'' whereas this paragraph refers to ``one-time program'', and I think both are meant to be referring to the same thing.}\luowen{Changed all to one-time compilers: this matches GR07}
% We will give an intuitive explanation on this notion in the upcoming paragraphs.

\begin{theorem}[Informal; \Cref{cor:seq-otp-classical}]
    \label{thm:informal-cseq-oracle}
    There exists a classical oracle relative to which there exists an SEQ-secure one-time compiler for all quantum non-oracular functionalities.\footnote{The oracle here is simply used to obfuscate and evaluate programs.
Here we disallow the functionalities being obfuscated to access the oracle to avoid the VBB-style impossibility \cite{barak2001possibility,gupte2025quantum}.
In comparison with our impossibility for best-possible one-time programs (instead of testable ones), the functionalities considered by \Cref{cor:impossibility-lwe} are nonoracular and still hold in our oracle model assuming LWE/group action assumptions.}
% This also does not contradict the gentle-measurement impossibility \cite{BGS13} since SEQ security allows for gentle measurement.
%\mnote{Why ``plausibly''? Don't they hold under LWE/group actions?}\luowen{Having an obfuscation oracle might allow you to break LWE/EGA? Maybe there is a simple proof showing that it wouldn't that I'm not seeing}\mnote{No, this cannot be. The obfuscation oracle is efficiently simulatable. So if they were broken relative to an obfuscation oracle, they are broken in the plain model as well.} \luowen{Yes that's the simple proof that I was missing :)}
%The oracle here is only used for obfuscation and evaluation as is standard.
\end{theorem}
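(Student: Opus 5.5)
We construct the compiler directly in the classical oracle model, generalizing the \cite{gupte2025quantum} construction for classical randomized functionalities to arbitrary quantum channels. Write the target channel $\Phi$ via a Stinespring dilation $U_\Phi$: it acts on the input register together with a fresh ancilla, and we trace out an environment register. The compiler samples a random subspace (coset) state $\ket{A_{s,s'}}$ to serve as an uncloneable token. It then gives out the token along with a classical oracle $\mathcal{O}$, which we model as an ideal obfuscation of the membership checks for $A+s$ and $A^\perp+s'$ together with a keyed description of $U_\Phi$. Evaluation works as follows. The user coherently queries $\mathcal{O}$ on the token and the input. The oracle checks that the token lies in $A+s$, i.e.\ the computational-basis check, and uses a pseudorandom value derived from the token vector to determine the randomness/ancilla state fed to $U_\Phi$. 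Because a classical oracle cannot directly implement a quantum channel, we instead let the oracle release, conditioned on the check passing, a quantum-one-time-pad-style key or a classical description that lets the user apply $U_\Phi$ themselves. Evaluating this collapses the token. The dual check $A^\perp+s'$ is what enables a reflection oracle, which gives testability.

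The security proof proceeds in three steps. \textbf{(i)} Replace the oracle by an idealized version using the standard oracle-indistinguishability and one-way-to-hiding arguments. \textbf{(ii)} Invoke the direct-product / monogamy property of subspace states. Any adversary can obtain at most one ``effective'' vector in $A+s$ that is queried in a decohering manner, so its entire interaction with $\mathcal{O}$ can be decomposed as: an arbitrary adversarial isometry, then a single application of $\Phi$ to some (possibly entangled) input register, then a further isometry. \textbf{(iii)} Build the simulator. It runs the adversary against a simulated oracle generated from a fresh subspace of its own. It uses the compressed-oracle or purification technique to detect the unique point at which the adversary ``commits'' to a token vector, and at that point forwards the query to its single-query access to $\Phi$. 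We formalize this as SEQ security: the simulator may query $\Phi$ once, coherently with a gentle-measurement-friendly interface, matching the generalized SEQ notion referred to in \Cref{sec:intro-quantum-functionality}.

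The main obstacle is step (iii) for genuinely quantum $\Phi$. In the classical randomized case, extraction relies on recording classical inputs and outputs. For quantum channels, the simulator must instead reprogram a quantum evaluation in superposition without disturbing the adversary's state. Our first attempt will be to purify the channel's randomness into an environment register held by the oracle, in the style of a compressed oracle. We then argue that queries on distinct token branches are orthogonal, which lets the single effective query be identified coherently, and we bound the simulation error via a gentle-measurement / hybrid argument over the number of queries. Finally, for the footnote's restriction, we note that $\Phi$ is non-oracular, so the VBB-style impossibility does not apply.
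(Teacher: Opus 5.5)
There is a genuine gap, and it sits exactly where your construction departs from \cite{gupte2025quantum}.

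\textbf{The construction.} Deriving the channel's randomness from a pseudorandom function of the measured token vector has no analogue for a quantum channel. The Stinespring ancilla of $U_\Phi$ is always $\ket{0^m}$, and the ``randomness'' of $\Phi$ is entanglement between the output and the environment. A general channel is also not a convex combination of isometric channels; amplitude damping, for instance, is an extremal channel of Kraus rank $2$. So there is no classical seed to bind to a token branch. Without one, the output on a branch is not determined, which is what your compressed-oracle-style extraction relies on.

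Your fallback breaks security outright. If the oracle releases ``a quantum-one-time-pad-style key or a classical description that lets the user apply $U_\Phi$ themselves,'' the user can apply $U_\Phi$ locally on as many inputs as it likes and learns a description of $\Phi$. No simulator with access only to $O^{\SEQ}_\Phi$ can reproduce that. Two ingredients are missing:
\begin{itemize}
\item a way for a classical oracle to run a \emph{hidden} quantum computation on the user's quantum input, i.e.\ ideal obfuscation of unitaries in the classical oracle model \cite{arxiv:HT25};
\item a way to keep the SEQ interface's private registers $\calP$ and $\calC$ persistent and inaccessible across queries.
\end{itemize}

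\textbf{The security target.} Your step (ii) decomposes the adversary as ``an isometry, a single application of $\Phi$, an isometry.'' That is single-physical-query simulatability, which no correct compiler achieves even for deterministic functionalities, because of the gentle-measurement attack of \cite{BGS13}. In your own token scheme, evaluations with (near-)deterministic outputs do not collapse the token, so monogamy gives you nothing there. SEQ also does not let the simulator ``query $\Phi$ once.'' Per \Cref{def:CSEQ-oracle,def:CSEQ-security}, it makes polynomially many queries to the self-adjoint unitary $O^{\SEQ}_\Phi$, whose registers $(\calP,\calC)$ are hidden. Step (iii), which you leave as a ``first attempt,'' is where all the difficulty would lie.

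\textbf{The paper's route.} The paper avoids token extraction entirely. $O^{\SEQ}_\Phi$ is itself a stateful quantum program: the unitary of \Cref{def:CSEQ-oracle} with initial state $\ket{0^m}_{\calP}\ket{0}_{\calC}$. So it suffices to ideally obfuscate stateful programs, which \Cref{thm:ideal-sqO} does as follows.
\begin{itemize}
\item It encodes the internal state under a publicly verifiable quantum authentication scheme (coset states \cite{STOC:BBV24}).
\item It obfuscates, via \cite{arxiv:HT25}, the unitary ``decode, apply the program unitary, re-encode.''
\item It proves security by hybrids that teleport the state through authenticated EPR halves.
\end{itemize}
The ideal-obfuscation simulator then has oracle access to exactly $O^{\SEQ}_\Phi$; inverse queries coincide with forward ones by self-adjointness. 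That is SEQ security by definition, giving \Cref{cor:seq-otp-classical}.
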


\begin{theorem}[Informal; \Cref{thm:cseq-testable-bpotp}]
    Relative to all oracles, any one-time compiler that achieves SEQ security is best-possible among testable one-time compilers.
\end{theorem}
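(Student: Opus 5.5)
The plan is to build the best-possible simulator directly from the SEQ simulator, and to use testability of the competing program $P$ to show that this simulator has the oracle access it needs. Let $\OTP$ be SEQ-secure with simulator $\Sim_{\mathsf{SEQ}}$. For the best-possible definition (restricted to testable compilers) we must give one simulator $\Sim$ with the following property: for every functionality $f$ and every testable one-time implementation $(\ket{\tilde P}, R)$ of $f$, the output of $\Sim((\ket{\tilde P},R), 1^{|f|})$ is indistinguishable from $\OTP(f)$. We define $\Sim$ to run $\Sim_{\mathsf{SEQ}}$ on the length parameters. Whenever $\Sim_{\mathsf{SEQ}}$ asks a single-effective-query oracle for $f$, $\Sim$ answers it using the given program $\ket{\tilde P}$ and the reflection $R$. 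The triangle inequality then gives
\[
\OTP(f) \approx \Sim_{\mathsf{SEQ}}^{\mathcal O_f} \approx \Sim_{\mathsf{SEQ}}^{\mathcal O_{(\tilde P, R)}}.
\]
Here the first step is SEQ security, and the second step is what must be established.

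The main step is to show that a testable one-time implementation $(\ket{\tilde P},R)$ lets us implement the SEQ oracle for $f$, or something indistinguishable from it. An SEQ oracle allows arbitrarily many queries, but the queries must be effectively single. One natural formalization is a stateful oracle holding a register that starts in a fixed state. The oracle answers a query by evaluating $f$ coherently, and it later permits an uncompute or rewind, so only one query is effective. We would implement it as follows:
\begin{enumerate}
\item Store $\ket{\tilde P}$ in the oracle's private register.
\item Answer a query by running the evaluation of $\tilde P$ on the query input.
\item Use the reflection $R$, combined with the input/output structure, to detect and undo the evaluation when the adversary rewinds. This is the standard gentle-measurement and rewinding pattern, and it is exactly what SEQ permits.
\end{enumerate}
Because $R$ reflects about $\ket{\tilde P}$, the projector $\ketbra{\tilde P}$ is implementable. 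This lets the simulator check whether the program state is restored, and in effect project back onto the span of $\ket{\tilde P}$, which reproduces the SEQ oracle's behavior on the invariant subspace. Correctness of $\tilde P$ as a one-time implementation of $f$ ensures that the first effective evaluation produces the distribution of $f$. Purity, which is guaranteed by testability, ensures that there is no hidden mixture that is correct only on average. This is precisely where the counterexample from \Cref{thm:impossibility-of-bp-otp-from-lossy-pke} fails.

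Step (iii) is the main obstacle. The difficulty is arguing that the simulated oracle, built from $(\ket{\tilde P},R)$, is indistinguishable from the ideal SEQ oracle for an adversary making many queries. We would first try to prove a Jordan-lemma or alternating-projector decomposition. The two projectors are $\Pi_0=\ketbra{\tilde P}$ and $\Pi_1$, the projector onto states from which $f$ has been effectively queried at most once. The goal is to show that each query acts like the ideal SEQ oracle up to error bounded by the one-time correctness error of $\tilde P$. If this fails, a fallback is to prove that any adversary distinguishing the two oracles yields a second effective query. That second query would extract two independent outputs of $f$ from $\ket{\tilde P}$, contradicting that $\tilde P$ one-time implements $f$. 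The statement holds relative to all oracles because every step is a black-box use of $\Sim_{\mathsf{SEQ}}$, $\ket{\tilde P}$, and $R$, so every step relativizes.
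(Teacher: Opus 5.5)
Your outer structure matches the paper: compose the SEQ simulator with an oracle wrapper built from the testable program $P=(\ket{\psi},\Eval,R)$, conclude by the triangle inequality, and observe that every step is black-box, so the argument relativizes. The gap is that the entire content of the theorem sits in your step (iii), which you leave open, and the wrapper you sketch is the wrong kind of object. $O^{\SEQ}_\Phi$ is a fixed unitary on $(\calQ,\calP,\calC)$; in fact it is an involution. A wrapper that uses ``gentle measurement'' or ``projects back onto the span of $\ket{\psi}$'' is therefore in general distinguishable from it. The missing construction is to copy \Cref{def:CSEQ-oracle} literally, replacing the control ``$\calP=\ket{0^m}$'' with ``$\calP=\ket{\psi}$''. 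That replacement is exactly what $R=\mathsf{X}\otimes\ketbra{\psi}+I\otimes(I-\ketbra{\psi})$ provides. Concretely, keep $\calP$ initialized to $\ket{\psi}$ and a flag $\calC$ initialized to $\ket{0}$. On each query, apply $\Eval^\dagger$ controlled on $\calC=1$, then $R$ on $(\calC,\calP)$, then $\Eval$ controlled on $\calC=1$.

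You are also missing the argument that this wrapper is \emph{perfectly} equivalent to $O^{\SEQ}_\Phi$ for any number of queries.
\begin{itemize}
\item Conjugating by the swap $\ket{0^m}\leftrightarrow\ket{\psi}$ on the hidden register turns the wrapper into $O^{\SEQ}_{U'}$ with $U'=\mathrm{Swap}\circ\Eval\circ\mathrm{Swap}$, and $U'$ is a valid Stinespring dilation of $\Phi$.
\item For any two dilations $U,U'$, Stinespring uniqueness gives a unitary $V$ with $U'=(I\otimes V)U$ on the subspace where the ancilla is $\ket{0^m}$. The block form of $O^{\SEQ}_U$ depends only on this restriction.
\item Hence $S=\ketbra{0}_\calC\otimes I+\ketbra{1}_\calC\otimes(I\otimes V)$ satisfies $S\,O^{\SEQ}_U S^\dagger=O^{\SEQ}_{U'}$. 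Moreover, $S$ acts only on the hidden registers and fixes the initial state.
\end{itemize}
So the two interfaces are identical even to an adversary that entangles its workspace with the queries. There is no per-query error to accumulate and no need for a Jordan decomposition.

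Neither of your proposed routes reaches this. The second projector in your Jordan approach, ``states from which $f$ has been effectively queried at most once,'' is never defined. Your fallback rests on a false premise: ``$P$ implements $\Phi$'' is only a correctness condition on the first evaluation. It says nothing about how many effective queries $\ket{\psi}$ allows. For example, the canonical dilation $(\ket{0^m},U_\Phi)$, equipped with the reflection about $\ket{0^m}$, is testable and can be evaluated arbitrarily often. A distinguisher that yields ``a second effective query'' therefore contradicts nothing.
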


\begin{theorem}[Informal; \Cref{thm:cseq-seq-equivalence}]
    For every non-trivial randomized classical functionality $f:\calX\times\calR\to\calY$ with $|\calX|>1$, there is a quantum channel $\Phi_f$ such that the classical query interface $O_f^{\CSEQ}$ and the generalized query interface $O^{\SEQ}_{\Phi_f}$ are efficiently inter-simulatable.
    Consequently, the classical \CSEQ{} security is a special case of the generalized \SEQ{} security.
\end{theorem}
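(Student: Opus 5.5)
The plan is to fix a concrete channel $\Phi_f$ and build two efficient simulations, one in each direction, between the interfaces. I will take the \CSEQ{} interface $O_f^{\CSEQ}$ to be the one of \cite{gupte2025quantum}. It is a stateful oracle that takes a (possibly superposed) input register $\ket{x}$ together with an output register. It answers with $f(x;r)$ under a single fixed random tape $r$, and it lets the adversary make many queries as long as they are all ``effectively'' on a single input; a query with a different $x'$ re-randomizes, and effectively consumes the one query. I will take the generalized $O^{\SEQ}_{\Phi}$ to be the interface that applies the purified channel $\Phi$ once per effective query, with a reflection or uncompute allowed.

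The channel I would choose is $\Phi_f(\ketbra{x}{x'}) = \sum_r \frac{1}{|\calR|}\ketbra{x}{x'}\otimes\ketbra{f(x;r)}$ on classical inputs. It first measures $x$ in the computational basis, then samples $r$ and outputs $f(x;r)$. Its Stinespring purification keeps $r$ (and a copy of $x$) in an environment register. The point of measuring $x$ is that $\Phi_f$ is then a classical randomized map, so the coherence the adversary can exploit is exactly what $f$ exposes classically. The hypothesis $|\calX|>1$ and non-triviality are used below to rule out degenerate cases in which the two interfaces differ. For example, when $f$ is deterministic, or when there is only one input, the \CSEQ{} interface may allow unlimited queries on the same input while \SEQ{} forbids something, or the reverse.

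\textbf{Step 1: simulate $O_f^{\CSEQ}$ from $O^{\SEQ}_{\Phi_f}$.} Each \CSEQ{} query is answered by one forward application of the purified $\Phi_f$. The environment register, which holds $r$, is kept inside the simulator, and the simulator uncomputes the query afterward using the \SEQ{} inverse/reflection. Because the same purification is reused, repeated queries on the same $x$ yield the same $r$ and hence consistent outputs, which is exactly the \CSEQ{} behavior.

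\textbf{Step 2: simulate $O^{\SEQ}_{\Phi_f}$ from $O_f^{\CSEQ}$.} The simulator measures the input and, if it is classical $x$, forwards $x$ to the \CSEQ{} oracle. It then writes the answer into an output register. The purification register is simulated by the \CSEQ{} oracle's internal randomness, treated as an opaque environment.

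\textbf{Step 3 (main obstacle): matching the notion of ``effective query''.} I must show that the \SEQ{} state-change condition (the environment being disturbed) triggers exactly when \CSEQ{} would register a query on a new input. This requires comparing the two oracles' internal databases, likely via a compressed-oracle/purification argument, and showing that the simulated states are identical or negligibly close. The $|\calX|>1$ hypothesis enters here: it lets a \SEQ{} reflection query be emulated by a \CSEQ{} query on a dummy distinct input, which forces the reset. I would first attempt an exact equality of the joint states, proved by induction on the number of queries.

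The consequence, that \CSEQ{} security is a special case of \SEQ{} security, then follows by composing simulators. A simulator against $O^{\SEQ}_{\Phi_f}$ is converted, using the inter-simulation, into one against $O_f^{\CSEQ}$.
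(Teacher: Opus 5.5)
Your Step 3 names the crux but gives no mechanism for it, and your choice of $\Phi_f$ removes the mechanism the paper relies on. Note that $O^{\CSEQ}_f$ acts on $\ket{x,u,b}$ and flips $b$ exactly on answered queries. Any simulator of $O^{\CSEQ}_f$ from $O^{\SEQ}_{\Phi_f}$ must therefore know, coherently, whether the \SEQ{} oracle's hidden flag $\calC$ is set. Your channel measures $x$, uses fresh $r$, and outputs only $f(x;r)$, which may equal $0$, so your proposal gives the simulator no way to read $\calC$.

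The paper's $\Phi_f$ is $O^{\CSEQ}_f$ with its rejection check deleted: it takes $r=H(x)$ from a compressed random oracle and also flips a bit $b$. This buys two things.
\begin{itemize}
\item On an empty database, the first \CSEQ{} query \emph{is} one application of $U_{\Phi_f}$. Your channel is not this map. A \CSEQ{} query on $\sum_x\alpha_x\ket{x}$ keeps the off-diagonal terms $\alpha_x\bar\alpha_{x'}\ket{x}\bra{x'}\otimes(\sum_y p_{x,y}\ket{y})(\sum_{y'}p_{x',y'}\bra{y'})$ for $x\neq x'$, where $p_{x,y}=\Pr[f(x)=y]$; your channel kills them.
\item A cache register $\calB$ that receives $b$ stays equal to $\calC$, and $\calC=0$ forces $\calP=\ket{0^m}$. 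Together these give a reflection oracle.
\end{itemize}

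The paper's proof then goes through testable programs. First, $O^{\SEQ}_{\Phi_f}$ together with $\calB$ is a testable program implementing $f$. Second, $O^{\CSEQ}_f$ together with work registers is a testable program implementing $\Phi_f$. Its reflection probes two distinct inputs $x_0\neq x_1$, uncomputing in between, and flips iff both are answered, i.e.\ iff the database is empty. This is the only place $|\calX|>1$ is used. Two general lemmas finish the argument:
\begin{itemize}
\item any testable implementation of $f$ perfectly simulates $O^{\CSEQ}_f$, via a recorded-input database with $\ReflectCopy$/$\EvalCopy$;
\item any testable implementation of $\Phi$ perfectly simulates $O^{\SEQ}_\Phi$, via the $\ket{0^m}\leftrightarrow\ket{\psi}$ swap plus independence from the choice of Stinespring dilation.
\end{itemize}

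Several of your concrete steps are also wrong.
\begin{itemize}
\item $O^{\CSEQ}_f$ does not re-randomize on a new input. It acts as the identity whenever its database already records some $x'\neq x$.
\item The hypothesis $|\calX|>1$ is not about excluding deterministic $f$ (the theorem covers those), nor about forcing a reset.
\item In Step 2 the simulator may not measure the input. Two consecutive \SEQ{} queries on $\sum_x\ket{x}\ket{0}$ return the oracle exactly to its initial state with the superposition intact, so a measuring simulator is distinguishable.
\item In Step 1 the purification register lives inside $O^{\SEQ}_{\Phi_f}$, not inside your simulator. Moreover, a second \SEQ{} query first applies $U^\dagger$. So repeated queries do not reuse the same $r$, and a fresh output register does not receive a consistent copy of the first answer, as it would from $O^{\CSEQ}_f$.
\end{itemize}
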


\paragraph{Best-Possible Testable OTP in the Plain Model?}
%One of the major draws of best-possible obfuscation in the classical setting is that it can exist outside of an oracle model.
Towards achieving best possible testable one-time programs in the plain model, we give a new notion of obfuscation which is sufficient, and seemingly necessary, for constructing these.
% We give a new notion of obfuscation that is sufficient (and seemingly necessary) to construct reflection-best-possible one-time programs, and which can in theory exist in the plain model. 
%Supporting the feasibility of this notion, we show that it exists in the unitary oracle model, which can be instantiated in the classical oracle model~\cite{arxiv:HT25}. 

Our new notion of \emph{stateful} quantum obfuscation allows obfuscation of quantum programs which maintain an internal state that may evolve as the program is queried. 
Since one-time programs inherently change their behavior over subsequent queries, allowing for evolution seems necessary. However, accounting for it requires some care.
% Accounting for evolving states requires some care.
If two programs may behave similarly on the first evaluation, but degrade in a way that subsequent evaluations are distinguishable, then it should be hard to obfuscate them into programs that are indistinguishable. In fact, our impossibility for best-possible one-time programs formalizes this intuition: it is in general impossible to indistinguishably obfuscate two programs that are equivalent only on a single query.
Instead, we ask to indistinguishably obfuscate two circuits only if they produce the same output distributions over \textit{many} sequential evaluations. We call this notion \textit{stateful} quantum iO.

\begin{theorem}[Informal; \Cref{thm:siO-implies-bptotp,thm:ideal-sqO}]
    \label{thm:informal-io2bptotp}
    Assuming stateful quantum iO, there exists a best possible testable one-time compiler for all quantum functionalities.
    Furthermore, stateful quantum iO exists in the classical oracle model.
\end{theorem}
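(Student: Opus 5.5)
The theorem has two parts, and I will prove them separately. The first part says that stateful quantum iO gives a best-possible testable compiler. The second says that stateful quantum iO exists in the classical oracle model.

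\textbf{Part 1: stateful quantum iO implies best-possible testable OTPs.} I will follow the Goldwasser--Rothblum template. Define the compiler $\OTP^*(f,1^s)$ to output the stateful quantum iO obfuscation of a canonical padded program $C_f$ of size $s$. This program carries an internal state that records whether it has been queried. On its first query it runs $f$ on the given input; on every later query it outputs $\bot$.

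The simulator takes as input any testable one-time program $(\ket{\tilde P},R)$ that one-time implements $f$. It builds a stateful program $C_P$ and obfuscates it. The program $C_P$ has $\ket{\tilde P}$ hard-wired as its internal state. On a query it first applies the two-outcome measurement $\{\ketbra{\tilde P}, I-\ketbra{\tilde P}\}$, implemented coherently using $R$ via phase estimation on the reflection. If the test passes, it evaluates $\ket{\tilde P}$ on the input and then marks its state as spent; if the test fails, it outputs $\bot$.

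The key step is to show that $C_P$ and $C_f$ have identical output distributions over any number of sequential (possibly adaptive, possibly quantum) evaluations. The argument runs as follows:
\begin{itemize}
\item Because the test is exact, the first query evaluates the pure state $\ket{\tilde P}$, and the one-time correctness guarantee says this matches $f$.
\item After that query the state is recorded as spent, so every later query outputs $\bot$, exactly as $C_f$ does.
\end{itemize}
Stateful quantum iO then gives $\OTP^*(f)\approx_c\Sim(P)$. The lengths match after padding to $s=|P|+\mathrm{poly}$.

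\textbf{Main obstacle for Part 1.} The naive wrapper above is too strong, because it makes both $C_P$ and $C_f$ behave like ideal one-query oracles, whereas the obfuscation of $P$ must be compared against what $P$ itself actually leaks. What $C_P$ should really do is act as a stateful wrapper whose behaviour across many queries is fully determined by $f$. My plan here is to invoke the SEQ characterization (\Cref{thm:cseq-testable-bpotp}): testability forces any such $P$ to be simulatable by a single-effective-query interface $O^{\SEQ}_{\Phi_f}$. I will therefore choose $C_f$ to be a canonical stateful implementation of $O^{\SEQ}_{\Phi_f}$, and prove that the stateful wrapper around $(\ket{\tilde P},R)$ has the same multi-query distribution as this interface. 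Establishing that exact equality is the step I expect to be hardest. It amounts to showing that reflection access pins the state's evolution down to a rank-one projective structure, so that the entire transcript is a function of $\Phi_f$ alone. I would first attempt this by a Jordan-lemma decomposition with respect to $\ketbra{\tilde P}$.

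\textbf{Part 2: stateful quantum iO in the classical oracle model.} I will take an ideal obfuscation of a stateful quantum program. The construction uses three ingredients:
\begin{itemize}
\item quantum state obfuscation for the internal state in the classical oracle model (as in prior quantum-state obfuscation constructions), which authenticates the state under a one-time/Clifford authentication key;
\item a classical oracle that holds the key, verifies and decrypts the state, applies one step of the program, and re-encrypts the updated state;
\item a simulation argument showing that the adversary's view depends only on the multi-query input/output behaviour.
\end{itemize}
This last point gives the indistinguishability required for iO.
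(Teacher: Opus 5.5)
Your final plan for Part~1 reaches the right compiler: stateful iO of a canonical implementation of $O^{\SEQ}_\Phi$, which should be for a general channel $\Phi$ rather than only $\Phi_f$. But the step that carries the whole proof is left open. You never say which stateful program the simulator builds from $(\ket{\psi},\Eval,R)$, and you never prove that it is \emph{exactly} equivalent to the SEQ program under arbitrary sequences of forward and inverse queries.

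The naive ``test, evaluate, mark spent, then answer $\bot$'' wrapper cannot fill this role. It is not reversible, since fresh and spent states both land in the spent subspace, so it is not a stateful program in the required sense. It also forbids the compute/uncompute behaviour that $O^{\SEQ}_\Phi$ permits.

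As phrased, your appeal to \Cref{thm:cseq-testable-bpotp} runs backwards. A testable $P$ is in general \emph{not} simulatable from a SEQ interface, because $P$ may leak far more. What holds, and what you need, is that $O^{\SEQ}_\Phi$ is perfectly simulatable from $P$.

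The missing idea is the wrapper $\Sim'(P)$ of \Cref{def:sim-for-SEQ-oracle}. A hidden register $\calP$ starts in $\ket{\psi}$ and a flag $\calC$ starts in $\ket{0}$. Each query applies $\Eval^\dagger$ controlled on $\calC=1$, then flips $\calC$ iff $\calP$ is in $\ket{\psi}$ (one call to $R$), then applies $\Eval$ controlled on $\calC=1$. Its equivalence to the SEQ program is an exact algebraic identity, and no Jordan decomposition is needed:
\begin{enumerate}
\item Conjugate $\Sim'(P)$ by the hidden unitary that exchanges $\ket{0^m}$ and $\ket{\psi}$ on $\calP$. This turns it into $O^{\SEQ}_{U'}$, where $U'$ is $\Eval$ sandwiched between two such exchanges and is a Stinespring dilation of $\Phi$.
\item By Stinespring uniqueness, $U'(\cdot\otimes\ket{0^m})=(I\otimes V)U(\cdot\otimes\ket{0^m})$ for some unitary $V$ on $\calP$. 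Then $O^{\SEQ}_{U'}=S\,O^{\SEQ}_{U}\,S^\dagger$ with $S=\ket{0}\bra{0}_{\calC}\otimes I+\ket{1}\bra{1}_{\calC}\otimes(I\otimes V)$, which acts only on hidden registers and fixes the initial state.
\end{enumerate}
This gives $(s,0)$-functional equivalence for every $s$, and inverse queries obey the same identity. So $\statefuliO(\Sim'(P))$ serves as the simulator.

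Part~2 has the right shape: authenticate the internal state, decrypt--apply--re-encrypt, and pass from ideal stateful obfuscation to stateful iO by a hybrid through its simulator. Two ingredients are missing, though.
\begin{itemize}
\item A classical oracle cannot itself decrypt a quantum state and apply the quantum step $U$. The paper gets this by applying ideal \emph{unitary} obfuscation in the classical oracle model (\Cref{thm:HT25}) to the unitary that runs $\Dec_k$, then $U$, then $\Enc_k$.
\item The authentication must be publicly verifiable, not a one-time/Clifford key reused on every query. The simulation proof teleports the program state into authenticated EPR halves and switches queries from the real oracle to the simulator one at a time. It answers the earlier queries with $\Ver_k$ while invoking authentication security for the next one, which is exactly where public verifiability is needed.
\end{itemize}
Your third bullet states the conclusion of this argument without supplying it.
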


\subsection{One-Time Security for Quantum Functionalities}
\label{sec:intro-quantum-functionality}

Along the lines of investigating best-possible one-time programs, we also extend the study of one-time security from randomized classical functionalities to all quantum (channel) functionalities.
As far as we are aware, our \Cref{thm:informal-cseq-oracle,thm:informal-io2bptotp} are also the first positive results for constructing one-time programs for arbitrary quantum channels without hardware assumptions.

As mentioned earlier, we revise and generalize the Single Effective Query (SEQ) security defined by Gupte et al.\ \cite{gupte2025quantum} from randomized classical functionalities to all quantum functionalities.
In this work, we call our new generalized notion simply SEQ and refer to the older version from \cite{gupte2025quantum} as classical SEQ (CSEQ). 
%\jiahui{Is directly going into its impossibility really a good selling strategy....? We can say how SEQ circumvents *past* impossibility results without hardware assumptions and come back to *its own impossibility* later. Or at least we should give a one sentence summary on what SEQ is before saying it's impossible}
%\luowen{I am actually listing many upsides of SEQ security in the next paragraph (e.g. evade BGS + ceiling for testable security). This is the only downside. The main reason that I discussed this first is because many text in the next paragraph builds on this impossibility... If we talk about the downside after the upsides, then we will probably still inevitably spend more text justifying why the downside is not an issue} \jiahui{sounds ok to me. I just think the intuition for SEQ comes a little too late. I personally will be very confused if I'm given a bunch of pros and cons of a definition without knowing what it is} \luowen{I guess this is more of motivating the definition rather than discussing their pros and cons}
%\jiahui{Ok sounds good, not a life or death issue either way}
 As the prior work \cite{gupte2025quantum} has already shown, SEQ as a simulation-based security is unachievable for certain randomized circuits in the plain model by generalizing the impossibility of VBB \cite{CRYPTO:BGIRSVY01}.

Despite this VBB-style impossibility, we believe that SEQ on its own is still a very useful notion for the topic of one-time programs.
Of course, the VBB-style impossibility does not rule out SEQ for \emph{all} interesting functionalities; but more importantly, similar to the role VBB plays today, SEQ provides a relatively simple yet \emph{realistic ceiling} on the one-time security that one could hope for in the plain model for a given functionality using \emph{testable} OTP obfuscators. %\jiahui{Maybe we can switch the order of the following sentence with the previous discussions?just one idea..}
Indeed, it will turn out that SEQ is also capable of capturing the fact (from the gentle-measurement impossibility \cite{BGS13}) that any one-time correct implementation of a deterministic/unitary program will inevitably allow an unbounded number of evaluations.
%\mnote{The unitary case is very slightly non-trivial, since you may need to run the program backwards on a dummy state to return the program to its original state. Do we state this anywhere?} \luowen{Is it a big deal...? It's not our contribution. If people are confused, presumably they can just check the references. I know \cite{arxiv:HT25} uses this but there is possibly an even earlier reference}
Additionally, if one aims to construct a one-time compiler that achieves security unachievable by even SEQ, the OTP construction \emph{must actively} prevent its output program from being testable.
%\mnote{This last sentence confuses me. If SEQ is already impossible, then why would any OTP aim to do even better? Maybe the point is not that you are aiming for ``better than'', but just ``not implied by''; it could be an incomparable notion.}\luowen{Fixed the wording. Technically SEQ is achievable for some functionality classes but what you suggest works}

In order for SEQ security to provide this useful role, we point out that our revised version of SEQ is in fact much simpler than the original formulation, despite being a generalization.
We hope that our simplification would help downstream applications of one-time programs in future work.
For completeness, we give a self-contained presentation of SEQ below.

%\paragraph{Revised SEQ security.}
To motivate the definition, let us briefly recall the intuition for SEQ security from \cite{gupte2025quantum}.
SEQ security states that the information extractable from a SEQ-secure OTP of a functionality $\Psi$ can be reduced to that extractable from an ideal query interface of $\Psi$.
This interface allows for a one-time evaluation of $\Psi$ but refuses to cooperate with any other query as much as possible.
This is in contrast to older simulation-based one-time security notions such as in \cite{goldwasser2008one,BGS13} where the interface allows literally one physical query, regardless what has been learned in the query.

Without hardware assumptions, an adversary could always compute and uncompute the unitary implementation of the interface.
This motivates the following definition of a ``self-adjoint implementation'' of a quantum channel.
%that a simulator given a ``single-effective'' query oracle of the functionality of interest should simulate the behavior of any adversary with the actual one-time program in a computationally indistinguishable way.
%Very roughly, this ``single-effective'' query oracle
%``records'' queries and refuses to answer new queries different from the recorded query. 
%Therefore, the simulator is able to make many potentially dummy queries (e.g. those which are not measured and later uncomputed), but only one effective query where it actually learns a classical input-output of the function protected.

%In this work, we revise and generalize their SEQ definition to all functionalities and adopt the name SEQ for this generalized notion; to distinguish, we refer to the older, classical version as CSEQ.

%Next, we introduce our generalized and simplified version of SEQ security.

\begin{definition}[Informal; \Cref{def:CSEQ-oracle}]
Consider a quantum channel $\Psi$ using input/output register $\calX$ which is implemented by some unitary $U_\Psi$ which might use a private auxiliary register $\calA$ that is initialized to 0.
Let $\calC$ be a qubit register acting as a counter.
Let the \emph{self-adjoint implementation} $S(U_\Psi)$ of $\Psi$ be the unitary that swaps
\[
\ket{x}_\calX\ket0_\calA \otimes \ket0_\calC \leftrightarrow (U_\Psi\ket{x}_\calX\ket0_\calA) \otimes \ket1_\calC
\]
and acts as identity everywhere else.
\end{definition}

\begin{fact}
    $S(U_\Psi)$ is a well-defined self-adjoint unitary and can be efficiently implemented given controlled query access to $U_\Psi$.
\end{fact}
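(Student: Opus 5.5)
We prove the Fact by writing $S(U_\Psi)$ as a swap between two orthonormal families of vectors and then conjugating a simple controlled operation. Write $\calH = \calX\otimes\calA\otimes\calC$, and define the isometries $V_0\ket{x} = \ket{x}_\calX\ket0_\calA\ket0_\calC$ and $V_1\ket{x} = (U_\Psi\ket{x}\ket0_\calA)\otimes\ket1_\calC$. Their ranges are $\Pi_0 = I_\calX\otimes\ketbra{0}_\calA\otimes\ketbra{0}_\calC$ and $\Pi_1 = U_\Psi(I_\calX\otimes\ketbra0_\calA)U_\Psi^\dagger\otimes\ketbra1_\calC$.

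\emph{Well-definedness.} These two projectors are orthogonal because they are supported on orthogonal counter values. Hence we may define
\[
S(U_\Psi) = V_1V_0^\dagger + V_0V_1^\dagger + (I - \Pi_0 - \Pi_1).
\]
This operator is manifestly Hermitian. Squaring it kills the cross terms, since $V_0^\dagger V_1 = 0$ and $V_i^\dagger(I-\Pi_0-\Pi_1)=0$. What remains is $V_1V_1^\dagger + V_0V_0^\dagger + (I-\Pi_0-\Pi_1) = I$, so the operator is an involution and hence a self-adjoint unitary. On the basis vectors $V_0\ket{x}$ it acts exactly as the swap prescribed in the definition, and it is the identity on the orthogonal complement. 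Extending by linearity, the swap is therefore consistent and uniquely defined.

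\emph{Efficient implementation.} Let $W = \ketbra0_\calC\otimes I + \ketbra1_\calC\otimes U_\Psi$, which is one controlled query to $U_\Psi$. We have $W^\dagger V_1\ket{x} = \ket{x}\ket0_\calA\ket1_\calC$ and $W^\dagger V_0\ket x = V_0\ket x$. Therefore
\[
W^\dagger S(U_\Psi) W = X_\calC \text{ controlled on } \calA = 0,
\]
that is, the operator that flips the counter exactly when the auxiliary register is $\ket0$ and acts as the identity otherwise. To verify this, observe that conjugation by $W$ maps $\Pi_0,\Pi_1$ to $I\otimes\ketbra0_\calA\otimes\ketbra0_\calC$ and $I\otimes\ketbra0_\calA\otimes\ketbra1_\calC$. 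It maps the swap to the exchange of these two blocks, and it maps the identity on the complement to the identity on the complement. So $S(U_\Psi) = W\,(\mathrm{C}_{\calA=0}X_\calC)\,W^\dagger$. This costs one controlled query to $U_\Psi$, one controlled query to $U_\Psi^\dagger$, and a multi-controlled NOT on $|\calA|+1$ qubits.

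\emph{Main obstacle.} The only delicate point is access to $U_\Psi^\dagger$. If controlled access to the inverse is not considered part of ``controlled query access'', I would argue either that $U_\Psi$ is given as a circuit, so the inverse is obtained by reversing its gates, or that the oracle model supplies inverse queries as standard. I expect this convention to be the step needing explicit justification. The algebraic verification itself is routine.
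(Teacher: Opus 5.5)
Your proof is correct and is essentially the paper's argument. The paper takes the three-step circuit $W\,(\mathrm{C}_{\calA=0}X_\calC)\,W^\dagger$ as the definition of the oracle (\Cref{def:CSEQ-oracle}), and its block form in the proof of \Cref{lem:cseq-sim-equals-cseq-canonical} is exactly your operator $V_1V_0^\dagger+V_0V_1^\dagger+(I-\Pi_0-\Pi_1)$, so your algebra makes explicit the paper's unproved ``we can verify'' remark. Your concern about access to $U_\Psi^\dagger$ is settled the way the paper does it: its oracle model grants inverse and controlled queries as standard, and step 1 of its definition already uses $U_\Phi^\dagger$.
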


\begin{definition}[Informal; \Cref{def:CSEQ-security}]
    A one-time compiler is SEQ-secure if $\OTP(P)$ can be efficiently (computationally indistinguishably) simulated by a simulator that only gets access to $P$ via querying the self-adjoint implementation for $P$ without having access to its $\calA, \calC$ registers.
\end{definition}

%Intuitively, the SEQ query interface for the simulator only allows computing $\Psi$ on some input or uncomputing it using the self-adjointness property.
It is so called ``Single Effective Query'' because (1) to compute $\Psi$ a second time, you must first return to the state $\ket{x}_\calX\ket{0}_\calA\otimes \ket{0}_\calC$ by uncomputing it; but more importantly, (2) if the output is ``meaningfully'' disturbed, then further evaluations become effectively impossible because the evaluator cannot return to the initial state $\ket{x}_\calX\ket{0}_\calA\otimes \ket{0}_\calC$.
%Contrast this to previous simulation-based notions in earlier works \cite{goldwasser2008one, BGS13} where the simulator can literally make only one ``physical'' query \cite{gupte2025quantum} to $\Psi$ after which the oracle shuts down.

A helpful example to illustrate SEQ would be to consider a channel that ignores the input and outputs some random coins.
In this example, the simulator could invoke the interface and obtain the random coins and possibly uncompute it subsequently; however, if it decides to completely measure the output, this breaks up the entanglement between $\calA$ and $\calX$ registers, and the SEQ interface would subsequently reject (acting very close to identity) if queried again.

\begin{figure}[pt]
\centering
\small
\setlength{\tabcolsep}{4pt}
\begin{tabular}{|>{\raggedright\arraybackslash}p{0.30\linewidth}|>{\raggedright\arraybackslash}p{0.30\linewidth}|>{\raggedright\arraybackslash}p{0.30\linewidth}|}
\hline
\textbf{Security notion} & \textbf{Impossibilities} & \textbf{Constructions} \\
\hline
Single physical query, simulation-based, OTP for quantum functionalities \cite{broadbent2013quantum}
&
Strong impossibility for most functionalities relative to all oracles
&
For single physical-query learnable (trivial) functions only \cite{broadbent2013quantum};
For constant-distribution functionalities (implicit in this work)
\\
\hline
Classical SEQ, simulation-based, OTP for classical functionalities \cite{gupte2025quantum}
&
VBB-style impossibility for contrived functionalities in the plain model \cite{gupte2025quantum}
&
For all classical functions, relative to a classical oracle \cite{gupte2025quantum}
\\
\cline{1-1}\cline{3-3}
Generalized SEQ, simulation-based, OTP for quantum functionalities (\textcolor{red}{this work})
&
 
&
For all quantum functionalities relative to a classical oracle (\textcolor{red}{this work})
\\
\hline
Best-possible OTP compiler (\textcolor{red}{this work})
&
Impossibility for generic compilers \emph{relative to all oracles} that admit secure lossy encryptions (\textcolor{red}{this work})
&
 
\\
\cline{1-2}\cline{3-3}
Best-possible OTP compiler among only testable programs (\textcolor{red}{this work})
&
 
&
From either SEQ security or quantum stateful iO (\textcolor{red}{this work})
\\
\hline
\end{tabular}
\caption{One-time program security notions with impossibilities and constructions.}
\label{fig:definitions_impossibilities_figure}
\end{figure}

%\justin{Another example: a counting program which increments its internal register and classical-copies the result to the output register. So on first query it returns $\ket{1}$, then on second query it returns $\ket{2}$, etc. If one-time programmed, this becomes indistinguishable from a program that returns $\ket{1}$ on first query, then skips to $\ket{50}$ on the second query. Both this and the previous example highlight different aspects of OTPs for stateful quantum channels. The first example shows that no adversary can get 2 queries from the initial state of the program, while the second example shows that no adversary can reach the deeper state of the program.}
%\luowen{I'm not sure how to incorporate this example here (without being too distracting/confusing) since we have been talking about only one-time programs here}
%\justin{Fair enough.}

\subsection{Technical Overview}
%\jiahui{Why did we remove the best possible def?}

\iffalse
\paragraph{Best-possible security definition.}
The best-possible obfuscator intuitively states that any information that can be extracted from the output of the best-possible obfuscator is no more than that can be extracted from that of any other obfuscator.
In particular, if a certain circuit family can be VBB-protected, then the best-possible obfuscator obfuscating the same circuit family would also satisfy VBB as well.
\justin{Emphasize one-time equivalence in the definition.}

In this overview, we aim to rule out this weaker definition that whenever one-time VBB is possible for a circuit family then the best-possible one-time compiler achieves it.
By one-time VBB, we mean that the view of any adversary getting the one-time program can be simulated with a single (classical) query to the function.
This consequently also rules out most reasonable definition of best-possible one-time programs.
\fi

\paragraph{Best-possible impossibility.}
%\jiahui{made some changes to the following paragraphs}
The main idea of proving the impossibility of best-possible one-time compilers is identifying two families of functionalities $D$ and $E$ such that: %\jiahui{changed to use the same names for the functions as in the main proof}
\begin{enumerate}
    \item The one-time compilers for $D$ and $E$ must behave very differently
    %in the view of a QPT adversary,
    to achieve the strongest one-time security;
    \item Yet it is impossible to efficiently distinguish $D$ from $E$.
\end{enumerate}

As a first step, we consider $D$ to be a randomized functionality that can be \emph{information-theoretically} one-time protected.
Such a functionality can be constructed by simply defining $D(x)$ to be sampling from a fixed distribution $C$ independent of $x$.
Then, note that the ideal one-time security could actually be achieved as follows:
\begin{itemize}
    \item The one-time compiler simply samples $C$ by evaluating on a fixed input such as $D(0)$ to get a sample $y$; the compiler then outputs $y$ as the description of the one-time program.
    \item The evaluator simply ignores the input $x$ and outputs $y$.
\end{itemize}
It is clear that this one-time program can be perfectly simulated with just one query to $D$, thus it achieves the strongest possible one-time security of being simulatable via one ``physical'' query.
%\mnote{What does ``functionality'' mean here? } \luowen{I didn't know what I was thinking. Changed it}

This example is interesting since on one hand, it can be perfectly one-time protected; but on the other hand, the one-time compiler above in general is correct \emph{only} if the functionality in question samples the same distribution for every input.
However, intuitively whether a circuit's output depends on the input should not be a property that can be efficiently learned (by the obfuscator).
Therefore, the idea is that we should start by considering $E$ to be a sampling circuit that samples a different distribution for every input, yet all of those distributions look indistinguishable to $C$, the distribution that $D$ samples.

It turns out that we can find such $D$ and $E$ by leverging lossy encryptions.
A lossy encryption scheme has two modes, an injective mode and a lossy mode. The injective mode corresponds to a regular encryption scheme; the lossy mode corresponds to a scheme where the encryptions of any two different messages are statistically indistinguishable.
More importantly, the public encryption keys sampled in these two modes are computationally indistinguishable. We can construct a post-quantum lossy encryption scheme from LWE or effective group actions using folklore techniques.
We remark that our first construction in fact only needs lossy trapdoor functions rather than the full power of LWE.
%\luowen{How much is actually new in our construction? This should be clarified}\mnote{Oof, I realized by asking ChatGPT that Regev encryption is actually already lossy encryption. The proof of CPA security works exactly by switching to the lossy mode. Of course Regev didn't call it lossy encryption...}
%\mnote{For the group action-based construction, I don't know if its technically new, but it would be obvious to anyone who has spent time working with group actions.}
%\luowen{Okay I added some discussions. hopefully good enough?}

%\jiahui{moved the direct proof from section 7.1 to here, with some minor changes. I think they are good for tech overview}

Now, we can consider the previously proposed function $E$ to be an injective encryption algorithm $\Enc_{\mathsf{inj}}$ for input $x$, whereas $D$ will be a lossy encryption algorithm $\Enc_{\mathsf{lossy}}$ whose ciphertext produced is statistically independent of $x$ (this ciphertext distribution would be the distribution $C$ mentioned before).

More concretely, we consider the following families of classical circuits. $\Enc(\pk, x;r)$ is the encryption algorithm of a lossy encryption with inputs public key $\pk$, message $x$ and randomness $r$.
\begin{enumerate}
     
    \item $D_{\pk_\lossy}(x; r) := \Enc(\pk_\lossy, x; r)$
     \item $E_{\pk_\inj}(x; r) := \Enc(\pk_\inj, x; r)$
   
%    \item $C_{\pk_{\lossy}, r^*}(x; r) := \Enc(\pk_\lossy, 0; r^*)$, where $\pk_\lossy$ is sampled as a lossy encryption key and $r^*$ is a pre-sampled fixed string.
\end{enumerate}

To finish the proof, we must show how we can use a best-possible one-time protector $\OTP^*$ to efficiently distinguish the lossy key from the injective key.
A priori, the best-possible $\OTP^*(D)$ need not output a fixed sample from $C$ even though $C$ achieves the best one-time security for $D$.
%\jiahui{What does the previous sentence mean?} \luowen{Is this clearer?}
Nevertheless, we argue that $\OTP^*(D_{\pk_\lossy})$ must still necessarily be effectively a constant function whereas $\OTP^*(E_{\pk_\inj})$ cannot possibly be (or correctness would be violated).
Then, to distinguish, a QPT algorithm can simply run the program on a uniform superposition over input $x$ and measure the output.
Then if $\OTP^*(D_{\pk_\lossy})$ is run, the superposition would not collapse since the output is constant and thus deterministic; on the other hand, if $\OTP^*(E_{\pk_\inj})$ is run, then the superposition must collapse by injectivity.
In the end, we can distinguish $D_{\pk_\lossy}$ from $E_{\pk_\inj}$ by simply measuring whether the pre-image superposition collapses or not.

Finally, to show that $\OTP^*(D_{\pk_\lossy})$ must act like constant function, we use the following observation: $D_{\pk_\lossy}$ is in fact one-time equivalent to an ensemble of constant functions.
This is because we can just map each ciphertext (image) $\Enc(\pk_\lossy,x;r^*)$ in the output of $D_{\pk_\lossy}$ to a constant function $C_{\pk_\lossy,r^*}(\cdot) := \Enc(\pk_\lossy,0;r^*)$ outputting the fixed  $\Enc(\pk_\lossy, 0;r^*)$ on any input $x$, where   randomness 
$r^*$ is pre-sampled uniformly at random and then fixed for the circuit. By the property of the lossy encryption key, $C_{\pk_\lossy,r^*}(\cdot)$'s output distribution is statistically indistinguishable from sampling from the output distribution of $D_{\pk_\lossy}$ on input $x$, i.e. all ciphertexts $\Enc(\pk_\lossy,x;r)$ for $r$ is sampled uniformly at random upon evaluation.

By the correctness of $\OTP^*$, $\OTP^*(C_{\pk_\lossy,r^*})$ must always output $\Enc(\pk_\lossy,0;r^*)$ regardless of the input.
However, by best-possible security, $\OTP^*(D_{\pk_\lossy})$ must be computationally indistinguishable from applying $\OTP^*$ to the distribution over $C_{\pk_\lossy,r^*}$ where the fixed output $\Enc(\pk_\lossy, 0;r^*)$ is sampled by $r^*\gets \{0,1\}^{\vert r\vert}$. Therefore, we can conclude a contradiction here. %unless breaking the security of the lossy encryption.
%and thus statistically indistinguishable from $\Enc(\pk_\lossy, x;r^*), r^*\gets \{0,1\}^{\vert r\vert}$, for any $x$

We conclude by noting that this impossibility relativizes even in the presence of unitary oracles.

\begin{remark}
    We note that our impossibility can be dequantized to state that best-possible one-time programs are also impossible with classical obfuscators assuming the existence of the same (but classically secure) lossy encryptions.
    The only step that needs to be changed is that rather using a quantum distinguisher that tests if the input superposition collapses, we can simply rewind the classical program to test if it is a constant function.
\end{remark}

%\jiahui{Add our instantiation of the idea with lossy PKE}
\paragraph{Best possible testable one time programs.}
Having ruled out best-possible OTPs, we next turn to testable OTPs.
Here, we show that a one-time program achieving SEQ security is as secure as \emph{any} testable one-time program.
%In other words, SEQ security is best-possible security for testable one-time programs.

%A reflection oracle $R$ for a quantum state $\ket{\psi}$ is a unitary operator defined as $R := \mathbb{I} - 2\ket{\psi}\bra{\psi}$. It is easy to see that $R$ acts as an identity operator, $R\ket{\phi} = \ket{\phi} $ for states $\ket{\phi}$ orthogonal to $\ket{\psi}$; $R$ ``reflects" the phase if the input is exactly $\ket{\psi}$, i.e. $R\ket{\psi} = -\ket{\psi}$. Therefore, a reflection operator helps one check if an input state projects onto some specific state, e.g. the original program state.

To show that SEQ security is best-possible for testable one-time programs, we must show how to simulate $\SEQ(f)$ using an alternative quantum program $g$ which is equivalent to $f$ for one query and has a reflection oracle $R$.
The idea of the simulator is straightforward: we simply implement $\SEQ(f)$ by implementing $\SEQ(g)$ instead.
The intuition is that the $\SEQ$ interface would hide the implementation details of whether $f$ or $g$ is being queried since they are one-time equivalent.

Towards that goal, it is helpful to first see how $\SEQ$ interface can be efficiently implemented.
In particular, let $U_f$ be a purified unitary of $f$, $\SEQ(f)$ is implemented by the following quantum circuit $S(U_f)$:
\begin{enumerate}
    \item Controlled on $\ket1_\calC$, compute $U_f^\dagger$.
    \item Controlled on the auxiliary register $\calA$ being all zeroes, flip $\calC$ register.
    \item Controlled on $\ket1_\calC$, compute $U_f$.
\end{enumerate}

To formalize our intuition above, we establish the following lemma showing that self-adjoint implementations hide the implementation details of the channel, which could be of independent interest.
\begin{lemma}[Informal; \Cref{lem:cseq-sim-equals-cseq-canonical}]
\label{lem:informal-self-adjoint-hiding}
    For any two unitary implementations $U_\Psi, U'_\Psi$ of the same channel $\Psi$, $S(U_\Psi)$ and $S(U'_\Psi)$ are perfectly indistinguishable given unbounded query access to only the input/output register $\calX$.
\end{lemma}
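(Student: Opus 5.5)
Since $U_\Psi$ and $U'_\Psi$ both purify the same channel $\Psi$, they are related by a unitary on the auxiliary register: Stinespring uniqueness says that for every input $\ket{x}_\calX$ the states $U_\Psi\ket{x}\ket{0}_\calA$ and $U'_\Psi\ket{x}\ket{0}_\calA$ are purifications of the same output with purifying system $\calA$. I will therefore pick a single unitary $V$ on $\calA$ (possibly after padding both auxiliary registers to a common dimension) with
\[
U'_\Psi(\ket{x}_\calX\ket{0}_\calA)=(I_\calX\otimes V_\calA)\,U_\Psi(\ket{x}_\calX\ket{0}_\calA)\quad\text{for all }x.
\]
This works uniformly in $x$ by linearity: the isometries $W=U_\Psi(\cdot\otimes\ket0)$ and $W'=U'_\Psi(\cdot\otimes\ket0)$ from $\calX$ to $\calX\otimes\calA$ are both Stinespring dilations of $\Psi$, so $W'=(I\otimes V_0)W$ for a partial isometry $V_0$ on $\calA$, which extends to a unitary $V$.

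Next I check that $S(U'_\Psi)=\widetilde V\, S(U_\Psi)\,\widetilde V^\dagger$, where $\widetilde V=I_\calX\otimes(\ketbra{0}_\calC\otimes I_\calA+\ketbra{1}_\calC\otimes V)$ applies $V$ to $\calA$ controlled on the counter being $1$. By definition, $S(U)$ swaps the subspace $\mathcal{H}_0=\{\ket{x}\ket0_\calA\ket0_\calC\}$ with $\mathcal{H}_1^U=\{U\ket x\ket0\ket1_\calC\}$ through the isometry $W$ and acts as identity on the orthogonal complement of $\mathcal{H}_0\oplus\mathcal{H}_1^U$. The map $\widetilde V$ fixes $\mathcal{H}_0$ pointwise and sends $\mathcal{H}_1^U$ to $\mathcal{H}_1^{U'}$ compatibly with the swap, because $\widetilde V\,(W\ket x\otimes\ket1)=W'\ket x\otimes\ket1$. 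Being unitary, it also maps orthogonal complements to orthogonal complements. Conjugating $S(U_\Psi)$ by $\widetilde V$ therefore gives exactly the operator defined by the swap rule for $U'_\Psi$. Verifying this on the three subspaces is short bookkeeping.

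Finally, an adversary acts only on $\calX$ and its own workspace, while the oracle's $\calA,\calC$ registers start in $\ket0\ket0$, which $\widetilde V$ fixes. Any $q$-query interaction with $S(U'_\Psi)$ is an alternation $A_q S' A_{q-1}\cdots S' A_0$ in which the $A_i$ commute with $\widetilde V$, since they act on disjoint registers. Substituting $S'=\widetilde V S\widetilde V^\dagger$, the inner $\widetilde V^\dagger\widetilde V$ pairs cancel, the rightmost $\widetilde V^\dagger$ acts trivially on the initial state, and a single $\widetilde V$ remains on the far left. That $\widetilde V$ acts only on registers the adversary never sees, so tracing them out gives identical reduced states on the adversary's side for any number of queries, including controlled or adaptive ones. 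This proves perfect indistinguishability.

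The main obstacle is the first step when the two implementations use auxiliary registers of different sizes, since $S(\cdot)$ and the notion of "all-zero auxiliary" depend on the register. I would first pad both to the same dimension $d_\calA$ with extra ancillas in $\ket0$, and check that padding leaves the self-adjoint implementation unchanged up to identity on the padded qubits that the adversary cannot access. After padding, the partial isometry $V_0$ extends to a unitary $V$.
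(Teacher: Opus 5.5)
Your proof is correct and is essentially the paper's argument: Stinespring uniqueness gives a unitary $V$ on the hidden ancilla, the self-adjoint implementation for $U'_\Psi$ equals the one for $U_\Psi$ conjugated by $V$ applied controlled on the counter being $1$, and this conjugation commutes with every adversary operation, fixes the initial state, and disappears under the partial trace over the hidden registers. The differences are minor. First, you verify the conjugation identity through the swap/subspace description rather than the paper's $2\times 2$ block form on the counter. Second, you apply uniqueness to the isometries $U_\Psi(\cdot\otimes\ket{0})$, which is all that uniqueness guarantees. The paper asserts $U'=(I\otimes V)U$ for the full unitaries, which does not hold in general, although its block computation depends only on $U$ restricted to the all-zero ancilla, so its conclusion survives. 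Third, you handle unequal ancilla sizes by padding, whereas the paper's formal lemma simply assumes a common ancilla register.
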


This already suffices to prove the statement if $f$ and $g$ are classically described programs.
However, there is one subtlety we need to address, which is that $g$ may be a quantum state describing a program $\ket g$.
After evaluating it once, the program state may be disturbed.
Unlike classically, we cannot say that the program state is readonly due to no cloning.

This is where the reflection oracle $R$ plays a crucial role: the reflection oracle $R$ well defines the quantum program $\ket g$.
In particular, the way we fix this is by modifying the step 2 of $\SEQ(g)$ to in addition also control on the quantum program register being $\ket g$, which can be efficiently implemented using the reflection oracle.
The intuitive reason that this works is that this way, we are effectively implementing the following quantum program that does \emph{not} involve a quantum auxiliary input $\ket g$ and yet is perfectly equivalent to $g$:
\begin{enumerate}
    \item Perform a swap $\ket 0 \leftrightarrow \ket g$ in the program register.
    \item Carry out the original computation.
\end{enumerate}
Observe that the $\SEQ$ for this program would reflect around all zeroes on both the auxiliary register and the program register, which is equivalent to reflecting around all zeroes on the auxiliary register and $\ket g$ on the program register if we do not do the first step.

\paragraph{Achieving SEQ Security with Oracles.}
Note that our \Cref{lem:informal-self-adjoint-hiding} gives a straightforward way for instantiating SEQ-secure one-time programs with \emph{stateful unitary} ideal obfuscation.
In such an obfuscation scheme, the obfuscator takes as input a unitary $U$ that acts on a public register $\calX$ and an internal auxiliary register $\calA$ that is initialized to some state that is given; and it outputs $\tilde U$ which can only be queried in either the forward or backward direction where the internal register is inaccessible.
Then to one-time protect $f$, we can simply use the ideal obfuscation to obfuscate and give out $\SEQ(f)$.

Intuitively, such a stateful unitary ideal obfuscation scheme can be implemented in a (stateless) unitary oracle model, or by ideal unitary obfuscation.
The idea is that we can use a quantum authentication scheme to protect the internal state so that we can delegate the internal state management to the adversary without compromising security.
Furthermore, such a scheme can be ported to the classical oracle model using the compiler of \cite{arxiv:HT25}.
While this compiler introduces quantum states into the program, this does not matter for our application since our obfuscated programs are allowed to have quantum states.

\paragraph{Towards the Plain Model via Stateful Quantum Obfuscation.} 
Even though SEQ security is achievable using ideal obfuscation or classical oracles, we already know that there are randomized classical functionalities that cannot be SEQ-securely one-time protected in the plain model \cite{gupte2025quantum}.
A natural question then is whether we can nevertheless achieve best-possible testable one-time programs in the plain model.
Perhaps this is achievable through an alternative notion of obfuscation which plausibly exists in the plain model, similar to how for many-time security, iO and best-possible obfuscation are equivalent.
%and which implies best-possible testable one-time programs.

Previous works propose natural definitions for indistinguishability obfuscation of quantum circuits that compute either pseudodeterministic functions~\cite{STOC:BBV24, CG24} or unitaries~\cite{arxiv:HT25} --- for every two quantum programs that compute the same pseudodeterministic function (or unitary), their obfuscations should be indistinguishable. A natural attempt at generalizing this to arbitrary quantum \emph{channels} would be to ask that if two quantum programs output similar mixed states, their obfuscations should look indistinguishable. However, if two programs produce similar outputs for the first evaluation, but then degrade so that subsequent evaluations are distinguishable, it should be hard to obfuscate them into programs that are indistinguishable. In fact, our previously discussed impossibility result formalizes this intuition.

This is a uniquely quantum issue for two reasons. First, quantum programs with auxiliary quantum states inherently maintain state over evaluations that involve non-gentle measurements. Second, quantum circuits have the ability to self-produce randomness, and there is no general way to ``de-randomize'' them, like in the classical case. Classical descriptions of circuits do not degrade with evaluations, and it is always possible to de-randomize a probabilistic circuit $C(x; r)$ and obfuscate $C(x; F_k(x))$, so that the obfuscations of two equivalent classical probabilistic circuits $C_1, C_2$ will always produce the same output across evaluations.
To avoid the impossibility, we explicitly consider obfuscating programs whose state may change over time.

\begin{definition}[Stateful iO, Informal; \Cref{def:stateful-io}]
    Stateful indistinguishability obfuscation allows two stateful quantum programs to be indistinguishably obfuscated if and only if they produce the same output distributions over \emph{any} sequence of evaluations.
    % \textit{many} sequential evaluations. Crucially, the evaluation inputs may depend non-uniformly on the programs being obfuscated.
\end{definition}

To show that stateful iO evades our impossibility result, we show that it is implied by ideal unitary obfuscation. Intuitively, a unitary program can authenticate a private register to itself in order to maintain state across multiple queries. Using ideal obfuscation, the authentication key is protected, so the register remains private from the evaluator.
Similarly as before, we can further port this to the classical oracle model using the compiler of \cite{arxiv:HT25}.

Furthermore, unlike ideal obfuscation, there does not seem to be an inherent barrier to constructing stateful iO in the plain model.
Because the output distributions are required to be close even for query sequences that can depend on the programs being obfuscated, stateful iO avoids the self-referential techniques used to rule out ideal obfuscation~\cite{barak2001possibility}.

Finally, we show that stateful iO implies best-possible testable one-time programs. Thus, if one could construct stateful iO in the plain model, they would also construct best-possible testable one-time programs in the plain model.
%At a high level, we show that if two quantum programs $f$ and $g$ are equivalent for one query, then their SEQ implementations are equivalent for any query sequence.
This again uses \Cref{lem:informal-self-adjoint-hiding}, which when combined with the security guarantee of stateful iO, shows that whatever could be learned from the stateful obfuscation of $\mathsf{SEQ}(f)$ could also be learned from $g$ by statefully obfuscating $\mathsf{SEQ}(g)$.

\subsection{Discussions}
\label{sec:intro-discuss}

\paragraph{Should Best-Possible OTPs Consider Mixed Programs?}
\label{para:discuss-mixed-vs-pure}
One possible concern regarding our definition of best-possible one-time programs (and the resulting impossibility) is that it differs from best-possible obfuscation in one technical respect.
Namely, we require the simulator $\Sim$ to work for any \emph{mixed-state} implementation that one-time implements the same functionality on average, rather than only for fixed pure-state implementations.

This choice is intentional.
Interpreting ``best-possible'' literally, i.e.\ as secure as \emph{any} one-time implementation, naturally leads to a comparison class that includes all efficient implementations, including mixed ones.

For comparison, let us also briefly look at why best-possible obfuscation is traditionally defined with respect to (effectively) pure-state implementations only \cite{goldwasser2007best}.
The reason is simply that for obfuscation of deterministic functionalities, this distinction collapses: if a mixed state implements a deterministic functionality on average, then every pure state in its support implements the same functionality, so quantifying over mixed states only makes the definition conceptually more involved.
Turning back to one-time programs of randomized functionalities, the distinction need not collapse: a mixed implementation can be correct only in aggregate while its pure components are not individually correct.
This gap is exactly what drives our impossibility for unrestricted best-possible OTPs, and also why restricting to testable programs avoids that counterexample.

A pure-only best-possible notion is arguably still meaningful, but it should be interpreted as a weaker security guarantee: best among pure-state implementations only.
Our definition of testable OTP can exactly be seen as a formal approach for capturing pure-state implementations.
However, if the goal is the \emph{absolute strongest} benchmark in the usual English sense of best-possible, then quantifying over mixed-state implementations is the more appropriate definition.

\paragraph{Scope of the Impossibility.}
\label{para:scope-classical-interface}
Our impossibility result considers one-time protection of a classical sampling channel: the input is classical (or measured in the standard basis if not) and the output is one classical sample.
This is the natural interface for one-time protection of randomized functionalities.
For example, for one-time signatures, one naturally asks that the adversary can obtain at most one classical signature on one classical message.
%Therefore, best-possible one-time programs for general quantum channels are also impossible.

With that said, one could also consider one-time protection of different quantum channels that wrap a classical functionality.
Whether our impossibility extends to every such restricted coherent extension is unclear.
We leave this setting to future work for the reasons below.
\begin{itemize}
  \item Our goal is to rule out generic best-possible compilers for the broadest functionality classes (all quantum channels); ruling out the classical sampling channel already suffices for this purpose.
  By contrast, restricting the protected class can evade impossibility, one example of which being restricting the class to be the constant-distribution functionalities.
  \item It is unclear what is the right/meaningful coherent extension channel of a randomized functionality.
  Should the randomness be sampled once and used for every input, or should the randomness be sampled fresh for every input (such as by querying a random oracle/pairwise independent hash on the input)?
  \item Relatedly, it is unclear which coherent extension channel is useful for downstream applications, such as one-time signatures.
\end{itemize}

%\luowen{Discuss the subtlety about one-time correctness of the obfuscator, vs one-time equivalence for security}
\paragraph{Prevalence of Our Best-Possible OTP Impossibility.}
In our best-possible impossibility, we only identify two classes of functionalities for which best-possible OTPs cannot exist.
One possible loophole of our impossibility is that perhaps once you exclude one class from consideration, then best-possible OTPs may indeed be possible.
Inspecting our impossibility further, the lossy class is probably unlikely to be of interest for one-time programs, where the randomized program samples the same distribution for every input.

While this is true, we argue that the impossibility could creep up even in unsuspecting functionalities.
For starters, consider the randomized functionality $f(x; r) \to (y_1, y_2)$ where $y_1$ is sampled from a fixed distribution independent of $x$.
Then, this is a functionality that is not lossy yet (a suitably adapted version of) our impossibility still applies.
One could even further consider more involved variants of this where the lossy structure is even less apparent, such as applying a pseudorandom permutation to the output.
Given these examples, we suspect that it is unlikely to identify a meaningful and large class of functionalities where best-possible OTPs are possible, since any class that is closed under augmentation of such structures is also susceptible to our impossibility.
%\luowen{Discuss  where it could be decomposed into having common randomness}

\iffalse
\paragraph{Strengthening Impossibility.}
\luowen{This is fine?}
Our paper currently only rules out best-possible one-time programs where security holds as long as the two functionalities are one-time statistically indistinguishable.
This stems from the fact that we consider lossy encryptions that are only statistically lossy.
If we start with the stronger assumption that perfectly lossy encryptions exist, then we can also rule out the weaker notion of best-possible one-time program where security holds as long as the two functionalities are one-time perfectly equivalent.
Such a lossy encryption scheme can be instantiated from e.g.\ group actions.
\fi

\paragraph{Implications for Defining Obfuscation of Quantum Programs.}
%\luowen{Move this to 3.3?}
%Our results regarding one-time programs have implications for the definitions of obfuscation for quantum sampling programs.
Previous work propose natural definitions for indistinguishability obfuscation for quantum circuits that compute pseudodeterministic classical functions~\cite{STOC:BBV24, CG24} and unitaries~\cite{arxiv:HT25} --- for every two quantum programs that compute the same pseudodeterministic function (or unitary), their obfuscations should be indistinguishable.
A natural attempt at generalizing this to quantum \textit{channels} would be to ask that if two quantum programs output (approximately) the same mixed state, their obfuscations should look indistinguishable.
Note that this natural attempt only asks the two programs to produce samples from same mixed state for the \emph{first} evaluation.
This is because for (approximately) classical functions or more generally unitaries, one-time equivalence is equivalent to many-time equivalence by gentle measurement.
However, when the functionality is not unitary and the program is described by a program state, the program state itself may evolve after the first query.
Crucially, we can rule out this one-time security notion by invoking our best-possible OTP impossibility, since intuitively if such an object exists, then it would give a best-possible OTP similar to how iO is equivalent to best-possible obfuscation.
We formalize this in \Cref{sec:sampling-qsio}.

Our definition of stateful iO gets around this problem by considering \emph{many-time} equivalence of two programs rather than \emph{one-time} equivalence.
Thus, when defining obfuscation for quantum sampling programs, a notion which takes into account behavior on sequential queries, like stateful iO, seems necessary.

\paragraph{Future Directions.}
Are there any interesting examples of one-time programs that are not testable?
Less interesting examples appear in our impossibility where lossy-mode encryptors or constant-distribution samplers are considered.
It would be interesting to identify other examples that can be one-time protected yet is different from the one considered in our impossibility.

To further motivate one-time program security beyond SEQ, we sketch a generic efficient attack against any testable one-time program $f$.
The attack aims to estimate multiple (efficient) observables $O_1, ..., O_t$ on different input states $x_1, ..., x_n$; in other words, we aim to estimate $\Tr(O_i f(x_j))$ for all $i, j$ up to a small inverse polynomial error.
(For example, we can estimate how often each bit of the output is 0 for $x_1, ..., x_n$.)
While this attack is not always useful (say for forging multiple signatures against a one-shot signature), it is a clear separation between SEQ and a single physical query (or even any polynomial number of physical queries) for almost all functionalities.

For a single observable and a single input, this can be estimated using Marriott--Watrous style rewinding \cite{MW05}: one simply alternates the measurement of the observable and the projection back to the original state using the reflection oracle.
To extend this to multiple observables on multiple inputs, it suffices to make each estimation measurement gentle: this can be done via the Laplace noise measurement \cite[Corollary 6]{AR19-qdp}.
We formalize this generic attack in \Cref{sec:many-observables-attack}.

On the (many-time) obfuscation front, does unitary indistinguishability obfuscation imply stateful iO?
The main difficulty in lifting our proof from ideal obfuscation is to adapt the argument that the obfuscator protects the authentication key.
Furthermore, does classical indistinguishability obfuscation imply unitary iO?

\subsection{One-Time Security in Previous Works}
\label{sec:more_discussions_prior_defs}

For completeness, we survey one-time security notions studied in prior works in this section.

\begin{itemize}

\item The work by Gupte et al.\ \cite{gupte2025quantum} propose a notion called ``single effective query (SEQ) simulation-based one-time security'' that circumvents the impossibility results (without hardware assumptions) of the 
simulation-based one-time security defined in preceding works \cite{goldwasser2008one,broadbent2015quantum}. In this work, we revise and generalize their SEQ definition to all functionalities and, to distinguish, we refer to their original (classical) version as CSEQ.

More concretely, the CSEQ one-time security notion requires that the adversary's view after maliciously utilizing the one-time program of $f$ can be simulated by only querying a restricted stateful query interface to $f$.
        Such an interface attempts to record prior queries made to $f$ and only answers the query if no prior queries are recorded.

        They show that this notion is achievable for \emph{all} functionalities in the classical oracle model, which is both good and also unsatisfactory.
        It is good because it is a single notion that captures all functionalities.
        It is however unsatisfactory because it delegates the problem of identifying what security their construction achieves to the problem of what can be learned through the CSEQ interface.
        For example, the CSEQ interface would allow an unlimited number of evaluations for a deterministic functionality thus keeping consistency with the impossibility result \cite{BGS13}.
        However, for a general randomized functionality, it is not clear what can be learned through the CSEQ interface since the interface is somewhat complicated.
        Our SEQ notion somewhat mitigates this issue since our SEQ security is much simpler.

%\item Secondly, as previously mentioned, both of \cite{gupte2025quantum,gunn2024quantum}'s security definitions \emph{only work for the oracle model}. More precisely: \cite{gupte2025quantum}'s definition (classical SEQ, i.e., CSEQ) does not require an oracle in the definition, but they have shown a class of randomized functions that are impossible to realize CSEQ security with once given non-black-box access. \cite{gunn2024quantum}'s security notion itself is designed for the oracle model, allowing the adversary to access the program with only (quantum) black-box queries.

    \item Apart from CSEQ security, \cite{gupte2025quantum} has presented several security notions in the plain model, but are relatively restricted to a specific setting or targeting a specific application.
    These notions can be viewed as extensions of 
    the security requirement for one-time signature tokens in \cite{BS23} to more functionalities.
    \cite{BS23} states that it is impossible to produce two valid outputs signatures for two distinct inputs (messages) with respect to the signature verification algorithm.
    This security can be generalized to any unforgeable functionalities such as one-time NIZKs \cite{gupte2025quantum}.
    Similarly, the security of one-time PRFs (where part of the PRF input is sampled randomly) \cite{gupte2025quantum} only states that it is impossible to simultaneously distinguish the outputs on two different inputs from random.
    \item The work by Gunn and Movassagh \cite{gunn2024quantum} showed that for their construction, a somewhat natural class of attackers (including the attack in \cite{BGS13}) cannot even produce a second output as long as the functionality samples a high min entropy output on every input.
        However, it is possible that a more malicious attacker can still produce multiple outputs: an example of this can be seen from the plain-model impossibility in \cite[Theorem~7.7]{gupte2025quantum}.
   % \item Last but not least, 
       
  %\jiahui{This is a marker. Edits made until here. continue later.}
%         \justin{Another downside of the SEQ definition: it inherently requires an oracle model (requiring VBB) and isn't achievable in the plain model.}
% \luowen{It is easy to interpret either definition in the standard model but it is still not a strong guarantee: this is exactly what I was trying to do in this point. I don't think we need to trash the prior works so hard. We can be generous :)}

%         \justin{Also, even regular BPOTP isn't clearly that much easier to use than SEQ.}
\end{itemize}

We conclude by pointing out that all the constructions in these works can be straightforwardly modified so that they are testable (using ideas discussed before).
Therefore, these security notions are all achievable by using a best-possible testable one-time compiler.

%\subsection{Other Related Works}

\ifllncs \else
\section{Preliminaries}
\subsection{Quantum Computation}

We fix canonical description formats for all objects. For a classical randomized function $f$, we write $|f|$ for the bit-length of its canonical description (excluding any oracle access it might carry). For a quantum sampling program $P=(\rho, C)$, we write $|P|$ for the length of a canonical description consisting of a description of the state $\rho$ together with the description of the (possibly oracle-aided) circuit $C$; oracle access is not counted toward length. For a stateful quantum program $(U, \ket{\psi})$, we similarly write $|(U, \ket{\psi})|$ for the length of a canonical description of $U$ together with a description of $\ket{\psi}$.

When comparing programs by length (e.g., in indistinguishability obfuscation), we allow \emph{padding} to equalize lengths without changing behavior: padding may add ancilla qubits initialized to $\ket{0}$ and insert no-op gates that leave the program’s input–output behavior unchanged.

When we say a result relativizes to any unitary oracle, we mean that it holds in the model where all parties receive query access to a family of unitaries including inverse and control as is standard \cite{Zha25-unitary}.

We will need a few facts about the trace distance of two pure states whose amplitudes are defined by classical probability distributions.
\begin{definition}[Hellinger Distance] \label{def:hellinger-distance}
For two probability density functions $f, g$ over a finite domain $\mathcal{X}$, the squared Hellinger distance between $f$ and $g$ is defined as
\begin{align*}
    H^2(f, g) = \frac{1}{2} \sum_{x \in \mathcal{X}} \left(\sqrt{f(x)} - \sqrt{g(x)}\right)^2 = 1 - \sum_{x \in \mathcal{X}} \sqrt{f(x) g(x)}.
\end{align*}
\end{definition}

\begin{lemma}\label{lemma:hellinger-TV}
    Let $D_1, D_2$ be two probability density functions over a finite domain $\mathcal{X}$, then
    \begin{align*}
        \frac{1}{2}H^2(D_1, D_2) \le \TV(D_1, D_2) \le H(D_1, D_2).
    \end{align*}
\end{lemma}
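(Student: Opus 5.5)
The plan is to prove the two inequalities separately, starting from the identity $|D_1(x)-D_2(x)| = |\sqrt{D_1(x)}-\sqrt{D_2(x)}|\,(\sqrt{D_1(x)}+\sqrt{D_2(x)})$ for each $x$. Throughout I use $\TV(D_1,D_2)=\frac12\sum_x|D_1(x)-D_2(x)|$ and the $\frac12$-normalized $H^2$ of \Cref{def:hellinger-distance}.

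\emph{Lower bound.} For every $x$, $|\sqrt{D_1(x)}-\sqrt{D_2(x)}|\le \sqrt{D_1(x)}+\sqrt{D_2(x)}$. Multiplying both sides by $|\sqrt{D_1(x)}-\sqrt{D_2(x)}|$ gives $(\sqrt{D_1(x)}-\sqrt{D_2(x)})^2\le |D_1(x)-D_2(x)|$. Summing over $x$ and halving yields $H^2\le \TV$. This already implies the stated $\frac12 H^2\le\TV$, with room to spare.

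\emph{Upper bound.} Apply Cauchy--Schwarz to the factorization above. This gives
\[
2\,\TV \le \Big(\sum_x(\sqrt{D_1(x)}-\sqrt{D_2(x)})^2\Big)^{1/2}\Big(\sum_x(\sqrt{D_1(x)}+\sqrt{D_2(x)})^2\Big)^{1/2}.
\]
The first factor equals $\sqrt{2H^2}$. The second factor equals $\sqrt{2+2\sum_x\sqrt{D_1D_2}}=\sqrt{2(2-H^2)}$. Hence $\TV\le H\sqrt{2-H^2}$, which is Le Cam's inequality.

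The main obstacle is the last step, getting from $H\sqrt{2-H^2}$ down to $H$. I will not claim that step. The bound $H\sqrt{2-H^2}$ is tight, and it exceeds $H$ whenever $0<H<1$. A concrete check shows this is a genuine failure and not a slack in the argument. Take $\mathcal{X}=\{0,1\}$, $D_1=(\tfrac12+\delta,\tfrac12-\delta)$ and $D_2=(\tfrac12-\delta,\tfrac12+\delta)$. Then $\TV=2\delta$ and $H^2=1-\sqrt{1-4\delta^2}\approx 2\delta^2$. For $\delta=0.1$ this gives $H\approx 0.142<0.2=\TV$.

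So, as worded with the $\frac12$-normalized $H$, the right-hand inequality $\TV\le H$ is false. What the argument proves is $\frac12H^2\le H^2\le\TV\le H\sqrt{2-H^2}\le\sqrt2\,H$. The literal statement $\TV\le H$ holds only if $H$ is read as the unnormalized distance $\big(\sum_x(\sqrt{D_1}-\sqrt{D_2})^2\big)^{1/2}$, which is $\sqrt2$ times the normalized one.

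My proposal is therefore to keep the lower bound exactly as stated. For the upper bound, I would replace $\TV\le H$ by $\TV\le\sqrt2\,H$, or equivalently adopt the unnormalized convention. Downstream uses in the paper only need $\TV$ to be small whenever $H$ is small, and a constant factor of $\sqrt2$ does not affect negligibility or inverse-polynomial estimates. The proof is then complete by the two displayed steps.
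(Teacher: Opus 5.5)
Your analysis is correct. The paper states this lemma in the preliminaries without any proof, and under the paper's own $\frac12$-normalized definition (\Cref{def:hellinger-distance}) the right-hand inequality $\TV \le H$ is indeed false. Your two-point example checks out: at $\delta=0.1$ it gives $\TV=0.2$ and $H\approx 0.142$. It in fact attains equality in $\TV \le \sqrt{1-(1-H^2)^2} = H\sqrt{2-H^2}$, so your Le Cam bound is sharp and $\sqrt{2}$ is the optimal constant relative to $H$. Your lower-bound argument is correct and gives the stronger $H^2\le\TV$. The Cauchy--Schwarz computation is also correct. What the paper wrote is the standard form of this inequality for the unnormalized distance $\bigl(\sum_x(\sqrt{D_1(x)}-\sqrt{D_2(x)})^2\bigr)^{1/2}$, so the lemma and the definition use different conventions.

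Of your two repairs, prefer keeping the normalized definition and weakening the upper bound to $\sqrt{2}\,H$ (or $H\sqrt{2-H^2}$). The very next result, \Cref{lemma:hellinger-trace-distance}, relies on $1-H^2=\sum_x\sqrt{D_1(x)D_2(x)}$. That identity holds only for the normalized $H$, so switching conventions globally would break that formula.

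That same lemma also gives a route to your upper bound without Cauchy--Schwarz. Measuring $\ket{\psi_1},\ket{\psi_2}$ in the computational basis yields samples from $D_1,D_2$, and trace distance can only decrease under measurement. Hence $\TV(D_1,D_2)\le\sqrt{1-(1-H^2)^2}=H\sqrt{2-H^2}$, which is exactly your bound.

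Finally, a small correction to your last remark: \Cref{lemma:hellinger-TV} is not invoked anywhere else in the paper. The fix therefore has no downstream consequences, rather than harmless ones.
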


\begin{lemma}\label{lemma:hellinger-trace-distance}
    Let $\mathcal{X}$ be a finite set and let $D_1, D_2$ be probability densities on $\mathcal{X}$. Let
    \begin{align*}
        \ket{\psi_1} = \sum_{x\in \mathcal{X}} \sqrt{D_1(x)} \ket{x} \quad \text{ and } \quad \ket{\psi_2} = \sum_{x \in \mathcal{X}} \sqrt{D_2(x)}\ket{x}\;.
    \end{align*}
    Then, the trace distance between the pure states $\ket{\psi_1}, \ket{\psi_2}$ is
    \begin{align*}
        T(\ket{\psi_1} \bra{\psi_1}, \ket{\psi_2} \bra{\psi_2}) = \sqrt{1 -  (1 - H^2(D_1, D_2))^2}.
    \end{align*}
\end{lemma}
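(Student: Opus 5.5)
Because the states have nonnegative real amplitudes, I will compute the trace distance from the pure-state fidelity formula and then identify the inner product with the Bhattacharyya coefficient.

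\emph{Step 1 (pure-state trace distance).} For any two unit vectors $\ket{\psi_1},\ket{\psi_2}$,
\[
T(\ketbra{\psi_1},\ketbra{\psi_2}) = \sqrt{1-|\langle\psi_1|\psi_2\rangle|^2}.
\]
To see this, note that the operator $\ketbra{\psi_1}-\ketbra{\psi_2}$ is traceless, Hermitian, and of rank at most two. It is supported on $\mathrm{span}\{\ket{\psi_1},\ket{\psi_2}\}$, so its eigenvalues have the form $\pm\lambda$. We then get $2\lambda^2=\Tr\bigl[(\ketbra{\psi_1}-\ketbra{\psi_2})^2\bigr]=2-2|\langle\psi_1|\psi_2\rangle|^2$. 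The trace distance is half the trace norm, which equals $\lambda$, and this gives the formula.

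\emph{Step 2 (inner product).} Using the orthonormality of $\{\ket{x}\}$,
\[
\langle\psi_1|\psi_2\rangle=\sum_{x\in\mathcal{X}}\sqrt{D_1(x)D_2(x)}.
\]
This is real and nonnegative. By the second expression in \Cref{def:hellinger-distance}, it equals $1-H^2(D_1,D_2)$.

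\emph{Step 3.} Substituting Step 2 into Step 1 gives $T=\sqrt{1-(1-H^2(D_1,D_2))^2}$, which is the claim.

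The only step that needs any care is Step 1. It is the standard fact that the Fuchs--van de Graaf inequality is tight for pure states, and the rank-two eigenvalue argument above settles it. One small check is required: the states must be normalized, and they are, because $\sum_x D_i(x)=1$.
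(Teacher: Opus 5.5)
Your proof is correct. The traceless, rank-at-most-two eigenvalue argument gives $T=\sqrt{1-|\langle\psi_1|\psi_2\rangle|^2}$. The overlap $\sum_x\sqrt{D_1(x)D_2(x)}$ is real and nonnegative and equals $1-H^2(D_1,D_2)$ by the second form of \Cref{def:hellinger-distance}, so the claim follows by substitution.

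The paper states this lemma in its preliminaries without proof, treating it as a standard fact. It uses the same pure-state identity $\sqrt{1-|\langle\Psi|\widetilde{\Psi}\rangle|^2}$ in the proof of \Cref{lem:l2-to-reduced-trace}. There is therefore no competing route to compare, and your argument is the standard derivation that fills the gap.
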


\subsection{Quantum Authentication Schemes}

\begin{definition}[Quantum Authentication Scheme]
    A \textbf{quantum authentication scheme} is a tuple of QPT algorithms $(\KeyGen, \Enc, \Dec, \Ver)$ with the following behavior.
    \begin{itemize}
        \item $k\gets \KeyGen(1^\secpar, n)$ takes as input the security parameter $\secpar$ and a length $n\in \bbN$, then outputs a key $k$. Here, $n$ is the length of the state to be authenticated.
        \item $\ket{\psi'} \gets \Enc_k(\ket{\psi})$ takes as input a key $k$ and an $n$-qubit state $\ket{\psi}$, then outputs another state $\ket{\psi'}$.
        \item $\ket{\psi''}/\bot \gets \Dec_k(\ket{\psi'})$ takes as input a key $k$ and a state $\ket{\psi'}$, then outputs another state $\ket{\psi''}$ or $\bot$.
        \item $\Accept/\Reject \gets \Ver_k(\ket{\psi'})$ takes as input a key $k$ and a state $\ket{\psi'}$, then outputs $\Accept$ (accept) or $\Reject$ (reject). $\Ver_k$'s behavior is identical to $\Dec_k$'s, except $\Ver_k$ outputs $\Accept$ whenever $\Dec_k$ would output $\ket{\psi}$ and outputs $\Reject$ whenever $\Dec_k$ would output $\bot$.
    \end{itemize}
    A quantum authentication scheme must satisfy correctness and security properties.
    \begin{itemize}
        \item \textbf{Correctness.} For every $k$ in the support of $\KeyGen$ and every state $\ket{\psi}$,
        %\bhaskar{I changed $\Auth$ to $\Enc$} 
        \[
            \Dec_k(\Enc_k(\ket{\psi})) = \ket{\psi}
        \]
        \justin{Might want to be a bit more specific about the behavior to say that these can be implemented as unitaries that leave the ancillas as $\ket{0}$.}
        % and every state $\ket{\psi}$, 
        % \begin{gather*}
        %     \Pr[|\braket{\psi'| \psi}|^2 \geq 1-2^{-\lambda}: \ket{\psi'} \gets \Dec_k(\Enc_k(\ket{\psi'}))] = 1-\negl
        %     % \\
        %     % \Pr[\Reject \gets \Ver_k(\Enc_k(\ket{\psi})] = \negl
        % \end{gather*}

        \item \textbf{Security.} For every QPT adversary $\adv$, there exists an $\epsilon(\secpar)\in[0,1]$ such that for every $\ket{\psi}$,
        %\luowen{What is the definition of $\{\ket\psi\}$?}
        %\justin{$\ket{\psi}$ is the input state. It's the sender's choice, so it's arbitrary.}
        \[
            \left\{\Dec_k(\rho) \,\middle\vert \begin{array}{c}
                 k\gets \KeyGen(1^\secpar)  \\
                 \rho \gets \adv(\Enc_k(\ket{\psi})) 
            \end{array}\right\} 
            \approx 
            (1-\epsilon)\{\ket{\psi}\} + \epsilon\{\bot\}
        \]
        If this holds given oracle access to $\Ver_k$ (or if there is a public key $\mathsf{vk}$ that allows implementing $\Ver_k$ and it holds given access to $\mathsf{vk}$, we say that the scheme is publicly verifiable.\bhaskar{I don't understand this definition of security.}
        \justin{That's probably because I missed the decoding operation. Does it make sense now?>}
    \end{itemize}
\end{definition}

\cite{STOC:BBV24} shows how to construct publicly verifiable quantum authentication using coset states.

\subsection{Ideal Obfuscation of Unitaries}

In this section, we will deal with quantum programs $(\ket{\psi}, Q)$ that approximately implement a unitary transform $\rho \mapsto U \rho U^\dagger$.

\begin{definition}
    A quantum program $(\ket{\psi}, Q)$ is an $\epsilon$-approximation of a unitary transformation $U$ if 
    \begin{align*}
        \| Q (\cdot \otimes \ket{\psi}) - U(\cdot)\|_\diamond \le \epsilon.
    \end{align*}
\end{definition}

\begin{definition}[Quantum State Obfuscation for Unitaries]\label{defn:unitary-obf}
A \emph{quantum state ideal obfuscation} for the class of approximately-unitary quantum programs in the classical oracle model is a pair of QPT algorithms $(\mathsf{QObf}, \mathsf{QEval})$ with the following syntax:
\begin{itemize}
    \item $\mathsf{QObf}(1^\lambda, (\ket{\psi}, Q)) \to (\ket{\widetilde{\psi}}, \mathsf{F})$: 
    The obfuscator takes as input the security parameter $1^\lambda$ and a quantum program $(\ket{\psi}, Q)$ in the plain model\footnote{To circumvent impossibility results of \cite{barak2001possibility}, we say that $Q$ does not make use oracles.}, and outputs an obfuscated program specified by a state $\ket{\widetilde{\psi}}$ and a classical function $\mathsf{F}$.
    \item $\mathsf{QEval}^{\mathsf{F}}(\rho_{\mathsf{in}}, \ket{\widetilde{\psi}}) \to \rho_{\mathsf{out}}$: 
    The evaluation algorithm executes the obfuscated program on quantum input $\rho_{\mathsf{in}}$ by making use of the state $\ket{\widetilde{\psi}}$ and making superposition queries to the classical oracle $\mathsf{F}$, and produces the quantum output $\rho_{\mathsf{out}}$.
\end{itemize}

These algorithms have to satisfy the following properties.

\begin{itemize}
    \item \textbf{Functionality-Preserving:}
For every quantum program $( \ket{\psi}, Q )$ which is an $\negl$-approximation of some unitary $U$, the quantum program
\[
    \mathbb{E}_{(\ket{\widetilde{\psi}}, \mathsf{F}) \leftarrow \mathsf{QObf}(1^\lambda, (\ket{\psi}, Q))} 
    \big( \ket{\widetilde{\psi}}, \mathsf{QEval}^{\mathsf{F}} \big)
\]
is also a $\negl$-approximation of $\mathcal{U}$. 
% Note that our convention for classical oracles (Section~3) ensures that $\mathsf{ctrl}\text{-}\mathsf{F}$ can be efficiently computed by querying $\mathsf{F}$. Therefore, the resulting quantum program would satisfy all properties described in Corollary~6.6.

\item \textbf{Ideal Obfuscation:}
There exists a QPT simulator $\mathsf{Sim}$ such that for every QPT adversary $\mathcal{A}$ and quantum program $(\ket{\psi}, Q)$ that $\negl$-approximates some unitary ${U}$, and every QPT distinguisher $D$,
\[
\left|
\Pr\left[1 \leftarrow D(\mathcal{A}(\mathsf{QObf}(1^\lambda, (\ket{\psi}, Q)))\right]
-
\Pr\left[1 \leftarrow 
D( \Sim^{U, U^\dagger} (1^\secp, \calA, n(\secp), m(\secp))\right]
\right|
= \negl
\]
Here, $n$ is the input length of the quantum program, $m$ is the size of the quantum program.

\end{itemize}
\end{definition}

\begin{theorem}[Theorem 7.2 of~\cite{arxiv:HT25}]\label{thm:HT25}
    There exists a quantum state ideal obfuscation for the class of approximately unitary programs with quantum inputs and outputs in the classical oracle model, assuming post-quantum one-way functions.\footnote{Alternatively, the one-way functions can be replaced with a random oracle, in which case the classical oracle becomes inefficient.}
\end{theorem}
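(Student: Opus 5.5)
This statement is imported from \cite{arxiv:HT25}. If I had to reprove it, I would follow the template of quantum state obfuscation for pseudodeterministic circuits \cite{STOC:BBV24}, upgraded to unitary outputs.

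\textbf{Construction.} First, write the program $(\ket{\psi}, Q)$ as a Clifford+$T$ circuit acting on the input register together with $\ket{\psi}$ and ancillas. The obfuscator samples a classical key $k$ for a quantum authentication scheme, for instance a Clifford or trap-code scheme whose tags can be checked by coset-state style verification. It then outputs the authenticated encoding $\ket{\widetilde\psi} = \Enc_k(\ket{\psi}\otimes \ket{0}\otimes(\text{magic states}))$ and a classical oracle $\mathsf F$. Evaluation proceeds gate by gate:
\begin{itemize}
\item Clifford gates are applied homomorphically, by updating the Pauli or Clifford frame.
\item Each $T$ gate is implemented by gadget teleportation with an authenticated magic state. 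The evaluator measures, and queries $\mathsf F$ on the (authenticated) measurement outcomes.
\item $\mathsf F$ verifies the outcomes against $k$ and returns the next correction, encrypted under the frame. On any verification failure it returns $\bot$.
\end{itemize}
The input and output are handled by dedicated oracle entry points that authenticate the input and de-authenticate the output. Functionality-preservation then follows from correctness of the authentication scheme and of the gadgets, giving a $\negl$-approximation of $U$.

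\textbf{Security, first hybrid.} I would replace $k$ and $\mathsf F$ by a lazily sampled ideal version. In this version $\mathsf F$ refuses every query except transcripts that are valid, honestly generated measurement records for some step of the computation. This step uses authentication security, with public verifiability obtained from the coset-state construction of \cite{STOC:BBV24}, together with one-way functions. The argument is that any query to $\mathsf F$ on a record that does not come from an honest evaluation passes only with negligible probability.

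\textbf{Security, second hybrid.} Next I would show that an adversary restricted to honest transcripts effectively holds a black box computing the unitary $\rho\mapsto U\rho U^\dagger$ and its inverse. The authenticated state is simulated by a dummy encoding of $\ket{0}$ under a fresh key, and every time the adversary completes an honest evaluation ``window'', the simulator forwards the decoded input to its own $U$ oracle, or to $U^\dagger$ for backward runs.

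\textbf{Main obstacle.} I expect the hardest step to be the second hybrid. The adversary can query $\mathsf F$ in superposition, partially evaluate, and uncompute, so intermediate authenticated states need not correspond to well-defined queries to $U$. My first attempt would be a compressed-oracle-style (or purification-based) argument establishing that intermediate frames are information-theoretically hidden by the one-time-padded Clifford frame. Only the end-to-end map from input to output is then distinguishable, and it equals $U$ (or $U^\dagger$) up to $\negl$ error because $Q$ is approximately unitary, which makes the whole evaluation gently reversible. Finally, to remove the dependence on an ideal random function, I would instantiate the oracle's internal randomness with a PRF, which only requires post-quantum one-way functions.
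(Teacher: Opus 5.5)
The paper gives no proof of this statement: it is imported verbatim from \cite{arxiv:HT25} and used only as a black box in the construction behind \Cref{thm:ideal-sqO}, so your first sentence is all the paper relies on. Your reconstruction, however, has genuine gaps in two places.

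First, the input/output interface. In the classical oracle model $\mathsf{F}$ is a classical function, so there are no ``oracle entry points that authenticate the input and de-authenticate the output.'' The evaluator must bring the input under authentication itself, for example by teleporting it into pre-authenticated EPR halves and absorbing the Pauli outcomes into a key update, as in the hybrids of the proof of \Cref{thm:ideal-sqO}. The output can only be released through classical decoding information returned by $\mathsf{F}$. Bounding what that information reveals about $k$ is a core difficulty, not bookkeeping, because $\mathsf{F}$ is queried in superposition on adversarial measurement records, across repeated forward and inverse evaluations. \cite{arxiv:HT25} needs a new primitive for exactly this: a functional quantum authentication scheme with simulation-based security.

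Second, your ``second hybrid'' is the substance of the theorem, and your plan for it does not close it. The \cite{STOC:BBV24} template you start from handles pseudo-deterministic programs with classical outputs. There, the honest output is essentially a deterministic function of the input, so the simulator can produce it with a classical query. With quantum inputs and outputs there is no well-defined event ``an honest evaluation window was completed'' at which the simulator could forward a query to $U$ or $U^\dagger$. The adversary can run the gadgets coherently, query $\mathsf{F}$ on superpositions of measurement records, and uncompute partway.

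Appealing to the one-time-padded frame does not help. $\mathsf{F}$ knows $k$, and its answers are precisely the channel through which partial evaluations could leak: the verification bits, and the adaptive $S$-type corrections your $T$-gadgets require. So hiding must be argued against an adversary holding this oracle, not derived from the pad alone.

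This is where \cite{arxiv:HT25} brings its second new ingredient. It is a compiler that rewrites the circuit as a projective linear-plus-measurement program: non-adaptive Clifford gates interleaved with adaptive but compatible measurements. This eliminates exactly the adaptive Clifford corrections your gadget-teleportation scheme relies on. Without analogues of these two ingredients, your sketch restates the goal rather than proving it.
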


\fi

\section{Impossibility of Best-Possible One-Time Compilers}\label{sec:impossibility-of-bp-otp}
\subsection{Definition of a Best-Possible One-Time Program Compiler}\label{sec:definitions}
A one-time program compiler $\OTP$ takes a randomized function $f$ and outputs a sampling program $P$ that one-time implements $f$. This means that for any input $x\in \calX$, the distribution of $P(x)$ is statistically close to $f(x)$. Finally $\OTP$ is best-possible if $\OTP(f)$ can be simulated by any program implementing $f$ for which $|f| = |P|$. 

\paragraph{Randomized Functions.} A randomized function $f: \calX \times \calR \to \calY$ takes an input $x \in \calX$, samples randomness $r \getsr \calR$ and outputs $y = f(x; r) \in \calY$. Let $f(x)$ be the distribution over $y$-values that results from this procedure.
By considering $f$ as a circuit description of itself, $f$ is computable in time $\poly[|f|]$, where $|f|$ is its description length.

Let $\calF$ be a set of randomized functions, and for each $\secp \in \bbN$, let $\calF_\secp$ be the set of all functions in $\calF$ with description length $\secp$.

\paragraph{Sampling Programs.} Given a randomized function $f: \calX \times \calR \to \calY$, a sampling program $P$ that implements $f$ is a quantum program $P = (\rho, \Eval)$ comprising a (possibly mixed) quantum state $\rho$ and a (possibly oracle-aided) quantum unitary circuit $\Eval$. We write $|P|$ for its description length.

$P$ can be evaluated on any $x \in \calX$ by computing 
    $$\Eval(\ketbra{x}_{\scrX} \otimes \ketbra{0}_{\scrY} \otimes \rho_{\scrP})$$ 
and measuring the $\scrY$ register to obtain output $y$. Let $P(x)$ refer to the distribution over $y$-values that results from this evaluation procedure.

\begin{definition}[Correctness of Sampling Programs]\label{def:quantum-samp-program}
Let $\mu(\secp)$ be a negligible function. Given a randomized function $f \in \calF_\secp$ and a sampling program $P$, we say that \textbf{$P$ implements $f$ with error $\mu$} if for every $x \in \calX$, the statistical distance between $P(x)$ and $f(x)$ is $\leq \mu(\secp)$. % When this property is satisfied with $\mu(\secp) = 0$, we say that $\{P_\secp\}_{\secp \in \bbN}$ \textbf{implements} $\{f_\secp\}_{\secp \in \bbN}$ \textbf{with perfect correctness}.
\end{definition}

Our notion of correctness is really a notion of one-time correctness. It says that the first time the program is evaluated, it will sample from approximately the desired distribution. However, measuring the output of the program may destroy the program state $\rho$, so there is no guarantee that the second evaluation of the program will be correct.

\paragraph{One-Time Program Compilers.} A one-time program compiler $\OTP$ takes a randomized function $f$ and outputs a sampling program $P$ that implements $f$. The compiler is best-possible if $\OTP(f)$ can be simulated by any program implementing $f$ for which $|f| = |P|$.

\begin{definition}[One-Time Program Compiler]\label{def:otp}
Let $\calF$ be a family of randomized functions. A \textbf{one-time program compiler for $\calF$} is a QPT machine $\OTP$ that takes as input the description
% \footnote{The description of $f$ must allow one to efficiently compute $f(x;r)$ for any values of $(x,r)$. The description can even include oracle access to $f$, in which case $\OTP(f)$ is allowed to be an oracle-aided program.}
 of a function $f \in \calF$ and outputs a sampling program $P = \OTP(f)$. For correctness, we require that there is a negligible function $\mu(\secp)$ such that for any $\secp \in \bbN$ and any $f \in \calF_\secp$, $\OTP(f)$ implements $f$ with error $\mu(\secp)$.
\end{definition}
\bhaskar{The analogous definition from the obfuscation literature (\cite{goldwasser2007best}, definition 2.4) says that with overwhelming probability over the randomness of the obfuscator, $P$ implements $f$ with $0$ error. If we imagine our obfuscator outputs a mixture over programs, then this definition implies that the mixed program $P$ implements $f$ with negligible error.}

The following definition says that a one-time program compiler is best-possible if $\OTP(f)$ can be simulated given any program $P$ that implements $f$ with low error. We consider two notions of equivalence: perfect equivalence requires $P$ to implement $f$ with $0$ error, and statistical equivalence allows $P$ to implement $f$ with a non-zero, but still negligible, error.

\begin{definition}[Best-Possible One-Time Program Compiler]\label{def:bp-sim}
    Let $\calF$ be a family of randomized functions, and let $\OTP^*$ be a one-time program compiler for $\cal{F}$. 
    
    $\OTP^*$ is a \textbf{best-possible one-time program compiler for $\cal{F}$} with statistical/perfect equivalence if for every QPT adversary $\mathcal{A}$ and any function $\mu(\secp)$ that is (respectively) negligible/identically zero, there is a QPT simulator $\Sim$ and a negligible function $\nu(\secp)$ such that for any $\secp \in \bbN$, any $f \in \calF_\secp$, any sampling program $P$ that implements $f$ with error $\mu(\secp)$ and satisfies $|P| = |f|$, and any QPT distinguisher $D$,
    \begin{align*}
        \left|\Pr \left[ 1\leftarrow D(1^\secp, \mathcal{A}(1^\secp, \OTP^*(f)))\right] - \Pr \left[ 1\leftarrow D(1^\secp, \Sim(1^\secp, P))\right]\right| \leq \nu(\secp).
    \end{align*}
\end{definition}
Note that the distinguisher $D$ can implicitly depend on $f$ because $D$ is chosen after $f$.

In the study of best-possible obfuscation, \cite{goldwasser2007best} requires perfect equivalence. In other words, the simulator only needs to work correctly when $P$ implements $f$ with $0$ error. Statistical equivalence allows $\mu$ to be a non-zero, but negligible, function. Requiring statistical equivalence yields a stronger notion of security. Now the simulator must work correctly even if $P$ does not perfectly implement $f$.

We can rule out both notions of best-possible OTP compilers. Ruling out the notion that requires statistical equivalence (\cref{def:bp-sim}) is impossible assumes the hardness of LWE; ruling out the notion that requires perfect equivalence is impossible assumes the pseudorandomness of group actions.

\subsection{Impossibility Result}\label{sec:impossibility-of-bp-otp-proof}
We show that there exists a family of functions for which there is no best-possible one-time program compiler.

\begin{theorem}\label{thm:impossibility-of-bp-otp}
    Assuming the post-quantum hardness of LWE for the parameter choices given in \Cref{thm:lossy-tdf-from-LWE}, there exists a function family $\calF$ for which there is no best-possible one-time program compiler with statistical equivalence (\Cref{def:bp-sim}).

    Assuming the weak pseudorandomness of group actions (\Cref{asm:weak-pr}), there exists a function family $\calF$ for which there is no best-possible one-time program compiler with perfect equivalence (\Cref{def:bp-sim}).
\end{theorem}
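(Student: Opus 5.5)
We follow the strategy from the technical overview and reduce to distinguishing lossy from injective keys of a lossy encryption scheme. The first step is to obtain such a scheme under each assumption. From LWE we build one via the lossy trapdoor functions of \Cref{thm:lossy-tdf-from-LWE}, encrypting $x$ as $F_{\pk}(r)\,\|\,(h(r)\oplus x)$ with a seeded extractor $h$. Lossy-mode ciphertexts are then statistically independent of $x$, and injective-mode encryption is injective in $(x,r)$. From weak pseudorandom group actions (\Cref{asm:weak-pr}) we use the folklore construction, in which lossy-mode ciphertexts are \emph{perfectly} independent of $x$. Let $\calF$ contain the circuits $D_{\pk_\lossy}(x;r)=\Enc(\pk_\lossy,x;r)$, $E_{\pk_\inj}(x;r)=\Enc(\pk_\inj,x;r)$ and $C_{\pk_\lossy,r^*}(x;r)=\Enc(\pk_\lossy,0;r^*)$, all padded to a common length. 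Padding keeps $|P|=|f|$ whenever one of these functions is used as a sampling program for another.

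Suppose toward contradiction that $\OTP^*$ is best-possible. Apply \Cref{def:bp-sim} to $f=D_{\pk_\lossy}$ with the competing program $P=$ the mixed state $\mathbb{E}_{r^*}[\ketbra{C_{\pk_\lossy,r^*}}]$ together with its evaluator. On any input $x$, $P$ outputs $\Enc(\pk_\lossy,0;r^*)$ for a uniform $r^*$, which is within negligible distance of $D_{\pk_\lossy}(x)$ (distance zero in the group-action case). So $P$ implements $f$ with error $\mu$ of the required type. Now take the identity adversary $\calA$. Best-possibility gives a simulator $\Sim$ such that $\OTP^*(D_{\pk_\lossy})\approx_c\Sim(P)$.

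The distinguisher is a collapse test. It prepares $\sum_x\ket x$, runs the program once, measures the output register, and then checks whether the input register is still in the uniform superposition by projecting onto it, accepting on success. The test is only meaningful if the program can be run on a superposition. We therefore have $\Sim$ act on $P$ as a whole: since $\Sim$ receives a mixture of constant programs, and $\Sim$ could equally be fed each constant program individually (by linearity, the mixture equals the average), it suffices to argue about $\Sim(C_{\pk_\lossy,r^*})$.

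\textbf{Main obstacle.} We must show that the collapse test accepts noticeably more often on the lossy side than on $\OTP^*(E_{\pk_\inj})$. This is the delicate step, because $\Sim$ is arbitrary. We plan to apply best-possibility a second time, to $f=C_{\pk_\lossy,r^*}$ with $P=C_{\pk_\lossy,r^*}$ itself, together with correctness of $\OTP^*$ on $C$. Correctness says $\OTP^*(C_{\pk_\lossy,r^*})(x)$ is negligibly close to the point distribution on a fixed $y$ for every $x$. Then for each basis input, measuring the output reveals essentially nothing, and a gentle-measurement argument shows that the coherent run followed by measurement disturbs $\sum_x\ket x$ only negligibly. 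Hence the test accepts with probability $1-\negl$. Chaining the computational indistinguishabilities (with distinguisher $D$ depending on $f$, as the definition permits) should give that the test also accepts $\OTP^*(D_{\pk_\lossy})$ with probability $1-\negl$. We also need to check that the relation ``$D\approx$ mixture of $C$'s'' transfers through $\Sim$ as described above; if it does not, we will instead directly compare $\OTP^*(D)$ against $\mathbb{E}_{r^*}\OTP^*(C_{r^*})$ through the common simulator.

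On the injective side, correctness of $\OTP^*(E_{\pk_\inj})$ together with injectivity means the measured $y$ determines $x$ with overwhelming probability. The input register therefore collapses to (approximately) a basis state, and the test accepts with probability about $1/|\calX|\le 1/2$. Finally, key indistinguishability switches $\OTP^*(D_{\pk_\lossy})$ to $\OTP^*(D_{\pk_\inj})=\OTP^*(E_{\pk_\inj})$, since $\OTP^*$ and the test are QPT and the reduction merely plugs in the challenge key. The test's behavior would then shift from $1-\negl$ to at most $1/2$, which contradicts the security of the lossy encryption. All steps are black-box, so the argument relativizes.
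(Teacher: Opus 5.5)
Your overall architecture matches the paper's. It uses the family $\calC\cup\calD\cup\calE$, takes the $r^*$-mixture of constant programs as a competitor implementing $D_{\pk_{\lossy}}$, uses linearity of $\Sim$ in its input state, and reuses the \emph{same} simulator on $f=C_{\pk_{\lossy},r^*}$ with $P=C_{\pk_{\lossy},r^*}$ together with correctness of $\OTP^*$. It then applies mode indistinguishability to pass between $D_{\pk_{\lossy}}$ and $E_{\pk_{\inj}}$. Your extractor-based lossy encryption from LTFs is also a fine substitute for the paper's construction with pairwise-independent permutations and the crooked LHL. The gap is the collapse test, on which both the lossy-side and injective-side bounds rest.

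In a sampling program, $\Eval$ is an arbitrary unitary on $\scrX\times\scrY\times\scrP$, and nothing forces it to leave $x$ in $\scrX$. Since $\OTP^*$ is arbitrary, your test must work for every correct evaluator, and as stated (run once, measure, project $\scrX$ onto $\sum_x\ket{x}$) it does not.
\begin{itemize}
\item An evaluator that first copies $x$ into a work register of $\scrP$ already decoheres $\scrX$ for a constant function.
\item One that then uses this copy to reset $\scrX$ to $\ket{0}$ and applies Hadamards leaves $\scrX$ in the uniform superposition for \emph{every} $f$, injective or not.
\item Appending $\Eval^\dagger$ after the measurement does not rescue the test without a reference register. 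Measure-then-uncompute is then the arbitrary projective measurement $\{\Eval^\dagger\Pi_y\Eval\}_y$. Take orthonormal $c_{x,0}=\ket{x}\ket{0}\ket{\psi}$ and $c_{x,j}=\ket{+}_{\scrX}\ket{\chi_{x,j}}$ for $1\le j<K$, where $\ket{+}$ is the uniform superposition over inputs and the $\ket{\chi_{x,j}}$ are orthonormal and orthogonal to $\ket{0}\ket{\psi}$. Let $\Eval$ send $K^{-1/2}\sum_j\omega^{jk}c_{x,j}$ to a state with $\scrY=y_{x,k}$. This is a correct evaluator for an injective $e$ with $e(x)$ uniform on $\calY_x=\{y_{x,k}\}_k$, yet your test accepts with probability about $1-1/K$.
\item Per-basis-input correctness with error $\mu>0$ does not bound the wrong-output weight on the superposed input, since errors can add coherently once $\mu\ge 2^{-\lambda}$. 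So your gentle-measurement step on the constant side is also unjustified.
\end{itemize}

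The missing idea is to purify the input with an untouched reference register, as the paper's adversary $\calA$ does. Prepare $2^{-\lambda/2}\sum_x\ket{x}_{\scrX}\ket{x}_{\scrX'}$, apply $\Eval$, CNOT $\scrY$ into a fresh $\scrY'$, apply $\Eval^\dagger$, and project $\scrX\times\scrX'$ onto the EPR state. Because $\scrX'$ is never acted on, the branches for different $x$ are orthogonal. On a constant program the output is therefore $y_c$ with probability at least $1-\mu$, and the uncomputation restores the initial state. On an injective program, $\scrX'\times\scrY'$ still hold some $(x,y)$ with $y\in\calY_x$ after $\Eval^\dagger$, whatever $\Eval$ does, so acceptance is at most $\|\Pi_{\mathsf{inj}}\Pi_{++}\Pi_{\mathsf{inj}}\|=2^{-\lambda}$. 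With this test substituted, either as the adversary as in the paper or as the distinguisher after your identity adversary, your chain of indistinguishabilities goes through as in the paper.
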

\begin{proof}
    We can construct statistically lossy PKE from LWE (\Cref{thm:lossy-pke-from-LWE}). Then \Cref{thm:impossibility-of-bp-otp-from-lossy-pke} below says that statistically lossy PKE rules out a best-possible one-time program compiler with statistical equivalence. 
    
    Next, we can construct perfectly lossy PKE from group actions (\Cref{thm:lossy-PKE-from-group-actions}). \Cref{thm:impossibility-of-bp-otp-from-lossy-pke} below says that perfectly lossy PKE rules out a best-possible one-time program compiler with perfect equivalence. 
\end{proof}

\begin{theorem}\label{thm:impossibility-of-bp-otp-from-lossy-pke}
    Assuming the existence of statistically/perfectly lossy PKE scheme (\Cref{def:lossy-pke}), there exists a function family $\calF$ for which there is no best-possible one-time program compiler with statistical/perfect equivalence (\Cref{def:bp-sim}).
    This relativizes to all unitary oracles.
\end{theorem}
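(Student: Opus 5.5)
The plan is to take the family $\calF$ to consist of the encryption circuits $f_{\pk}(x;r) := \Enc(\pk, x; r)$, with $\pk$ ranging over both lossy and injective keys of the lossy PKE scheme, padded to a common description length $\secp$. We then show that a best-possible compiler $\OTP^*$ for $\calF$ yields a QPT distinguisher between $\pk_\lossy$ and $\pk_\inj$, contradicting the key-indistinguishability of \Cref{def:lossy-pke}. Concretely, given $\pk$, the distinguisher computes $P := \OTP^*(f_{\pk})$. It prepares $\frac{1}{\sqrt{|\calX|}}\sum_x \ket{x}$ (say with $\calX = \{0,1\}$, two distinct messages), runs $\Eval$ coherently, measures the $\scrY$ register, and then tests whether the input register is still in the uniform superposition by projecting onto $\ket{+}$. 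It outputs $1$ if the test accepts.

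\textbf{Injective case.} By correctness of $\OTP^*$, for each $x$ the distribution $P(x)$ is $\mu$-close to $\Enc(\pk_\inj, x)$. Correctness of decryption makes these distributions have disjoint supports for distinct $x$. Hence measuring $\scrY$ almost completely decoheres the $\scrX$ register: the reduced state on $\scrX$ after the measurement is negligibly close to a classical mixture over $x$, because $\Dec$ can be applied to $y$ to recover $x$. The $\ket{+}$ test therefore accepts with probability about $1/2$. I would make this precise with a gentle argument. Since $y$ determines $x$ except with negligible probability, the post-measurement state is close to a state in which $\scrX$ is measured. This holds even though $P$ may be mixed and use its own workspace.

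\textbf{Lossy case.} Here we use best-possible security with a cleverly chosen $P'$. Let $P'$ be the mixed sampling program that samples $r^*$, computes $c^* = \Enc(\pk_\lossy, 0; r^*)$, and outputs the classical description of the constant circuit $C_{\pk_\lossy, r^*}$, padded to length $|f|$. By lossiness (statistical, resp.\ perfect), $P'(x)$ is $\mu$-close (resp.\ equal) to $f_{\pk_\lossy}(x)$ for every $x$, so $P'$ implements $f$ with the permitted error. Take the adversary $\calA$ to be the identity.

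Best-possibility then gives a simulator $\Sim$ such that $\OTP^*(f_{\pk_\lossy}) \approx_c \Sim(P')$. This alone does not finish the argument, because the superposition test is run on $\OTP^*(f)$, not on $\Sim(P')$. The fix is to also apply the definition to each constant function $C_{\pk_\lossy, r^*}$, provided it lies in $\calF$. So I will enlarge $\calF$ to include the constant circuits $C_{\pk, r^*}$ as well.

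Then:
\begin{itemize}
\item correctness forces $\OTP^*(C_{\pk_\lossy, r^*})$ to output $c^*$ on every input except with probability $\mu$;
\item best-possibility applied to $C_{\pk_\lossy, r^*}$, with the same program $P'_{r^*}$ as the implementation, gives $\OTP^*(C_{\pk_\lossy,r^*}) \approx_c \Sim(P'_{r^*})$.
\end{itemize}
Averaging over $r^*$, the mixture $\mathbb{E}_{r^*}\,\OTP^*(C_{\pk_\lossy, r^*})$ is computationally indistinguishable from $\Sim(P') \approx_c \OTP^*(f_{\pk_\lossy})$. On each $\OTP^*(C_{\pk_\lossy,r^*})$, the output is essentially deterministic, so the $\scrY$ measurement is gentle. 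Uncomputing $\Eval$ (which we may do, since $\Eval$ is unitary) and then projecting onto $\ket{+}$ accepts with probability $1-\negl$. Because this test is efficient, it accepts with probability $1-\negl$ on $\OTP^*(f_{\pk_\lossy})$ as well, versus about $1/2$ in the injective case. This distinguishes the key modes.

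The main obstacle is the lossy-case step. It requires care in setting up the definition so that the same simulator $\Sim$ and distinguisher work for the two implementations of the same program. It also requires that the family contain both the $f_{\pk}$ and the constant circuits with matching lengths; I would handle this by padding. Relativization is immediate, since every step uses $\OTP^*$, $\Enc$ and the test as black boxes and never inspects oracle code.
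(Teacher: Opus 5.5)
\textbf{The gap is in the distinguishing test.} Your overall architecture is the paper's: the family $\calF=\calC\cup\calD\cup\calE$; a mixed program over $r^*$ that implements each constant function $C_{\pk_\lossy,r^*}$ exactly and $f_{\pk_\lossy}$ on average; one simulator for both, by linearity of $\Sim$ in the program state; and mode indistinguishability.

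Your test acts on the input register alone. The version in your first paragraph has no $\Eval^\dagger$, and it already fails in the constant case, because $\Eval$ may act on $\scrX$ (e.g.\ apply a Hadamard there). The version in your lossy-case paragraph prepares $\ket{+}_{\scrX}$, runs $\Eval$, measures $\scrY$, runs $\Eval^\dagger$ and projects $\scrX$ onto $\ket{+}$. This version fails in the injective case.

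That $y$ determines $x$ only shows that the resulting channel on $\scrX$ kills the cross terms $\ket{x}\bra{x'}$ of the input. It does not force the output on $\scrX$ to be $\ket{x}$. $\Eval$ can keep the which-input record in $\scrY\scrP$ and leave $\scrX$ close to $\ket{+}$.

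Here is a concrete counterexample.
\begin{itemize}
\item Take an injective sampler that is uniform on disjoint sets $\calY_x$ of size $K$, and let $N=|\calX|$. Write $\ket{x}=N^{-1/2}\ket{+}+\sqrt{1-1/N}\,\ket{x^\perp}$.
\item Set $\ket{\chi_{x,y}}=\sqrt{1-\epsilon}\,\ket{+}\ket{\omega_{x,y}}+\sqrt{\epsilon}\,\ket{x^\perp}\ket{0,\psi}$ with $\epsilon=(1-1/N)/K$.
\item Choose unit vectors $\ket{\omega_{x,y}}$ on $\scrY\scrP$ with inner products $-c$ for equal $x$ and $c/(N-1)$ for different $x$, where $c=\epsilon/(1-\epsilon)$. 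This Gram matrix is positive semidefinite; its only zero directions force $\sum_y\ket{\omega_{x,y}}$ to be the same vector for every $x$, and you may rotate it onto $\ket{0,\psi}$.
\item Then the $\ket{\chi_{x,y}}$ are orthonormal and $\ket{x}\ket{0,\psi}=K^{-1/2}\sum_{y\in\calY_x}\ket{\chi_{x,y}}$.
\item Any unitary $\Eval$ that sends $\ket{\chi_{x,y}}$ to a state with $\ket{y}$ on $\scrY$ is therefore a perfectly correct implementation.
\item Yet your test accepts with probability $1-\epsilon$. For superpolynomial $K$ this is $1-\negl$, the same as in the lossy case.
\end{itemize}
Since $\OTP^*(f_{\pk_\inj})$ is known only to be correct, your injective-case claim of acceptance probability about $1/2$ is unjustified, and the argument yields no contradiction.

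\textbf{The fix is the paper's test with a reference register.}
\begin{itemize}
\item Prepare $\ket{++}=|\calX|^{-1/2}\sum_x\ket{x}_{\scrX}\ket{x}_{\scrX'}$, where $\eval$ never touches $\scrX'$.
\item Run $\Eval$, CNOT $\scrY$ into a fresh register $\scrY'$, run $\Eval^\dagger$, and project $\scrX\scrX'$ onto $\ket{++}$.
\end{itemize}
In the injective case, the untouched pair $\scrX'\scrY'$ is supported on $(x,y)$ with $y\in\calY_x$. Hence acceptance is at most $\|\Pi_{\mathsf{inj}}\Pi_{++}\Pi_{\mathsf{inj}}\|=1/|\calX|$. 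In the constant case the copied value is essentially deterministic, so uncomputation restores $\ket{++}$ and acceptance is $1-\negl$.

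With this test substituted, your lossy-case argument goes through unchanged and coincides with the paper's proof.
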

The rest of \Cref{sec:impossibility-of-bp-otp-proof} is devoted to proving \Cref{thm:impossibility-of-bp-otp-from-lossy-pke}.

The proof has the following roadmap. The function family $\calF$ is the union of three other families $\calC, \calD, \calE$. $\calC$ contains constant functions; $\calE$ mostly contains injective functions; $\calD$ contains functions that are, in certain settings, indistinguishable from both $\calC$ and $\calE$. 

Then we show that there exists a QPT adversary $\calA$ that can easily distinguish OTPs for functions in $\calE$ from OTPs for functions in $\calC$. Essentially, $\calA$ checks whether evaluating the program entangles the input and output registers. Injective functions do create entanglement, whereas constant functions do not.

For functions in $\calD$, $\calA$ has contradictory behavior. Functions sampled from $\calD$ should be computationally indistinguishable from those in $\calE$. However, we also show that a mixture over functions in $\calC$ actually implements a function in $\calD$. This will imply our contradiction. 
%\jiahui{I will add one-paragraph summary of the construction and proof ideas here lat}\bhaskar{Sounds good}

\subsubsection*{The function family $\calF$}
Assuming the post-quantum hardness of LWE, \Cref{thm:lossy-pke-from-LWE} says that there exists a statistically lossy PKE scheme (\Cref{def:lossy-pke}) comprising the functions $(\Gen, \GenPK, \Enc, \Dec)$ with  message space $\calM = \bit^\secp$. Let $\calX = \calM$ and $\calR = \calR_\Enc$.

For a given $\secp \in \bbN$, let us define three function families $\calC_\secp, \calD_\secp, \calE_\secp$ as follows. Any differences among the families are highlighted in \textcolor{red}{red} or \textcolor{blue}{blue}.
    \begin{itemize}
        \item $\calC_\secp$: Each function $c \in \calC_\secp$ is a constant function described by a $\pk$ in the support of $\GenPK(1^\secp, \textcolor{blue}{\mathsf{lossy}})$ \textcolor{red}{and a string $r_{\Enc} \in \calR_\Enc$}. The function ignores its inputs $(x, r)$ and outputs $y = \Enc(\pk, \textcolor{red}{0^\secp}; \textcolor{red}{r_{\Enc}})$.
        \item $\calD_\secp$: Each function $d \in \calD_\secp$ is described by a $\pk$ in the support of $\GenPK(1^\secp, \textcolor{blue}{\mathsf{lossy}})$. The function takes inputs $(x, r)$ and computes $y = \Enc(\pk, \textcolor{blue}{x}; \textcolor{blue}{r})$.
        \item $\calE_\secp$: Each function $e \in \calE_\secp$ is described by a $\pk$ in the support of $\GenPK(1^\secp, \textcolor{red}{\mathsf{inj}})$. The function takes inputs $(x, r)$ and computes $y = \Enc(\pk, \textcolor{blue}{x}; \textcolor{blue}{r})$.
    \end{itemize}

    % Let $\calC$ be the set of all function sequences $\{c_\secp\}_{\secp \in \bbN}$ such that for every $\secp \in \bbN$, $c_\secp \in \calC_\secp$. Let $\calD$ and $\calE$ be defined analogously. Finally, let $\calF = \calC \cup \calD \cup \calE$.
    Let $\calF_\secp = \calC_\secp \cup \calD_\secp \cup \calE_\secp$, and let $\calF = \bigcup_{\secp \in \bbN} \calF_\secp$. Let $\calC, \calD, \calE$ be defined analogously to $\calF$.
    
    Next, let us require that all functions in $\calF_\secp$ have the same description length as each other. We can pad the descriptions with $0$s to ensure this is the case.
    Finally, assume toward contradiction that there exists a one-time program compiler $\OTP^*$ that is best-possible for $\calF$ with statistical/perfect equivalence.

    \subsubsection*{The adversary $\calA$}
    Next, let us define a quantum algorithm $\calA$ that tests whether querying a given sampling program $P = (\rho, \Eval)$ will entangle the input and output registers. $\calA$ acts on the query register $\mathscr{Q} = \mathscr{X} \times \mathscr{Y}$, the program register $\scrP$, and two ancilla registers $\scrX'$ and $\scrY'$, which will store inputs and outputs respectively.
    \begin{enumerate}
        \item Initialization: $\calA$ prepares an EPR pair of input values on the $\scrX$ and $\scrX'$ registers. Let us call this state $\ket{++}$.
        \[\ket{++} := \frac{1}{\sqrt{2^\secp}}\sum_{x \in \bit^\secp}\ket{x}_{\scrX} \otimes \ket{x}_{\scrX'}\]
        The $\scrP$ register contains the program state $\rho$, and $\scrY \times \scrY'$ contains $\ket{0} \otimes \ket{0}$.\label{step:init}
        \item $\calA$ evaluates the program by applying $\Eval$ to the $\scrX \times \scrY \times \scrP$ registers.\label{step:eval}
        \item $\calA$ CNOTs the value on the $\scrY$ register onto the $\scrY'$ register.\label{step:CNOT-output}
        \item $\calA$ uncomputes step \ref{step:eval} by applying $\Eval^\dag$ to the $\scrX \times \scrY \times \scrP$ registers.\label{step:uncompute-eval}
        \item $\calA$ checks whether the $\scrX \times \scrX'$ registers are still in the original state $\ket{++}$. If so, $\calA$ outputs $0$. If not, $\calA$ outputs $1$.\label{step:decision}
    \end{enumerate}
    Intuitively, if $P$ outputs the same value for every input, then the $\scrX \times \scrX'$ registers are returned to their original state by the end of $\calA$'s execution, so $\calA$ outputs $0$. If $P$ is injective, then $\scrX'$ is entangled with $\scrY'$, so the $\scrX' \times \scrX$ registers will be far from the state $\ket{++}$. Then $\calA$ will output $1$ with high probability.

    \subsubsection*{$\calA$ usually outputs $0$ for family $\calC$.}
    With overwhelming probability, $\calA$ outputs $0$ for function sequences in $\calC$ (\Cref{thm:A-c-outputs-0}). This is because they describe constant function, which do not entangle the input with the output.
    
    \begin{lemma}\label{thm:A-c-outputs-0}
        There is a negligible function $\negl$ such that for any $\secp \in \bbN$ and any $c \in \calC_\secp$,
        \[\Pr[\calA(1^\secp, \OTP^*(c)) \to 1] \leq \negl\] 
    \end{lemma}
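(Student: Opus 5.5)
Since $c \in \calC_\secp$ ignores its input and outputs the fixed value $y_c := \Enc(\pk, 0^\secp; r_\Enc)$, the distribution $c(x)$ is a point mass at $y_c$ for every $x$. By correctness of $\OTP^*$ (\Cref{def:otp}), for every $x \in \bit^\secp$, measuring $\scrY$ after evaluating $P = \OTP^*(c) = (\rho, \Eval)$ on $x$ yields $y_c$ with probability at least $1-\mu(\secp)$. The plan is a gentle-measurement argument: the CNOT in step~\ref{step:CNOT-output} nearly always copies the same value $y_c$. It therefore barely disturbs the state, so the uncomputation in step~\ref{step:uncompute-eval} restores $\ket{++}$ on $\scrX\times\scrX'$ up to $O(\sqrt{\mu})$.

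\textbf{Reduction to a fixed pure program.} First I reduce to a single pure program state. The output of $\OTP^*$ may be a mixture, and $\rho$ may be mixed. Purify $\rho$ as $\ket{\psi}_{\scrP\scrE}$, with $\Eval$ acting trivially on $\scrE$. The acceptance probability of $\calA$ is linear in $\rho$ and unchanged by purification. Correctness holds for the averaged program, so it suffices to bound the probability for the purified state with error $\mu$; no convexity over components is needed. Oracle access to $\Eval$ changes nothing, since all we use is that $\Eval$ is a unitary. This is also why the argument relativizes.

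\textbf{Per-input decomposition.} For each $x$, write $\ket{\phi_x} := \Eval(\ket{x}_\scrX\ket{0}_\scrY\ket{\psi}_{\scrP\scrE}) = \ket{\alpha_x} + \ket{\beta_x}$, where $\ket{\alpha_x}$ is the projection onto $\scrY = y_c$. Correctness gives $\|\beta_x\|^2 \le \mu$. Applying the CNOT to $\ket{\phi_x}\ket{0}_{\scrY'}$ gives $\ket{\alpha_x}\ket{y_c}_{\scrY'} + \mathrm{CNOT}(\ket{\beta_x}\ket{0}_{\scrY'})$. This is within distance $2\sqrt{\mu}$ of the ideal vector $\ket{\phi_x}\ket{y_c}_{\scrY'}$.

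\textbf{Combining over the superposition.} The full state after step~\ref{step:CNOT-output} is $2^{-\secp/2}\sum_x \ket{x}_{\scrX'} \otimes (\cdot)$. The error terms for different $x$ are orthogonal because they are tagged by distinct $\ket{x}_{\scrX'}$. Hence the total squared error is the average of the per-$x$ squared errors, at most $4\mu$. The ideal vector, after applying $\Eval^\dagger$, is exactly $\ket{++}_{\scrX\scrX'}\ket{0}_\scrY\ket{\psi}\ket{y_c}_{\scrY'}$. Since unitaries preserve distance, the real state after step~\ref{step:uncompute-eval} is within $2\sqrt{\mu}$ of a state in which the projection onto $\ket{++}$ succeeds with certainty. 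Therefore $\Pr[\calA \to 1] \le 4\mu(\secp)$, which is negligible. In the perfect-equivalence setting it is even $0$.

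\textbf{Main obstacle.} The only delicate step is the orthogonality bookkeeping. It is what lets the error be the \emph{average} rather than the sum over $2^\secp$ inputs. The $\scrX'$ register supplies this orthogonality.
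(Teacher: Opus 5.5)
Your proof is correct and follows the same approach as the paper. Correctness forces $\scrY$ to hold $y_c$ with probability at least $1-\mu$, so the CNOT in step~\ref{step:CNOT-output} barely disturbs the state, and $\Eval^\dagger$ returns $\scrX\times\scrX'$ to $\ket{++}$; where the paper informally ``conditions'' on the overwhelming event that $\scrY$ contains $y_c$, you make this quantitative, using the orthogonality of the per-$x$ error terms supplied by $\scrX'$ to get the explicit bound $\Pr[\calA \to 1]\le 4\mu(\secp)$.
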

    \ifllncs
    We defer the proof of \Cref{thm:A-c-outputs-0} to \Cref{sec:A-c-outputs-0}.
    \else
    \begin{proof}
        $c$ is described by $(\pk, r_\Enc)$, and for any input $x \in \bit^\secp$, $c$ outputs $y_c := \Enc(\pk, 0^\secp; r_\Enc)$. 
        
        Next, since $\OTP^*$ is a one-time program compiler for $\calF$ (\cref{def:otp}), $\OTP^*(c)$ implements $c$ with negligible error. Formally, we say there is a negligible function $\mu(\secp)$ such that for any $\secp \in \bbN$, any $c \in \calC_\secp$, and any $x \in \bit^\secp$, when $\OTP^*(c)$ is evaluated on $x$, it outputs $y_c$ with probability $\geq 1 - \mu(\secp)$.
        
        After step \ref{step:eval} of $\calA$ (which evaluates the program), let us condition on the event that $\scrY$ contains $y_c$. This event occurs with overwhelming probability, so conditioning on this event changes the state of the system and the probability that $\calA$ outputs $1$ by negligible amounts. 
        
        Then step \ref{step:CNOT-output} copies the value $y_c$ over to the $\scrY'$ register. Now we can trace out (forget about) the $\scrY'$ register. Step \ref{step:CNOT-output} does not change the state on the remaining registers $\scrX' \times \scrX \times \scrY \times \scrP$ because the value copied to $\scrY'$ is deterministic.
        
        Step \ref{step:uncompute-eval} uncomputes step \ref{step:eval} and returns the state of $\scrX' \times \scrX \times \scrY \times \scrP$ to be negligibly close to their initial state.

        Step \ref{step:decision} checks whether the state on $\scrX' \times \scrX$ is the same as the initial state $\ket{++}$. This check will accept with overwhelming probability because the state on $\scrX' \times \scrX$ is negligibly close to the initial state. Then $\calA(1^\secp, \OTP^*(c))$ outputs $0$ with overwhelming probability and $1$ with negligible probability.
    \end{proof}
    \fi

    % \item Next, if $\rho$ is classical-query equivalent to a function $e \in \calE$, then
    % \[\Pr[\Sim(\rho) \to 1] \geq \Pr[\calA(1^\secp, \OTP^*(e)) \to 1] - \negl \geq 1 - \negl' + \negl = 1 - \negl''\]
    % \item \begin{lemma}
    %     \[\mathbb{E}_{\pk \gets \GenPK(1^\secp, \mathsf{inj})}\Pr[\calA(1^\secp, \OTP^*(e_\pk)) \to 1] \geq 1 - \negl\]
    % \end{lemma}
    % \begin{proof}
    %     \bhaskar{The proof uses the fact that after measuring the output of $\OTP^*(e_\pk)$, we can use $\sk$ to decrypt the original message $m$ with overwhelming probability. This holds because $m$ is chosen uniformly at random. Then the decryption register and the input register are highly entangled in the computational basis, so the Hadamard-basis value of the input register is highly uncertain.}
    % Let's define another adversary $\calB$ that does the following. Like $\calA$, $\calB$ prepares a uniform superposition over $x$-values, $\ket{+}$, queries $\OTP^*(e)$ on $\ket{0}\ket{0}$, and measures the output register. Then $\calB$ uncomputes the query, except for the measurement. 

    % Next, on the measured output, $y$ $\calB$ computes $x' = \Dec(\sk, y)$ and checks whether $x'$ matches the value on the $\scrX$ register. This check will pass with overwhelming probability.
    % \end{proof}

    \subsubsection*{$\calA$ usually outputs $1$ for family $\calE$.}
    
    With overwhelming probability, $\calA$ outputs $1$ for functions sampled from $\calE$ (\Cref{thm:A-e-outputs-1}). This is because with overwhelming probability, the function we sample is injective, so it will entangle the input with the output.    
    \begin{lemma}\label{thm:A-e-outputs-1}
        For any given $\secp \in \bbN$, let $e \in \calE_\secp$ be chosen by sampling a key $\pk \gets \GenPK(1^\secp, \mathsf{inj})$ and setting $e = \Enc(\pk, \cdot; \cdot)$. Next, there is a negligible function $\negl$ such that over the randomness of sampling $e$ and the randomness of $\calA$ and $\OTP^*$,
        \[\Pr[\calA(1^\secp, \OTP^*(e)) \to 1] \geq 1 - \negl\]
    \end{lemma}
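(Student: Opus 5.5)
Since $\pk$ is sampled in injective mode, I will use the decryption key $\sk$ that comes with it: sample $(\pk,\sk)$ via $\Gen$ in injective mode. Lossy-PKE correctness then gives $\Dec(\sk,\Enc(\pk,x;r))=x$ for all $x$ and all but negligibly many $r$. The plan is to show that the output copied into $\scrY'$ in step \ref{step:CNOT-output} carries essentially all of the information about the value on $\scrX'$. The register $\scrX'$ is never touched by $\Eval$ or $\Eval^\dag$, so that information cannot be erased by step \ref{step:uncompute-eval}. Consequently the reduced state on $\scrX\times\scrX'$ will be nearly dephased in the computational basis, and it will have negligible overlap with the maximally entangled state $\ket{++}$.

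\textbf{Step 1.} Decompose $\rho$ into pure states. It suffices to argue for $\rho=\ketbra{\psi}$ by convexity, since the final acceptance probability is linear in $\rho$. Before step \ref{step:eval}, write the joint state as $\frac{1}{\sqrt{2^\secp}}\sum_x \ket{x}_{\scrX'}\otimes\ket{x}_{\scrX}\ket0_{\scrY}\ket\psi_{\scrP}$. By linearity, after step \ref{step:eval} it has the form $\frac{1}{\sqrt{2^\secp}}\sum_x\ket{x}_{\scrX'}\otimes\ket{\phi_x}$, where $\ket{\phi_x}=\Eval\ket{x}\ket0\ket\psi$. Measuring $\scrY$ in $\ket{\phi_x}$ yields a sample $y$ whose distribution is $\mu$-close to $e(x)$ by correctness of $\OTP^*$.

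\textbf{Step 2.} Introduce a thought-experiment register holding $\Dec(\sk,\cdot)$ applied to $\scrY'$, computed coherently after step \ref{step:CNOT-output}. Since $\scrY'$ is never touched again, whether this is done before or after steps \ref{step:uncompute-eval}--\ref{step:decision} does not matter. By correctness of the OTP and of decryption, this register equals the $\scrX'$ value except with probability $\mu+\negl$ in each branch $x$. The final state is therefore $\negl$-close to a state in which $\scrX'$ is perfectly correlated with a classical record $x$ held in $\scrY'$. Tracing out $\scrY'$ then makes the reduced state on $\scrX'$ (and hence on $\scrX\times\scrX'$) close to one that is block-diagonal in the computational basis of $\scrX'$. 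Concretely, measuring $\scrX'$ commutes with everything that follows and can be simulated from $\scrY'$ up to negligible error.

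\textbf{Step 3.} Once $\scrX'$ is (negligibly close to) measured, the overlap with $\ket{++}$ is at most $\max_x|\langle ++|\, x\rangle_{\scrX'}\otimes\ket{\chi}_{\scrX}|^2\le 2^{-\secp}$, since $\ket{++}$ has Schmidt coefficients $2^{-\secp/2}$. Hence step \ref{step:decision} outputs $0$ with probability at most $2^{-\secp}+\negl$, and $\calA$ outputs $1$ with probability at least $1-\negl$. The bound is then averaged over $\pk$ and over the randomness of $\OTP^*$ and $\calA$, accounting for the negligible fraction of bad keys or bad $r$ for which decryption fails.

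The main obstacle is Step 2: making the ``computable from $\scrY'$ implies effectively measured'' argument rigorous with approximate correctness. I would handle it with a gentle-measurement or trace-distance bound. The coherent decryption-and-compare step on $(\scrY',\scrX')$ accepts with probability $\ge 1-\mu-\negl$, so conditioning on acceptance perturbs the state by $O(\sqrt{\mu+\negl})$. On the accepted subspace, a CNOT from the decrypted value into a fresh register reproduces a copy of $\scrX'$, which dephases it exactly.
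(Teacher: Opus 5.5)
Your proof is correct and follows essentially the same route as the paper. The paper also projects the untouched registers $\scrX'\times\scrY'$ onto consistent pairs: it uses disjointness of the supports $\calY_x$ where you use an explicit decryption map, but these play the same role. It then notes this projection survives the uncomputation $\Eval^\dag$ and bounds $\|\Pi_{\mathsf{inj}}\Pi_{++}\Pi_{\mathsf{inj}}\|=2^{-\secp}$, which is exactly your dephasing/Schmidt-coefficient bound.

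One small repair: correctness of $\OTP^*$ is guaranteed only for the program state $\rho$ as a whole, not for each pure component. So rather than decomposing $\rho$, purify it with an untouched reference register, or note that the per-component error terms average to at most the overall correctness error $\mu$.
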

\ifllncs
We defer the proof of \Cref{thm:A-e-outputs-1} to \Cref{sec:F-A-e-outputs-1}.
\else
\begin{proof}
    For any $x \in \bit^\secp$, let $\calY_x$ comprise all $y$-values in the support of $e(x)$. With overwhelming probability over the sampling of $\pk \gets \GenPK(1^\secp, \mathsf{inj})$, $e$ is injective, meaning the sets $\{\calY_x\}_{x \in \bit^\secp}$ are disjoint. This follows from the correctness property of lossy PKE (\Cref{def:lossy-pke}).

    Next, since $\OTP^*$ is a one-time program compiler for $\calF$ (\cref{def:otp}), $\OTP^*(e)$ implements $e$ with negligible error. That means there exists a negligible function $\mu(\secp)$ such that for any $\secp \in \bbN$, any injective $e \in \calE_\secp$, and any $x \in \bit^\secp$, when $\OTP^*(e)$ is evaluated on $x$, it outputs a value in $\calY_x$ with probability $\geq 1 - \mu(\secp)$.
    
    Let us step through the execution of $\calA(1^\secp, \OTP^*(e))$ and condition on the event that $e$ is injective.

    After step \ref{step:eval} of $\calA$ (which evaluates the program), let us condition on the event that the $\scrX' \times \scrY$ registers contain values $(x, y)$ such that $y \in \calY_x$. This event occurs with overwhelming probability because $\calX'$ contains the value $x$ that was inputted to $\OTP^*(e)$, and $\calY$ contains the output of $\OTP^*(e)$. 
    % If we had measured the $\scrX'$ and $\scrY$ registers at the end of step 2, we would have obtained a uniformly random $x$ and a value $y$ sampled from the distribution $\OTP^*(e)(x)$. This distribution outputs a $y \in \calY_x$ with overwhelming probability. 
    % Since the event that $y \in \calY_x$ occurs with overwhelming probability, 
    When we condition on this event, we change the state of the system and the probability that $\calA$ outputs $1$ by negligible amounts.

    Step \ref{step:CNOT-output} CNOTs the value on the $\scrY$ register onto the $\scrY'$ register. At the end of this step, the $\scrX' \times \scrY'$ registers contain values $(x,y)$ such that $y \in \calY_x$.

    Step \ref{step:uncompute-eval} acts only on the $\scrX \times \scrY \times \scrP$ registers and does not touch the $\scrX' \times \scrY'$ registers. After this step, it is still true that $\scrX' \times \scrY'$ contain values $(x,y)$ such that $y \in \calY_x$.

    Step \ref{step:decision} checks whether the $\scrX \times \scrX'$ registers are in the state $\ket{++}$. Let $\Pi_{++}$ be the corresponding projector acting on $\scrX \times \scrX' \times \scrY'$:
    \begin{align*}
        \Pi_{++} &:= \ketbra{++}_{\scrX \times \scrX'} \otimes \mathbb{I}_{\scrY'}\\
        &= \frac{1}{2^\secp} \sum_{x, x'} \ket{x}\bra{x'}_{\scrX} \otimes \ket{x}\bra{x'}_{\scrX'} \otimes \mathbb{I}_{\scrY'}
    \end{align*}
    Next, let $\Pi_{\mathsf{inj}}$ be a projector that acts on $\scrX \times \scrX' \times \scrY'$ and projects onto all states in which $\scrX' \times \scrY'$ contain values $(x,y)$ such that $y \in \calY_x$:
    \[\Pi_{\mathsf{inj}} = \sum_{x} \sum_{y \in \calY_x} \mathbb{I}_{\scrX} \otimes \ketbra{x}_{\scrX'} \otimes \ketbra{y}_{\scrY'}\]
    Let $\rho$ be the state of the system at the start of step \ref{step:decision} on the $\scrX \times \scrX' \times \scrY'$ registers. We know that
    \begin{align*}
        \rho &= \Pi_{\mathsf{inj}} \rho \Pi_{\mathsf{inj}}
    \end{align*}
    because the $\scrX' \times \scrY'$ registers of $\rho$ only contain values $(x,y)$ such that $y \in \calY_x$.

    Next,
    \begin{align*}
        \Pr[\calA \to 0] &= \Tr\left[\Pi_{++} \rho\right]\\
        &= \Tr\left[\Pi_{++} \left(\Pi_{\mathsf{inj}} \rho \Pi_{\mathsf{inj}}\right)\right]\\
        &= \Tr\left[\Pi_{\mathsf{inj}} \cdot \Pi_{++} \cdot \Pi_{\mathsf{inj}} \cdot \rho\right]\\
        &\leq \|\Pi_{\mathsf{inj}} \cdot \Pi_{++} \cdot \Pi_{\mathsf{inj}}\|
    \end{align*}

    Next,
    \begin{align*}
        \Pi_{\mathsf{inj}} \cdot \Pi_{++} \cdot \Pi_{\mathsf{inj}} &= \left(\sum_{x} \sum_{y \in \calY_x} \mathbb{I} \otimes \ketbra{x} \otimes \ketbra{y}\right) \cdot \left(\frac{1}{2^\secp} \sum_{x'', x'''} \ket{x''}\bra{x'''} \otimes \ket{x''}\bra{x'''} \otimes \mathbb{I}\right)\\
        &\quad\quad\quad\cdot \left(\sum_{x'} \sum_{y' \in \calY_{x'}} \mathbb{I} \otimes \ketbra{x'} \otimes \ketbra{y'}\right)\\
        &= \frac{1}{2^\secp} \cdot \sum_{x, x', x'', x'''} \sum_{y \in \calY_x} \sum_{y' \in \calY_{x'}} \left(\mathbb{I} \cdot \ket{x''}\bra{x'''} \cdot \mathbb{I}\right)_{\scrX}\\
        &\quad\quad\quad\otimes \left(\ketbra{x}\ket{x''}\bra{x'''}\ketbra{x'}\right)_{\scrX'} \otimes \left(\ketbra{y} \cdot \mathbb{I} \cdot \ketbra{y'}\right)_{\scrY'}\\
        &= \frac{1}{2^\secp} \cdot \sum_{x,x'} \sum_{y \in \calY_x \cap \calY_{x'}} \ket{x}\bra{x'} \otimes \ket{x}\bra{x'} \otimes \ketbra{y}\\
        &= \frac{1}{2^\secp} \cdot \sum_{x} \sum_{y \in \calY_x} \ketbra{x} \otimes \ketbra{x} \otimes \ketbra{y}
    \end{align*}
    We used the fact that $\calY_x \cap \calY_{x'} = \emptyset$ if $x \neq x'$. Continuing on,

    \begin{align*}
        \Pi_{\mathsf{inj}} \cdot \Pi_{++} \cdot \Pi_{\mathsf{inj}} &= \frac{1}{2^\secp} \cdot \sum_{x} \sum_{y \in \calY_x} \ketbra{x, x, y}\\
        \|\Pi_{\mathsf{inj}} \cdot \Pi_{++} \cdot \Pi_{\mathsf{inj}}\| &= \frac{1}{2^\secp}\\
        &= \negl\\\\
        \Pr[\calA \to 0] &\leq \|\Pi_{\mathsf{inj}} \cdot \Pi_{++} \cdot \Pi_{\mathsf{inj}}\|\\
        &= \negl\\
        \Pr[\calA \to 1] &\geq 1 - \negl
    \end{align*}
    
    % Finally, if $\OTP^*(1^\secp, e_\pk)$ is approximately injective, then 
    % \[\Pr[\calA(1^\secp, \OTP^*(1^\secp, e_\pk)) \to 1] \geq 1 - \negl\]
    % where the randomness is over $\calA$ (\Cref{thm:A-outputs-1-if-approximately-injective}). In summary, over the randomness of sampling $\pk$ as well as the randomness of $\calA$ and $\OTP^*$,
    % \[\Pr_{\pk \gets \GenPK(1^\secp, \mathsf{inj})}[\calA(1^\secp, \OTP^*(1^\secp, e_\pk)) \to 1] \geq 1 - \negl\]
    
    % For any function $e_\pk \in \calE$: on any given input $x$, $\OTP^*(1^\secp, e)$ outputs a value in the support of $\Enc(\pk, x)$ with overwhelming probability.

    % This is because $\OTP^*(e)$'s output distribution is statistically close to that of an injective function. Measuring the output collapses the input register to a state that is negligibly close to a classical mixture over $x$-values. If the number of $x$-values is superpolynomial, then this state has only negligible fidelity with $\ketbra{+}$.
\end{proof}
\fi

    \subsubsection*{$\calA$ has contradictory behavior for family $\calD$.}
    How likely is $\calA$ to output $1$ for functions sampled from $\calD$? On one hand, we can show that $\calA$ will output $1$ with overwhelming probability (\Cref{thm:d-is-injective}). This is because functions sampled from $\calD$ are indistinguishable from functions sampled from $\calE$, so $\calA$'s behavior should be similar for these two families.

    On the other hand, we can show that $\calA$ outputs $0$ with overwhelming probability (\Cref{thm:d-is-lossy}). This is because $d$ can be implemented by a mixture over functions in $\calC$. The simulator, given this mixture, will output $0$ with overwhelming probability. That implies that $\calA$, given $\OTP^*(d)$, will also output $0$ with overwhelming probability. We've reached a contradiction, so in fact, the family $\calF$ does not have a best-possible one-time program compiler.

    For any given $\secp \in \bbN$, let $d \in \calD_\secp$ be chosen by sampling a key $\pk \gets \GenPK(1^\secp, \mathsf{lossy})$ and setting $d = \Enc(\pk, \cdot; \cdot)$.
    
    \begin{lemma}\label{thm:d-is-injective}
        There is a negligible function $\negl$ such that over the randomness of sampling $d$ and the randomness of $\calA$ and $\OTP^*$,
        \[\Pr[\calA(1^\secp, \OTP^*(d)) \to 1] \geq 1 - \negl\]
    \end{lemma}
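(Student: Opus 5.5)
We reduce to \Cref{thm:A-e-outputs-1} through the key-indistinguishability property of the lossy PKE (\Cref{def:lossy-pke}). The public keys output by $\GenPK(1^\secp,\mathsf{lossy})$ and $\GenPK(1^\secp,\mathsf{inj})$ are computationally indistinguishable to QPT adversaries. We use this to build a key distinguisher $\calB$.

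On input a public key $\pk$, the distinguisher $\calB$ proceeds as follows.
\begin{enumerate}
    \item It forms the (padded) description of the function $f_\pk := \Enc(\pk,\cdot;\cdot)$. The padding is identical for $\calD_\secp$ and $\calE_\secp$, since both families contain the same syntactic circuit and differ only in the key.
    \item It runs $\OTP^*(f_\pk)$.
    \item It runs $\calA(1^\secp,\cdot)$ on the resulting program.
    \item It outputs whatever $\calA$ outputs.
\end{enumerate}
Every step is QPT: $\OTP^*$ is a QPT compiler by \Cref{def:otp}, and $\calA$ only applies $\Eval$, a CNOT, $\Eval^\dagger$, and a projective measurement onto $\ket{++}$, which is efficient because $\ket{++}$ is a maximally entangled state preparable by Hadamards and CNOTs. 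If the program is oracle-aided, then $\calB$ is an oracle algorithm with the same oracle. The key-indistinguishability reduction also holds relative to that oracle, because the theorem is stated for lossy PKE secure in the relativized setting. This is what gives the relativization claim.

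If $\pk\gets\GenPK(1^\secp,\mathsf{inj})$, then $\calB$ samples exactly the experiment of \Cref{thm:A-e-outputs-1}, so it outputs $1$ with probability at least $1-\negl$. If $\pk\gets\GenPK(1^\secp,\mathsf{lossy})$, then $\calB$ samples exactly the experiment in the present lemma, namely $\Pr[\calA(1^\secp,\OTP^*(d))\to 1]$ with $d$ built from a lossy key. Key indistinguishability says these two probabilities differ by at most a negligible amount. Hence the lossy-mode probability is at least $1-\negl-\negl'$, which is the claim.

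The only point requiring care is whether the input to $\OTP^*$ can be produced from $\pk$ alone, with no mode-dependent information. Since $\calD_\secp$ and $\calE_\secp$ members are described solely by $\pk$ (padded uniformly), this holds. Note also that we never need $\OTP^*(d)$ to be correct on any particular $d$. Correctness was used only inside \Cref{thm:A-e-outputs-1}, and that lemma's own error (including the negligible probability that an injective-mode key fails to be injective) is already absorbed into its negligible term.
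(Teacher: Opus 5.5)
Your proposal is correct and follows essentially the same route as the paper: build a mode distinguisher that forms $f_\pk = \Enc(\pk,\cdot;\cdot)$, outputs $\calA(1^\secp,\OTP^*(f_\pk))$, and use indistinguishability of modes together with \Cref{thm:A-e-outputs-1} to carry the $1-\negl$ bound from injective keys over to lossy keys. Your extra remarks (efficiency of the $\ket{++}$ test, mode-independent padding, and not needing correctness of $\OTP^*$ on lossy keys) are accurate elaborations of points the paper leaves implicit.
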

    \ifllncs
    We defer the proof of \Cref{thm:d-is-injective} to \Cref{sec:d-is-injective}.
    \else
    \begin{proof}
    Otherwise, we could use $\calA$ and $\OTP^*$ to break the indistinguishability of modes property of lossy PKE (\Cref{def:lossy-pke}). This property says that the output distributions of $\GenPK(1^\secp, \mathsf{inj})$ and $\GenPK(1^\secp, \mathsf{lossy})$ are indistinguishable to any QPT adversary.

    Let us construct a QPT adversary that tries to distinguish these two distributions. Given a $\pk$ sampled from either $\GenPK(1^\secp, \mathsf{inj})$ or $\GenPK(1^\secp, \mathsf{lossy})$, let us define the function $f_{\pk}(x;r) = \Enc(\pk, x;r)$. If $\pk$ is lossy, then $f_\pk \in \calD_\secp$, and if $\pk$ is injective, then $f_\pk \in \calE_\secp$. Next, let our distinguisher compute $\calA(1^\secp, \OTP^*(f_\pk))$ and output the result. Since $\calA$ and $\OTP^*$ are QPT, our distinguisher is QPT as well. For each $\mathsf{mode} \in \{\mathsf{inj}, \mathsf{lossy}\}$, the probability that the distinguisher outputs $1$ is:

    \begin{align*}
        \Pr_{\pk \gets \GenPK(1^\secp, \mathsf{mode})}[\calA(1^\secp, \OTP^*(f_\pk)) \to 1]
    \end{align*}

    By the indistinguishability of modes,
    \begin{align*}
        \Big| \Pr_{\pk \gets \GenPK(1^\secp, \mathsf{inj})}[\calA(1^\secp, \OTP^*(f_\pk)) \to 1] - \Pr_{\pk \gets \GenPK(1^\secp, \mathsf{lossy})}[\calA(1^\secp, \OTP^*(f_\pk)) \to 1] \Big| \leq \negl
    \end{align*}
    
    \Cref{thm:A-e-outputs-1} implies that in injective mode, the distinguisher will output $1$ with probability $\geq 1 - \negl'$:
    \begin{align*}
        \Pr_{\pk \gets \GenPK(1^\secp, \mathsf{inj})}[\calA(1^\secp, \OTP^*(f_\pk)) \to 1] \geq 1 - \negl'
    \end{align*}

    Therefore, 
    \begin{align*}
        \Pr_{\pk \gets \GenPK(1^\secp, \mathsf{lossy})}[\calA(1^\secp, \OTP^*(f_\pk)) \to 1] \geq 1 - \negl''
    \end{align*}
    \end{proof}
    \fi
    
    \begin{lemma}\label{thm:d-is-lossy}
        There is a negligible function $\negl$ such that over the randomness of sampling $d$ and the randomness of $\calA$ and $\OTP^*$,
        \[\Pr[\calA(1^\secp, \OTP^*(d)) \to 1] \leq \negl\]
    \end{lemma}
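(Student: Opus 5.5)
The plan is to exploit best-possible security by choosing a competing implementation $P$ of $d$ on which $\calA$ visibly outputs $0$, and then transfer this behavior to $\OTP^*(d)$ through the simulator. Fix the adversary $\calA$ and let $\Sim$ be the simulator guaranteed by \Cref{def:bp-sim} for this $\calA$. For $d=\Enc(\pk,\cdot;\cdot)$ with lossy $\pk$, define the sampling program
\[
P_d := \Big(\rho_d,\ \Eval_\OTP\Big), \qquad \rho_d := \mathbb{E}_{r^*\gets\calR_\Enc}\big[\OTP^*(c_{\pk,r^*})\big],
\]
where $c_{\pk,r^*}\in\calC_\secp$ is the constant function outputting $\Enc(\pk,0^\secp;r^*)$. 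Concretely, $P_d$ samples $r^*$, runs $\OTP^*(c_{\pk,r^*})$, and keeps the result as its (mixed) program state. Padding equalizes $|P_d|=|d|$, since all of $\calF_\secp$ has equal description length and the OTP program has polynomial size; I would take the length convention of the definition as allowing this padding, as in the preliminaries.

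\textbf{Step 1: $P_d$ implements $d$.} On input $x$, $P_d(x)$ is within $\mu(\secp)$ of the distribution of $\Enc(\pk,0^\secp;r^*)$ with uniform $r^*$. This holds by correctness of $\OTP^*$ on $\calC$ (\Cref{def:otp}), averaged over $r^*$. By lossiness of $\pk$, that distribution is statistically close to $\Enc(\pk,x;r)=d(x)$, and in the perfectly lossy case it is identical. One subtlety: in the perfect-equivalence case, the OTP's own error $\mu$ must be zero, or else we must choose $P$ differently. I would handle this by noting that correctness of $\OTP^*$ only needs to hold for the compiler. For the competitor $P$, I would instead take the plain classical program $\big(\Enc(\pk,0^\secp;r^*),\ \text{output it}\big)$ averaged over $r^*$. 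This $P$ is perfectly correct under perfect lossiness and statistically correct otherwise. It is the cleanest choice, so I will use it: $P_d:=$ the mixture over $r^*$ of the classical constant program.

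\textbf{Step 2: $\calA$ on $P_d$ outputs $0$.} Each component of the mixture is a deterministic constant program. So by exactly the argument of \Cref{thm:A-c-outputs-0}, and now with zero error, $\calA$ returns $\scrX\times\scrX'$ to $\ket{++}$ and outputs $0$ with probability $1$. By linearity, the same holds for the mixture.

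\textbf{Step 3: transfer via best-possible security.} The main obstacle is that best-possible security compares $\calA(\OTP^*(d))$ with $\Sim(P_d)$, not with $\calA(P_d)$, so Step 2 does not apply directly. I would resolve this by applying the definition twice with the same $\calA$ and $\Sim$, using a second function that $P_d$ also implements. The constant program with a fixed $r^*$ exactly implements $c_{\pk,r^*}\in\calC_\secp$. Therefore $\Sim(P^{(r^*)})\approx_c\calA(\OTP^*(c_{\pk,r^*}))$, and the latter outputs $1$ with negligible probability by \Cref{thm:A-c-outputs-0}. Averaging over $r^*$, and using that $\Sim$ is linear on the mixed input, $\Sim(P_d)$ outputs $1$ with negligible probability. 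Here the distinguisher $D$ is simply "output the bit," and $\nu$ is uniform over $f$, so the bound survives averaging. Finally, $P_d$ implements $d$ by Step 1, so $\calA(\OTP^*(d))\approx_c\Sim(P_d)$, which gives $\Pr[\calA(1^\secp,\OTP^*(d))\to1]\le\negl$ over the choice of lossy $\pk$. This contradicts \Cref{thm:d-is-injective}. Step 2 then ends up only as intuition, and the real work is this double use of the simulator.
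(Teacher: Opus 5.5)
Your proposal is correct and follows the paper's proof. Once you drop the $\OTP^*(c)$-based mixture, your competitor program (the uniform mixture over $r^*$ of the constant program outputting $\Enc(\pk,0^\secp;r^*)$) is the paper's $P=(\rho,\Eval_\pk)$, except that it stores the ciphertext rather than $r$; your Step 3 is then exactly the paper's argument, and, as you note, Step 2 is not needed. In Step 3 you apply the same $\Sim$ to each $P^{(r^*)}$, which exactly implements $c_{\pk,r^*}$, average over $r^*$ by linearity of $\Sim$ in the mixed state, and then invoke best-possible security for $d$.
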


\ifllncs
We defer the proof of \Cref{thm:d-is-lossy} to \Cref{sec:d-is-lossy}.
\else
\begin{proof}
    $d$ is described by a $\pk$ in the support of $\GenPK(1^\secp, \mathsf{lossy})$. Note that $d(x) = \Enc(\pk, x; r)$ for a random $r$. The statistical/perfect lossiness property of the encryption scheme (\Cref{def:lossy-pke}) implies that there is a negligible/identically zero function $\mu(\secp)$ such that with overwhelming probability over the sampling of $\pk$, the following is true for every $x \in \bit^\secp$: the distributions of $\Enc(\pk, x)$ and $\Enc(\pk, 0^\secp)$ (over the randomness of $\Enc$) are $\mu(\secp)$-close in statistical distance. Let us assume from now on that this is the case.
    
    Furthermore, for any $r \in \calR_\Enc$, $(\pk, r)$ describe a function in $\calC_\secp$. Let us define some sampling programs that compute functions in $\calC_\secp$ and $\calD_\secp$.
    \begin{align*}
        \text{Let } \rho_{r} &= \ketbra{r}\\
        \rho &= \frac{1}{\abs{\calR_{\Enc}}} \sum_{r \in \calR_{\Enc}} \rho_{r}
    \end{align*}
    Next, let $\Eval_\pk$ be a circuit that maps
    \[\ket{x, 0, r} \overset{\Eval_\pk}{\longrightarrow} \ket{x, \Enc(\pk, 0^\secp; r), r}\]
    Finally, let us define the sampling programs:
    \begin{align*}
        \text{Let } P_{r} &= (\rho_{r}, \Eval_\pk)\\
        P &= (\rho, \Eval_\pk)
    \end{align*}
    Let the descriptions of $P_r$ and $P$ be padded so that $|P_r|$ and $|P|$ equal the description length $|f|$ of any function $f \in \calF_\secp$.

    \begin{claim}
        For any $r \in \calR_\Enc$, $P_r$ implements a function in $\calC_\secp$ with $0$ error.
    \end{claim}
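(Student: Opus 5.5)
The claim follows directly by unfolding the definitions. I will show that for each fixed $r \in \calR_\Enc$, the program $P_r = (\ketbra{r}, \Eval_\pk)$ implements exactly the constant function $c_{\pk,r} \in \calC_\secp$ described by $(\pk, r)$, i.e.\ the function that ignores its inputs and outputs $\Enc(\pk, 0^\secp; r)$.

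First, I check that $(\pk, r)$ is a valid description of a function in $\calC_\secp$. By assumption $\pk$ lies in the support of $\GenPK(1^\secp, \mathsf{lossy})$, and $r$ is an arbitrary element of $\calR_\Enc$. This is exactly the description format of $\calC_\secp$. Padding is handled by construction, so $|P_r| = |f|$ for $f \in \calF_\secp$.

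Second, I compute the evaluation of $P_r$ on an arbitrary classical input $x \in \bit^\secp$. The evaluation procedure applies $\Eval_\pk$ to $\ketbra{x}_{\scrX} \otimes \ketbra{0}_{\scrY} \otimes \ketbra{r}_{\scrP}$. Since all three registers are in computational basis states and $\Eval_\pk$ is the reversible circuit with
\[\ket{x,0,r} \mapsto \ket{x, \Enc(\pk,0^\secp;r), r},\]
the output is the basis state $\ket{x}\ket{\Enc(\pk,0^\secp;r)}\ket{r}$. Measuring $\scrY$ therefore yields $\Enc(\pk,0^\secp;r)$ with probability $1$. So $P_r(x)$ is the point distribution on $\Enc(\pk,0^\secp;r)$, which equals $c_{\pk,r}(x)$ for every $x$. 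The statistical distance is $0$, which is the required error.

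The only point needing care is that $\Eval_\pk$ is a well-defined unitary that is efficiently implementable. This follows from the standard reversible implementation of the classical deterministic map $(r, b) \mapsto (r, b \oplus \Enc(\pk,0^\secp;r))$ acting on $\scrY \times \scrP$, with $\scrX$ untouched. Restricted to $\scrY = 0$, this map agrees with the stated action of $\Eval_\pk$.

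I expect no real obstacle here. This claim is bookkeeping for the subsequent step, in which the mixture $P$ is shown to implement $d$ up to the lossiness error $\mu(\secp)$, so that best-possible security can be applied.
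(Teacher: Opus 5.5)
Your proposal is correct and is the same argument as the paper's: evaluating $P_r$ on any input $x$ deterministically outputs $\Enc(\pk, 0^\secp; r)$, which is exactly the point-mass output distribution of the constant function in $\calC_\secp$ described by $(\pk, r)$, so the error is $0$. Your checks that $(\pk, r)$ is a valid description and that $\Eval_\pk$ has a reversible implementation are bookkeeping that the paper leaves implicit.
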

    \begin{proof}
        If we evaluate $P_r$ on any input $x$, the output will be $\Enc(\pk, 0^\secp; r)$. This is exactly the same output distribution as a function $c \in \calC_\secp$.
    \end{proof}

    \begin{claim}
        $P$ implements $d$ with error $\mu$.
    \end{claim}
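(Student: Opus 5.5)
To prove that $P$ implements $d$ with error $\mu$, I will compute the output distribution of $P$ on an arbitrary input $x \in \bit^\secp$ directly and compare it with $d(x)$.

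The first step is to evaluate $P$ on $x$. The program state is $\rho = \frac{1}{\abs{\calR_\Enc}}\sum_r \ketbra{r}$, which is a classical mixture. The circuit $\Eval_\pk$ is classical-controlled on the $\scrP$ register: it reads $r$ and writes $\Enc(\pk,0^\secp;r)$ into $\scrY$. By linearity, evaluating $P$ on $x$ and measuring $\scrY$ gives exactly the same distribution as sampling $r \getsr \calR_\Enc$ and outputting $\Enc(\pk, 0^\secp; r)$. Hence $P(x)$ is the distribution $\Enc(\pk,0^\secp)$ over the encryption randomness, and this holds independently of $x$.

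The second step is to compare this with $d(x)$. By definition, $d(x)$ is the distribution of $\Enc(\pk, x; r)$ for uniform $r$. We are working under the standing assumption, made at the start of the proof of \Cref{thm:d-is-lossy}, that $\pk$ lies in the overwhelming-probability set on which statistical/perfect lossiness holds. On that set, for every $x$, the distributions $\Enc(\pk,x)$ and $\Enc(\pk,0^\secp)$ are $\mu(\secp)$-close in statistical distance. So $\TV(P(x), d(x)) \le \mu(\secp)$ for all $x$, which is exactly \Cref{def:quantum-samp-program} with error $\mu$. In the perfect case $\mu \equiv 0$, so $P$ implements $d$ with zero error.

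I do not expect a real obstacle here. The only point that needs care is that the randomness of $d$ is drawn from the same space $\calR_\Enc$ that $\rho$ is uniform over; with $\calR = \calR_\Enc$ as set up, this holds. I will also note that $|P| = |f|$ holds by the padding already specified, so $P$ is an admissible program for \Cref{def:bp-sim}.
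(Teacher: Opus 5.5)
Your proposal is correct and follows essentially the same argument as the paper. You observe that the uniform mixture $\rho$ makes $P(x)$ distributed as $\Enc(\pk,0^\secp;r)$ for uniform $r \in \calR_\Enc$, independently of $x$, and then use statistical/perfect lossiness on the overwhelming-probability set of lossy keys to bound its distance from $d(x)$ by $\mu(\secp)$; your remarks on the shared randomness space and the padding $|P| = |f|$ match the paper's setup.
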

    \begin{proof}
        The program state $\rho$ is a uniform mixture over the values $r \in \calR_{\Enc}$. For any input $x$, $P(x)$ is distributed as $\Enc(\pk, 0^\secp; r)$ for a random $r \getsr \calR_{\Enc}$. Additionally, $d(x)$ is distributed as $\Enc(\pk, x; r)$ for a random $r \getsr \calR_{\Enc}$. By the lossiness of the encryption scheme, the distributions of $P(x)$ and $d(x)$ are $\mu(\secp)$-close in statistical distance. Therefore, $P$ implements $d$ with error $\mu$. 
    \end{proof}

    Since $\OTP^*$ is best-possible for $\calF$ with statistical/perfect equivalence (\cref{def:bp-sim}), there is a QPT simulator $\Sim$ and there is a negligible function $\nu(\secp)$ such that for any $\secp \in \bbN$, any function $f \in \calF_\secp$, and any sampling program $P$ that implements $f$ with error $\mu(\secp)$ and satisfies $|P| = |f|$, the outputs of $\calA(1^\secp, \OTP^*(f))$ and $\Sim(1^\secp, P)$ are computationally indistinguishable, and in particular:
    \[\abs{\Pr[\calA(1^\secp, \OTP^*(f)) \to 1] - \Pr[\Sim(1^\secp, P) \to 1]} \leq \nu(\secp)\]

    \Cref{thm:A-c-outputs-0} says that for any $c \in \calC_\secp$,
    \[\Pr[\calA(1^\secp, \OTP^*(c)) \to 1] = \negl\]
    Then since $P_r$ implements $c$ with $0$ error,
    \[\Pr[\Sim(1^\secp, P_{r}) \to 1] \leq \mathsf{negl}'(\secp)\]

    Next, the quantum part of $P$ is $\rho$, and it is a mixture over states $\{\rho_{r}\}_{r \in \calR_{\Enc}}$. Then
    \begin{align*}
        \Pr[\Sim(1^\secp, P) \to 1] &= \frac{1}{\abs{\calR_{\Enc}}} \sum_{r \in \calR_{\Enc}} \Pr[\Sim(1^\secp, P_r) \to 1]\\
        &\leq \mathsf{negl}'(\secp)
    \end{align*}

    Finally, since $P$ implements $d$ with error $\mu$,
    \begin{align*}
        \Pr[\calA(1^\secp, \OTP^*(d)) \to 1] &\leq \Pr[\Sim(1^\secp, P) \to 1] + \mathsf{negl}''(\secp)\\
        &\leq \mathsf{negl}'''(\secp)
    \end{align*}
    % \bhaskar{Finish the proof.}
    % Then
    % \begin{align*}
    %     \Pr[\Sim(\rho_{\pk, r_\Enc}) \to 1] &\leq \negl +\Pr[\calA(1^\secp, \OTP^*(1^\secp, c_{\pk, r_\Enc}) \to 1]\\
    %     &= \negl +\negl' = \negl''
    % \end{align*}
    
    % Next, let us construct a program $\rho_\pk$ that is classical-query equivalent to $P_{d_\pk}$.
    % \[\text{let } \rho_\pk = \frac{1}{\abs{R_\Enc}} \cdot \sum_{r_\Enc \in \calR_\Enc} \rho_{\pk, r_\Enc}\]

    % On any input $x$, the result of computing $\Eval\left(\ket{x, 0}_\scrQ \otimes \rho_\pk\right)$ and measuring the $\scrY$ register gives $\Enc(\pk, 0; r_\Enc)$ for a uniformly random $r_\Enc$. This is because:
    % \[\Eval\left(\ket{x, 0}_\scrQ \otimes \rho_\pk\right) = \frac{1}{\abs{R_\Enc}} \cdot \sum_{r_\Enc \in \calR_\Enc} \Eval\left(\ket{x, 0}_\scrQ \otimes \rho_{\pk, r_\Enc}\right)\]

    % The distribution of $\Enc(\pk, 0; r_\Enc)$ for $r_\Enc \getsr \calR_\Enc$ is statistically close to the distribution of $\Enc(\pk, x; r_\Enc)$ for $r_\Enc \getsr \calR_\Enc$, by the lossiness property of the encryption scheme. Therefore, $\rho_\pk$ is classical-query equivalent to $P_{d_\pk}$.

    % Finally,
    % \begin{align*}
    %     \Pr[\Sim(\rho_\pk) \to 1] &= \frac{1}{\abs{R_\Enc}} \cdot \sum_{r_\Enc \in \calR_\Enc} \Pr[\Sim(\rho_{\pk,r_\Enc}) \to 1]\\
    %     &\leq \frac{1}{\abs{R_\Enc}} \cdot \sum_{r_\Enc \in \calR_\Enc} \negl\\
    %     &= \negl'
    % \end{align*}

    % Since $\rho_\pk$ is classical-query equivalent to $P_{d_\pk}$,
    % \begin{align*}
    %     \Pr[\calA(1^\secp, \OTP^*(1^\secp, d_\pk)) \to 1] &\leq \Pr[\Sim(\rho_\pk) \to 1] + \negl\\
    %     &\leq \negl' + \negl = \negl''
    % \end{align*}
    \end{proof}
\fi
    We have now reached a contradiction. \Cref{thm:d-is-lossy} says that
    \[\Pr[\calA(1^\secp, \OTP^*(d)) \to 1] \leq \negl\]
    but \Cref{thm:d-is-injective} says that 
    \[\Pr[\calA(1^\secp, \OTP^*(d)) \to 1] \geq 1 - \negl\] 
    Therefore, the initial assumption must be false, and in fact, there does not exist a one-time program compiler that is best-possible for $\calF$ with statistical/perfect equivalence.

\iffalse
\jiahui{added the relativized statement}
\begin{corollary}
\label{cor:relativized_impossibility}
Relativized to all oracles where we have post-quantum hardness of LWE for the parameter choices given in \Cref{thm:lossy-tdf-from-LWE}, there exists a function family $\calF$ for which there is no best-possible one-time program compiler with statistical equivalence (\Cref{def:bp-sim}).

Relativized to all oracles where we have weak pseudorandomness of group actions (\Cref{asm:weak-pr}), there exists a function family $\calF$ for which there is no best-possible one-time program compiler with perfect equivalence (\Cref{def:bp-sim}).
\end{corollary}
\fi

\paragraph{Relativization} The relativization follows from the fact that if we replace the steps in the above proof with steps with access to an (arbitrary unitary) oracle machine where the lossy encryption remains post-quantum secure, the argument still goes through. All the steps are information-theoretic except invoking the computational security of lossy encryption. As shown in the preliminaries, lossy encryption can also be derived from LWE/weak pseudorandomness of group actions in a black-box way where each step can be replaced with an oracle-assisted step as long as the oracle does not help solve LWE/weak pseudorandomness of group actions.

\section{SEQ Implies Best-Possible Testable One-Time Security}\label{sec:CSEQ}

% The reflection oracle about a pure state $\ket{\psi}$ is $R = - |\psi\rangle \langle \psi| + (I - |\psi\rangle \langle \psi|) = I - 2 |\psi\rangle \langle \psi|$. 
\begin{definition}[Testable quantum program]\label{def:testable-prog}
    A quantum program $P = (\ket{\psi}, \Eval)$ specified by a pure state $\ket{\psi}$ and a unitary $\Eval$ can be augmented with a reflection unitary\footnote{Typically, the reflection oracle is defined as $R' = I - 2 |\psi\rangle \langle \psi|$, but we adopt a different (but equivalent) definition for convenience. To see why $R$ and $R'$ are equivalent, note that $R'$ can be implemented by first applying $R$ to $\ket{0} \ket{\phi}$, applying a $\mathsf{Z}$ gate to the first register, and then applying $R$ again to uncompute the first register. $R \ket{b}\ket{\phi}$ can be implemented as follows: apply $H$ to the first register, then controlled on the first register containing $1$, apply $R'$ to the state $\ket\phi$, and finally apply $H$ to the first register again.\bhaskar{Note that I modified how $R$ is constructed from $R'$.}}

    % Typically, the reflection oracle is defined as $R' = I - 2 |\psi\rangle \langle \psi|$, but we adopt a different (but equivalent) definition for convenience. To see why $R$ and $R'$ are equivalent, note that $R'$ can be implemented by first $R$ to $\ket{0} \ket{\phi}$, applying a $\mathsf{Z}$ gate to the first register, and then applying $R$ again to uncompute the first register. $R \ket{\phi}$ can be implemented as $  R' \circ \mathsf{H} \circ R' \ket{+}\ket{\phi}$

    % Typically, the reflection oracle is defined as $R' = I - 2 |\psi\rangle \langle \psi|$, but we adopt a different (but equivalent) definition for convenience. To see why $R$ and $R'$ are equivalent, note that $R'$ can be implemented by first applying $R$ to $\ket{0} \ket{\phi}$, applying a $\mathsf{Z}$ gate to the first register, and then applying $R$ again to uncompute the first register. $R \ket{b}\ket{\phi}$ can be implemented by applying $H$ to the first register, then applying $R'$ to the state $\ket\phi$, and finally applying $H$ again to the first register. This maps $\ket{b}\ket{\psi} \to \ket{b \oplus 1}\ket{\psi}$, and maps $\ket{b}\ket{\phi} \to \ket{b}\ket{\phi}$ for any $\ket{\phi}$ orthogonal to $\ket{\psi}$.
    
    \begin{align*}
        R = \mathsf{X} \otimes \ketbra{\psi} + I \otimes (I - \ketbra{\psi})
    \end{align*}
    We say that the resulting program $P' = (\ket{\psi}, \Eval, R)$ is testable.
    
    A testable one-time program compiler is a one-time program compiler that outputs testable quantum programs.
\end{definition}
Note that it makes sense only to define a reflection oracle (the test oracle) about a pure state $\ket{\psi}$. For simplicity, we assume that the reflection unitary $R$ is provided in the form of an oracle, or the classical description of a unitary; rather than a quantum program itself with a quantum auxiliary input, so that it does not degrade over uses.

% \begin{definition}[Best-Possible OTP among reflection-augmented programs]\label{def:bp-refl}
%     Let $\calF$ be a family of classical randomized functions. A one-time program compiler $\OTP^*$ is \textbf{best-possible among reflection-augmented programs} for $\cal{F}$ if for every q.p.t.~adversary $\mathcal{A}$, there is a q.p.t. simulator $\Sim$ such that for all $f \in \mathcal{F}$, for all reflection-augmented quantum programs $P$ that are 1-query equivalent to the canonical implementation $P_f$ and for all QPT distinguishers $D$,
%     \begin{align*}
%         \left|\Pr \left[ 1\leftarrow D(1^\lambda, f, \mathcal{A}(\OTP^*(f)))\right] - \Pr \left[ 1\leftarrow D(1^\lambda, f, \Sim(P))\right]\right| = \negl.
%     \end{align*}
%     \anote{need to emphasize that the reflection unitary should be given in the form of an oracle or a classical dsecription, something that doesn't degrade over uses.}
% \end{definition}

Next we revise and generalize the SEQ definition of \cite{gupte2025quantum} to all quantum functionalities and adopt the name SEQ for this generalized notion. To distinguish from the prior work, we refer to their original, classical version as CSEQ. Intuitively, the SEQ oracle embeds a single effective application of the purified channel into a globally unitary, repeatable interface.

\paragraph{Stinespring form and notation.} Let $\Phi: \mathsf{L}(\calH_{Q}) \to \mathsf{L}(\calH_{Q})$ be a quantum channel. Without loss of generality (by padding with dummy qubits if needed), assume the Stinespring dilation has matching input/output dimensions, so we fix a Stinespring unitary $U_\Phi: \calH_{Q} \otimes \calH_{P} \to \calH_{Q} \otimes \calH_{P}$ with the private register $P$ initialized to $\ket{0^{m}}_{P}$ as expected by the purification. Tracing out the private register yields $\Phi$ on the query register.

\begin{definition}[The Single Effective Query Oracle (SEQ)]\label{def:CSEQ-oracle}
    For a channel $\Phi$ with fixed purification $U_\Phi$ as above, the single effective query oracle $O^{\SEQ}_\Phi$ acts on a query register $\calQ$ and maintains a private register $\calP$ and a one-qubit computed flag $\calC$. The registers are initialized to $\ket{0^m}_{\calP} \otimes \ket{0}_{\calC}$. On each query, the oracle applies the following unitary on $(\calQ, \calP, \calC)$:
    \begin{enumerate}
        \item Controlled on $\calC{=}1$, apply $U_\Phi^{\dagger}$ to $(\calQ,\calP)$.\label{CSEQ-oracle-step-1}
        \item Apply the swap on the subspace spanned by $\ket{0}_{\calC} \otimes \ket{0^m}_{\calP}$ and $\ket{1}_{\calC} \otimes \ket{0^m}_{\calP}$ (and act as the identity on the orthogonal subspace). Equivalently, flip $\calC$ controlled on $\calP{=}\ket{0^m}$.\label{CSEQ-oracle-step-2}
        \item Controlled on $\calC{=}1$, apply $U_\Phi$ to $(\calQ,\calP)$.\label{CSEQ-oracle-step-3}
    \end{enumerate}
    $O^{\SEQ}_\Phi$ may also be called $O^{\SEQ}_{U_\Phi}$ to make the particular Stinespring dilation $U_\Phi$ explicit.
\end{definition}
It is often unnecessary to write $O^{\SEQ}_{U_\Phi}$, which makes the particular dilation $U_\Phi$ explicit, and instead $O^{\SEQ}_\Phi$ suffices. \Cref{lem:cseq-sim-equals-cseq-canonical} says that any two Stinespring dilations $U, U'$ of $\Phi$ will produce indistinguishable oracles $O^{\SEQ}_{U}$ and $O^{\SEQ}_{U'}$.

The above makes the SEQ interface unitary-by-construction. Intuitively, the first time we query, Step 1 is inactive, Step 2 flips the computed flag, and Step 3 applies $U_\Phi$. On subsequent queries, Steps 1 and 3 cancel on $(\calQ,\calP)$, and Step 2 flips the flag only in the $\calP{=}\ket{0^m}$ subspace; the overall interface remains unitary and well-defined.

In fact, we can verify that this unitary exactly maps the well-initialized input to its output and vice versa, while acting as identity on everything else.

\begin{definition}[\SEQ-based simulation security for one-time programs]\label{def:CSEQ-security}
    A one-time program compiler $\OTP$ satisfies \SEQ-based simulation security for a class $\mathcal{C}$ of quantum channels if there exists a q.p.t.~simulator $\Sim$ such that for every $\Phi \in \mathcal{C}$ and for every q.p.t.~distinguisher $D$, there exists a negligible function $\negl$ with
    \begin{align*}
        \bigl| \Pr\bigl[ 1 \leftarrow D(1^\lambda, \Phi, \OTP(1^\lambda, \Phi)) \bigr]
        - \Pr\bigl[ 1 \leftarrow D(1^\lambda, \Phi, \Sim^{O_\Phi^{\SEQ}}(1^\lambda)) \bigr] \bigr| \leq \negl[\lambda].
    \end{align*}
    As before, giving $\Phi$ to $D$ means that $D$ may depend on $\Phi$ (e.g., via a classical description or black-box evaluation access); this redundancy matches the order of quantifiers.
\end{definition}

%\bhaskar{\Cref{def:bp-testable-q} doesn't seem to require that $\OTP^*$ is itself a testable quantum program. I prefer to require that $\OTP^*$ be a testable quantum program because we say that it is ``best-possible among testable programs'', but either is fine.}
%\luowen{Maybe rename this to ``testable-simulatable'' and add a corollary about best-possible among testable otps}

We now define the notion of a best-possible one-time program compiler, which produces one-time programs that are ``best-possible'' among all programs that implement the same sampling task and are testable.

\begin{definition}[Best-possible OTP among testable programs]\label{def:bp-testable-q}
    Let $\mathcal{C}$ be a family of quantum channels. A testable one-time program compiler $\OTP^*$ is \textbf{best-possible among testable programs} for $\mathcal{C}$ if there exists a q.p.t.~simulator $\Sim$ such that for every $\Phi \in \mathcal{C}$, for every testable quantum program $P$ that implements $\Phi$, and for every q.p.t.~distinguisher $D$,
    \begin{align*}
        \bigl| \Pr\bigl[ 1 \leftarrow D(1^\lambda, \Phi, \OTP^*(1^\lambda, \Phi)) \bigr]
        - \Pr\bigl[ 1 \leftarrow D(1^\lambda, \Phi, \Sim(1^\lambda, P)) \bigr] \bigr| \leq \negl[\lambda].
    \end{align*}
    Here, ``$P$ implements $\Phi$'' means that tracing out $P$'s private register after applying $\Eval$ to the program state $\ket{\psi}$ realizes the channel $\Phi$ on the external interface.
\end{definition}

%The definition above also relativizes to any fixed family of unitary oracles $\mathcal{O}$. In that setting, $\OTP^*$, $\Sim$, $D$, and the candidate testable program $P$ all receive the same black-box oracle access to $\mathcal{O}$. The simulator does not program or control $\mathcal{O}$; it only makes oracle queries in the same model while using black-box access to $P$'s interfaces. The same indistinguishability condition is required in this relativized experiment.

\begin{theorem}\label{thm:cseq-testable-bpotp}
Any one-time program compiler $\OTP^*$ satisfying $\SEQ$ security for a class of quantum channels (\Cref{def:CSEQ-security}) is a best-possible one-time program compiler among testable programs for that class (\Cref{def:bp-testable-q}). Moreover, this implication relativizes to all unitary oracles.
\end{theorem}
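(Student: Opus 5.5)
The plan is to build the best-possible simulator $\Sim_{\mathrm{bp}}$ directly from the SEQ simulator $\Sim$ of \Cref{def:CSEQ-security}. Given any testable program $P=(\ket{\psi},\Eval,R)$ implementing $\Phi$, $\Sim_{\mathrm{bp}}(1^\lambda,P)$ runs $\Sim^{O}(1^\lambda)$. Each oracle query is answered by an efficient unitary $\widetilde{O}_P$ built from $\Eval$, $R$ and the state $\ket{\psi}$; this unitary keeps $\ket{\psi}$ in a private program register $\calP$ together with a flag qubit $\calC$. The goal is to show that $\widetilde{O}_P$ is \emph{perfectly} indistinguishable from $O^{\SEQ}_\Phi$ for any number of queries to the external register $\calQ$. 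If so, $\Sim^{\widetilde O_P}$ and $\Sim^{O^{\SEQ}_\Phi}$ have identical output distributions, and SEQ security gives
\[
D(\OTP^*(\Phi))\approx D(\Sim^{O^{\SEQ}_\Phi}) = D(\Sim_{\mathrm{bp}}(P)).
\]

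Concretely, $\widetilde O_P$ copies the three steps of \Cref{def:CSEQ-oracle}, with $\Eval$ in place of $U_\Phi$:
\begin{enumerate}
\item Controlled on $\calC=1$, apply $\Eval^\dagger$ to $(\calQ,\calP)$.
\item Flip $\calC$ controlled on $\calP$ being $\ket{\psi}$.
\item Controlled on $\calC=1$, apply $\Eval$.
\end{enumerate}
Step 2 is exactly the reflection $R$ of \Cref{def:testable-prog}, with $\calC$ as its control qubit. Steps 1 and 3 are controlled versions of $\Eval$; we assume $\Eval$ is a classically described circuit or an oracle with controlled access, which is standard in the unitary-oracle model. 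The private register starts as $\ket{\psi}_\calP\ket{0}_\calC$.

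To compare $\widetilde O_P$ with $O^{\SEQ}$, conjugate by an isometry-extending unitary $W$ on $\calP$ (with ancilla padding) that maps $\ket{0^m}\mapsto\ket{\psi}$. Set $U:=(I\otimes W^\dagger)\,\Eval\,(I\otimes W)$. Then $U$ is a Stinespring dilation of $\Phi$, since $U(\ket{x}\ket{0^m})=W^\dagger\Eval(\ket{x}\ket{\psi})$ and tracing out $\calP$ gives $\Phi$ because $P$ implements $\Phi$. Moreover, controlling on $\calP=\ket{0^m}$ after conjugation is the same as controlling on $\ket{\psi}$ before it. Hence
\[
\widetilde O_P = (I\otimes W\otimes I)\,O^{\SEQ}_{U}\,(I\otimes W^\dagger\otimes I),
\]
and the initial private state $\ket{\psi}\ket0$ is $W\ket{0^m}\ket0$. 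Since $W$ acts only on private registers, the distinguisher's view of $\widetilde O_P$ from $\ket{\psi}\ket0$ equals its view of $O^{\SEQ}_U$ from $\ket{0^m}\ket0$. By \Cref{lem:cseq-sim-equals-cseq-canonical}, this in turn equals the view of $O^{\SEQ}_\Phi$ under any other dilation. $W$ need not be efficient, because it appears only in the analysis.

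The main obstacle is making this conjugation argument precise when $\Eval$ uses extra ancillas beyond $\ket{\psi}$, or dimensions do not match. I would absorb those ancillas into $\calP$ as $\ket{\psi}\otimes\ket{0\cdots0}$, and note that controlling on this product state is implementable from $R$ plus a zero-check. A second point is that $\Sim$ itself may make controlled or inverse queries. Since $O^{\SEQ}$ is self-adjoint, inverse queries are handled directly, and controlled queries go through because the identity above holds at the level of unitaries. Relativization is immediate: every step is information-theoretic except the invocation of SEQ security, which is assumed to hold relative to the oracle.
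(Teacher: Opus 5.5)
Your proposal is correct and is essentially the paper's proof. It uses the same wrapper $\Sim'(P)$ (controlled $\Eval^\dagger$, then the $R$-controlled flag flip, then controlled $\Eval$), identifies it with $O^{\SEQ}_{U'}$ for the dilation $U'$ obtained by conjugating $\Eval$ with a unitary sending $\ket{0^m}$ to $\ket{\psi}$, and finishes with \Cref{lem:cseq-sim-equals-cseq-canonical}. The paper uses the specific swap $\ket{0^m}\leftrightarrow\ket{\psi}$ and a hybrid/invariant argument, whereas you write the single operator identity $\widetilde O_P = W\,O^{\SEQ}_U\,W^\dagger$; this is slightly cleaner and makes the extension to controlled and inverse queries immediate.
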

\begin{proof}
Since $\OTP^*$ satisfies \SEQ{} security, there exists a q.p.t.~simulator $\Sim$ such that for every channel $\Phi$ and every q.p.t.~distinguisher $D$,
\begin{align*}
    \bigl| \Pr[1 \leftarrow D(1^\lambda, \Phi, \OTP^*(1^\lambda, \Phi))]
    - \Pr[1 \leftarrow D(1^\lambda, \Phi, \Sim^{O^{\SEQ}_\Phi}(1^\lambda))] \bigr| \leq \negl[\lambda].
\end{align*}
It therefore suffices to show that given any testable implementation $P$ of $\Phi$, we can simulate oracle access to $O^{\SEQ}_\Phi$ using $P$. Composing $\Sim$ with this wrapper yields the simulator required by \Cref{def:bp-testable-q}.

Let $P = (\ket{\psi}, \Eval, R)$ be a testable quantum program that implements $\Phi$, where $R = \mathsf{X} \otimes \ketbra{\psi} + I \otimes (I - \ketbra{\psi})$ acts on a one-qubit flag and the program register.
Without loss of generality, we can think about $\ket\psi$ padded with zero qubits which are used for the auxiliary wires, and the reflection unitary is extended to also reflect around zeroes for those qubits. The following simulator $\Sim'$ takes any such testable program $P$ that implements $\Phi$ and simulates $O^{\SEQ}_\Phi$.

\begin{definition}[Simulator $\Sim'(P)$ for $O^{\SEQ}_\Phi$ from a testable program.]\label{def:sim-for-SEQ-oracle}
    The oracle $\Sim'(P)$ maintains internally the program register $\calP$ initialized to $\ket{\psi}$ and a one-qubit computed flag $\calC$ initialized to $\ket{0}$. On each query on register $\calQ$, $\Sim'(P)$ applies the following unitary on $(\calQ, \calP, \calC)$:
    \begin{enumerate}
        \item Controlled on $\calC{=}1$, apply $\Eval^{\dagger}$ to $(\calQ,\calP)$, mapping back to the input space of $\Eval$.
        \item Apply the reflection-controlled flip $R$ to $(\calC,\calP)$, i.e., flip $\calC$ iff the program register equals $\ket{\psi}$.
        \item Controlled on $\calC{=}1$, apply $\Eval$ to $(\calQ,\calP)$.
    \end{enumerate}
\end{definition}

We now argue that $\Sim'(P)$ is perfectly indistinguishable from $O^{\SEQ}_\Phi$. We write $\calA$ for the adversary’s private workspace register, which the adversary may initialize and act upon arbitrarily; the oracle’s hidden registers $(\calC,\calP)$ remain inaccessible.

\begin{lemma}\label{thm:sim-for-SEQ-oracle}
    Let $P$ be any (potentially oracle-aided)\footnote{$P$ may query an oracle that maintains a quantum state as long as the reflection oracle correctly reflects around the initial state of the oracle.} testable quantum program that implements $\Phi$. Then oracles for $\Sim'(P)$ (\cref{def:sim-for-SEQ-oracle}) and $O^{\SEQ}_\Phi$ are perfectly indistinguishable after any number of quantum queries.
\end{lemma}
\ifllncs
We defer the proof of \Cref{thm:sim-for-SEQ-oracle} to \Cref{sec:sim-for-SEQ-oracle}.
\else
\begin{proof}
Let $U_\Phi$ be the canonical Stinespring dilation unitary for $\Phi$. $O^\SEQ_\Phi$ involves applying $U_\Phi$. We will refer to $O^\SEQ_\Phi$ as $O^\SEQ_{U_\Phi}$ to distinguish it from a similar oracle that applies a different unitary.

We will show that $\Sim'(P)$ is equivalent to $O^\SEQ_{U'_\Phi}$, for a particular Stinespring dilation $U'_\Phi$ of the channel $\Phi$. Note that the Stinespring dilation needs $\calP$ to be initialized to $\ket{0^m}$, whereas $\Sim'(P)$ initializes it to $\ket{\psi}$. We handle this discrepancy by conjugating each step of $\Sim'(P)$ with a swap operation that swaps the states $\ket{0^m}$ and $\ket{\psi}$.

Let $U'_{\Phi}$ be the following unitary:
\begin{enumerate}
    \item Swap $\ket{0^m}_\calP$ and $\ket\psi_\calP$.
    \item Apply $\Eval$ to $(\calQ, \calP)$.
    \item Swap $\ket{0^m}_\calP$ and $\ket\psi_\calP$.
\end{enumerate}

\begin{lemma}\label{thm:U-prime-valid-dilation}
    $U'_{\Phi}$ is a valid Stinespring dilation of $\Phi$.
\end{lemma}
\begin{proof}
    Given a state $\rho$ on the query register $\calQ$, consider the following three procedures for handling the query:
    \begin{enumerate}
        \item Apply channel $\Phi$ to $\rho_\calQ$.
        \item Apply $\Eval$ to $\rho_\calQ \otimes \ket{\psi}_\calP$ and then trace out $\calP$.
        \item Apply $U'_\Phi$ to $\rho_\calQ \otimes \ket{0^m}_\calP$ and then trace out $\calP$.
    \end{enumerate}
    Procedure 1 is equivalent to procedure 2 because $P$ implements $\Phi$. Next, procedure 2 is equivalent to procedure 3 because $U'_\Phi$ maps $\ket{0^m}_\calP \to \ket{\psi}_\calP$ before applying $\Eval$. This shows that $U'_{\Phi}$ is a Stinespring dilation of $\Phi$.
\end{proof}

Let $O^\SEQ_{U'_\Phi}$ be the SEQ oracle (\cref{def:CSEQ-oracle}) that uses unitary $U'_\Phi$ instead of $U_\Phi$. 

\begin{lemma}\label{thm:Sim-prime-equivalent-to-SEQ-oracle}
    $O^\SEQ_{U'_\Phi}$ and $\Sim'(P)$ are perfectly indistinguishable after any number of quantum queries.
\end{lemma}

\ifllncs
We defer the proof of \Cref{thm:Sim-prime-equivalent-to-SEQ-oracle} to \Cref{sec:Sim-prime-equivalent-to-SEQ-oracle}.
\else
\begin{proof}
Let's consider the following hybrids, which transform $O^\SEQ_{U'_\Phi}$ to $\Sim'(P)$:
\paragraph{Hybrid $1$ -- $O^\SEQ_{U'_\Phi}$ --} Start with state $\ket{0}_\calC \ket{0^m}_\calP$ and handle each query as follows:
    \begin{enumerate}
        \item Controlled on $\calC{=}1$, apply ${U'_{\Phi}}^\dag$ as follows:
        \begin{enumerate}
            \item Swap $\ket{0^m}_\calP$ and $\ket\psi_\calP$.
            \item Apply $\Eval^\dag$ to $(\calQ, \calP)$.
            \item Swap $\ket{0^m}_\calP$ and $\ket\psi_\calP$.
        \end{enumerate}
        \item Controlled on $\calP{=}\ket{0^m}$, flip $\calC$.
        \item Controlled on $\calC{=}1$, apply $U'_{\Phi}$ as follows:
        \begin{enumerate}
            \item Swap $\ket{0^m}_\calP$ and $\ket\psi_\calP$.
            \item Apply $\Eval$ to $(\calQ, \calP)$.
            \item Swap $\ket{0^m}_\calP$ and $\ket\psi_\calP$.
        \end{enumerate}
    \end{enumerate}

\paragraph{Hybrid $2$:} Start with state $\ket{0}_\calC \ket{0^m}_\calP$ and handle each query as follows:
\begin{enumerate}
        \item Controlled on $\calC{=}1$:\label{step:O-U-prime-eval-inverse}
        \begin{enumerate}
            \item Swap $\ket{0^m}_\calP$ and $\ket\psi_\calP$.
            \item Apply $\Eval^\dag$ to $(\calQ, \calP)$.
            \item Swap $\ket{0^m}_\calP$ and $\ket\psi_\calP$.
        \end{enumerate}
        \item \label{step:O-U-prime-reflection}
        \begin{enumerate}
            \item \textcolor{red}{Swap $\ket{0^m}_\calP$ and $\ket\psi_\calP$.}
            \item Controlled on $\calP{=}\textcolor{red}{\ket{\psi}}$, flip $\calC$.
            \item \textcolor{red}{Swap $\ket{0^m}_\calP$ and $\ket\psi_\calP$.}
        \end{enumerate}
        \item Controlled on $\calC{=}1$:\label{step:O-U-prime-eval}
        \begin{enumerate}
            \item Swap $\ket{0^m}_\calP$ and $\ket\psi_\calP$.
            \item Apply $\Eval$ to $(\calQ, \calP)$.
            \item Swap $\ket{0^m}_\calP$ and $\ket\psi_\calP$.
        \end{enumerate}
    \end{enumerate}

Hybrids 1 and 2 are equivalent. The only difference is step \ref{step:O-U-prime-reflection}. In hybrid 1, we control on $\calP = \ket{0^m}$. In hybrid 2, we swap $\ket{0^m}$ with $\ket{\psi}$ and control on $\calP = \ket{\psi}$. These procedures implement the same operation.

\paragraph{Hybrid $3$ -- $\Sim'(P)$ -- } Start with state $\ket{0}_\calC \textcolor{red}{\ket{\psi}_\calP}$ and handle each query as follows:
\begin{enumerate}
        \item Controlled on $\calC{=}1$, apply $\Eval^\dag$ to $(\calQ, \calP)$.
        \item Controlled on $\calP{=}\ket{\psi}$, flip $\calC$.
        \item Controlled on $\calC{=}1$, apply $\Eval$ to $(\calQ, \calP)$.
    \end{enumerate}

The difference between hybrids 2 and 3 is that in hybrid 3, we have omitted all the swap operations (the steps that swap $\ket{0^m}$ and $\ket{\psi}$), and the initial state of $\calP$ is $\ket{\psi}$, not $\ket{0^m}$. 

We will argue that hybrids 2 and 3 are perfectly indistinguishable. Given any user that submits a sequence of queries to the oracle in hybrids 2 or 3, we can view their sequence of queries as a sequence of invocations of steps \ref{step:O-U-prime-eval-inverse}, \ref{step:O-U-prime-reflection}, and \ref{step:O-U-prime-eval}. We will prove the following invariant: at the start or end of any step in the sequence, the current state of the system in hybrid 2 can be mapped to the current state of the system in hybrid 3 by applying the swap operation to $\calP$, which swaps $\ket{0^m}$ and $\ket{\psi}$.

First, before the first step of the first query, the state of the oracle's registers are $\ket{0}_\calC\ket{0^m}_\calP$ in hybrid 2 and $\ket{0}_\calC\ket{\psi}_\calP$ in hybrid 3, so the invariant is satisfied at this point.

Second, let us assume the invariant is satisfied at the start of some invocation of step \ref{step:O-U-prime-eval-inverse}, and let's step through the execution of step 1 to show that the invariant is satisfied at the end. If $\calC = 0$, then step 1 acts as the identity in both hybrids, so the invarinat will be satisfied at the end. Next, let's consider the case where $\calC=1$. In hybrid 2, step 1 applies the swap operation, which transforms the state to match the initial state of hybrid 3. Then it applies $\Eval^\dag$, as is done in hybrid 3. At this point, the state is the same in the two hybrids. Finally, in hybrid 2, we apply the swap operation again so the final state in hybrid 2 could be transformed into the final state in hybrid 3 by another swap operation. Therefore the invariant is satisfied at the end of step 1.

Third, let us assume the invariant is satisfied at the start of some invocation of step \ref{step:O-U-prime-reflection} or \ref{step:O-U-prime-eval}. We can show that the invariant is still satisfied at the end of this step using similar reasoning to our argument for step 1. 

In conclusion, before or after any step of any query, the current state of the system in hybrid 2 can be mapped to the current state of the system in hybrid 3 by applying a swap operation to $\calP$. The swap operation is applied to an internal register of the oracle, which is not part of the user's view. If a computationally unbounded user makes an unbounded number of queries to either the hybrid 2 or hybrid 3 oracle and outputs a final state on their registers, this state will be the same once we trace out $\calP$. Therefore, hybrids 2 and 3 are perfectly indistinguishable.\\

In total, we have shown that $O^\SEQ_{U'_\Phi}$ and $\Sim'(P)$ are perfectly indistinguishable after any number of queries.
\end{proof}
\fi
It just remains to show that $O^\SEQ_{U_\Phi}$ and $O^\SEQ_{U'_\Phi}$ are perfectly indistinguishable. This is implied by the following lemma.

% \begin{lemma}[Self-adjoint implementations hide dilation differences] \label{lem:cseq-sim-equals-cseq-canonical}
%     Let $\Psi: \mathsf{L}(\calH_{\calQ}) \to \mathsf{L}(\calH_{\calQ})$ be a channel with equal input/output dimension on $\calQ$. Let $U, U'$ be two Stinespring dilations of $\Psi$ on $\calQ\otimes \calP$ with the same ancilla register $\calP$ initialized to $\ket{0^m}_{\calP}$. Define the one-query unitary $O_U$ on $(\calC,\calP,\calQ)$ by
%     \begin{enumerate}
%         \item controlled on $\calC{=}1$, apply $U^{\dagger}$ to $(\calQ,\calP)$;
%         \item controlled on $\calP{=}\ket{0^m}$, flip $\calC$;
%         \item controlled on $\calC{=}1$, apply $U$ to $(\calQ,\calP)$,
%     \end{enumerate}
%     and analogously $O_{U'}$ from $U'$. Then for any attacker that can initialize and act arbitrarily on $(\calA,\calQ)$ while $\calC,\calP$ start in $\ket{0}_{\calC}\ket{0^m}_{\calP}$ and remain inaccessible, the two interfaces $O_U$ and $O_{U'}$ are perfectly indistinguishable for any number of queries.
% \end{lemma}
\begin{lemma}[Self-adjoint implementations hide dilation differences] \label{lem:cseq-sim-equals-cseq-canonical}
    Let $U, U'$ be two Stinespring dilations of $\Phi$ with the same ancilla register $\calP$ initialized to $\ket{0^m}_{\calP}$. Then the oracles $O^\SEQ_U$ and $O^\SEQ_{U'}$ (\cref{def:CSEQ-oracle}) are perfectly indistinguishable after any number of queries.
\end{lemma}
\ifllncs
We defer the proof of \Cref{lem:cseq-sim-equals-cseq-canonical} to \Cref{sec:cseq-sim-equals-cseq-canonical}.
\else
\begin{proof}
Let $P_0 := I_{\calQ} \otimes \ketbra{0^m}_{\calP}$ and $P_\perp := I - P_0$. Writing $O^\SEQ_{U}$ in $2\times 2$ block form on $\calC$ using the identities
    \begin{align*}
        &\mathrm{C}_{\calC}\!U = \begin{pmatrix} I & 0 \\ 0 & U \end{pmatrix},\quad \mathrm{C}_{\calC}\!U^{\dagger} = \begin{pmatrix} I & 0 \\ 0 & U^{\dagger} \end{pmatrix},\quad \mathrm{C}_{\calP{=}\ket{0^m}}\!\mathrm{NOT}_{\calC} = \begin{pmatrix} P_\perp & P_0 \\ P_0 & P_\perp \end{pmatrix},
    \end{align*}
    where we use the shorthand $\mathrm{C}_{\calP{=}\ket{0^m}}\!\mathrm{NOT}_{\calC}$ to denote the controlled-NOT on target $\calC$ with control projector $P_0$ on $\calP$, namely $\mathrm{C}_{\calP{=}\ket{0^m}}\!\mathrm{NOT}_{\calC} := \mathsf{X}_{\calC} \otimes P_0 + I_{\calC} \otimes P_\perp$. Writing $\mathsf{X}_{\calC} = \ketbra{0}{1}_{\calC} + \ketbra{1}{0}_{\calC}$ and $I_{\calC} = \ketbra{0}{0}_{\calC} + \ketbra{1}{1}_{\calC}$, the resulting $2\times 2$ block form on $\calC$ is precisely $\begin{psmallmatrix} P_\perp & P_0 \\ P_0 & P_\perp \end{psmallmatrix}$. We obtain
\begin{align}
    O^\SEQ_{U} = \begin{pmatrix} P_\perp & P_0 U^{\dagger} \\ U P_0 & U P_\perp U^{\dagger} \end{pmatrix}_{\calC;(\calP,\calQ)}.\label{eq:OU-block}
\end{align}

By Stinespring uniqueness (for equal-dimension ancillae), there exists a unitary $V_{\calP}$ on $\calP$ such that $U' = (I_{\calQ} \otimes V_{\calP}) \cdot U$. Let $W := I_{\calQ} \otimes V_{\calP}$ and define a unitary on $(\calC,\calP,\calQ)$ that is controlled by $\calC$:
\begin{align*}
    S := \ketbra{0}_{\calC} \otimes I_{\calP\calQ} + \ketbra{1}_{\calC} \otimes W\,.
\end{align*}
Then, using the block form from \eqref{eq:OU-block},
\begin{align*}
    S\,O^\SEQ_{U}\,S^{\dagger} &= \begin{pmatrix} I & 0 \\ 0 & W \end{pmatrix} \begin{pmatrix} P_\perp & P_0 U^{\dagger} \\ U P_0 & U P_\perp U^{\dagger} \end{pmatrix} \begin{pmatrix} I & 0 \\ 0 & W^{\dagger} \end{pmatrix} \\
    &= \begin{pmatrix} P_\perp & P_0 U^{\dagger} W^{\dagger} \\ W U P_0 & W U P_\perp U^{\dagger} W^{\dagger} \end{pmatrix} \\
    &= \begin{pmatrix} P_\perp & P_0 (U')^{\dagger} \\ U' P_0 & U' P_\perp (U')^{\dagger} \end{pmatrix} = O^\SEQ_{U'}\,.
\end{align*}
Thus $O^\SEQ_{U'} = S O^\SEQ_{U} S^{\dagger}$ with $S$ acting only on the hidden registers $(\calC,\calP)$ and trivially on $(\calA,\calQ)$.

Consider any $t$-query adversary that interleaves local operations $V_i$ on $(\calA,\calQ)$ with oracle calls. Writing $O_U^{\mathrm{op}} := O^\SEQ_{U} \otimes I_{\calA}$ and similarly for $O^\SEQ_{U'}$, we have
\begin{align*}
    \mathcal{U}_{U'} &= V_t O_{U'}^{\mathrm{op}} \cdots V_1 O_{U'}^{\mathrm{op}} = V_t (S O_U^{\mathrm{op}} S^{\dagger}) \cdots V_1 (S O_U^{\mathrm{op}} S^{\dagger}) \\
    &= S\,\bigl( V_t O_U^{\mathrm{op}} \cdots V_1 O_U^{\mathrm{op}} \bigr)\,S^{\dagger} = S\,\mathcal{U}_U\,S^{\dagger},
\end{align*}
since $S$ acts only on $(\calC,\calP)$ and commutes with each $V_i$ (which acts on $(\calA,\calQ)$). The joint initial state is $\ket{\psi}_{\calA} \ket{0}_{\calC} \ket{0^m}_{\calP} \ket{\phi}_{\calQ}$, and $S$ fixes it because $S$ acts as identity on the $\calC{=}0$ subspace. Therefore $\mathcal{U}_{U'} \ket{\mathrm{init}} = S\,\mathcal{U}_U \ket{\mathrm{init}}$. Tracing out the inaccessible registers $(\calC,\calP)$ yields identical reduced states on $(\calA,\calQ)$ in the two worlds (partial trace is invariant under local conjugation on the traced-out subsystem). Hence no test on $(\calA,\calQ)$ can distinguish $O^\SEQ_{U}$ from $O^\SEQ_{U'}$, even across multiple queries.
\end{proof}
\fi

% Finally, \cref{lem:cseq-sim-equals-cseq-canonical} implies that $\Sim'(P)$ is perfectly indistinguishable from $O^{\SEQ}_\Phi$ to any attacker restricted to $(\calA,\calQ)$.

\Cref{thm:U-prime-valid-dilation,thm:Sim-prime-equivalent-to-SEQ-oracle,lem:cseq-sim-equals-cseq-canonical} imply that oracle access to $\Sim'(P)$ and $O^{\SEQ}_\Phi$ are perfectly indistinguishable. 
\end{proof}

Giving the \SEQ-security simulator $\Sim$ query access to $\Sim'(P)$, instead of $O^{\SEQ}_\Phi$, completes the proof of the theorem.

\paragraph{Relativization to unitary oracles.} The construction of $\Sim'(P)$ and the indistinguishability argument are black-box. Consequently, if a fixed family of unitary oracles $\mathcal{O}$ is given to all parties, each hybrid and equality above holds verbatim with all machines making identical relative use of $\mathcal{O}$. The theorem therefore holds relative to $\mathcal{O}$.
\end{proof}
\fi

\section{Stateful Quantum State Obfuscation: Towards a Plain Model Construction}

Now that we have a strong definition for OTPs which makes sense in the plain model, we would like to actually construct it there. For a moment, let us suppose that we had a candidate construction. How would we show that any functionally equivalent program reveals no more information than the candidate? In the classical setting, we can prove that iO is best-possible by using iO to obfuscate the other, allegedly ``better'' program. However, in the quantum setting, the other program may use a quantum state - so at a minimum we need quantum state iO~\cite{STOC:BBV24,CG24}. Further than that, the other program \emph{might change its behavior as it is queried} -- for example by refusing to answer a second query. So it seems that any security proof would need the ability to obfuscate such programs.

In this section, we define and investigate the feasibility of quantum state obfuscation for programs which may modify their behavior as they are queried.

\begin{definition}[Stateful Quantum Programs]
    A stateful quantum program is specified by a tuple $(U, \ket{\psi})$ consisting of a unitary $U$ and an initial state $\ket{\psi}$ contained in register $\calR$.
    To evaluate a stateful quantum program on a state contained in register $\calQ$, apply $U$ to register $(\calQ, \calR)$ and output register $\calQ$.

    We define oracle access to $P$ as follows. We initialize an internal register $\calR$ with the state $\ket{\psi}$. A query can be either a forward or inverse query. On a forward query in register $\calQ$, the operation $U$ is performed on registers $(\calQ, \calR)$, and the query register $\calQ$ is returned. On an inverse query in register $\calQ$, the operation $U^{-1}$ is performed on registers $(\calQ, \calR)$ and $\calQ$ is returned.
    For a quantum oracle algorithm $A$, we denote oracle access to program $P$ as $A^P$.
\end{definition}

\begin{definition}[Functional Equivalence Between Stateful Programs]
    Two stateful quantum programs $P_1 = (U_1, \ket{\psi_1})$ and $P_2 = (U_2, \ket{\psi_2})$ are $(s, \epsilon)$-functionally equivalent if for every quantum oracle algorithm $A_s$ making at most $s$ oracle queries, the trace distance between the state of $A_s$ given $P_1$ and its state when its given $P_2$ is at most $\epsilon$, that is, \footnote{Note that the order of quantifiers permits $A_s$ to depend on the programs. For example, $A_s$ could evaluate $P_1$ on a description of $U_1$.}
    \[
        \tracedist{A_s^{P_1}}{A_s^{P_2}} \leq \epsilon
    \]
    If this holds for every $s = \poly$, we say they are $(\poly, \epsilon)$-functionally equivalent.
    
    % \justin{Might also be worth mentioning the tradeoff between $\epsilon$ and $s$. Having $s$ as a separate parameter is still useful, because it allows reaching new program states that might allow easy distinguishing.}
\end{definition}

A stateful quantum program essentially implements a quantum channel \emph{with memory}. As the program is queried, the state on register $\calR$ may evolve, changing the behavior on future queries. As an example, imagine the ``counting program'' given by $\ket{\psi} = \ket{0^n}$ and $U$ which increments the contents of $\calR$ by $1$ modulo $2^n$, then classical-copies the result to register $\calQ$. If one were to query this program twice on $\ket{0}$, the program would return $\ket{1}$ the first time and $\ket{2}$ the second time.

\subsection{Stateful Quantum Indistinguishability Obfuscation}

Using the notion of stateful functional equivalence, the definition of stateful iO is quite natural: if two programs have stateful functional equivalence, they can be indistinguishably obfuscated.

\begin{definition}[Stateful Quantum Indistinguishability Obfuscation]\label{def:stateful-io}
    A stateful quantum state indistinguishability obfuscator is a QPT compiler $\Obfs$ which takes in a stateful quantum program $(U, \ket{\psi})$, then outputs another stateful quantum program $(\widetilde{U}, \ket{\widetilde{\psi}})$:
    \[
        (\widetilde{U}, \ket{\widetilde{\psi}}) 
        \gets
        \Obfs(1^\secpar, (U, \ket{\psi}))
    \]
    It must satisfy the following two properties.
    \begin{itemize}
        \item \textbf{Correctness:} There exists some negligible function $\mu$ such that for every stateful program $(U, \ket{\psi})$, the obfuscation $(\widetilde{U}, \ket{\widetilde{\psi}}) \gets\Obfs(1^\secpar, (U, \ket{\psi}))$ is $(\poly, \mu(\secpar))$-functionally equivalent to $(U, \ket{\psi})$ with probability $1-\negl$.

        \item \textbf{Security:} For every pair of stateful programs $(U_1, \ket{\psi_1})$ and $(U_2, \ket{\psi_2})$ which are $(\poly, \negl)$-functionally equivalent and satisfy $|(U_1, \ket{\psi_1})| = |(U_2, \ket{\psi_2})|$, 
        \[
            \{\Obfs(1^\secpar, (U_1, \ket{\psi_1}))\}
            \approx
            \{\Obfs(1^\secpar, (U_2, \ket{\psi_2}))\}
        \]
    \end{itemize}
\end{definition}

It is useful to compare this to the classical notion of indistinguishability obfuscation. In the classical setting, two programs $P_1$ and $P_2$ are considered to be functionally equivalent if $P_1(x) = P_2(x)$ for \emph{every} $x$. 
In the quantum setting, the functional equivalence requirement is usually relaxed to allow negligible error in implementing the program~\cite{LC:BK21,arxiv:HT25}, because it is generally quite difficult to exactly implement a quantum program.

This behavior can be seen as requiring that once $P_1$ and $P_2$ are fixed, \emph{no} possible query $x$ exists that would allow distinguishing black-box access to $P_1$ from access to $P_2$. We emphasize that in this view, $x$ is still allowed to depend on the code of the two program even though the adversary otherwise can only make black-box queries to the selected program. 
As a sanity check, allowing the adversary to depend on the code ensures that this view still avoids impossibility results like \cite{CRYPTO:BGIRSVY01,CRYPTO:ABDS21} which are based on ``feeding the program to itself''.

In the stateful setting, the definition needs to account for the ability of the program to change over the course of several queries. We allow this by considering an adversary which may submit a polynomial number of queries in sequence. 
If the program did not update its state, then this would be equivalent to quantifying over all single queries, since sequential queries would have the same effect as single queries (up to negligible error due to allowing imperfect implementations).
On the other hand, it might be the case that two stateful programs have identical outputs for, say, 10 queries, before suddenly changing behavior on the 11'th. 
Since even an adversary with black-box access could perform 11 queries and notice the difference on the last query, it does not make sense to claim that the two programs can be indistinguishably obfuscated.

The restriction to a polynomial number of queries is also quite natural. 
Although any polynomial-time adversary could certainly reach a polynomially-late program state by simply performing that many queries, to reach a superpolynomially-late program state they would need to interact with the program in a non-black-box manner. Thus, we argue that the ability to skip to a superpolynomially-late program state should be considered as a major break in the security of the scheme. At that point, we should simply consider obfuscating a state\emph{less} program which allows anyone to skip to any point in the future.

Nonetheless, we do note that strengthening the functional equivalence requirement to a superpolynomial number of queries also makes sense as a (weaker) definition, if obfuscation for polynomial functional equivalence turns out to be impossible. We provide evidence in \Cref{sec:stateful-io-evidence} that obfuscation is possible for programs which are only polynomially-equivalent by constructing such a scheme in an idealized model.

\subsection{Stateful Quantum iO Implies Best-Possible Testable One-Time Programs}

We show that such stateful quantum iO is actually sufficient to achieve best-possible testable quantum one-time programs. Thus, if one were to construct stateful quantum indistinguishability obfuscation in the plain model, they would also obtain good one-time programs.

\begin{theorem}\label{thm:siO-implies-bptotp}
    Assuming stateful quantum indistinguishability obfuscation, there exists a best-possible testable quantum one-time program compiler for quantum channels (\Cref{def:bp-testable-q}).
\end{theorem}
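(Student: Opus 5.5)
The compiler will be $\OTP^*(1^\secpar,\Phi) := \Obfs(1^\secpar, \mathsf{SEQ}(\Phi))$. Here $\mathsf{SEQ}(\Phi)$ is the stateful quantum program $(U_{\mathsf{S}}, \ket{0}_\calC\ket{0^m}_\calP)$, and $U_{\mathsf{S}}$ is the three-step unitary of \Cref{def:CSEQ-oracle} built from a fixed canonical Stinespring dilation $U_\Phi$. Evaluating the obfuscated program once on a fresh state with $\calC=0$ applies $U_\Phi$ and discards $\calP$, so it realizes $\Phi$. Correctness of $\Obfs$ then gives one-time correctness up to negligible error. For testability, the obfuscated program is a pure-state stateful program $(\widetilde U, \ket{\widetilde\psi})$. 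I would either argue that a reflection about $\ket{\widetilde\psi}$ is available from the obfuscator's construction, or restrict attention to testability of the input program $P$ as in \Cref{def:bp-testable-q}. The latter is all the security argument below uses. The testability of the output itself I would flag as a point to check.

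The simulator $\Sim(1^\secpar, P)$, on input a testable program $P=(\ket\psi,\Eval,R)$ implementing $\Phi$, outputs $\Obfs(1^\secpar, \mathsf{SEQ}'(P))$. The program $\mathsf{SEQ}'(P)$ is the stateful program whose state is $\ket{0}_\calC\ket{\psi}_\calP$ and whose unitary is exactly the one in \Cref{def:sim-for-SEQ-oracle}: controlled $\Eval^\dagger$, then $R$ on $(\calC,\calP)$, then controlled $\Eval$. Its description is efficiently computable from $P$, since $R$ is given as a classical description or oracle. I pad both programs to a common length, which is polynomial in $|\Phi|$ and $|P|$.

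The key step is to show that $\mathsf{SEQ}(\Phi)$ and $\mathsf{SEQ}'(P)$ are $(\poly,0)$-functionally equivalent. Stateful oracle access allows forward queries (the unitary) and inverse queries (its adjoint). Both $U_{\mathsf S}$ and the simulator unitary are self-adjoint, since each is a conjugation of the involution $R$ or $\mathrm{C}\mathrm{NOT}$ by a controlled unitary. So inverse queries coincide with forward queries. Oracle access to either program is then exactly oracle access to $O^{\SEQ}_{U_\Phi}$ or to $\Sim'(P)$, respectively. \Cref{thm:sim-for-SEQ-oracle} says these are perfectly indistinguishable for any number of queries, even by an unbounded adversary that may depend on the programs. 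This gives trace distance $0$ for every $s$. The security of stateful iO then yields $\Obfs(\mathsf{SEQ}(\Phi))\approx \Obfs(\mathsf{SEQ}'(P))$ computationally. Any distinguisher $D(1^\secpar,\Phi,\cdot)$, being QPT, cannot tell them apart, which is exactly \Cref{def:bp-testable-q}.

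I expect the main obstacle to be the bookkeeping rather than the core idea. First, self-adjointness must be checked so that inverse queries add no power. Second, $\mathsf{SEQ}'(P)$ may be oracle-aided through $R$ or $\Eval$, while stateful iO obfuscates plain stateful programs. If $R$ is only an oracle, I would require it to have a classical circuit description, or work relative to that oracle as in the relativization remark. Third, the length-matching requirement must be met, which padding handles. Finally, the ``with probability $1-\negl$'' correctness of $\Obfs$ must be transferred to the one-time correctness error of the compiler.
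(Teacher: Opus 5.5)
Your proposal is correct and matches the paper's proof. The compiler is the stateful-iO obfuscation of the $\SEQ$ program for a canonical dilation, and the simulator is the stateful-iO obfuscation of the $\SEQ$ wrapper around the given testable $P$ (the unitary of \Cref{def:sim-for-SEQ-oracle} with initial state $\ket{0}\ket{\psi}$); stateful-iO security then applies because the two programs are $(s,0)$-functionally equivalent for every $s$.

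You also supply two details the paper leaves implicit. Citing \Cref{thm:sim-for-SEQ-oracle} is more precise than the paper's bare citation of \Cref{lem:cseq-sim-equals-cseq-canonical}, which by itself does not handle a program state $\ket{\psi}\neq\ket{0^m}$. Your self-adjointness observation correctly shows that inverse queries add no power.

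One bookkeeping point needs fixing, and the paper also glosses over it. The common padded length must be fixed by $\OTP^*$ from $\Phi$ alone, for example via a size bound on $P$ as in \Cref{def:bp-sim}. The compiler never sees $P$, so it cannot pad to a length that is polynomial in $|P|$.
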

\begin{proof}
    Let $\Phi:L(\calH_\calQ) \rightarrow L(\calH_\calQ)$ be a quantum channel and let $P = (\ket{0^m}, \Eval)$ be a canonical Stinespring dilation of $\Phi$. Without loss of generality, we can equip $P$ with a reflection oracle by reflecting around $\ket{0^m}$. The construction is $\statefuliO(\SEQ(P))$, where $\SEQ(P)$ is a program implementing the $\SEQ$ oracle on $P$ (see \Cref{sec:CSEQ}).

    To show best-possible security among testable programs, we must show a simulator that is indistinguishable from $\statefuliO(\SEQ(P))$. Given a program $P'$ that implements $\Phi$, the simulator is $\statefuliO(\SEQ(P'))$. By \Cref{lem:cseq-sim-equals-cseq-canonical}, $\SEQ(P)$ and $\SEQ(P')$ are $(s,\epsilon)$-functionally equivalent for all $s \in \bbN$ and $\epsilon = 0$. Therefore stateful iO security implies that $\statefuliO(\SEQ(P))$ is computationally indistinguishable from $\statefuliO(\SEQ(P'))$.
    
    % Given a randomized function $f$, the construction is the stateful quantum iO of the SEQ program $f_{\SEQ}$ (see \Cref{def:SEQ-security} for a full description).

    % Given a testable quantum program $P$ that implements the randomized function $f$, we show a simulator $\Sim(P)$ that is indistinguishable from the construction. Let $\Sim'(P)$ be the simulator program from \Cref{thm:seq-testable-bpotp}. This is a stateful quantum program, so we may obfuscate it. $\Sim(P)$ is the stateful quantum iO of $\Sim'(P)$.

    % By \Cref{lem:bpqotp-seq-sim-prime} and \Cref{lem:testable-bpotp-sim-p1-p2}, for any $f$, no adversary making a polynomial number of queries can distinguish between oracle access to $f_{\SEQ}$ and $\Sim'(P)$.\footnote{We emphasize that this holds even for adversaries which may do unbounded local computation and may depend non-uniformly on $f$.} Therefore security of stateful quantum indistinguishability obfuscation implies the obfuscations of these two programs are indistinguishable.
\end{proof}

\subsection{Stateful Oracles in the Classical Oracle Model}\label{sec:stateful-io-evidence}

Next, we provide evidence that stateful obfuscation is possible by constructing it in the classical oracle model. Unfortunately, we do not yet know how to construct stateful quantum state indistinguishability obfuscation in the plain model. Even giving a plausible candidate would resolve a major open question since it implies weaker notions of obfuscation which have not yet been constructed, such as iO for pseudodeterministic circuits or for unitaries.

Another interpretation of our construction is that 
\emph{oracle models with memory are just as good as stateless oracle models.}
% \emph{defining security via a stateful oracle is just as good as defining it via a unitary oracle}. 
For example, our Single Effective Query (SEQ) security is a query interface that maintains internal state across queries. Although it might seem somewhat odd to allow an oracle to maintain state, our result shows that this behavior can be achieved in the more standard classical oracle model.

\begin{definition}[Ideal Stateful Quantum State Obfuscation]
    A ideal stateful quantum state obfuscator is a QPT compiler $\Obfs$ which takes as input a stateful quantum program $(U, \ket{\psi})$ and outputs another stateful quantum program $(\widetilde{U}, \ket{\widetilde{\psi}})$. 
    \[
        (\widetilde{U}, \ket{\widetilde{\psi}}) 
        \gets
        \Obfs(1^\secpar, (U, \ket{\psi}))
    \]
    In an oracle model, $\Obfs$ may also output a description of an efficient algorithm matching the oracle model (e.g. a unitary for the unitary oracle model) and all parties are given oracle access to that unitary.
    
    It must satisfy the following two properties.
    \begin{itemize}
        \item \textbf{Correctness:} There exists some $\epsilon = \negl$ such that for every stateful program $(U, \ket{\psi})$, the obfuscation $(\widetilde{U}, \ket{\widetilde{\psi}}) \gets\Obfs(1^\secpar, (U, \ket{\psi}))$ is $(\poly, \epsilon)$-functionally equivalent to $(U, \ket{\psi})$ with probability $1-\negl$.

        \item \textbf{Security:} There exists a simulator $\Sim$ such that for every stateful program $P = (U, \ket{\psi})$, every QPT adversary $\adv$, and every QPT distinguisher $D$, 
        % \anote{Is there a reason that the order of quantifiers is not $\forall \calA, \exists \Sim \forall P$? Justin: I prefer uniform simulators ("there exists a Sim such that for all adv, Sim(adv) satisfies..." ) over non-uniform simulators because it's a slightly stronger definition. Technically we really just need the simulator to be uniform wrt the adversary's auxiliary input, but this order of quantifiers means we don't have to be explicit about that}
        \[
            \left|
                \Pr[1\gets D(\adv(\Obfs(1^\secpar, P)]  
                - 
                \Pr[D(\Sim^{P}(1^\secpar, \adv, n(\secpar), m(\secpar))]
            \right| = \negl
        \]
        where $U$ has at most $n(\secpar)$ gates, $\ket{\psi}$ has at most $n(\secpar)$ qubits, and $P$ takes as input an $m(\secpar)$-sized register.
    \end{itemize}
\end{definition}

A straightforward hybrid argument shows that ideal stateful obfuscator is also a stateful indistinguishability obfuscator. First, ideal obfuscation allows replacing an obfuscation of $P_1$ by a simulator who only has oracle access to $P_1$. Then, since the simulator makes only a polynomial number of queries, it cannot distinguish between oracle access to $P_1$ or $P_2$. Finally, ideal obfuscation allows switching back from the simulator with oracle access to $P_2$ to an obfuscation of $P_2$.

\begin{claim}
    Any ideal stateful quantum state obfuscator is also a stateful quantum state indistinguishability obfuscator.
\end{claim}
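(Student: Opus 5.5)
Correctness carries over verbatim: the correctness condition of an ideal stateful obfuscator is word-for-word the correctness condition of \Cref{def:stateful-io}, with the same $(\poly,\negl)$-functional equivalence holding with probability $1-\negl$. So only security needs an argument, and for it I would run the three-hybrid argument sketched just before the claim.

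Fix two stateful programs $P_1=(U_1,\ket{\psi_1})$ and $P_2=(U_2,\ket{\psi_2})$ that are $(\poly,\negl)$-functionally equivalent and satisfy $|P_1|=|P_2|$. After padding, both have the same gate count $n(\secpar)$, the same state size, and the same input-register size $m(\secpar)$. We must show $\Obfs(1^\secpar,P_1)\approx\Obfs(1^\secpar,P_2)$.

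Take an arbitrary QPT distinguisher $D$ and let $\adv$ be the identity adversary, which simply outputs the obfuscated program it receives. The hybrids are as follows.
\begin{itemize}
\item[(H0)] $D(\Obfs(1^\secpar,P_1))$.
\item[(H1)] $D(\Sim^{P_1}(1^\secpar,\adv,n,m))$.
\item[(H2)] $D(\Sim^{P_2}(1^\secpar,\adv,n,m))$.
\item[(H3)] $D(\Obfs(1^\secpar,P_2))$.
\end{itemize}
H0 and H1 are negligibly close by ideal security applied to $P_1$. H2 and H3 are negligibly close by ideal security applied to $P_2$. Here it matters that the simulator is universal: it receives only the size parameters $n,m$, which coincide for the two programs, so the same $\Sim$ appears in H1 and H2.

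For H1 versus H2, I would consider the combined algorithm $B:=D\circ\Sim^{(\cdot)}(1^\secpar,\adv,n,m)$. It is a quantum oracle algorithm that makes at most $s(\secpar)=\poly$ forward or inverse queries, because $\Sim$ is QPT. Oracle access in the simulator is exactly the stateful oracle access defined for stateful programs: an internal register is initialized to $\ket{\psi_i}$, and $U_i$ or $U_i^{-1}$ is applied on each query. Functional equivalence for this $s$ then bounds the trace distance between $B^{P_1}$ and $B^{P_2}$ by a negligible quantity. The quantifier order allows $B$ to depend on the programs, but here $B$ does not even need to. It follows that the acceptance probabilities of $D$ in H1 and H2 differ negligibly. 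Chaining the three steps gives indistinguishability of $\Obfs(P_1)$ and $\Obfs(P_2)$ against any QPT $D$.

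The main obstacle is a mismatch between the two notions of oracle access. The ideal-obfuscation security is stated relative to an oracle model, in which $\Obfs$ may output a classical or unitary oracle description. We have to make sure that $\Sim^{P}$ queries $P$ in exactly the forward/inverse stateful sense of the functional-equivalence definition, so that H1$\to$H2 is a direct instance of that definition. I would also check two smaller points: that the number of queries is bounded by a fixed polynomial, since functional equivalence is given for every $s=\poly$; and that the indistinguishability in \Cref{def:stateful-io} is understood as holding for distinguishers with access to the same oracle.
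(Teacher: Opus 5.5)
Your proposal is correct and follows the paper's own three-hybrid argument: ideal security for $P_1$, then $(\poly,\negl)$-functional equivalence against the polynomial-query simulator, then ideal security for $P_2$, with the quantifier and parameter checks written out explicitly. One adjustment settles the oracle-access worry you raise at the end: do not take $\adv$ to be the identity with the stateful-iO distinguisher in the $D$ slot, but take $\adv$ to be that distinguisher itself (it receives the obfuscated program, including any oracle, and outputs a bit) and let $D$ simply output $\adv$'s bit. This matters because in the oracle-model setting of \Cref{thm:ideal-sqO}, an identity adversary cannot forward the oracle to $D$, whereas $\adv$ does have it.
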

\renewcommand{\adv}{\mathsf{Adv}}

We now show how to construct ideal stateful obfuscation in the unitary oracle model. As a corollary of \cite{arxiv:HT25}, it can also be constructed in the classical oracle model. 

\begin{construction}
    The construction uses a publicly verifiable quantum authentication scheme $\QAS$ (e.g. the coset authentication scheme from \cite{STOC:BBV24}) and an ideal unitary obfuscation $\Obfs'$ (e.g. \Cref{thm:HT25}~\cite{arxiv:HT25}). 
    
    Without loss of generality, the QAS scheme supports Pauli key updates since we can always attach a Pauli correction to the key. \cite{STOC:BBV24}'s authentication scheme also supports Pauli key updates natively.
    We implement $\QAS.\Dec_k$ and $\QAS.\Enc_k$ as reversible circuits that act unitarily on an enlarged work space and return all ancillas to $\ket{0}$ (i.e., they are coherently implemented isometries), so the following overall procedure is unitary. Assume that $U$ does not make any use of oracles.

    On input $P = (U, \ket{\psi})$ and security parameter $1^\secpar$:
    \begin{enumerate}
        \item Sample an authentication key $k\gets \QAS.\KeyGen(1^\secpar)$.
        \item Compute $\ket{\widetilde{\psi}} \coloneqq \QAS.\Enc_k(\ket{\psi}) $ on register $\calR$.
            % \otimes \ket{0^n}
        % where $n$ is sufficiently many ancillas to perform the following computation.
        
        % \item Compute $\widetilde{U}$ as a unitary obfuscation of the following unitary:
        \item Let $\widetilde{U}$ be the following unitary:

        \begin{enumerate}
            \item Take as input two registers $(\calQ,\ \calR)$
            % $,\ \calZ)$.
            % \item If the contents of $\calZ$ are not $\ket{0^n}$, 
            \item Apply $\QAS.\Dec_k$ to register $\calR$. If the result is $\bot$, uncompute $\QAS.\Dec_k$ and return the input registers immediately.
            \item Otherwise, apply $U$ to $(\calQ, \calR)$.
            \item Apply $\QAS.\Enc_k$ to register $\calR$.
            \item Return $(\calQ, \calR)$.
        \end{enumerate}
        \item Compute the unitary obfuscation $(\ket{\widetilde{\phi}}, C) \gets \Obfs'(\widetilde{U})$.
        % \anote{This obfuscation also contains the QAS of $\ket{\psi}$.}
        \item Output $\ket{\widetilde{\psi}}, \Obfs'(\widetilde{U})$.
    \end{enumerate}
\end{construction}

\begin{theorem}\label{thm:ideal-sqO}
    Ideal stateful quantum obfuscation exists in the classical oracle model.
\end{theorem}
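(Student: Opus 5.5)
We show that the construction just given is an ideal stateful obfuscator. Correctness and security are proved in the \emph{unitary} oracle model, using the ideal unitary obfuscation $\Obfs'$ as a black box. At the end we port the result to the classical oracle model by instantiating $\Obfs'$ with \Cref{thm:HT25}. That theorem already gives ideal obfuscation for approximately-unitary programs with quantum inputs, outputs and auxiliary states in the classical oracle model.

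\textbf{Correctness.} First, $\widetilde{U}$ is unitary. On the subspace where $\QAS.\Dec_k$ accepts, $\widetilde{U}$ acts as $\QAS.\Enc_k\circ U\circ \QAS.\Dec_k$. On the orthogonal (rejecting) subspace it acts as the identity. Because $U$ maps authenticated codewords back to authenticated codewords, this map is a well-defined unitary, and its inverse is $\QAS.\Enc_k\circ U^{-1}\circ\QAS.\Dec_k$ on the accepting part. Next, the internal register always holds $\QAS.\Enc_k(\cdot)$ of the current state of $U$. Hence, by correctness of $\QAS$, $s$ forward and inverse queries to $(\widetilde{U},\ket{\widetilde\psi})$ reproduce exactly the joint state obtained from $(U,\ket\psi)$, up to applying $\QAS.\Enc_k$ to $\calR$. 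Finally, $\Obfs'$ preserves functionality up to negligible diamond-norm error per query. A union bound over $s=\poly$ queries then gives $(\poly,\negl)$-functional equivalence.

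\textbf{Security, via hybrids.} We build $\Sim^P$ and argue through the following hybrids.
\begin{enumerate}
\item[H0.] The real output $(\ket{\widetilde\psi},\Obfs'(\widetilde U))$.
\item[H1.] Replace $\Obfs'(\widetilde U)$ by the ideal simulator $\Sim'^{\widetilde U,\widetilde U^\dagger}$ of \Cref{defn:unitary-obf}, which is given $\ket{\widetilde\psi}$ and is run with the adversary. H0 and H1 are indistinguishable by ideal obfuscation security.
\item[H2.] Change how the oracle queries are answered. The adversary now holds a register $\calR$ that should contain an authenticated state, and each call to $\widetilde U$ on $(\calQ,\calR)$ is answered as follows. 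The true internal register of $P$ is kept privately; on the adversary's $\calR$ we run $\Ver_k$. If it accepts, we decode, swap the decoded content with a private copy, apply the real $P$ query on $(\calQ,\text{private})$, swap back, and re-encode. If it rejects, we apply the identity. In addition, the initial state $\ket{\widetilde\psi}$ is replaced by an authentication of a dummy state $\ket{0}$, and the private register is initialized to $\ket\psi$.
\end{enumerate}
Once this is done, the simulator in H2 needs only oracle access to $P$ (forward and inverse) together with the key $k$ it sampled itself. So $\Sim^P$ is: sample $k$, give the adversary $\QAS.\Enc_k(\ket{0})$, and run $\Sim'$ with the wrapped oracle above.

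\textbf{Main obstacle: H1 to H2.} The key claim is that an adversary holding only authenticated states, with query access to $\widetilde U$, can never obtain an accepted state other than the one the "honest" internal evolution would produce, except with negligible probability. This would let us decouple the plaintext register and move it inside the oracle. I would argue it with an invariant, in the style of the swap invariant in the proof of \Cref{thm:Sim-prime-equivalent-to-SEQ-oracle}. Between queries, the global state in H1 equals the state in H2 after the adversary's $\calR$ is decoded and swapped with the private register, up to negligible error. Each query preserves this invariant on the accepting subspace. On the rejecting subspace, authentication security (used with public verifiability, since $\Sim'$ may effectively implement $\Ver_k$ coherently) says the rejected component is nearly orthogonal to valid codewords, so both hybrids act as the identity there. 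The dummy-state swap $\ket{\psi}\to\ket{0}$ is justified by the hiding that comes with authentication (authentication implies encryption). The delicate point is that the adversary queries coherently and polynomially many times. For this I would use a per-query hybrid with error $\negl$ each, relying on Pauli key updates so that every re-encoding uses a fresh-looking key. I expect this step to need the most care.
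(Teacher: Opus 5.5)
\textbf{There is a genuine gap, at exactly the step you flag as the main obstacle (H1 to H2).}

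Your invariant relates the two worlds by decoding the adversary's register $\calR$ and swapping its plaintext with the private register. Compare the swap in \Cref{thm:Sim-prime-equivalent-to-SEQ-oracle}: that swap acts only on the oracle's inaccessible register, so it commutes with everything the adversary does. Your map instead acts on a register the adversary holds and manipulates arbitrarily between queries. It may, for instance, move $\calR$ in and out of the query position in superposition. Such a map does not commute with the adversary's operations, and equality of global states ``up to an operation on the adversary's register'' says nothing about the adversary's view.

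You cannot instead invoke hiding to replace $\Enc_k(\ket{\psi})$ by $\Enc_k(\ket{0})$. In H1 the oracle still decodes with $k$ and applies $U$ to the plaintext, so the content is not yet decoupled from the key-dependent oracle. Establishing that decoupling is exactly what you are trying to prove, so the argument is circular. Similarly, ``the accepted state the honest evolution would produce'' is not a well-defined single state once queries are coherent. The plaintext is entangled with the adversary's registers across branches, and you need a mechanism that tracks this entanglement.

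\textbf{The missing idea is teleportation into an authenticated EPR half.}

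The simulator prepares $n$ EPR pairs on $(\calA,\calB)$, hands out $\Enc_k$ of the $\calB$ halves, and keeps $\calA$. Giving out $\Enc_k(\ket{\psi})$ is then identical to teleporting $\ket{\psi}$ from a private register $\calR$ into $\calB$. This uses a Bell measurement on $(\calR,\calA)$ followed by a Pauli key update $k\mapsto k'$. Because this touches only simulator-held registers, it can be postponed to the moment of any later query.

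One then runs a per-query hybrid. The first $i$ queries are answered by running $\Ver_k$ on the submitted register and applying $U$ to $(\calQ,\calR)$. Only at query $i+1$ is $\calR$ teleported into $\calB$. To push the teleportation one query later, three steps are needed:
\begin{enumerate}
\item Insert a check that the decoded submitted register, together with $\calA$, is still the EPR state. Publicly verifiable authentication security makes this check gentle. Public verifiability is needed because earlier queries already give the adversary effective access to $\Ver_k$.
\item Conditioned on passing, teleport-then-apply-$U$ equals apply-$U$-then-teleport.
\item Remove the check, and replace decode/check/re-encode by $\Ver_k$.
\end{enumerate}
After all queries, the teleportation never occurs and you have the simulator.

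Your single jump from H1 to H2, with a fixed dummy $\ket{0}$, offers no way to inject or extract the program state using only simulator-side operations. At best, the EPR-based simulator could afterwards be converted to your $\Enc_k(\ket{0})$ version, but that requires an additional hiding-given-$\Ver_k$-access argument. Your correctness argument and the porting to the classical oracle model via \Cref{thm:HT25} are fine.
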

\ifllncs
We defer the proof of \Cref{thm:ideal-sqO} to \Cref{sec:ideal-sqO}.
\else
\begin{proof}
    Let $n$ be the size of the internal state register $\calR$, in qubits. Let $\adv$ be the adversary. Let $\Sim'$ be the simulator of the ideal unitary obfuscation scheme $\Obfs'$. Let $\adv'$ be the adversary obtained as $\adv' = \Sim'(\adv, \cdot)$.
    
    The simulator $\Sim^P(\adv')$ initializes itself by sampling $k\gets \Auth.\KeyGen(1^\secpar)$. Prepare $n$ EPR pairs in registers $(\calA_1, \calB_1),\ \dots,\ (\calA_n, \calB_n)$ and apply $\Auth.\Enc_k$ to register $\calB = (\calB_1, \dots, \calB_n)$, expanding the register as necessary, to obtain
    \[
        \propto \sum_{x\in \{0,1\}^n} \ket{x}_\calA \otimes \Auth.\Enc_k(\ket{x})_\calB
    \]
    It runs ${\adv'}^{(\cdot)}(\calB)$, answering its oracle queries as follows.
    \begin{enumerate}
        \item Parse the query register as $(\calQ, \calB')$.
        \item Run $\Auth.\Ver_k(\calB')$. If the result is $\bot$, uncompute $\Auth.\Ver_k$ and return immediately. Otherwise continue.
        \item Query $P$ on $\calQ$. Return the result.
    \end{enumerate}

    We show that $\Sim^P(\adv)$ is indistinguishable from $\adv^{\widetilde{U}}(\ket{\widetilde{\psi}})$ for any $q = \poly$ queries via a series of hybrid experiments.
    \begin{itemize}
        % \item $\Hyb_{-1} = \adv(\ket{\widetilde{\psi}} ,\Obfs'(\widetilde{U}))$.
        % \item $\Hyb_0 = \adv^{\widetilde{U}}(\ket{\widetilde{\psi}})$.
        \item $\Hyb_{-1} = \adv(\ket{\widetilde{\psi}}, \Obfs'(\widetilde{U}))$.
        \item $\Hyb_0 = {\Sim'}^{\widetilde{U}, \widetilde{U}^\dagger}(\adv,  \ket{\widetilde{\psi}}) = {\adv'}^{\widetilde{U}, \widetilde{U}^\dagger}(\ket{\widetilde{\psi}})$.
        % \item $\Hyb_1$: Prepare the half-authenticated EPR pairs
        % \[
        %     \propto \sum_{x\in \{0,1\}^n} \ket{x}_{\calA} \otimes \Auth.\Enc_k(\ket{x})_{\calB}
        % \]
        % as in the simulator. Teleport $\ket{\psi}$ into the authenticated register $\calB$ using the EPR halves in register $\calA$. 
        % Run $\adv^{\widetilde{U}_{k'}}(\calB)$. Here, $\widetilde{U}_{k'}$ denotes that the oracle uses key $k'$.
        \item $\Hyb_1$: Prepare the half-authenticated EPR pairs
        \[
            \propto \sum_{x\in \{0,1\}^n} \ket{x}_{\calA} \otimes \Auth.\Enc_k(\ket{x})_{\calB}
        \]
        as in the simulator. Teleport $\ket{\psi}$ into the authenticated register $\calB$ using the EPR halves in register $\calA$. 
        Run ${\Sim'}^{\widetilde{U}_{k'}, \widetilde{U}_{k'}^\dagger}(\adv,  \calB) = {\adv'}^{\widetilde{U}_{k'}, \widetilde{U}_{k'}^\dagger}(\calB)$. Here, $\widetilde{U}_{k'}$ denotes that the oracle uses key $k'$.
        
        \item $\Hyb_2$: Instead of performing the teleportation and key update $k \mapsto k'$ immediately after the teleportation, perform it when $\Sim'$ submits its first query (before answering the query). Until that point, store the internal state, which is initialized to $\ket{\psi}$, in register $\calR$.

        \item $\Hyb_{2+i}$ for $i = 1$ to $q$: 
        Initialize $\Sim'$ as in $\Hyb_2$. For the first $i$ queries, answer the query as in $\Sim$. To do this, perform $P$ internally using register $\calR$. Then, upon receiving query $i+1$, teleport the state of register $\calR$ into register $\calB$ and update the key $k \mapsto k'$. Then continue answering queries as in the real scheme.

        \item $\Sim = \Hyb_{2+q}$.
    \end{itemize}

    $\Hyb_{-1} \approx \Hyb_0$ by the security of the ideal obfuscation of unitary $\widetilde{U}$ (see \Cref{defn:unitary-obf}).

    $\Hyb_0 = \Hyb_1$ by the correctness of teleportation and the ability to perform Pauli key updates on the authentication scheme. $\Hyb_1 = \Hyb_2$ since the teleportation and key updates are performed on disjoint registers from those hold by $\adv$. The main step is to show that $\Hyb_{2+i} \approx \Hyb_{2+i+1}$.

    \begin{claim}
        For every $i<q$,
        \[
            \Hyb_{2+i} \approx \Hyb_{2+i+1}
        \]
    \end{claim}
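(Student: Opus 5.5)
The plan is to show that consecutive hybrids differ in exactly one place: how the $(i+1)$-st query is answered. In $\Hyb_{2+i}$ the state of $\calR$ is teleported into the authenticated register $\calB$ (with key update $k\mapsto k'$) just before query $i+1$, and that query is answered by the real oracle $\widetilde U_{k'}$. In $\Hyb_{2+i+1}$ query $i+1$ is answered as in $\Sim$: verify $\calB'$ with $\Auth.\Ver$, then apply $P$ internally on $(\calQ,\calR)$. Only afterwards is $\calR$ teleported in. So it suffices to compare these two ways of answering one query, when both start from the same joint state after $i$ simulated queries.

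\textbf{Step 1 (authentication).} Before query $i+1$ is answered in $\Sim$-style, the adversary $\adv'$ has held only the register $\calB$. This register contains half-authenticated EPR halves, and $\adv'$ has seen no information about $k$ beyond oracle access. That access is itself implementable with $\Ver_k$, $\Enc_k$ and $\Dec_k$ applied to states it never sees in the clear. I will invoke security of the publicly verifiable QAS. Applying $\Dec_k$ to the query register $\calB'$ yields, up to negligible error, a mixture of two outcomes. One is the untouched EPR half, which is exactly the ``accept'' branch of $\Ver_k$. The other is $\bot$. Here the $\bot$ branch is where the real oracle uncomputes and returns, which matches the simulator returning immediately.

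\textbf{Step 2 (accept branch).} In the accept branch, the real $\widetilde U_{k'}$ decodes $\calB'$ to obtain the teleported copy of $\calR$, applies $U$, and re-encodes. The simulator instead applies $U$ on its private $\calR$ and keeps the authenticated EPR half untouched. Teleportation commutes with applying $U$ on the other end. Moreover, Pauli key updates let the re-encoded register be absorbed into a fresh teleportation at query $i+2$. Together these show that the two states agree after the query, up to the negligible authentication error. The same reasoning handles inverse queries, using $U^\dagger$. A hybrid over the $q=\poly$ queries then accumulates only negligible loss.

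\textbf{Main obstacle.} The hard step is Step 1, applied coherently. The QAS security definition above is stated for a single adversarial tampering followed by $\Dec_k$. Here, however, the adversary makes superposition queries, and before query $i+1$ it already has oracle access depending on $k$. I would first try a reduction in which the QAS adversary simulates the first $i$ queries using public verification $\mathsf{vk}$ plus fresh EPR halves. This works because in $\Hyb_{2+i}$ those queries never need $\Dec_k$. The reduction would then argue that the Stinespring isometry of ``tamper then $\Dec_k$'' is negligibly close to $(\sqrt{1-\epsilon}\,I\otimes\ket{\mathrm{acc}}+\sqrt{\epsilon}\,\text{junk}\otimes\ket{\bot})$. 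That closeness is what gives coherent, not merely mixed-state, indistinguishability, and I expect it to require a purified or public-verification strengthening of the stated definition.
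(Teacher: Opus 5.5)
Your plan follows the paper's proof: isolate the single query on which the two hybrids differ, use publicly verifiable QAS security (public verifiability is needed because the earlier $\Sim$-style queries use $\Ver_k$) to argue that an accepted query register decodes to the untouched EPR halves, and then use exact teleportation, Pauli key updates, and the replacement of $\Dec_k$/$\Enc_k$ by $\Ver_k$ to move $U$ past the teleportation. The paper handles the coherent step you flag by inserting, right after $\Dec_k$, an efficient abort test that the non-$\bot$ branch forms EPR pairs with $\calA$; QAS security makes this test pass with overwhelming probability, so by gentle measurement it can be added and later removed at statistically negligible cost, and that statistical closeness is what allows the later key-dependent real-scheme queries to proceed unchanged.
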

    \begin{proof}
        We further divide the transition into a sequence of hybrid experiments.
        \begin{itemize}
            \item $\Hyb_{2+i,1}$: For the first $i-1$ queries, answer the query as in $\Sim$. The difference from $\Hyb_{2+1}$ happens upon receiving query $i$.
            
            Upon receiving query $i$, first compute $\QAS.\Dec_k$ on register $\calB'$. Then measure whether the result of decoding is $\neq \bot$ and registers $(\calA, \calB')$ contain a tensor of EPR pairs
            \[
                \propto \sum_{x} \ket{x}_{\calA} \otimes \ket{x}_\calB
            \]
            If not, abort the experiment and output $\bot$. Otherwise, continue as in $\Hyb_{2+i}$, starting from teleporting the contents of $\calR$ into register $\calB$ using $\calA$.

            \item $\Hyb_{2+i,2}$: This is the same as $\Hyb_{2+i,1}$ except that after performing the early abort measurement on query $i$, the order of computation and teleportation is swapped. Specifically, if the experiment does not abort, compute $U$ on registers $(\calQ, \calR)$. Then, teleport the contents of $\calR$ into $\calB$ using $\calA$. Re-authenticate $\calB$ using $k$ and continue as in $\Hyb_{2+i, 1}$.

            \item $\Hyb_{2+i,3}$: This is the same as $\Hyb_{2+i,2}$, except that the early abort measurement on query $i$ is \emph{not} performed. Now the computation of $U$ on query $i$ occurs immediately after computing $\Auth.\Dec_k$ on register $\calB$, controlled on the result not being $\bot$.

            Note that the result of $\Auth.\Dec_k$ is not used for query $i$ except to check whether the result is $\bot$.

            \item $\Hyb_{2+i+1}$: The only difference from $\Hyb_{2+i,3}$ is that performing $\Auth.\Dec_k$ to $\calB$, checking whether the result is $\bot$, and finally applying $\Auth.\Enc_k$ to $\calB$ is replaced by performing $\Auth.\Ver_k$ to register $\calB$ and checking whether the result is $\bot$.
        \end{itemize}

        $\Hyb_{2+i} \approx \Hyb_{2+i,1}$ follows from the observation that the abort check is a measurement corresponding to the publicly-verifiable security of $\QAS$.
        \luowen{Gentle measurement is probably the wrong terminology here since the abort check could be non-gentle (e.g. abort probability being close to 1/2)}
        \justin{Gentleness is defined wrt a specific state (the only measurements which are gentle for everything are trivial). For the state here, abort check is gentle unless the adversary has broken $\QAS$ security.}
        \luowen{I changed the wording above. Maybe we can write something more to clarify?}
        Note that public verifiability is necessary because oracle access to $\Ver_k$ is used to answer queries prior to $i$.

        $\Hyb_{2+i,1} = \Hyb_{2+i,2}$ because conditioned on not aborting, the state across register $\calA$ and the queried register $\calB'$ is $n$ EPR pairs, so teleportation is completely correct. Therefore teleporting from register $\calR$ into $\calB'$ then performing computation on $\calB'$ is equivalent to performing the same computation on register $\calR$, then teleporting from $\calR$ to $\calB'$.

        $\Hyb_{2+i, 2} \approx \Hyb_{2+i,3}$ because the abort check is a gentle measurement by the publicly verifiable security of $\QAS$.

        Finally, $\Hyb_{2+i,3} = \Hyb_{2+i+1}$ because $\Ver_k$ outputs $\bot$ precisely when $\Dec_k$ would output $\bot$, and otherwise $\Hyb_{2+i,3}$ does not use the result of $\Dec_k$.
    \end{proof}
\end{proof}
\fi
By using \Cref{thm:ideal-sqO} and obfuscating the SEQ oracle, which is a stateful quantum program, we achieve one-time programs with SEQ security for all quantum channels in the classical oracle model.

\begin{corollary}\label{cor:seq-otp-classical}
    There exist one-time programs with SEQ security for all quantum channels in the classical oracle model.
\end{corollary}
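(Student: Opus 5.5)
The plan is to instantiate the one-time compiler as the ideal stateful obfuscation of the SEQ interface itself. Given a channel $\Phi$ with its canonical Stinespring dilation $U_\Phi$ (which does not use oracles), define the stateful quantum program $\SEQ(\Phi) := (S_\Phi, \ket{0^m}_\calP\ket{0}_\calC)$. Here $S_\Phi$ is the three-step unitary of \Cref{def:CSEQ-oracle} acting on $(\calQ,\calP,\calC)$. This program is efficient and oracle-free, since each step is a controlled application of $U_\Phi$ or $U_\Phi^\dagger$, or a flip of $\calC$ controlled on $\calP=\ket{0^m}$. The compiler sets $\OTP(1^\secpar,\Phi) := \Obfs(1^\secpar, \SEQ(\Phi))$, where $\Obfs$ is the ideal stateful quantum state obfuscator in the classical oracle model given by \Cref{thm:ideal-sqO}.

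For correctness, the correctness of $\Obfs$ says the obfuscation is $(\poly,\negl)$-functionally equivalent to $\SEQ(\Phi)$. On a first forward query with input $\rho_\calQ$, the program $\SEQ(\Phi)$ applies $U_\Phi$ to $\rho_\calQ\otimes\ketbra{0^m}$, because step 1 is inactive and step 2 sets $\calC=1$. Tracing out the hidden registers therefore yields $\Phi(\rho)$. So the obfuscated program one-time implements $\Phi$ up to negligible error.

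For security, I would take the SEQ simulator to be $\Sim^{O^\SEQ_\Phi}(1^\secpar) := \Sim_{\Obfs}^{(\cdot)}(1^\secpar,\adv,n,m)$ with $\adv$ the identity adversary, i.e. one that outputs the obfuscation unchanged. The simulator answers every oracle query of $\Sim_{\Obfs}$ by querying $O^\SEQ_\Phi$. The key observation is that oracle access to the stateful program $\SEQ(\Phi)$, with forward and inverse queries and hidden $(\calP,\calC)$ initialized as specified, is literally $O^\SEQ_\Phi$. Moreover, $S_\Phi$ is self-adjoint, so inverse queries equal forward queries and need no separate interface. Hence the ideal-obfuscation security guarantee, instantiated with this $\adv$, gives directly that $\OTP(1^\secpar,\Phi)$ is indistinguishable from $\Sim^{O^\SEQ_\Phi}(1^\secpar)$ for every QPT distinguisher $D$. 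This holds even when $D$ depends on $\Phi$, since the ideal-obfuscation guarantee quantifies over all $P$ before $D$.

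The main obstacle is making sure the quantifier structure and syntax match \Cref{def:CSEQ-security}. First, the simulator must be uniform over $\Phi$. The ideal obfuscator's simulator is uniform, taking only the size parameters $n,m$, which are polynomial in $|\Phi|$ and can be supplied by padding. Second, self-adjointness of $S_\Phi$ must be checked, so that $\Sim_{\Obfs}$'s inverse queries are simulable by $O^\SEQ_\Phi$. This is a one-line check from the block form \eqref{eq:OU-block}, which is Hermitian. Third, the oracle in the classical oracle model is sampled by $\Obfs$ and is part of its output, so $D$ receives it as part of $\OTP(1^\secpar,\Phi)$, which matches the ideal-obfuscation experiment.
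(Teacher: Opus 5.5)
Your proposal is correct and takes the paper's approach: the paper's one-line proof applies the ideal stateful obfuscator of Theorem~\ref{thm:ideal-sqO} to the SEQ oracle, viewed as a stateful quantum program. Your write-up also supplies details the paper leaves implicit, in particular that the SEQ unitary is self-adjoint so the obfuscation simulator's inverse queries can be answered by the SEQ oracle itself, and that the identity adversary yields a single simulator for all $\Phi$ once the size parameters are fixed.
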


\section*{Acknowledgements}

During the preparation of this work, the authors used LLMs to generate initial draft text for certain sections.
The authors reviewed, revised, and take full responsibility for all content, ensuring its correctness and integrity. AG was supported in part by DARPA under Agreement No. HR00112020023, NSF CNS-2154149 and a Simons Investigator Award. This work was done in part while AG was at the Simons Institute and participating in the Challenge Institute for Quantum Computation at UC Berkeley.

\bibliographystyle{alpha}
\bibliography{refs}

\appendix
\ifllncs
\makeatletter
\renewcommand*{\theHsection}{app.\Alph{section}}
\renewcommand*{\theHsubsection}{app.\Alph{section}.\arabic{subsection}}
\renewcommand*{\theHsubsubsection}{app.\Alph{section}.\arabic{subsection}.\arabic{subsubsection}}
\makeatother

\fi
\section{Lossy Public-Key Encryption}
Here we define and construct lossy public-key encryption (lossy PKE), which will be used in our impossibility result. In injective mode, lossy PKE functions like normal PKE, in which $\Enc$ hides the message computationally, but not statistically. In lossy mode, $\Enc$ actually hides the message statistically or perfectly, so the output distributions of $\Enc(0)$ and $\Enc(1)$ are (respectively) statistically close or identical. Furthermore, the injective and lossy modes are indistinguishable to an adversary who is given the public encryption key.

The following definition is based on \cite{AC:HLOV11}, definition 2, but with some differences. For instance, \cite{AC:HLOV11}'s definition required perfect correctness whereas we allow a negligible but non-zero decryption error. %Also \cite{AC:HLOV11} defined an openability property, which we omit because it is implied by the other properties.

\begin{definition}[Statistically/Perfectly Lossy Public-Key Encryption]\label{def:lossy-pke}
A \textbf{statistically/perfectly lossy public-key encryption} (lossy PKE) scheme is a tuple of PPT algorithms $(\Gen, \Enc, \Dec)$ with the following syntax. 

Let $\calM$ be the message space, and let $\calR_\Gen, \calR_\Enc, \calR_\Dec$ be the sample space of the randomness for $\Gen, \Enc, \Dec$, respectively. The sizes of these sets may depend on $\secp$. Also let $r_\Gen \getsr \calR_\Gen, r_\Enc \getsr \calR_\Enc, r_\Dec \getsr \calR_\Dec$. Next,
\begin{itemize}
    \item $\Gen(1^\secp, \mathsf{mode}; r_\Gen)$: Takes a security parameter $\secp \in \bbN$ and a $\mathsf{mode} \in \{\mathsf{inj}, \mathsf{lossy}\}$ and outputs keys $(\pk, \sk)$. 
    % $\mathsf{mode} = \mathsf{inj}$ produces injective keys, and $\mathsf{mode} = \mathsf{lossy}$ produces lossy keys. 
    Additionally, let $\GenPK(1^\secp, \mathsf{mode};r_\Gen)$ compute $\Gen(1^\secp, \mathsf{mode}; r_\Gen)$ but only output $\pk$.
    \item $\Enc(\pk, m; r_\Enc)$: Takes a public key $\pk$ and a message $m \in \calM$ and outputs a ciphertext $c$.
    \item $\Dec(\sk, c; r_\Dec)$: Takes a secret key $\sk$ and a ciphertext $c$, decrypts $c$, and outputs the message $m$.
    % For any injective keys $(\pk, \sk)$ and any ciphertext $c$ encrypted under $\pk$, $\Dec$ decrypts to the correct message $m$.
\end{itemize}

We will often omit the random inputs in our notation, for instance, writing $\Enc(\pk, m)$ instead of $\Enc(\pk, m; r_\Enc)$.

Next, the encryption scheme satisfies the following properties:
\begin{itemize}
    % \item \textbf{Correctness On Injective Keys:} There is a negligible function $\mu$ such that for any message $m \in \calM$,
    % \[\Pr\left[m' = m \,\middle\vert\, \begin{split}
    %     (\pk, \sk) &\gets \Gen(1^\secp, \mathsf{inj})\\
    %     c &\gets \Enc(\pk, m)\\
    %     m' &\gets \Dec(\sk, c)
    % \end{split}\right] \geq 1 - \mu(\secp)\]
    \item \textbf{Correctness:} With overwhelming probability over the sampling of $(\pk, \sk) \gets \Gen(1^\secp, \mathsf{inj})$, the following is true for every $m \in \calM$: 
    \[\Pr_{r_\Enc, r_\Dec}\left[m = \Dec\left[\sk,\Enc(\pk, m)\right]\right] = 1\]

    \item \textbf{Statistical/Perfect Lossiness:} There is a function $\mu(\secp)$ that is (respectively) negligible/identically zero such that with overwhelming probability over the sampling of $(\pk, \sk) \gets \Gen(1^\secp, \mathsf{lossy})$, the following is true for every $m_0, m_1 \in \calM$: the statistical distance between $\Enc(\pk, m_0)$ and $\Enc(\pk, m_1)$ is $\leq \mu(\secp)$.

    \item \textbf{Indistinguishability of Modes:} The output distributions of $\GenPK(1^\secp, \mathsf{inj})$ and $\GenPK(1^\secp, \mathsf{lossy})$ are indistinguishable to any QPT adversary. Formally, we require that for any QPT adversary $\calA$, there is a negligible function $\nu(\secp)$ such that:
    \[\left|\Pr\left[\calA(1^\secp, \GenPK(1^\secp, \mathsf{inj})) \to 1 \right] - \Pr\left[\calA(1^\secp, \GenPK(1^\secp, \mathsf{lossy})) \to 1 \right]\right| \leq \nu(\secp)\]
\end{itemize}
\end{definition}

\subsection{Construction from Lossy Trapdoor Functions}
We will construct statistically lossy PKE from lossy trapdoor functions (LTFs).\footnote{One implication of our construction is that LWE implies lossy PKE. There is another route to the same claim. \cite{Reg05}'s encryption scheme from LWE can be adapted to support lossy encryption, as \cite{Pei15} noted. Our result is stronger since we only need to assume the existence of LTFs, rather than the hardness of LWE.}

\paragraph{Lossy Trapdoor Functions:} We will start with the primitive of lossy trapdoor functions, which can be constructed from LWE.

\begin{definition}[Almost Always $(n,k)$-Lossy Trapdoor Functions (\cite{PeiWat07} Section 3.1)]\label{def:lossy-tdf}
Let $n(\secp) = \poly$ represent the input length, and let $k(\secp) \leq n(\secp)$ represent the lossiness. 

An \textbf{almost always $(n,k)$-lossy trapdoor function} scheme is a tuple of PPT algorithms $(\Gen, F, F^{-1})$ with the following syntax:
\begin{itemize}
    \item $\Gen(1^\secp, \mathsf{mode})$: Takes a security parameter $\secp \in \bbN$ and a mode $\mathsf{mode} \in \{\mathsf{inj},\mathsf{lossy}\}$ and outputs keys $(\pk, \sk)$. Additionally, let $\GenPK(1^\secp, \mathsf{mode})$ compute $\Gen(1^\secp, \mathsf{mode})$ but only output $\pk$.
    \item $F(\pk, x)$: Takes a public key $\pk$ and an input $x \in \bit^n$ and computes a string $y$ that is a deterministic function of $(\pk, x)$.
    \item $F^{-1}(\sk, y)$: Given an injective-mode key $\sk$ and an image value $y$, it outputs $x \in \bit^n$.
\end{itemize}    
The scheme satisfies the following properties:
\begin{itemize}
    \item \textbf{Injectivity:} With probability overwhelming in $\secp$, $\Gen(1^\secp, \mathsf{inj})$ outputs a $(\pk, \sk)$ such that:
    \begin{enumerate}
        \item $F(\pk, \cdot)$ is injective, and
        \item for any $x \in \bit^n$, $F^{-1}\left[\sk, F(\pk, x)\right] = x$.
    \end{enumerate}
    \item \textbf{Lossiness:} With probability overwhelming in $\secp$, $\Gen(1^\secp, \mathsf{lossy})$ outputs a $(\pk, \sk)$ such that $F(\pk, \cdot)$ has an image of size $\leq 2^{n-k}$.
    \item \textbf{Indistinguishability of Modes:} The output distributions of $\GenPK(1^\secp, \mathsf{inj})$ and $\GenPK(1^\secp, \mathsf{lossy})$ are indistinguishable to any QPT adversary. Formally, we require that for any QPT adversary $\calA$, there is a negligible function $\nu(\secp)$ such that:
    \[\left|\Pr\left[\calA(1^\secp, \GenPK(1^\secp, \mathsf{inj})) \to 1 \right] - \Pr\left[\calA(1^\secp, \GenPK(1^\secp, \mathsf{lossy})) \to 1 \right]\right| \leq \nu(\secp)\]
\end{itemize}
\end{definition}

\begin{theorem}[Adapted from {\cite[Theorem 6.4]{PeiWat07}}]\label{thm:lossy-tdf-from-LWE}
    Let $n(\secp) = \secp^2$, $q = 4 \secp^{10}$, $\alpha = \frac{1}{32 \secp^{10}}$, $\chi = \overline{\Psi}_{\alpha}$. Then assuming the post-quantum hardness of $\mathsf{LWE}_{q, \chi}$, there exists a lossiness parameter $k(\secp) \sim \frac{n(\secp)}{2}$ and an almost always $(n,k)$-lossy trapdoor function scheme (\Cref{def:lossy-tdf}).
\end{theorem}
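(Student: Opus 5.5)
This theorem is cited from Peikert--Waters, so my plan is to re-derive their LWE-based construction and check that it meets \Cref{def:lossy-tdf} with the stated parameters. The one change is that indistinguishability of modes must now hold against QPT adversaries.

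\textbf{Construction.} Work over $\mathbb{Z}_q$ with $w$ blocks, so the input length is $n = w \cdot \secp$ bits.
\begin{itemize}
\item \emph{Lossy key.} Sample a uniform matrix $A \in \mathbb{Z}_q^{\secp \times w}$, a secret $S \in \mathbb{Z}_q^{\secp\times \secp}$, and an error matrix $E$ with entries from $\chi$. Publish the matrix of LWE samples $M = S^\top A + E$ (or its transpose, depending on convention).
\item \emph{Injective key.} Publish $M' = S^\top A + E + G$, where $G$ is the gadget matrix $I \otimes (1,2,\dots,2^{\ell-1})\cdot \lfloor q/2^{\ell}\rceil$ that encodes bits at well-separated scales.
\item \emph{Evaluation.} $F(\pk,x)$ outputs the pair $(Ax, Mx)$ over $\mathbb{Z}_q$, with $x$ viewed as a binary vector.
\end{itemize}

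\textbf{Step 1: indistinguishability of modes.} A hybrid over the columns replaces $S^\top A + E$ by uniform, using post-quantum decisional $\mathsf{LWE}_{q,\chi}$. After this replacement, adding $G$ is a perfect one-time pad. The reduction is classical and black-box, so it transfers to QPT adversaries verbatim.

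\textbf{Step 2: lossiness.} In lossy mode, $Mx = S^\top(Ax) + Ex$, so the output is determined by $Ax \in \mathbb{Z}_q^{\secp}$ together with $Ex$. The entries of $E$ are small (about $\alpha q$), and $x$ is binary, so $Ex$ is small. Counting possible outputs gives an image size of at most $q^{\secp}\cdot (\text{error range})^{\secp}$. With $n = \secp^2$ and $q = \mathrm{poly}(\secp)$, the base-2 logarithm of this is $O(\secp \log \secp) \ll n$. This gives $k \geq n - O(\secp\log\secp)$, which comfortably exceeds $n/2$ for large $\secp$.

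\textbf{Step 3: injectivity and the trapdoor.} In injective mode, decrypt using $S$ by computing $M'x - S^\top(Ax) = Gx + Ex$. Since $\|Ex\|_\infty \le n\cdot$(error bound), this quantity lies below the gadget spacing $q/2^{\ell+1}$ with overwhelming probability, so $x$ can be read off bit by bit. This recovery shows injectivity for every $x$.

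\textbf{Main obstacle: parameter bookkeeping.} The hard part is verifying that, with overwhelming probability over $E$, the tail bound on the Gaussian $\overline{\Psi}_\alpha$ gives $\|Ex\|_\infty < q/4$ uniformly for all $x$. This requires a union bound over entries combined with the worst case $|x|_1 = n$. With $\alpha q = 1/8$ the entries of $E$ are $O(\sqrt{\secp})$ with overwhelming probability. Then $n\cdot O(\sqrt\secp) = O(\secp^{2.5})$, which is much smaller than $q = 4\secp^{10}$, so there is room for the $2^\ell$ gadget levels. The final step is to choose $\ell$ to maximize rate while keeping $k\sim n/2$, and I would simply adopt the choice made in \cite[Theorem 6.4]{PeiWat07}.
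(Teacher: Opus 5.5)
There is a genuine gap: with the dimensions you chose, the injective mode cannot be injective. Since $S\in\mathbb{Z}_q^{\secp\times\secp}$, both $Ax$ and $Mx$ lie in $\mathbb{Z}_q^{\secp}$ for \emph{every} key, lossy or injective. So the codomain of $F(\pk,\cdot)$ has at most $q^{2\secp}=2^{O(\secp\log\secp)}$ elements, far fewer than the $2^{\secp^2}$ inputs. Concretely, Step~3 cannot read off $x$ from $Gx+Ex\in\mathbb{Z}_q^{\secp}$. A gadget with $\secp$ rows and polynomial $q$ carries at most $\secp\log_2 q=O(\secp\log\secp)$ bits. Reaching $n=\secp^2$ would need $\ell\approx\secp$ levels per row, i.e.\ $q\ge 2^{\secp}$. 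The bookkeeping is also inconsistent: $A\in\mathbb{Z}_q^{\secp\times w}$ forces $x\in\{0,1\}^w$, not $n=w\secp$ bits. Your Step~2 count comes out as small as $O(\secp\log\secp)$ precisely because the output has only $2\secp$ coordinates, and that same fact rules out injectivity.

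The missing idea, which is what Peikert--Waters actually do, is to use many LWE secrets that share one matrix $A$:
\begin{itemize}
\item $A\in\mathbb{Z}_q^{\secp\times n}$, $S\in\mathbb{Z}_q^{\secp\times w}$ with $w\approx n/\log p$ columns, and $E\in\mathbb{Z}^{n\times w}$;
\item $F(x)=\bigl(x^\top A^\top,\;x^\top(A^\top S+E+G)\bigr)\in\mathbb{Z}_q^{\secp+w}$, where $G$ packs $\log p$ bits into each of the $w$ columns with spacing about $q/p$.
\end{itemize}
Decoding then needs each coordinate of $x^\top E$ to stay below roughly $q/(4p)$. This is where $\alpha\le 1/(32pn)$ and $q\ge 4pn$ enter.

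The lossy-mode count becomes $q^{\secp}\cdot R^{w}$, with $R\le q/(2p)$ the range of each coordinate of $x^\top E$. So the residual leakage $\secp\log q+w\log(q/p)$ is a constant fraction of $n$, not $o(n)$. That trade-off between $\log(q/p)$ and $\log p$ is what produces a bound of the form $r\le(c_2/c_1+o(1))n$, and hence $k\sim n/2$.

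The paper does not re-derive any of this. It simply instantiates $d\to\secp$, $c_1=4$, $c_2=c_3=2$, $p=n^4$, $q=4pn$, $\alpha=1/(32pn)$ in PW's Theorem~6.4. It then remarks that the assumption is strengthened to QPT hardness, which matches your Step~1 observation.
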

\begin{proof}
\Cref{thm:lossy-tdf-from-LWE} is obtained by renaming the variables of \cite{PeiWat07} Theorem 6.4 or giving them concrete values. We do this as follows: $d \to \secp$, $c_1 = 4$, $c_2 = c_3 = 2$, $p = n^{c_1}$, $q = 4pn$, $\alpha = \frac{1}{32 pn}$. Then we obtain:
\begin{align*}
    n &= d^{c_3} = \secp^2\\
    r &\leq \left(\frac{c_2}{c_1} + o(1)\right) \cdot n = \left(\frac{1}{2} + o(1)\right) \cdot n\\
    k &= n-r \geq n - \left(\frac{1}{2} + o(1)\right) \cdot n = \frac{n}{2} - o(n) \sim \frac{n}{2}
\end{align*}

Additionally,
\begin{align*}
    pn &= n^{c_1} \cdot n = n^5 = \left(\secp^2\right)^5 = \secp^{10}\\
    q &= 4pn = 4 \secp^{10}\\
    \alpha &= \frac{1}{32 p n} = \frac{1}{32 \secp^{10}}
\end{align*}

With these choices of parameters, our theorem matches the parameter regime and lossiness bound of \cite{PeiWat07} Theorem 6.4. The theorem in \cite{PeiWat07} is stated for classical PPT adversaries; here we use the stronger assumption that $\mathsf{LWE}_{q,\chi}$ is hard for QPT adversaries.
\end{proof}

\paragraph{Pairwise-Independent Permutations:} Here we will construct a family of pairwise-independent permutations.

Let the domain and range be a field $\bbF$ of size $2^{\left(\secp^2\right)}$. 
\begin{itemize}
    \item $\pi_{a,b}(x):$ The function $\pi$ is defined by any values $a \in \bbF \backslash \{0\}$ and $b \in \bbF$. It takes an input $x \in \bbF$, then computes and outputs
    \[y = a \cdot x + b\]
    \item $\pi^{-1}_{a,b}(y):$ The inverse function $\pi^{-1}$ is parametrized by the same values $(a,b)$. It takes an input $y \in \bbF$, then computes and outputs
    \[x' = \frac{y - b}{a}\]
\end{itemize}

\Cref{thm:pi-is-permutation} says that $\pi$ is indeed a permutation. Then \Cref{thm:pairwise-independence} says that the function family defined above is pairwise-independent.
\begin{lemma}[$\pi$ is a permutation]\label{thm:pi-is-permutation}
    For any $(a, b) \in \bbF \backslash \{0\} \times \bbF$, and any $x \in \bbF$, $\pi^{-1}_{a,b} \circ \pi_{a,b}(x) = x$.
\end{lemma}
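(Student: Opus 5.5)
The plan is a direct computation in the field $\bbF$, using that $a \neq 0$ so that $a^{-1}$ exists. Fix $(a,b) \in \bbF\backslash\{0\} \times \bbF$ and $x \in \bbF$.

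First I will compute $y := \pi_{a,b}(x) = a\cdot x + b$. Then I will apply the definition of the inverse map:
\[
\pi^{-1}_{a,b}(y) = \frac{y-b}{a} = \frac{(a x + b) - b}{a} = \frac{a x}{a}.
\]
Here I use additive cancellation in $\bbF$, namely $(ax+b)-b = ax$.

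Next, division by $a$ is multiplication by $a^{-1}$, which is well defined precisely because $a \neq 0$. By associativity and commutativity of multiplication, $a^{-1}(a x) = (a^{-1}a)x = x$, which is the claimed identity $\pi^{-1}_{a,b}\circ\pi_{a,b}(x) = x$.

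It remains to note that $\pi_{a,b}$ is in fact a permutation of $\bbF$, since the statement is named that way. The identity just shown gives injectivity of $\pi_{a,b}$. Because $\bbF$ is finite, injectivity implies bijectivity. Alternatively, the symmetric calculation $\pi_{a,b}(\pi^{-1}_{a,b}(y)) = a\cdot\frac{y-b}{a} + b = y$ gives surjectivity directly.

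The argument has no real obstacle. The only point requiring care is that the domain of the parameter $a$ excludes $0$, so that the inverse $a^{-1}$ used in $\pi^{-1}_{a,b}$ exists.
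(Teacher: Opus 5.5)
Your proof is correct and is essentially the paper's argument: substitute $\pi_{a,b}(x) = ax+b$ into $\pi^{-1}_{a,b}$, cancel $b$, and divide by $a$, which is legitimate because $a \neq 0$. Your extra remark on bijectivity goes slightly beyond what the lemma actually states, but it is correct and harmless.
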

\begin{proof}
    \begin{align*}
        \pi^{-1}_{a,b} \circ \pi_{a,b}(x) &= \frac{\left(a \cdot x + b\right) - b}{a} = \frac{a \cdot x}{a}\\
        &= x
    \end{align*}
    We used the fact that $a \neq 0$.
\end{proof}

\begin{lemma}[Pairwise Independence]\label{thm:pairwise-independence}
    For any values $x, x', y, y' \in \bbF$ such that $x \neq x'$ and $y \neq y'$,
    \[\Pr_{(a,b) \getsr \bbF \backslash \{0\} \times \bbF}\left[\pi_{a,b}(x) = y \land \pi_{a,b}(x') = y'\right] = \frac{1}{|\bbF| \cdot \left(|\bbF|-1\right)}\]
\end{lemma}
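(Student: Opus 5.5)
The statement to prove is Lemma~\ref{thm:pairwise-independence}. I will prove it by direct counting over the key space $(a,b) \in (\bbF\setminus\{0\}) \times \bbF$, which has exactly $|\bbF|\cdot(|\bbF|-1)$ elements. It therefore suffices to show that for fixed $x \neq x'$ and $y \neq y'$ there is exactly one key with $\pi_{a,b}(x) = y$ and $\pi_{a,b}(x') = y'$.

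The two conditions form the linear system $a x + b = y$ and $a x' + b = y'$ over the field $\bbF$. Subtracting the equations gives $a(x - x') = y - y'$. Since $x \neq x'$, the element $x - x'$ is invertible, so the unique solution is $a = (y-y')/(x-x')$. Then $b = y - a x$ is determined uniquely. Equivalently, the coefficient matrix with rows $(x,1)$ and $(x',1)$ has determinant $x - x' \neq 0$, so the system has exactly one solution $(a,b) \in \bbF^2$.

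The only point needing care is that this solution actually lies in the key space, namely $a \neq 0$. This holds because $y \neq y'$ makes the numerator $y - y'$ nonzero. Hence exactly one admissible key satisfies both conditions. Sampling $(a,b)$ uniformly then gives probability $1/(|\bbF|(|\bbF|-1))$, as claimed.

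There is no real obstacle here. The only things to get right are checking $a \neq 0$ using $y \neq y'$ and recalling that the sampling excludes $a = 0$, which is why the denominator is $|\bbF|(|\bbF|-1)$ rather than $|\bbF|^2$.
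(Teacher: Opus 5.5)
Your proof is correct and takes essentially the same route as the paper. Both solve the linear system to get the unique $a = (y-y')/(x-x')$ and $b = y - ax$, and both use $y \neq y'$ to ensure $a \neq 0$, which gives probability $1/(|\bbF|(|\bbF|-1))$. The only cosmetic difference is that the paper multiplies $\Pr[a = \cdot]\cdot\Pr[b = \cdot]$ using independence of $a$ and $b$, whereas you count the single admissible key directly.
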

\begin{proof}
    The event $\pi_{a,b}(x) = y \land \pi_{a,b}(x') = y'$ is equivalent to each of the following lines:
    \begin{align*}
        &\begin{matrix*}[l]
            &y = a \cdot x + b &\land &y' = a \cdot x' + b\\
            &a = \frac{y-y'}{x-x'} &\land &b = y - \frac{y-y'}{x-x'} \cdot x\\
        \end{matrix*}
    \end{align*}
    Since $x \neq x'$ and $y \neq y'$, the fraction $\frac{y-y'}{x-x'} \in \bbF \backslash \{0\}$. Furthermore, $a$ and $b$ are sampled independently. Therefore,
    \begin{align*}
        \Pr_{(a,b) \getsr \bbF \backslash \{0\} \times \bbF}\left[\pi_{a,b}(x) = y \land \pi_{a,b}(x') = y'\right] &= \Pr_{(a,b) \getsr \bbF \backslash \{0\} \times \bbF}\left[a = \frac{y-y'}{x-x'} \land b = y - \frac{y-y'}{x-x'} \cdot x\right]\\
        &= \Pr_{a \getsr \bbF \backslash \{0\}}\left[a = \frac{y-y'}{x-x'}\right] \cdot \Pr_{b \getsr \bbF}\left[b = y - \frac{y-y'}{x-x'} \cdot x\right]\\
        &= \frac{1}{\abs{\bbF} \cdot \left(\abs{\bbF}-1\right)}
    \end{align*}
\end{proof}

\paragraph{Construction of Lossy PKE:} Here we will construct lossy PKE from the almost-always lossy trapdoor function scheme $\LTF.(\Gen, F, F^{-1})$ and the pairwise independent permutation family defined above.

We will treat bitstrings as field elements and vice versa. Values in $\bit^{\secp^2}$ will be treated as elements in a field $\bbF$ of size $2^{\left(\secp^2\right)}$, and elements of $\bbF$ will be treated as values in $\bit^{\secp^2}$.

\begin{itemize}
    \item Let $\calM = \bit^\secp$, $n = \secp^2$, and $k \sim \frac{\secp^2}{2}$. Let $\LTF$ be the almost always $(n,k)$-lossy trapdoor function scheme given by \Cref{thm:lossy-tdf-from-LWE}.
    \item $\Gen(1^\secp, \mathsf{mode}):$ Same as $\LTF.\Gen(1^\secp, \mathsf{mode})$

    \item $\Enc(\pk, m):$
    \begin{enumerate}
        \item Sample $r \getsr \bit^{\secp^2-\secp}$ and $(a, b) \getsr \bbF \backslash \{0\} \times \bbF$ independently.
        \item Compute 
        \begin{align*}
            x &= m||r \in \bbF\\
            y &= \LTF.F\left[\pk, \pi_{a,b}(x)\right]\\
            c &= (a, b, y)
        \end{align*}
        and output $c$.
    \end{enumerate}
    
    \item $\Dec(\sk, c):$
    \begin{enumerate}
        \item Parse $c$ as $(a, b, y)$.
        \item Compute $x' = \pi^{-1}_{a,b}\left[\LTF.F^{-1}(\sk, y)\right]$. 
        \item Parse $x'$ as $x' = m' || r' \in \bit^\secp \times \bit^{\secp^2 - \secp}$.
        \item Output $m'$.
    \end{enumerate}
\end{itemize}

\begin{theorem}\label{thm:lossy-pke-from-LWE}
    Assuming the post-quantum hardness of LWE for the parameters given in \Cref{thm:lossy-tdf-from-LWE}, the construction above is a statistically lossy PKE scheme (\Cref{def:lossy-pke}) with message space $\calM = \bit^\secp$.
\end{theorem}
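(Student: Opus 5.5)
We verify the three properties of \Cref{def:lossy-pke} (correctness, statistical lossiness, indistinguishability of modes) in turn, using the LTF from \Cref{thm:lossy-tdf-from-LWE} and the pairwise-independent permutations of \Cref{thm:pi-is-permutation,thm:pairwise-independence}.

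\textbf{Indistinguishability of modes.} Since $\Gen$ is literally $\LTF.\Gen$, the public keys in the two modes are distributed exactly as LTF public keys in the two modes. The property is therefore inherited verbatim from the indistinguishability of modes of the LTF. The reduction is the identity map.

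\textbf{Correctness.} Condition on the overwhelming-probability event that $\LTF.\Gen(1^\secp,\mathsf{inj})$ outputs $(\pk,\sk)$ with $F(\pk,\cdot)$ injective and $F^{-1}(\sk,F(\pk,z))=z$ for all $z$. Then for every $m$, $r$ and $(a,b)$, decryption computes
\[
\pi^{-1}_{a,b}\bigl(F^{-1}(\sk,F(\pk,\pi_{a,b}(m\|r)))\bigr)=\pi^{-1}_{a,b}(\pi_{a,b}(m\|r))=m\|r
\]
by \Cref{thm:pi-is-permutation}. Parsing this string returns $m$ with probability $1$, as required.

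\textbf{Statistical lossiness.} This is the main step, and we would prove it with the leftover hash lemma. Condition on the event that the lossy key gives $F(\pk,\cdot)$ an image $\calI$ of size at most $2^{n-k}$, with $k\approx n/2$ and $n=\secp^2$. Fix a message $m$. Then $x=m\|r$ is uniform over a set of size $2^{\secp^2-\secp}$, so it has min-entropy $\secp^2-\secp$.

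The family $\{\pi_{a,b}\}$ is pairwise independent by \Cref{thm:pairwise-independence}. Up to the negligible difference between $1/(|\bbF|(|\bbF|-1))$ and $1/|\bbF|^2$, which we handle by noting that the collision probability for $x\neq x'$ is still at most $1/(|\bbF|-1)$, it is therefore a universal hash family. Composing it with the fixed function $F(\pk,\cdot)$, which maps into $\calI$, keeps it universal into $\calI$. The leftover hash lemma then shows that $(a,b,F(\pk,\pi_{a,b}(x)))$ is within statistical distance
\[
\tfrac12\sqrt{|\calI|\cdot 2^{-(\secp^2-\secp)}}\le \tfrac12\, 2^{-(k-\secp)/2}
\]
of $(a,b,U_\calI)$, where $U_\calI$ is uniform on $\calI$ and independent of $(a,b)$. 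Because $k-\secp=\Omega(\secp^2)$, this bound is negligible.

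The reference distribution $(a,b,U_\calI)$ depends only on $\pk$ and not on $m$. The triangle inequality therefore gives $\TV(\Enc(\pk,m_0),\Enc(\pk,m_1))\le\negl$ for every pair $m_0,m_1$, uniformly in the messages.

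The one subtlety is the universality check for the composed family $F\circ\pi_{a,b}$. For $x\neq x'$, a collision requires $\pi_{a,b}(x)$ and $\pi_{a,b}(x')$ to land in the same fiber of $F$. By pairwise independence this probability is at most $\sum_{y}|F^{-1}(y)|^2/(|\bbF|(|\bbF|-1))$, which is not by itself $1/|\calI|$.

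To get around this, apply the leftover hash lemma in two steps. First, the pair $(a,b)$ together with $\pi_{a,b}(x)$ is close to $(a,b)$ together with a uniform element of $\bbF$ after truncation. More simply, pairwise independence gives collision probability at most $1/(|\bbF|-1)$ for $\pi_{a,b}(x)$ itself. Second, apply the deterministic map $F$, which can only decrease statistical distance.

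We prefer the cleaner version of this argument. Bound the collision probability of the joint random variable $(a,b,F(\pk,\pi_{a,b}(x)))$ by $\frac{1}{|\calH|}\bigl(2^{-(\secp^2-\secp)}+\Pr_{z,z'}[F(z)=F(z')]\bigr)$, where $z,z'$ are uniform and independent on $\bbF$ and $\calH$ is the set of hash keys $(a,b)$. Then use the Cauchy--Schwarz bound of collision probability against $L_2$-distance to uniform on $\calH\times\calI$. This requires handling the nonuniformity of $F$ on its image, and we expect that to be the most delicate part of the proof.
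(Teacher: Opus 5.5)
Your correctness and indistinguishability-of-modes arguments are fine and match the paper. The lossiness argument, however, has a genuine gap: you aim for the wrong reference distribution.

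\textbf{Where it breaks.} A lossy key bounds the size of your $\calI$ (the image of $F(\pk,\cdot)$), but not its fiber sizes. So even $F(\pk,U_\bbF)$ can be far from uniform on $\calI$, for instance when one fiber contains almost all of $\bbF$. In that case $(a,b,F(\pk,\pi_{a,b}(x)))$ cannot be close to $(a,b,U_\calI)$, and your claim that $F\circ\pi_{a,b}$ is universal into $\calI$ is false, as you later notice. Neither fallback repairs this.
\begin{itemize}
\item The two-step route fails outright. $\pi_{a,b}$ is a permutation and $m\|r$ has support of size $2^{\secp^2-\secp}$, so $(a,b,\pi_{a,b}(x))$ is at statistical distance $1-2^{-\secp}$ from $(a,b,U_\bbF)$. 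Applying $F$ afterwards gives nothing; the compression by $F$ must be used inside the hashing argument.
\item The Cauchy--Schwarz route against uniform on $\calH\times\calI$ fails for the same reason as the original claim. You leave this ``delicate part'' unresolved.
\end{itemize}

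\textbf{The fix.} The reference distribution does not need to be uniform; it only needs to be independent of $m$. Compare instead with $(a,b,F(\pk,U_\bbF))$. This is the crooked leftover hash lemma of Dodis--Smith that the paper invokes, with $f=C\circ F(\pk,\cdot)$ for an injective encoding $C$ of the image into $\{0,1\}^{n-k}$. Your own collision computation proves it once the target is changed:
\begin{itemize}
\item Set $D_{a,b}(y)=\Pr_x[F(\pk,\pi_{a,b}(x))=y]$ and $D(y)=\Pr_z[F(\pk,z)=y]$.
\item For each fixed $x$, $\pi_{a,b}(x)$ is uniform on $\bbF$, so $\mathbb{E}_{a,b}D_{a,b}(y)=D(y)$.
\item Pairwise independence gives $\mathbb{E}_{a,b}\sum_y D_{a,b}(y)^2\le 2^{-(\secp^2-\secp)}+\sum_y D(y)^2$. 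This uses $\Pr_{z\neq z'}[F(z)=F(z')]\le\sum_y D(y)^2$.
\item Expanding the square, the $\sum_y D(y)^2$ terms cancel, leaving $\mathbb{E}_{a,b}\|D_{a,b}-D\|_2^2\le 2^{-(\secp^2-\secp)}$.
\item Cauchy--Schwarz over the at most $2^{n-k}$ points of $\calI$, together with Jensen, bounds the statistical distance by $\tfrac12\sqrt{|\calI|\,2^{-(\secp^2-\secp)}}\le\tfrac12\,2^{-(k-\secp)/2}$.
\end{itemize}
This is exactly your bound, but now measured to a correct, message-independent target, so the triangle inequality finishes the proof as you intended.
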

\begin{proof}
    First, by \Cref{thm:lossy-tdf-from-LWE}, the scheme $\LTF$ used in our construction is an almost always $(n, k)$-lossy trapdoor function scheme for $n = \secp^2$ and $k \sim \frac{\secp^2}{2}$.

    Next, it suffices to show that our lossy PKE construction satisfies the three properties of interest: correctness, statistical lossiness, and indistinguishability of modes.

    \begin{lemma}
        The lossy PKE construction satisfies correctness.
    \end{lemma} 
    \begin{proof}
        The injectivity property of $\LTF$ (\Cref{def:lossy-tdf}) guarantees that with overwhelming probability, $\Gen(1^\secp, \mathsf{inj})$ outputs keys $(\pk, \sk)$ such that for all $x \in \bit^n$, 
        \[\LTF.F^{-1}\left[\sk, \LTF.F(\pk, x)\right] = x\]
        We will prove that in this case, 
        \[\Dec\left[\sk, \Enc(\pk, m)\right] = m\]
        
        Let us compute $m' = \Dec\left[\sk,\Enc(\pk, m)\right]$. In so doing, we compute the intermediate values $y$ and $x'$ as follows.
        \begin{align*}
            y &= \LTF.F[\pk, \pi_{a,b}(m||r)]\\
            x' &= \pi^{-1}_{a,b}\left[\LTF.F^{-1}(\sk, y)\right]\\
            &= \pi^{-1}_{a,b}\left[\pi_{a,b}(m||r)\right]\\
            &= m||r
        \end{align*}
        We used the fact that $\LTF.F^{-1}(\sk, \cdot)$ inverts $\LTF.F(\pk, \cdot)$ and $\pi^{-1}_{a,b}$ inverts $\pi_{a,b}$ (\Cref{thm:pi-is-permutation}). Finally, since $x' = m||r$, $\Dec(\sk, c)$ outputs $m' = m$.

        In summary, we've shown that with probability $\geq 1 - \negl$, $\Gen(1^\secp, \mathsf{inj})$ outputs keys $(\pk, \sk)$ such that for any message $m \in \bit^\secp$, $m' = m$. Therefore, the lossy PKE construction satisfies correctness.
    \end{proof}

    \begin{lemma}
        The lossy PKE construction satisfies statistical lossiness.
    \end{lemma}
    \begin{proof}
        First, the lossiness property of $\LTF$ (\Cref{def:lossy-tdf}) guarantees that with overwhelming probability, $\Gen(1^\secp, \mathsf{lossy})$ outputs keys $(\pk, \sk)$ such that $\LTF.F(\pk, \cdot)$ has an image of size $\leq 2^{n-k}$.
        We will prove that in this case, for any $m_0, m_1 \in \calM$, the distributions of $\Enc(\pk, m_0)$ and $\Enc(\pk, m_1)$, over the randomness of $\Enc$, are statistically close.

        Second, since the image of $\LTF.F(\pk, \cdot)$ has size $\leq 2^{n-k}$, then there exists a (possibly inefficient) compression function $C$ that maps each value in the image of $\LTF.F(\pk, \cdot)$ to a unique bitstring $\in \bit^{n-k}$. $C$ is invertible.
        
        % First, if $m_0 = m_1$, then $\Enc(\pk, m_0)$ and $\Enc(\pk, m_1)$ are identically distributed. From now on, let us consider the case where $m_0 \neq m_1$.
        
        Third, let us introduce the crooked leftover hash lemma (\Cref{thm:crooked-leftover-hash-lemma}) to do the heavy lifting.

        \begin{lemma}[Crooked Leftover Hash Lemma (\cite{DS05} Lemma 12)]\label{thm:crooked-leftover-hash-lemma}
            Let $f:\bit^N \to \bit^\ell$ be an arbitrary function, and let $\{h_i\}_{i \in \calI}$ be a pairwise independent hash family, where each $h_i$ maps $\bit^n \to \bit^N$. 
            
            Let $X$ be a random variable with sample space $\bit^n$ and with min-entropy $\geq \ell + 2 \log\left(\frac{1}{\epsilon}\right) + 1$. Let $I$ be a random variable sampled uniformly at random from $\calI$, the keyspace of the hash family. Let $U_N$ be a random variable sampled uniformly at random from $\bit^N$. $X, I, U_N$ are independent.

            Then the distributions of $\left[I, f \circ h_I(X)\right]$ and $\left[I, f(U_N)\right]$ are $\epsilon$-close in statistical distance.
        \end{lemma}

        Fourth, let's give concrete values to the variables in \Cref{thm:crooked-leftover-hash-lemma}. Let $N = n = \secp^2$. Let $\ell = n-k$. Let $\epsilon = 2^{-\frac{\secp^2}{8}}$. Let $f(x) = C \circ \LTF.F(\pk, x)$. Let $\calI = \bbF\backslash\{0\} \times \bbF$, and for each $(a, b) \in \calI$, let $h_{a,b} = \pi_{a,b}$. Let $I$ be a random variable $(A,B) \getsr \bbF\backslash\{0\} \times \bbF$. For any messages $m_0, m_1 \in \calM$, let 
        \begin{align*}
            X_0 &= m_0 || R\\
            X_1 &= m_1 || R
        \end{align*} for $R \getsr \bit^{\secp^2 - \secp}$.

        Fifth, we claim that all the conditions of \Cref{thm:crooked-leftover-hash-lemma} are satisfied. Note that $f:\bit^N \to \bit^\ell$. Note that $\{h_{a,b}\}_{(a,b) \in \calI}$ is a pairwise-independent hash family (\Cref{thm:pairwise-independence}), where each $h_{a,b}$ maps $\bit^n \to \bit^N$. Also $I$ is sampled uniformly at random from $\calI$ and independently of $X_0,X_1$. Next, $X_0$ and $X_1$ have sample space $\bit^{\secp^2} = \bit^n$. The min-entropy of $X_0$ (and also $X_1$) is $\secp^2 - \secp$. For sufficiently large $\secp$,
        \begin{align*}
            H_{\text{min}}(X_0) &= \secp^2 - \secp\\
            &\geq \frac{3 \secp^2}{4} + 1\\
            &= \left(\secp^2 - \frac{\secp^2}{2}\right) + 2 \cdot \frac{\secp^2}{8} + 1\\
            &\geq (n-k) + 2 \cdot \log \left(\frac{1}{2^{-\secp^2/8}}\right) + 1\\
            &= \ell + 2 \cdot \log \left(\frac{1}{\epsilon}\right) + 1
        \end{align*}
    Then $X_0$ and $X_1$ satisfy the min-entropy condition of \Cref{thm:crooked-leftover-hash-lemma}.
    
    Sixth, we can apply \Cref{thm:crooked-leftover-hash-lemma} to say that the distributions of 
    \[\left[A, B, C \circ \LTF.F(\pk, \pi_{A,B}(m_0||R))\right] \quad \text{and} \quad \left[A, B, C \circ \LTF.F(\pk, U_N)\right]\] 
    are $\epsilon$-close. Likewise, the distributions of 
    \[\left[A, B, C \circ \LTF.F(\pk, \pi_{A,B}(m_1||R))\right] \quad \text{and} \quad \left[A, B, C \circ \LTF.F(\pk, U_N)\right]\] 
    are $\epsilon$-close. Then by the triangle inequality, the distributions of 
    \[\left[A, B, C \circ \LTF.F(\pk, \pi_{A,B}(m_0||R))\right] \quad \text{and} \quad \left[A, B, C \circ \LTF.F(\pk, \pi_{A,B}(m_1||R))\right]\] 
    are $(2\epsilon)$-close.

    Additionally, since $C$ is invertible, the distributions of 
    \[\left[A, B, \LTF.F(\pk, \pi_{A,B}(m_0||R))\right] \quad \text{and} \quad \left[A, B, \LTF.F(\pk, \pi_{A,B}(m_1||R))\right]\] 
    are $(2\epsilon)$-close.

    Seventh, note that the distribution of $\left[A, B, \LTF.F(\pk, \pi_{A,B}(m_0||R))\right]$ is simply the distribution of $\Enc(\pk, m_0)$, and the distribution of $\left[A, B, \LTF.F(\pk, \pi_{A,B}(m_1||R))\right]$ is the distribution of $\Enc(\pk, m_1)$.

    In summary, we've shown that for sufficiently large $\secp$, with overwhelming probability, $\Gen(1^\secp, \mathsf{lossy})$ outputs keys $(\pk, \sk)$ such that for any $m_0, m_1 \in \calM$, the distributions of $\Enc(\pk, m_0)$ and $\Enc(\pk, m_1)$ have statistical distance $2 \cdot 2^{-\secp^2/8} = \negl$. This proves the lossiness property.
    \end{proof}
     
    \begin{lemma}
        The lossy PKE construction satisfies indistinguishability of modes.
    \end{lemma} 
    \begin{proof}
        This immediately follows from the fact that $\Gen$ is the same as $\LTF.\Gen$, and $\LTF.\Gen$ satisfies the same indistinguishability of modes property (\Cref{def:lossy-tdf}).
    \end{proof}
\end{proof}

\subsection{Construction from Group Actions}
In this section, we describe a perfectly lossy PKE assuming the weak pseudorandomness of effective group actions. This construction is folklore, but we will describe it here for the sake of completeness.

\paragraph{Notation.} For a regular and abelian group action $\star: \mathbb{G} \times \mathbb{X} \rightarrow \mathbb{X}$, we use additive notation to denote the group operation in $\mathbb{G}$. %Let $\odot$ denote the component-wise product.

\begin{definition}[Effective Group Action] A regular and abelian group action $(\mathbb{G}, \mathbb{X}, \star)$ is effective if it satisfies the following properties.
\begin{enumerate}
    \item The group $\mathbb{G}$ is finite and there exist efficient p.p.t.~algorithms for membership testing (deciding whether a binary string represents a group element), equality testing and sampling uniformly in $\mathbb{G}$, and group operation and computing the inverse of any element.
    \item The set $\mathbb{X}$ is finite and there exist efficient algorithms for membership testing (to check if a binary string represents a valid set element), and unique representation.
    \item There exists a distinguished element $x_0 \in \mathbb{X}$ with known representation.
    \item There exists an efficient algorithm that given any $g \in \mathbb{G}$ and any $x \in \mathbb{X}$, outputs $g \star x$.
\end{enumerate}
\end{definition}

\begin{assumption}[Weak Pseudorandomness Assumption of an Effective Group Action {\cite{ADMP20-ega}}]\label{asm:weak-pr}
Suppose $(\mathbb{G}, \mathbb{X}, \star)$ is an effective group action. The weak pseudorandomness assumption states that there is no p.p.t.~adversary that can distinguish tuples of the form $(x_i, g \star x_i)$ from $(x_i, y_i)$ where $g \gets \mathbb{G}$ and each $x_i, y_i \gets \mathbb{X}$ are sampled uniformly at random.
\end{assumption}

%\luowen{Is there anywhere in the literature where this assumption is justified? If so we should cite; otherwise this looks like an ad-hoc assumption}

\begin{theorem}\label{thm:lossy-PKE-from-group-actions}
Suppose there exists an effective group action for which \Cref{asm:weak-pr} holds. Then, there exists a perfectly lossy public key encryption scheme (\Cref{def:lossy-pke}).
\end{theorem}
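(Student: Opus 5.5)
We plan to use the standard ElGamal-style construction adapted to group actions, where lossiness comes from the group action being regular. Key generation in injective mode samples $x \gets \mathbb{X}$ and $s \gets \mathbb{G}$ and sets $\pk = (x, s\star x)$, $\sk = s$. In lossy mode it samples $x, y \gets \mathbb{X}$ independently and sets $\pk = (x, y)$. To encrypt one bit $b$ under $\pk = (x_0', x_1')$, we sample $r \gets \mathbb{G}$ and output $(r\star x_0', r\star x_1')$ if $b=0$. If $b=1$ we instead output $(r\star x_0', z)$ for a fresh uniform $z \gets \mathbb{X}$. For the message space $\bit^\secp$ we encrypt bit by bit with independent randomness. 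Note that a uniform $z\in\mathbb{X}$ is samplable as $z = u\star x_0$ for $u\gets\mathbb{G}$, by regularity.

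Decryption of $(c_0, c_1)$ with $\sk = s$ outputs $0$ iff $c_1 = s\star c_0$. This test is correct by commutativity: $r\star(s\star x) = s\star(r\star x)$. For $b=1$, the test fails except with probability $1/|\mathbb{X}|$. Since \Cref{def:lossy-pke} demands probability $1$ on good keys, we fix this up. The simplest fix is a two-element variant: for $b=1$ output $(r\star x_1', r\star x_0')$, and decrypt by checking which of $c_1 = s\star c_0$ or $c_0 = s\star c_1$ holds. The only failure is when both hold, i.e.\ $2s$ fixes points. We would then declare keys with $2s = 0$ bad; this is a negligible fraction whenever $|\mathbb{G}|$ has no large $2$-torsion. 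I would check this against the assumption. The fallback is to define correctness for the random-$z$ version allowing negligible error, which the proof of \Cref{thm:impossibility-of-bp-otp-from-lossy-pke} only uses via disjointness of supports up to negligible mass.

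Perfect lossiness is the main point. In lossy mode, with overwhelming probability $x\ne y$, and regularity gives a unique $t\neq 0$ with $y = t\star x$. Then $(r\star x, r\star y)$ for uniform $r$ is uniform over the orbit-pair set $\{(w, t\star w)\}$. In the swap variant, $(r\star y, r\star x) = (w', (-t)\star w')$. These two distributions are identical iff $t = -t$, so the swap variant is not perfectly lossy. Hence I would use the random-$z$ encryption for lossiness, where both message distributions must equal uniform on $\mathbb{X}^2$. For $b=0$, the ciphertext $(r\star x, r\star t\star x)$ is not uniform on $\mathbb{X}^2$. So I would instead also randomize the second component in a correlated way: encrypt $b$ as $(r\star x, (r + b\,u)\star y)$ with fresh $u$. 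For $b=1$ this is uniform on $\mathbb{X}^2$; for $b=0$ it is uniform on a coset, so the two still differ.

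The cleanest perfectly lossy design I would settle on uses a two-element public key in the style of Peikert--Vaikuntanathan--Waters: $\pk = (x, y_0, y_1)$. The injective mode has $y_b = s_b\star x$ with the two components sharing structure so that decryption works, while the lossy mode makes the key "uniform". I expect this design step to be the main obstacle. The reduction to \Cref{asm:weak-pr} then goes as follows. Weak pseudorandomness with multiple samples $(x_i, g\star x_i)$ versus $(x_i, y_i)$ directly gives indistinguishability of modes, so that part is routine. Correctness and perfect lossiness are each a short counting argument using regularity and commutativity.
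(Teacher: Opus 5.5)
\textbf{The gap: there is no working construction.} Every scheme you write down fails, as you correctly diagnose yourself. The design you finally settle on is left unspecified at exactly the step that carries the theorem, and it is also oriented the wrong way.

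Take any encryption of the form $\Enc(m)=u\star(a_m,c_m)$ with $u\gets\mathbb{G}$ and $(a_m,c_m)$ a pair read off the key. By regularity and commutativity, the ciphertext is uniform on the orbit of $(a_m,c_m)$ under the diagonal action of $\mathbb{G}$ on $\mathbb{X}^2$. These orbits are exactly the sets $\{(w,t\star w): w\in\mathbb{X}\}$, indexed by the ``difference'' $t$ with $c_m=t\star a_m$. Hence the two message distributions are either identical (equal differences) or disjointly supported.

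A uniformly random key gives distinct differences, which is \emph{injective} behaviour. Lossiness instead requires an algebraic relation inside the key. The PVW-style uniform ``messy'' key you invoke relies on forming combinations like $g^s h^t$, which a group action does not provide. Moreover, with a shared base point as in your $(x,y_0,y_1)$, equal differences would force $y_0=y_1$, which is trivially detectable.

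\textbf{The missing idea is to swap the roles of the modes.} The lossy key should be the Diffie--Hellman-structured tuple and the injective key the random one. The paper takes $\pk=(x_0,\,g\star x_0,\,h\star x_0,\,k\star x_0)$ with $k=g+h$ in lossy mode and $k\neq g+h$ in injective mode. It encrypts $0$ as $u\star(x_0,x_1)$ and $1$ as $u\star(x_2,x_3)$.
\begin{itemize}
\item In lossy mode, $(x_2,x_3)=h\star(x_0,x_1)$, so $\Enc(1)=(u+h)\star(x_0,x_1)$ is identically distributed to $\Enc(0)$. This is perfect lossiness.
\item In injective mode, the two differences are $g\neq k-h$, so the ciphertext supports are disjoint. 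Decryption with $\sk=(g,k-h)$ is then perfectly correct, with no bad-key or $2$-torsion caveats.
\item Indistinguishability of modes is pseudorandomness of this Diffie--Hellman-type tuple, which the paper derives from \Cref{asm:weak-pr}. With a random base point, the lossy key is two weak-pseudorandom samples under the common secret $g$. The injective key is uniform up to the negligible conditioning $k\neq g+h$.
\end{itemize}
Your bit-by-bit extension to $\bit^\secp$ with independent randomness carries over unchanged and preserves both perfect correctness and perfect lossiness.
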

%\bhaskar{For the impossibility result, we want the message space of the encryption scheme to be $\bit^\secp$.}
%\luowen{Isn't this wlog true? (message space for PKE does not matter)}

\begin{construction}\label{cons:lossy-pke-group-actions}
    Let $(\mathbb{G}, \mathbb{X}, \star)$ be a group action for which \Cref{asm:weak-pr} holds. Let $x_0 \in \mathbb{X}$ be a distinguished point that is specified as a public parameter.
    \begin{itemize}
        \item $\Gen(1^\secp, \mode)$: 
        \begin{itemize}
            \item If $\mode = \lossy$, then sample random $g, h \gets \mathbb{G}$ and set $\pk := (x_0, g \star x_0, h \star x_0, (g + h) \star x_0)$.
            \item If $\mode = \inj$, then sample random $g, h, k \gets \mathbb{G}$ conditioned on $k \neq g + h$. Set $\pk = (x_0, g \star x_0, h \star x_0, k \star x_0)$ and $\sk = (g, k - h)$.
        \end{itemize}
        \item $\Enc(\pk, m)$:
        \begin{itemize}
            \item Sample a random $u \gets \mathbb{G}$. Parse $\pk = (x_0, x_1, x_2, x_3)$.
            \item If $m = 0$, output $\ct = (u \star x_0, u \star x_1)$.
            \item If $m = 1$, output $\ct = (u \star x_2, u \star x_3)$.
        \end{itemize}
        \item $\Dec(\sk, \ct)$: 
        \begin{itemize}
            \item Parse injective secret key as $\sk = (g, g')$ and ciphertext $\ct = (y_1, y_2)$.
            \item If $g \star y_1 = y_2$ then output $m = 0$.
            \item Else if $g' \star y_1 = y_2$ then output $m = 1$.
            \item Otherwise, output $\bot$.
        \end{itemize}
    \end{itemize}
\end{construction}

\begin{proof}
We will prove that the scheme described in \Cref{cons:lossy-pke-group-actions} is a perfectly lossy PKE scheme, assuming the weak pseudorandomness assumption of the group action.
\paragraph{Correctness.}
This encryption scheme satisfies perfect decryption correctness on injective keys.
In the injective mode, an encryption of $0$ looks like
\[
  (u \star x_0,  u \star x_1) = (u \star x_0, (u + g) \star x_0),
\]
for a random $u \gets \mathbb{G}$, while an encryption of $1$ looks like
\[
  (u\star x_2, u \star x_3) = ((u + h) \star x_0, (u + k) \star x_0).
\]
In order for there to be a collision between encryptions of $0$ and $1$, we would need
some $w,v$ satisfying
\[
  w \star x_0 = (v + h) \star x_0 \quad \text{and} \quad (w + g) \star x_0 = (v + k) \star x_0.
\]
Because we are working with a regular group action, this implies $w = v + h$ and $w + g = v + k$, hence $g + h = k$, which contradicts the injectivity assumption. Therefore, decryption is perfectly correct.

\paragraph{Perfect Lossiness.}
In the lossy mode, note that an encryption of $1$ takes the form
\[
  (u \star x_2, u \star x_3)
  = ((u + h) \star x_0,\, (u + g + h) \star x_0)
  = (u' \star x_0, (u' + g) \star x_0)
  = (u' \star x_0, u' \star x_1).
\]
Here $u' = u + h$. Since $u$ is sampled uniformly at random, so is $u'$, and thus the distribution of an encryption of $1$ is identical to that of an encryption of $0$. Hence, encryptions of $0$ and $1$ are identically distributed, and the scheme is perfectly lossy.

\paragraph{Indistinguishability of modes.} This follows immediately from \Cref{asm:weak-pr}.
\end{proof}

\section{Classical Single Effective Query (CSEQ)}
\subsection{Classical Single Effective Query (CSEQ) Model}
\label{sec:prelim_seq_model}
%\bhaskar{Let's call this "Single Effective Query" to be consistent with prior work}notion for One-time security}
In this section, we recall the \emph{single effective query} security definition from \cite{gupte2025quantum} and, to distinguish it from our generalized notion, refer to this classical variant as CSEQ.
% \anote{Incorporate the bit $b$. Is it really necessary?}
\begin{definition}[The Single Effective Query Oracle]
% \footnote{\cite{gupte2025quantum} refers to this as the Single \emph{Effective} Query oracle. We use a slightly different terminology to emphasize that an ``effective'' query is one which destructively collapses the purified program state.}
\label{def:SEQ-security}
    For a randomized function $f : \calX \times \calR \rightarrow \calY$, we define the classical single effective query oracle $O^{\CSEQ}_f$ as implementing the following algorithm:
    \begin{itemize}
        \item We assume it has oracle access to $O_f$, which maps $\ket{x,r, u} \mapsto \ket{x, r, u \oplus f(x;r)}$.
        \item The oracle maintains two internal registers: a workspace register $\mathcal{W}$ to perform intermediate computations, and a database register $\mathcal{D}$ to implement the database for the compressed random oracle. These registers are initialized to $\ket{0}_{\mathcal{W}} \otimes \ket{\emptyset}_{\mathcal{D}}$.
        \item To describe its behavior on queries, all we need to do is describe its behavior on basis states. We will maintain the invariant that the workspace register is always $\ket{0}_{\calW}$. On query $\ket{x,u,b}_\calQ$ on query register $\calQ$, the oracle implements an isometry specified by the following steps on each basis state $\ket{x,u,b}_{\calQ} \otimes \ket{D}_\calD \otimes \ket{0}_\calW$:
        \begin{enumerate}
            \item If there exists some $x'\neq x$ such that $(x',r) \in D$ for some $r \in \calR$, skip the following steps.\label{step:SEQ-check}
            \item Copy $x$ into the workspace register $$\ket{x,u,b}_{\calQ} \otimes \ket{D}_\calD \otimes \ket{0}_\calW \mapsto \ket{x,u,b}_{\calQ} \otimes \ket{D}_\calD \otimes \ket{x, 0}_\calW.$$
            \item Apply a compressed random oracle query isometry with query register $\calW$ and database register $\calD$.
            \item On the basis states apply the following unitary map that acts on registers $\calQ, \calW$:
            \begin{align*}
                \ket{x, u, b}_{\calQ} \otimes \ket{D}_{\calD} \otimes \ket{x, r}_{\calW} \mapsto \ket{x, u \oplus f(x; r), b \oplus 1}_{\calQ} \otimes \ket{D}_{\calD} \otimes \ket{x, r}_{\calW}
            \end{align*}
            \item Uncompute the workspace register: first, query the compressed oracle again with the query register being $\calW$ and the database register being $\calD$, and then copy the input $x$.
        \end{enumerate}
    \end{itemize}
\end{definition}

\begin{definition}[$\CSEQ$-based simulation security for one-time programs]\label{def:classical-SEQ-security}
    A one-time program compiler $\OTP$ satisfies $\CSEQ$-based simulation security for a class $\calF$ of randomized functions if there exists a q.p.t.~simulator $\Sim$ such that for every $f \in \calF$ and for every q.p.t.~distinguisher $D$, there exists a negligible function $\negl$ such that
    \begin{align*}
        \left| \Pr\left[ 1 \leftarrow D(1^\lambda, f, \OTP(1^\lambda, f)) \right] - \Pr\left[1 \leftarrow D(1^\lambda, f, \Sim^{O_f^{\CSEQ}}(1^\lambda))\right] \right| \le \negl[\lambda].
    \end{align*}
\end{definition}
By giving $f$ as input to the distinguishing algorithm, we mean that $D$ gets access to $f$ in a way that it can evaluate on any $(x,r)$ of its choice. Note that this is actually redundant, because the order of quantifiers is such that $D$ is allowed to depend on the choice of $f$.

\subsection{When is \CSEQ{} a special case of \SEQ}\label{rem:seq-in-cseq}
    We will show that for any classical functionality $f$ with a domain size $|\calX|>1$, there is a channel $\Phi_f$ such that the classical SEQ oracle $O^\CSEQ_f$ reveals the same information as the (channel) SEQ oracle $O^\SEQ_{\Phi_f}$.\\

    The channel $\Phi_f$ computes $f$ on any input $x$ and uses a (compressed) random oracle $H$ to sample $r$. Given a function $f: \calX \times \calR \to \calY$ and a random oracle $H: \calX \to \calR$, $\Phi_f$ maps $x \to f(x; H(x))$ and also flips a bit $b$ to indicate that it has executed. $\Phi_f$ is essentially the same as the classical SEQ oracle $O^\CSEQ_f$, except without step \ref{step:SEQ-check} of $O^\CSEQ_f$.
    
    \begin{definition}[Channel $\Phi_f$]\label{def:SEQ-channel}   
        Given a randomized classical function $f: \calX \times \calR \to \calY$, let $\Phi_f$ be the following quantum channel. 
        \begin{itemize}
        \item The channel's internal register has two parts $\calP = \calD \times \calW$. $\calD$ is a database register for the compressed random oracle, and $\calW$ is a workspace register to perform intermediate computations. $\calD$ is initialized to $\ket{\emptyset}_{\mathcal{D}}$, and we maintain the invariant that $\calW$ is in the state $\ket{0,0}_{\mathcal{W}}$ at the start of each query.
        \item The channel's Stinespring unitary $U_{\Phi_f}$ acts as follows on each basis state $\ket{x,u,b}_{\calQ} \otimes \ket{D}_\calD \otimes \ket{0,0}_\calW$:
        \begin{enumerate}
            \item Copy $x$ into the workspace register:\label{step:copy-x} 
            \[\ket{x,u,b}_{\calQ} \otimes \ket{D}_\calD \otimes \ket{0,0}_\calW \mapsto \ket{x,u,b}_{\calQ} \otimes \ket{D}_\calD \otimes \ket{x, 0}_\calW\]
            \item Apply the query operation of the compressed oracle with $\calW$ as the query register and $\calD$ as the database register.\label{step:query-CO}
            \item Apply the unitary that acts as follows on the basis states:
            \begin{align*}
                \ket{x, u,b}_{\calQ} \otimes \ket{D}_{\calD} \otimes \ket{x, r}_{\calW} \mapsto \ket{x, u \oplus f(x; r), b \oplus 1}_{\calQ} \otimes \ket{D}_{\calD} \otimes \ket{x, r}_{\calW}
            \end{align*}\label{step:answer-query}
            \item Uncompute step \ref{step:query-CO} and then step \ref{step:copy-x}.
        \end{enumerate}
    \end{itemize}
    \end{definition}

Note that $f$ and $\Phi_f$ provide no one-time guarantee. They never reject queries, so the user may make many effective queries. There are two ways to limit the user's queries: we can implement $f$ with the classical SEQ oracle $O^\CSEQ_f$ and implement $\Phi_f$ with the SEQ oracle $O^\SEQ_{\Phi_f}$. The purpose of both oracles is to restrict the user to just one effective query.

Which oracle reveals more information about $f$? In fact, they reveal the same information. We formalize this by showing that $O^\CSEQ_f$ can be used to simulate $O^\SEQ_{\Phi_f}$, and vice versa as long as $|\calX| > 1$ (\cref{thm:cseq-seq-equivalence}).

% Finally, let us discuss the corner case where $|\calX| = 1$. In this case, $O^\CSEQ_f$ is equivalent to $U_{\Phi_f}$. The only difference between $O^\CSEQ_f$ and $U_{\Phi_f}$ is that $O^\CSEQ_f$ refuses to answer a query $x$ if the database $D$ contains $(x', r)$ for some $x' \neq x$. If $|\calX| = 1$, then there is no $x' \neq x$, so $O^\CSEQ_f$ never refuses to answer a query. Then $O^\CSEQ_f$ is perfectly indistinguishable from an oracle that applies $U_{\Phi_f}$ to $\calQ \times \calP$. Additionally, we can replace the compressed oracle used by $O^\CSEQ_f$ with a random oracle $H$. This change is perfectly indistinguishable given only query access to $O^\CSEQ_f$. The random oracle always outputs the same random string $r = H(x)$. So $O^\CSEQ_f$ is equivalent to an oracle that samples $r$ initially and then always responds with $f(x;r)$. 
%\bhaskar{We could say more here about the case where $|\calX|=1$.}

To prove the first simulation indistinguishability in \cref{thm:cseq-seq-equivalence}, we first define the CSEQ-wrapper simulator $\Sim'$.

The simulator $\Sim'$ below has the following internal registers. Let $\calP$ store the program state of $P$, which is initialized to $\ket{\psi}$. Let $\calD$ be an internal database register that stores a query $x \in \calX \cup \{\emptyset\}$ and is initialized to $\ket{\emptyset}$. Let $\calW_x \times \calW_y \times \calW_b$ be work registers that store values $(x, y, b) \in (\cal{X} \cup \{\emptyset\}) \times (\calY \cup \{\emptyset\}) \times \text{$\{0,1\}$}$. $\calW_x \times \calW_y \times \calW_b$ are initialized to $\ket{\emptyset, \emptyset, 0}$.

We represent elements of $\calX \cup \{\emptyset\}$ as the following binary strings. Each $x \in \calX$ becomes $x \|1$ and $\emptyset$ becomes $0^n\|0$. In particular, $x=0^n$ is represented as $0^n\|1$, which is different from the representation $0^n\|0$ of $\emptyset$. Therefore, we have that $\emptyset + x = x$, and $x + x = \emptyset$. The same kind of representation is used for $\calY \cup \{\emptyset\}$.

$\Sim'(P)$ works as follows. First, it uses the $\EvalCopy$ operation to evaluate $P(x)$. Rather than evaluating the program on registers $\calQ_x \times \calQ_u$ directly, $\EvalCopy$ evaluates it on work registers $\calW_x \times \calW_y$ instead, and copies the inputs and outputs between $\calQ_x \times \calQ_u$ and $\calW_x \times \calW_y$. This provides insulation from any pathological behavior of $\Eval$. Even if $\Eval$ modifies or reads from its query register arbitrarily, we are guaranteed that $\EvalCopy$ only reads from $\calQ_x$ and only writes to $\calQ_y$.

Next, $\Sim'(P)$ stores a query $x'$ in the database $\calD$ and rejects any new query $x$ if $x' \notin  \{x, \emptyset\}$. This is analogous $O^\CSEQ_f$'s database. The $\ReflectCopy$ operation updates $\Sim'(P)$'s database by copying $x$ from $\calQ_x$ to $\calD$ if and only if the remaining internal registers $\calP \times \calW_x \times \calW_y \times \calW_b$ are in their initial state (\cref{thm:reflect-and-copy}).

\begin{definition}[Simulator $\Sim'$ for $O_f^{\CSEQ}$]\label{def:sim-for-CSEQ}
$ $
\begin{itemize}
    \item $\Sim'$ takes as input a testable quantum program $P = (\ket{\psi}, \Eval, R)$.
    \item Initialize the internal state to $\ket{\psi}_\calP \otimes \ket{\emptyset}_\calD \otimes \ket{\emptyset}_{\calW_x} \otimes  \ket{\emptyset}_{\calW_y} \otimes \ket{0}_{\calW_b}$.
    % \item On query $\ket{x, u, b}_\calQ \otimes \ket{D}_\calD \otimes \ket{\phi}_\calP \otimes \ket{0}_{\calW_1} \otimes \ket{0}_{\calW_2}$.
    \item Given a query $\ket{x, u, b}_\calQ$ and database $\ket{x'}_\calD$, $\Sim'$ acts as follows.
    \begin{enumerate}
        \item If $x' \notin \{x, \emptyset\}$, do nothing and ignore the remaining steps. 
        % \textcolor{brown}{This happens when a ``effective'' query has already been made on the program state $\ket{\psi}$, so we refuse to answer further queries in order to emulate the $\SEQ$ oracle.}
        \item $\ReflectCopy$: If $\calW_x \times \calW_y$ contains $(\emptyset, \emptyset)$, then do the following:\label{step:reflect-and-copy}
        \begin{enumerate}
            \item Apply the reflection oracle $R$ to registers $(\calW_b,\calP)$.
            \item If $\calW_b$ contains $1$, then $\CNOT$ the value $x$ from $\calQ_x$ onto $\calD$.
            \item Apply the reflection oracle $R$ to registers $(\calW_b,\calP)$.
        \end{enumerate}
        \item $\EvalCopy$:\label{step:eval-and-copy}
        \begin{enumerate}
            \item $\CNOT$ the value $x$ from $\calQ_x$ to $\calW_x$.
            \item Apply $\Eval$ to registers $(\calW_x, \calW_y, \calP)$.
            \item $\CopyOutput$: $\CNOT$ the contents of $\calW_y$ onto $\calQ_u$.
            \item Apply $\Eval^\dag$ to registers $(\calW_x, \calW_y, \calP)$.
            \item $\CNOT$ the value $x$ from $\calQ_x$ to $\calW_x$.
        \end{enumerate}
        % \item $\CopyOutput$: Copy via a $\CNOT$ the contents of $\calW_2$ onto $\calQ_u$.
        % \item Apply $\Eval^\dagger$ to registers $(\calQ_x, \calW_2, \calP)$. 
        % \anote{The purpose of $\calW_2$ is so that we can return the program state to its initial ``type'' (albeit entangled with the adversary now), instead of the program state still being in the ``output/evaluated type''}
        \item Apply $\ReflectCopy$ (step \ref{step:reflect-and-copy}) again.
        \item $\Bitflip$: Apply $\mathsf{X}$ to the $\calQ_b$ register.
    \end{enumerate}
\end{itemize}
\end{definition}
In summary, the simulator acts as follows: it acts as the identity on states $\ket{x, u, b} \ket{D} \ket{\cdots}$ if $x' \in D$ for some $x' \neq x$, and on the rest of the state space it acts as the unitary
\begin{align*}
    \Bitflip \circ \ReflectCopy \circ \EvalCopy \circ \ReflectCopy.
\end{align*}

The next lemma says that $\ReflectCopy$ acts similarly to a reflection oracle on registers $\calP \times \calW_x \times \calW_y \times \calW_b$. It CNOTS $x$ from $\calQ_x$ to $\calD$ if and only if $\calP \times \calW_x \times \calW_y \times \calW_b$ is in its initial state.
\begin{lemma}\label{thm:reflect-and-copy}
    If $P$ is a testable quantum program, then $\ReflectCopy$ (step \ref{step:reflect-and-copy} of $\Sim'(P)$) is equivalent to the operation that applies $\CNOT$ from $\calQ_x$ to $\calD$ controlled on $\calP \times \calW_x \times \calW_y \times \calW_b$ being in their initial state $\ket{\psi, \emptyset, \emptyset, 0}$.

    Additionally, after any number of queries to $\Sim'(P)$, $\calW_b$ is in state $\ket{0}$.
\end{lemma}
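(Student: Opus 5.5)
The plan is to treat $\ReflectCopy$ as a sandwich $R \cdot (\text{controlled-}\CNOT) \cdot R$ and compute its action in two cases. The outer test is on $\calW_x\times\calW_y=(\emptyset,\emptyset)$, which is a classical (computational-basis) condition. So I would decompose the state space of $\calP\times\calW_x\times\calW_y\times\calW_b$ into the subspace where $\calW_x\calW_y=\ket{\emptyset,\emptyset}$ and its orthogonal complement. On the complement, $\ReflectCopy$ acts as the identity. This agrees with the claimed operation, since such states are orthogonal to $\ket{\psi,\emptyset,\emptyset,0}$.

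On the subspace $\calW_x\calW_y=\ket{\emptyset,\emptyset}$, I would assume the invariant that $\calW_b=\ket0$ at the start of $\ReflectCopy$, which is proved jointly by induction below. Write the $\calP$ state as $\alpha\ket\psi+\ket{\psi^\perp}$. By the definition of $R$ in \Cref{def:testable-prog}, the first application of $R$ maps $\ket0_{\calW_b}\ket\psi$ to $\ket1\ket\psi$ and fixes $\ket0\ket{\psi^\perp}$. The controlled $\CNOT$ then copies $x$ into $\calD$ only on the $\ket1_{\calW_b}$ branch. Since the $\CNOT$ touches only $\calQ_x,\calD$, it commutes with $R$ on $(\calW_b,\calP)$. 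The second $R$ therefore maps $\ket1\ket\psi$ back to $\ket0\ket\psi$ and fixes the other branch. By linearity over basis states of $\calQ_x,\calD$, the net effect is exactly ``$\CNOT$ from $\calQ_x$ to $\calD$, controlled on $\calP\calW_x\calW_y\calW_b=\ket{\psi,\emptyset,\emptyset,0}$''. It also returns $\calW_b$ to $\ket0$ on every branch.

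For the second claim, I would argue by induction over queries and over the steps within a query that $\calW_b$ is always $\ket0$ outside the interior of $\ReflectCopy$. The only operations touching $\calW_b$ are the two $R$ calls inside $\ReflectCopy$. The computation above shows these restore $\calW_b=\ket0$ whenever it started at $\ket0$. $\EvalCopy$ acts on $(\calW_x,\calW_y,\calP,\calQ)$ and $\Bitflip$ acts on $\calQ_b$, so neither touches $\calW_b$, and the first step (doing nothing) trivially preserves it. Initially $\calW_b=\ket0$. Because everything is linear and $\calW_b$ factors out as $\ket0$ in tensor product, the invariant holds for arbitrary superpositions entangled with the adversary.

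The main obstacle I expect is a subtle point in the first case. Inside $\ReflectCopy$, $R$ is applied even when $\calW_b$ might not be $\ket0$, so the equivalence is really only valid on the invariant subspace $\calW_b=\ket0$. I would therefore state the first claim relative to the reachable states, proving both claims simultaneously by induction. I also need to check that the classical control on $\calW_x\calW_y$ commutes with the $R$ calls, which holds because they act on disjoint registers.
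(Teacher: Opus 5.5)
Your proposal is correct and follows essentially the paper's own proof. It uses the same case split: first on whether $\calW_x\calW_y$ holds $\ket{\emptyset,\emptyset}$, then on whether $\calP$ is $\ket\psi$ or orthogonal to it. It traces the same $R$--$\CNOT$--$R$ sequence and uses the same induction to show that $\calW_b$ is restored to $\ket0$; your remark that the equivalence holds only on the $\calW_b=\ket0$ subspace makes explicit what the paper assumes implicitly.

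One justification needs fixing: the $\calW_b$-controlled $\CNOT$ does \emph{not} commute with $R$. If it did, then since $R^2=I$, $\ReflectCopy$ would reduce to a $\CNOT$ controlled on $\calW_b$ alone. What you actually use, and what is true, is that this $\CNOT$ is diagonal in the computational basis of $\calW_b$ and acts trivially on $\calP$. That is enough to keep the branches $\ket1\ket\psi$ and $\ket0\ket{\psi^\perp}$ intact for the second $R$.
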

\begin{proof}    
    First, $\calW_b$ is initialized to $\ket{0}$. Next, note that $\calW_b$ is only modified during the $\ReflectCopy$ operation. Let us assume that at the start of a given $\ReflectCopy$ operation, $\calW_b$ contains $\ket{0}$. We will show that at the end of this operation, $\calW_b$ still contains $\ket{0}$. This establishes that $\calW_b = \ket{0}$ after every query to $\Sim'(P)$.

    Let us consider how $\ReflectCopy$ acts on the following basis states.

    Case 1: $\calW_x \times \calW_y$ contains a state orthogonal to $\ket{\emptyset, \emptyset}$. Then $\ReflectCopy$ acts as the identity, so at the end of $\ReflectCopy$, $\calW_b = \ket{0}$, and $\calQ_x \times \calD \times \calP \times \calW_x \times \calW_y$ are unchanged.
    
    Case 2: $\calW_x \times \calW_y = \ket{\emptyset, \emptyset}$, but $\calP$ contains a state that is orthogonal to $\ket{\psi}$. $\ReflectCopy$ first applies $R$ to $(\calW_b, \calP)$, resulting in $\calW_b = \ket{0}$. Then the $\CNOT$ operation from $\calQ_x$ to $\calD$ is controlled on $0$, so we do not copy $x$ to $\calD$, and we can replace this operation with the identity. Finally, the second application of $R$ also leaves the $\calW_b = \ket{0}$. At the end of $\ReflectCopy$, $\calW_b = \ket{0}$, and $\calQ_x \times \calD \times \calP \times \calW_x \times \calW_y$ are unchanged.
    
    Case 3: $\calW_x \times \calW_y = \ket{\emptyset, \emptyset}$, and $\calP = \ket{\psi}$. $\ReflectCopy$ first applies $R$ to $(\calW_b, \calP)$, resulting in $\calW_b = \ket{1}$. Then, the $\CNOT$ operation copies $x$ from $\calQ_x$ to $\calD$. Since the $\CNOT$ step acts only on $\calW_b \times \calQ_x \times \calD$, this does not affect the $\calP$ register, which will still be in the state $\ket{\psi}$. Finally, the second application of the reflection oracle $R$ will XOR the $\calW_b$ register with $1$, returning it to $\ket{0}$. At the end of $\ReflectCopy$, $\calW_b = \ket{0}$, and $\calP \times \calW_x \times \calW_y$ are unchanged.

    In summary, we've shown that for each type of basis state listed above, $\ReflectCopy$ keeps $\calW_b = \ket{0}$, and it performs a $\CNOT$ from $\calQ_x \to \calD$ controlled on $\calP \times \calW_x \times \calW_y = \ket{\psi, \emptyset, \emptyset}$. Any state on registers $\calQ_x \times \calD \times \calP \times \calW_x \times \calW_y$ is in the span of the basis states above, so we have completely characterized the behvaior of $\ReflectCopy$.
    % \anote{todo}
\end{proof}

Next, let us define a testable quantum program $P_f = (\ket{+}, \Eval_f, R_+)$ as follows: 
    \begin{itemize}
        \item $\ket{+} = \frac{1}{\sqrt{\abs{\calR}}} \cdot \sum_{r \in \calR} \ket{r}$
        \item $\Eval_f$ is a unitary that maps $\ket{x, u} \otimes \ket{r} \mapsto \ket{x, u \oplus f(x;r)} \otimes \ket{r}$
        \item $R_+$ is the reflection about $\ket{+}$
    \end{itemize}
Note that $P_f$ is a testable quantum program that implements $f$ with $0$ error.

\begin{lemma}\label{lem:bpqotp-seq-sim-prime}
    Oracles for $\Sim'(P_f)$ and $O^\CSEQ_f$ are perfectly indistinguishable to any adversary making an unbounded number of quantum queries.
\end{lemma}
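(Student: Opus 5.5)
Both $\Sim'(P_f)$ and $O^\CSEQ_f$ are unitaries on the adversary's query register together with hidden registers. The plan is to exhibit an isometry $V$ from the hidden registers of $O^\CSEQ_f$ (database $\calD$, workspace $\calW$) into the hidden registers of $\Sim'(P_f)$ (namely $\calP,\calD,\calW_x,\calW_y,\calW_b$) with two properties. First, $V$ maps the initial hidden state of $O^\CSEQ_f$ to that of $\Sim'(P_f)$. Second, $V$ intertwines the two query unitaries, $\Sim'(P_f)\,(I_\calQ\otimes V) = (I_\calQ\otimes V)\,O^\CSEQ_f$, at least on the subspace reachable from the initial states. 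Given this, the argument in the proof of \Cref{lem:cseq-sim-equals-cseq-canonical} applies verbatim: the joint state after any sequence of adversary operations and queries in one world is $V$ applied to the corresponding state in the other, so the reduced states on the adversary's registers coincide.

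The natural dictionary comes from expanding the compressed oracle database of $O^\CSEQ_f$. Between queries, step~\ref{step:SEQ-check} ensures that the database has support on at most one input $x'$. We therefore work in the (uncompressed) Fourier picture. The empty database $\ket{\emptyset}$ corresponds to the row of $x'$ sitting in the uniform state $\ket{+}$ over $\calR$, while a database $\{(x',r)\}$ with a nontrivial Fourier component corresponds to $\calD=\ket{x'}$ in $\Sim'$ together with $\calP$ holding the component of the $\calR$-register orthogonal to $\ket{+}$. Concretely, the plan is to define
\begin{align*}
V:\ \ket{\emptyset}\mapsto \ket{+}_\calP\ket{\emptyset}_\calD\ket{\emptyset,\emptyset,0}_\calW,\qquad \ket{\{(x',\hat r)\}}\mapsto \ket{\hat r}_\calP\ket{x'}_\calD\ket{\emptyset,\emptyset,0}_\calW,
\end{align*}
where $\hat r$ ranges over an orthonormal basis of $\ket{+}^\perp$, following the standard compressed-oracle correspondence.

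We then verify the intertwining relation case by case, using \Cref{thm:reflect-and-copy} to replace $\ReflectCopy$ by ``CNOT $x$ into $\calD$ controlled on $\calP=\ket{+}$ and the work registers being empty'', and to guarantee that $\calW_b$ stays $\ket{0}$.
\begin{itemize}
\item \emph{A record $x'\ne x$ is present.} Both oracles act as the identity.
\item \emph{Database effectively $\ket{x}$ or $\ket{\emptyset}$.} The composition $\Bitflip\circ\ReflectCopy\circ\EvalCopy\circ\ReflectCopy$ must be shown to match the CSEQ steps: copy $x$, compressed query, XOR $f(x;r)$ and flip $b$, uncompute. Here $\EvalCopy$ applied to $P_f$ is exactly $\ket{x,u}\ket{r}\mapsto\ket{x,u\oplus f(x;r)}\ket{r}$ with the work registers restored. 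The two $\ReflectCopy$ calls, which add or remove $x$ from $\calD$ exactly when $\calP=\ket{+}$, reproduce the compressed oracle's decompress and recompress maps, which likewise toggle the record for $x$ according to whether the $r$-register is in $\ket{+}$.
\end{itemize}

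The main obstacle is this last correspondence: the compressed oracle's decompress/recompress is not literally ``CNOT controlled on $\ket{+}$'', and the sandwich order $\ReflectCopy$, evaluation, $\ReflectCopy$ must be shown to match decompress, query, recompress. I would first verify it on the two-dimensional span of $\ket{+}$ and $\ket{\hat r}$ for a single input and then extend by linearity. If the identification fails to be exact, the fallback is to replace the compressed oracle in $O^\CSEQ_f$ by its perfectly equivalent standard-oracle purification restricted to the single recorded input, and to run the same intertwining argument there.
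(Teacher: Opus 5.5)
Your proposal is correct and is essentially the paper's argument: the paper's Hybrids 0--3 compose to exactly your isometry $V$. That is, they split the single-entry compressed database into an $x$-record $\calD$ and an $r$-register $\calP$, absorb the swap $\ket{\emptyset}_\calP\leftrightarrow\ket{+}_\calP$ into the initial state $\ket{+}_\calP$, and use the same reachable-subspace reasoning. The step you flag as the main obstacle is handled there just as you anticipate: the two inner $\StdDecomp$'s of the CSEQ oracle's two compressed queries cancel, and on single-entry databases $\StdDecomp_x$ is precisely the $\ket{+}_\calP$-controlled toggle of the record $x$ in $\calD$ (the paper's $\Record\circ\Decomp$), which equals $\ReflectCopy$ by \Cref{thm:reflect-and-copy}, so no fallback is needed.
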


\begin{proof}
The following hybrids transform $O^\CSEQ_f$ into $\Sim'(P_f)$.

\paragraph{Hybrid 0.} This is the CSEQ oracle, except that the size of the compressed oracle database is fixed to record at most 1 query. This is equivalent to the CSEQ oracle because the CSEQ oracle holds at most one entry in the database at any point in time, by Claim 4.7 of \cite{gupte2025quantum}.

Recall that $\StdDecomp$ applied to $\calW \times \calD$ swaps the states $\ket{x,r}_{\calW} \otimes \ket{\emptyset,\emptyset}_{\calD}$ and $\ket{x,r}_{\calW} \otimes \ket{x,+}_{\calD}$, and acts as the identity on the rest of the space. Additionally, $\mathsf{CStO}'$ applied to $\calW \times \calD$ maps $\ket{x, r}_\calW \otimes \ket{D}_\calD \mapsto \ket{x, r \oplus D(x)} \otimes \ket{D}$.

The oracle initializes its internal registers $\calW \times \calD = \ket{0} \times \ket{\emptyset}$. On query $\ket{x,u,b}_\calQ$, the oracle acts as follows:
        \begin{enumerate}
            \item If there exists some $x'\neq x$ such that $(x',r) \in D$ for some $r \in \calR$, skip the following steps.
            \item Copy $x$ into the workspace register $$\ket{x}_{\calQ_x} \otimes \ket{0}_\calW \mapsto \ket{x}_{\calQ_x} \otimes \ket{x, 0}_\calW.$$\label{step:hyb-0-copy-input-1}
            \item Apply the compressed random oracle to registers $\calW \times \calD$ by executing the following unitaries:\label{step:hyb-0-query-CO-1}
            \[\StdDecomp \circ \mathsf{CStO}' \circ \StdDecomp\]
            \item Apply the following unitary map:\label{step:hyb-0-copy-output}
            \begin{align*}
                \ket{u, b}_{\calQ_u \times \calQ_b} \otimes \ket{x, r}_{\calW} \mapsto \ket{u \oplus f(x; r), b \oplus 1}_{\calQ_u \times \calQ_b} \otimes \ket{x, r}_{\calW}
            \end{align*}
            \item Query the compressed oracle again:\label{step:hyb-0-query-CO-2}
            \[\StdDecomp \circ \mathsf{CStO}' \circ \StdDecomp\]
            \item Copy $x$ into the workspace register $$\ket{x}_{\calQ_x} \otimes \ket{x,0}_\calW \mapsto \ket{x}_{\calQ_x} \otimes \ket{0}_\calW$$\label{step:hyb-0-copy-input-2}
        \end{enumerate}

\paragraph{Hybrid 1.} We've canceled out operations with their inverse whereever possible. We have also moved the bitflip on $\calQ_b$ to the final step.

The oracle initializes its internal registers $\calW \times \calD = \ket{0} \times \ket{\emptyset}$. On query $\ket{x,u,b}_\calQ$, the oracle acts as follows:
        \begin{enumerate}
            \item If there exists some $x'\neq x$ such that $(x',r) \in D$ for some $r \in \calR$, skip the following steps.
            \item Apply $\StdDecomp$ to registers $\calQ_x \times \calD$.
            \item Apply the following unitary map:
            \begin{align*}
                \ket{x, u, b}_{\calQ} \otimes \ket{x,r}_\calD \mapsto \ket{x, u \oplus f(x; r), b}_{\calQ} \otimes \ket{x,r}_\calD
            \end{align*}
            \item Apply $\StdDecomp$ to registers $\calQ_x \times \calD$.
            \item Apply $\mathsf{X}$ (bitflip) to $\calQ_b$.
        \end{enumerate}

\begin{claim}
    The oracles in hybrids 0 and 1 are perfectly indistinguishable after any number of queries.
\end{claim}
\begin{proof}
    First, in hybrid 0, $\StdDecomp$ applies $\StdDecomp_x$ to $\calD$ controlled on the value of $x$ written on $\calW$. In hybrid 1, $\StdDecomp$ applies $\StdDecomp_x$ to $\calD$ controlled on the value of $x$ written on $\calQ_x$. These operations are equivalent because during any query in hybrid 0, the value of $x$ written on $\calQ_x$ matches the value written on $\calW$ whenever $\StdDecomp$ is called.

    Let us modify hybrid 0 so that $\StdDecomp$ is controlled on $\calQ_x$ instead of $\calW$.

    Second, $\StdDecomp$ now commutes with \cref{step:hyb-0-copy-input-1,step:hyb-0-copy-output,step:hyb-0-copy-input-2} of hybrid 0. This is because $\StdDecomp$ applies $\StdDecomp_x$ to $\calD$ controlled on the computational-basis value of $\calQ_x$. Step \ref{step:hyb-0-copy-input-1} applies a unitary to $\calW$ controlled on the computational-basis value of $\calQ_x$. $\StdDecomp$ and step \ref{step:hyb-0-copy-input-1} commute because they are controlled by the computational-basis value of the same register $\calQ_x$, and they apply a controlled unitary to disjoint registers, $\calD$ and $\calW$ respectively. The same reasoning shows that $\StdDecomp$ commutes with step \ref{step:hyb-0-copy-input-2}. Step \ref{step:hyb-0-copy-output} does not act on $\calQ_x$ or $D$, so it commutes with $\StdDecomp$ because they act on disjoint registers.

    Let us switch the order of step \ref{step:hyb-0-copy-input-1} and the first $\StdDecomp$ in step \ref{step:hyb-0-query-CO-1}. Let us also switch the order of step \ref{step:hyb-0-copy-input-2} and the last $\StdDecomp$ in step \ref{step:hyb-0-query-CO-2}.

    Third, the final application of $\StdDecomp$ in step \ref{step:hyb-0-query-CO-1} cancels out with the first application of $\StdDecomp$ in step \ref{step:hyb-0-query-CO-2}. This is because $\StdDecomp^{-1} = \StdDecomp$, and the only operation that occurs in between these two applications of $\StdDecomp$ is step \ref{step:hyb-0-copy-output}, which commutes with $\StdDecomp$.

    Fourth, after making the changes above, we have that hybrid 0 is equivalent to the following operations:
    \begin{enumerate}
        \item If there exists some $x'\neq x$ such that $(x',r) \in D$ for some $r \in \calR$, skip the following steps.
        \item Apply $\StdDecomp$ to $\calQ_x \times \calD$.
        \item Step \ref{step:hyb-0-copy-input-1}: $\ket{x}_{\calQ_x} \otimes \ket{0}_\calW \to \ket{x}_{\calQ_x} \otimes \ket{x,0}_\calW$
        \item $\mathsf{CStO}'$: $\ket{x,0}_\calW \otimes \ket{x,r}_{\calD} \to \ket{x,r}_\calW \otimes \ket{x,r}_{\calD}$
        \item Step \ref{step:hyb-0-copy-output}: $\ket{u, b}_{\calQ_u \times \calQ_b} \otimes \ket{x, r}_{\calW} \mapsto \ket{u \oplus f(x; r), b \oplus 1}_{\calQ_u \times \calQ_b} \otimes \ket{x, r}_{\calW}$
        \item $\mathsf{CStO}'$: $\ket{x,r}_\calW \otimes \ket{x,r}_{\calD} \to \ket{x,0}_\calW \otimes \ket{x,r}_{\calD}$
        \item Step \ref{step:hyb-0-copy-input-2}: $\ket{x}_{\calQ_x} \otimes \ket{x,0}_\calW \to \ket{x}_{\calQ_x} \otimes \ket{0}_\calW$
        \item Apply $\StdDecomp$ to $\calQ_x \times \calD$.
    \end{enumerate}
    Note that the middle steps, $\text{Step \ref{step:hyb-0-copy-input-2}} \circ \mathsf{CStO}' \circ \text{Step \ref{step:hyb-0-copy-output}} \circ \mathsf{CStO}' \circ \text{Step \ref{step:hyb-0-copy-input-1}}$, are all classical operations, and they are equivalent to the following operation:
    \[\ket{x,u,b}_{\calQ} \otimes \ket{0}_\calW \otimes \ket{x,r}_{\calD} \to \ket{x, u \oplus f(x;r), b \oplus 1}_{\calQ} \otimes \ket{0}_\calW \otimes \ket{x,r}_{\calD}\]
    Note that $\calW$ is returned to the $\ket{0}$ state by the end of this sequence of operations, so we can ignore it.
    
    Fifth, let us delay applying the bitflip to $\calQ_b$ until after all the other steps have finished. This is equivalent to the steps above because the bitflip commutes with all other steps. 
    
    Now we have that hybrid 0 is equivalent to the following steps:
    \begin{enumerate}
        \item If there exists some $x'\neq x$ such that $(x',r) \in D$ for some $r \in \calR$, skip the following steps.
        \item Apply $\StdDecomp$ to $\calQ_x \times \calD$.
        \item Apply the following unitary map:
        \[\ket{x,u,b}_{\calQ} \otimes \ket{x,r}_{\calD} \to \ket{x, u \oplus f(x;r),b}_{\calQ} \otimes \ket{x,r}_{\calD}\]
        \item Apply $\StdDecomp$ to $\calQ_x \times \calD$.
        \item Apply $\mathsf{X}$ (bitflip) to $\calQ_b$.
    \end{enumerate}
    This is exactly hybrid 1, so we've shown that hybrids 0 and 1 are perfectly indistinguishable.
\end{proof}

\paragraph{Hybrid 2.} Let us split the database register $\calD$ into two registers, called $\calD \times \calP$ to match the eventual notation of $\Sim'(P_f)$. $\calD$ will store the database's $x$-value, and $\calP$ will store its $r$-value.

Let $\Record$ be the unitary that swaps the states $\ket{x}_{\calQ_x} \otimes \ket{\emptyset}_{\calD} \ket{+}_{\calP}$ and $\ket{x}_{\calQ_x} \otimes \ket{x}_{\calD} \ket{+}_{\calP}$, and acts as the identity on the rest of the space. Additionally, let $\Decomp$ be the unitary that swaps $\ket{\emptyset}_{\calP}$ and $\ket{+}_{\calP}$.

In this hybrid, the oracle initializes database registers $\calD \times \calP = \ket{\emptyset, \emptyset}$ and work registers $\calW_x \times \calW_y \times \calW_b = \ket{\emptyset, \emptyset, 0}$, although the oracle never operates on $\calW_x \times \calW_y \times \calW_b$.

Next, the oracle answers queries on $\ket{x, u, b}_\calQ \otimes \ket{x'}_\calD$ as follows:
\begin{enumerate}
    \item If $x' \notin \{x, \emptyset\}$, then skip the following steps.
    \item Apply $\textcolor{red}{\Record \circ \Decomp}$ to the $(\calQ_x,\calD, \calP)$ registers.
    \item Apply $\Eval_f$ on the $(\calQ_x, \calQ_u, \calP)$ registers:
    \[\ket{x,u}_{\calQ_x \times \calQ_u} \otimes \ket{r}_{\calP} \to \ket{x, u \oplus f(x;r)}_{\calQ_x \times \calQ_u} \otimes \ket{r}_{\calP}\]
    \item Apply $\textcolor{red}{\Decomp \circ \Record}$ on the $(\calQ_x,\calD, \calP)$ registers.
    \item Apply $\mathsf{X}$ (bitflip) to $\calQ_b$.
\end{enumerate}
%Recall from \Cref{sec:compressedRO} that $\Decomp$ is defined by $\ket{x, u}_\calQ \otimes \ket{D}_\calD \mapsto \ket{x, u}_{\calQ} \otimes \Decomp_x\ket{D}_{\calD}$ and $\Decomp_x$ swaps the states $\ket{D}$ and $\frac{1}{\sqrt{|\calY|}} \sum_{y \in \calY} \ket{D \cup \{(x,y)\}}$, and acts as the identity on the space orthogonal to these two states.

\begin{claim}
    The oracles in hybrids 1 and 2 are perfectly indistinguishable after any number of quantum queries.
\end{claim}
\begin{proof}
We claim that the unitaries defined in hybrids 1 and 2 are identical. We argue this by choosing a suitable basis for states on registers $\calQ_x \times \calD$ (hybrid 1) or $\calQ \times \calP \times \calD$ (hybrid 2) and then tracing how the unitaries $\StdDecomp$, $\Eval_f$, $\Record \circ \Decomp$ and $\Decomp \circ \Record$ transform these basis states.

The unitaries in hybrids 1 and 2 both act as the identity on basis states of the form $\ket{x}_{\calQ_x} \ket{x'}_{\calD} \ket{\cdots}$ for $x' \notin \{x,\emptyset\}$, so we will restrict our attention to states supported on $\ket{x}_{\calQ_x} \ket{x}_{\calD}$ and $\ket{x}_{\calQ_x} \ket{\emptyset}_{\calD}$. For the $\calP$ register, our orthogonal ``basis'' will be $\ket{\emptyset}$, $\ket{+}$ and $\ket{-}$, where, in an abuse of notation, we write $\ket{-}$ to denote any state in the span of $\{\frac{1}{\sqrt{\abs{\calR}}} \cdot \sum_{r \in \calR} (-1)^{r \cdot u}\ket{r} \}_{u \neq 0}$.

\begin{itemize}
    \item The starting state of the first query is $\ket{x}_{\calQ_x} \ket{\emptyset}_\calD \ket{\emptyset}_\calP$, and both $\StdDecomp$ and $\Record \circ \Decomp$ map it to $\ket{x} \ket{x} \ket{+}$.
    \item $\Eval_f$ maps $\ket{x}\ket{x} \ket{+}$ and $\ket{x}\ket{x} \ket{-}$ to superpositions of $\ket{x} \ket{x} \ket{+}$ and $\ket{x} \ket{x} \ket{-}$ (where $\ket{-}$ does not necessarily refer to the same state as before the application of $\Eval_f$).
    \item Both $\StdDecomp$ and $\Decomp \circ \Record$ map these states as follows:
        \begin{align*}
        \ket{x} \ket{x} \ket{+} &\mapsto \ket{x} \ket{\emptyset} \ket{\emptyset}, \text{ and}\\
        \ket{x} \ket{x} \ket{-} &\mapsto \ket{x} \ket{x} \ket{-}.
        \end{align*}
    \item Note that $\StdDecomp$, $\Record \circ \Decomp$, and $\Decomp \circ \Record$ each leave $\ket{x} \ket{x} \ket{-}$ invariant.
\end{itemize}

Let us define the set of \textit{reachable} states to be states whose database is supported on eigenstates of the form: $\ket{x, -}, \ket{\emptyset, \emptyset}$. We claim that the database is in a reachable state at the start of any query. First, the state of the database at the start of the first query is reachable. Second, the observations above imply that if the database is in a reachable state at the start of a query, then it will end the query in a reachable state.

% Next, if the database is in a reachable state at the start of the query, then every time that $\StdDecomp$ is applied in hybrid 1 or $\Record \circ \Decomp$, $\Decomp \circ \Record$ are applied in hybrid 2, the state of the database

% From these observations, we can see that at the start of any query in hybrids 1 or 2, the database's state is supported on eigenstates of the form: $\ket{x, -}, \ket{\emptyset, \emptyset}$. 

% We have already analyzed how an oracle query acts on a database starting in $\ket{\emptyset, \emptyset}$, and we've shown that the oracles in hybrids 1 and 2 act on this state equivalently.

% Next, we must analyze how an oracle query acts on a database starting in $\ket{\emptyset, \emptyset}$. 

The only difference in the operations applied by hybrids 1 and 2 is that hybrid 1 uses $\StdDecomp$ twice, and hybrid 2 uses $\Record \circ \Decomp$ and $\Decomp \circ \Record$ instead.

Let us step through a query in hybrids 1 or 2 when the starting state is reachable. The first time that $\StdDecomp$ or $\Record \circ \Decomp$ is called during the query, the database is supported on states of the form $\ket{\emptyset, \emptyset}, \ket{x,-}$. We've already argued that $\StdDecomp$, $\Record \circ \Decomp$ act equivalently on such states. After this operation, the database is supported on states of the form $\ket{x, +}, \ket{x,-}$. Then $\Eval_f$ maps these states to superpositions of $\ket{x, +}, \ket{x,-}$. Next, $\StdDecomp$ or $\Decomp \circ \Record$ is called. We've argued that $\StdDecomp$ and $\Decomp \circ \Record$ act equivalently on database states of the form $\ket{x, +}, \ket{x,-}$.

This shows that the unitaries in hybrids 1 and 2 act equivalently on any reachable state, so hybrids 1 and 2 are perfectly indistinguishable.
\end{proof}

\paragraph{Hybrid 3.} This is $\Sim'(P_f)$. 

In this hybrid, the oracle initializes internal registers $\calD \times \calP \times \calW_x \times \calW_y \times \calW_b = \ket{\emptyset} \otimes \textcolor{red}{\ket{+}} \otimes \ket{\emptyset, \emptyset, 0}$, and answers queries on $\ket{x, u, b} \otimes \ket{x'}_\calD$ as follows:
\begin{enumerate}
    \item If $x' \notin \{x, \emptyset\}$, do nothing and ignore the remaining steps.
    \item Apply $\textcolor{red}{\ReflectCopy}$ to the $(\calQ_x,\calD, \calP, \textcolor{red}{\calW_x, \calW_y, \calW_b})$ registers.
    % \jiahui{Does $\textcolor{red}{\ReflectCopy}$ not also operate on $W_1$ register?} \anote{Yes, it does. thank you for catching that!}
    \item Apply $\textcolor{red}{\EvalCopy}$ on the $(\calQ_x, \calQ_u, \calP, \textcolor{red}{\calW_x, \calW_y})$ registers.
    % \jiahui{Does $\textcolor{red}{\CopyOutput}$ not also operate on $W_2$ register?}
    \item Apply $\textcolor{red}{\ReflectCopy}$ to the $(\calQ_x,\calD, \calP, \textcolor{red}{\calW_x, \calW_y, \calW_b})$ registers.
    \item Apply $\Bitflip$ to $\calQ_b$.
\end{enumerate}

\begin{claim}
    The oracles in hybrids 2 and 3 are perfectly indistinguishable after any number of quantum queries.
\end{claim}
\begin{proof}
First, $\Record$ is equivalent to $\ReflectCopy$ for program $P_f$. $\ReflectCopy$ copies $x$ from $\calQ_x$ to $\calD$ controlled on $\calP \times \calW_x \times \calW_y \times \calW_b$ being in the initial state (\cref{thm:reflect-and-copy}). $\Record$ copies $x$ from $\calQ_x$ to $\calD$ controlled on $\calP$ being in its initial state $\ket{+}$. Furthermore, every time that $\Record$ is called in hybrid 2, the $\calW_x \times \calW_y \times \calW_b$ registers are in their initial state because hybrid 2 does not operate on these registers after initializing them. Therefore, we can replace $\Record$ with $\ReflectCopy$.

Second, $\Eval_f$ is equivalent to $\EvalCopy$ for program $P_f$. If $\calW_x \times \calW_y$ are in the state $\ket{\emptyset,\emptyset}$ at the start of $\EvalCopy$, then $\EvalCopy$ acts as follows:
\begin{align*}
    \ket{x,u}_{\calQ_x \times \calQ_u} \otimes \ket{\emptyset,\emptyset}_{\calW_x \times \calW_y} \otimes \ket{r}_\calP &\to \ket{x,u}_{\calQ_x \times \calQ_u} \otimes \ket{x,\emptyset}_{\calW_x \times \calW_y} \otimes \ket{r}_\calP\\
    &\to \ket{x,u}_{\calQ_x \times \calQ_u} \otimes \ket{x,f(x;r)}_{\calW_x \times \calW_y} \otimes \ket{r}_\calP\\
    &\to \ket{x,u \oplus f(x;r)}_{\calQ_x \times \calQ_u} \otimes \ket{x,f(x;r)}_{\calW_x \times \calW_y} \otimes \ket{r}_\calP\\
    &\to \ket{x,u \oplus f(x;r)}_{\calQ_x \times \calQ_u} \otimes \ket{x,\emptyset}_{\calW_x \times \calW_y} \otimes \ket{r}_\calP\\
    &\to \ket{x,u \oplus f(x;r)}_{\calQ_x \times \calQ_u} \otimes \ket{\emptyset,\emptyset}_{\calW_x \times \calW_y} \otimes \ket{r}_\calP
\end{align*}
We claim that $\calW_x \times \calW_y = \ket{\emptyset,\emptyset}$ at the start of every $\EvalCopy$ query. Before any queries have been made, $\calW_x \times \calW_y = \ket{\emptyset,\emptyset}$. Next, the computational-basis values of $\calW_x \times \calW_y$ are only modified during $\EvalCopy$. We showed above that $\EvalCopy$ returns $\calW_x \times \calW_y$ to $\ket{\emptyset,\emptyset}$, so $\calW_x \times \calW_y = \ket{\emptyset,\emptyset}$ at the start of every $\EvalCopy$ query.

We also showed that $\EvalCopy$ is equivalent to 
\[\ket{x,u}_{\calQ_x \times \calQ_u} \otimes \ket{r}_\calP \to \ket{x,u \oplus f(x;r)}_{\calQ_x \times \calQ_u} \otimes \ket{r}_\calP\]
which is the application of $\Eval_f$ to $\calQ_x \times \calQ_u \times \calP$.

Third, let us remove all the $\Decomp$ operations from hybrid 2 as follows. The $\Decomp$ operation at the end of one query cancels with the $\Decomp$ operation at the start of the next one. This is because $\Decomp \circ \Decomp = I$, and $\Decomp$ operates only on the internal registers of the oracle. 

Next, we can remove the first $\Decomp$ from the first query. This $\Decomp$ operation acts on $\ket{\emptyset}_{\calP}$ to produce the state $\ket{+}_{\calP}$. In hybrid 3, we remove this $\Decomp$ operation and instead have the oracle start with state $\ket{+}_{\calP}$.

Finally, we can remove the final $\Decomp$ from the final query. $\Decomp$ acts only on the internal registers of the oracle, so the view of a distinguisher that has query access to the oracle is unchanged.

In summary, we have transformed hybrid 2 into hybrid 3 and shown that the two hybrids are equivalent.
\end{proof}
\end{proof}

\begin{lemma}\label{lem:testable-bpotp-sim-p1-p2-perfect}
    If $P_1$ and $P_2$ are testable quantum programs that implement the same functionality with $0$ error, then oracles for $\Sim'(P_1)$ and $\Sim'(P_2)$ are perfectly indistinguishable to any adversary making an unbounded number of quantum queries.
\end{lemma}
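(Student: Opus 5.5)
The plan is to reduce \Cref{lem:testable-bpotp-sim-p1-p2-perfect} to the dilation-hiding argument of \Cref{lem:cseq-sim-equals-cseq-canonical}, adapted to the CSEQ-wrapper $\Sim'$. First I will rewrite $\Sim'(P)$ for a testable $P=(\ket{\psi},\Eval,R)$ in a form that is independent of $\ket{\psi}$. As in the proof of \Cref{thm:U-prime-valid-dilation,thm:Sim-prime-equivalent-to-SEQ-oracle}, I conjugate every step by the swap $\mathsf{Sw}_\psi$ exchanging $\ket{0^m}_\calP$ and $\ket{\psi}_\calP$, and start $\calP$ in $\ket{0^m}$. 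By \Cref{thm:reflect-and-copy}, $\ReflectCopy$ is a CNOT from $\calQ_x$ to $\calD$ controlled on $(\calP,\calW_x,\calW_y,\calW_b)=\ket{\psi,\emptyset,\emptyset,0}$. After conjugation, this becomes the same CNOT controlled on $\ket{0^m,\emptyset,\emptyset,0}$, which no longer depends on $P$. The $\EvalCopy$ step becomes $\EvalCopy$ with $\Eval$ replaced by $U'_P:=\mathsf{Sw}_\psi\Eval\,\mathsf{Sw}_\psi$. The invariant argument of \Cref{thm:Sim-prime-equivalent-to-SEQ-oracle} then shows that $\Sim'(P)$ is perfectly equivalent to an oracle $\Sim''_{U'_P}$ that depends on $P$ only through $U'_P$, with $\calP$ initialized to $\ket{0^m}$.

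Second, I observe that for $i=1,2$ the unitaries $U'_{P_i}$, acting on $(\calW_x,\calW_y,\calP)$ with $\calP$ initialized to $\ket{0^m}$, are both Stinespring dilations of the same channel: the classical map $x\mapsto f(x)$ with an outcome register. This uses the hypothesis that $P_1,P_2$ implement the same functionality with $0$ error. After padding $\calP$ to a common dimension (extending the reflections accordingly), Stinespring uniqueness gives a unitary $V$ on $\calP$ with $U'_{P_2}=(I\otimes V)U'_{P_1}$. Some care is needed here. ``Implements $f$'' is a statement only after measuring $\calW_y$, so to obtain the same channel on $(\calW_x,\calW_y)$ I first argue that the coherent channel $\ket{x}\ket{0}\mapsto \sum_y \sqrt{\Pr[f(x)=y]}\ket{x}\ket{y}\otimes(\cdot)$ agrees up to a $\calP$-unitary depending on $(x,y)$. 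The $\EvalCopy$ step only ever uses $\Eval$ sandwiched around a CNOT of $\calW_y$ onto $\calQ_u$, with $\calW_x$ fixed in the computational basis. So it suffices that for each $x$ the purifications $\sum_y \sqrt{p_x(y)}\ket{y}\ket{\phi^i_{x,y}}$ are related by a unitary $V_x$ on $\calP$, controlled on $x$, which Stinespring uniqueness provides. The $V_x$ need not commute with everything, but it is a controlled unitary on the hidden register applied only in the ``evaluated'' branch.

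Third, mimicking the block-conjugation proof of \Cref{lem:cseq-sim-equals-cseq-canonical}, I define $S:=\sum_x \ketbra{x}_{\calW_x}\otimes V_x$ on the branch $\calW_x\neq\emptyset$, and identity on the branch $\calW_x=\emptyset$. $S$ acts only on hidden registers and fixes the initial state. Inside $\EvalCopy$, the middle CNOT from $\calW_y$ to $\calQ_u$ is conjugated correctly by $S$. The flanking copies of $x$ into $\calW_x$ and the applications of $\Eval$ map between the $\calW_x=\emptyset$ branch, where $S=I$, and the evaluated branch, where $S$ converts $U'_{P_1}$ into $U'_{P_2}$. Thus each full query satisfies $\Sim''_{U'_{P_2}}=S\,\Sim''_{U'_{P_1}}S^\dagger$ on the reachable subspace. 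Here I use the fact, noted in the proof of the hybrid 2/3 claim, that $\calW_x\times\calW_y=\ket{\emptyset,\emptyset}$ at the start and end of every query, so on reachable states $S$ actually acts as the identity between queries. Interleaving adversary unitaries and tracing out the hidden registers then gives perfect indistinguishability for any number of queries. The main obstacle is this step: verifying that the $x$-controlled $V_x$ commutes through the database check and $\ReflectCopy$. I expect this to go through because both are controlled on the register being in its initial (pre-evaluation) state, where $S$ is the identity.
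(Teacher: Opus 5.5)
Your Steps 2 and 3 have a genuine gap: the Stinespring-uniqueness argument behind \Cref{lem:cseq-sim-equals-cseq-canonical} does not carry over to the CSEQ wrapper $\Sim'$.

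The first problem is in Step 2. ``Implements $f$ with $0$ error'' fixes only the measurement statistics of $\calW_y$, not the reduced state on $\calW_y$. So the purifications $\sum_y\sqrt{p_x(y)}\ket{y}\ket{\phi^i_{x,y}}$ need not be related by any unitary on $\calP$ (or on $\calP\times\calW_x$). For instance, one program may write $\frac{1}{\sqrt2}(\ket{0}+\ket{1})$ into $\calW_y$ unentangled from $\calP$, while another copies $r$ out of $\ket{+}_\calP$.

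The more fundamental problem is that even when such a $V$ exists, it only relates the isometries $\Eval_i(\cdot\otimes\ket{\psi^i})$. That was enough for $O^{\SEQ}_U$, whose block form involves only $UP_0$ and its adjoint. But $\EvalCopy$ applies $\Eval^\dagger$ \emph{after} the copy, to vectors outside the range of that isometry, and later same-$x$ queries apply $\Eval$ to non-initial program states.

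Concretely, take $f(x;r)=r$ with one-bit $r$ and $\ket{\psi^1}=\ket{\psi^2}=\ket{+}$. Let $\Eval_1$ write $r$ into $\calW_y$ and let $\Eval_2$ write $1-r$. Then $V=\mathsf{X}$ relates the two isometries. Yet after one query on $\ket{x,u,b}$ with empty database, the two oracles leave
\[
\tfrac{1}{\sqrt2}\bigl(\ket{\tilde{+}}_{\calQ_u}\ket{\emptyset}_{\calD}\ket{+}_{\calP}\pm\ket{\tilde{-}}_{\calQ_u}\ket{x}_{\calD}\ket{-}_{\calP}\bigr),\qquad \ket{\tilde{\pm}}:=\tfrac{1}{\sqrt2}\bigl(\ket{u+0}\pm\ket{u+1}\bigr),
\]
with $+$ for $P_1$ and $-$ for $P_2$. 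So $\EvalCopy_1\neq\EvalCopy_2$ already on the initial state.

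The same example refutes Step 3. Your $S$ is controlled on $\calW_x$, which is $\ket{\emptyset}$ between queries, so $S=I$ there. Then $\Sim''_{U'_{P_2}}=S\,\Sim''_{U'_{P_1}}S^\dagger$ would say the two oracles coincide on reachable states, and they do not. So the real obstacle is not commuting $V_x$ through $\ReflectCopy$. The intertwiner must persist between queries and depend on the recorded input, i.e.\ it must be tied to $\calD$, which plays the role the flag $\calC$ plays in the SEQ lemma.

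What is missing is the paper's post-copy decomposition. For each $x$, take the orthonormal states $\ket{\psi^i_{x,y}}\propto\Eval_x^\dagger(I\otimes\ketbra{y}_{\calW_y})\Eval_x\ket{\psi^i,\emptyset,\emptyset}$ on the whole workspace $\calP\times\calW_x\times\calW_y$. On these, $\EvalCopy$ acts as $\ket{x,u}\ket{\psi^i_{x,y}}\mapsto\ket{x,u+y}\ket{\psi^i_{x,y}}$; the copy into $\calQ_u$ is exactly what makes the $y$-coherences irrelevant.

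Equality of the amplitudes $\alpha_{x,y}$ alone then gives a unitary $U_x:\ket{\psi^1_{x,y}}\mapsto\ket{\psi^2_{x,y}}$ with $U_x\ket{\psi^1,\emptyset,\emptyset}=\ket{\psi^2,\emptyset,\emptyset}$ for every $x$. The proof conjugates each query by the $\calQ_x$-controlled $\CU$ and uses the invariant ``$D=\emptyset$ and the initial state, or $D=x$ and a state in $S^1_{\perp,x}$''. That invariant does two jobs: it confines $\EvalCopy_1$ to $S^1_x$, and it makes consecutive $\CU^\dagger$, $\CU$ cancel across queries.
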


The intuition for this proof is that given two program states $\ket{\psi^1}, \ket{\psi^2}$ that implement the same sampling functionality, for each $x$, we can decompose each program state into a superposition over an orthonormal basis of programs that give deterministic output values. Since the two programs implement the same functionality, these bases are equivalent up to a rotation/labeling.

This nice picture breaks down when you consider querying the program on multiple $x$ values, but the database that $\Sim'$ uses to record queries allows us to get around this issue and still make the argument go through.

\begin{proof}

First, for any program $P$ and any $(x, y) \in \calX \times \calY$, let $\alpha_{x,y} = \sqrt{\Pr[P(x) \to y]}$.

In $\Sim'(P)$, the initial state of $\calP \times \calW_x \times \calW_y$ is $\ket{\psi, \emptyset, \emptyset}$. The following claim says that we can decompose $\ket{\psi, \emptyset, \emptyset}$ into a basis defined by the $y$-values that the program outputs. For each $y$-value, $\ket{\psi_{x,y}}$ is the component of $\ket{\psi, \emptyset, \emptyset}$ that produces output $y$ when $\EvalCopy$ is called.

\begin{claim}\label{thm:EvalCopy}
    For any $x \in \calX$, there exists a set of orthonormal states $\{\ket{\psi_{x, y}}\}_{y \in \calY}$ on registers $\calP \times \calW_x \times \calW_y$ such that for any $u \in \calY \cup \{\emptyset\}$ and any $y$ in the support of $P(x)$,
    \begin{align*}
        \EvalCopy \ket{x,u}_{\calQ_x \times \calQ_u} \otimes \ket{\psi_{x,y}} &= \ket{x, u + y}_{\calQ_x \times \calQ_u} \otimes \ket{\psi_{x,y}}
    \end{align*}
    Additionally, $\ket{\psi, \emptyset, \emptyset}$ is in the span of $\{\ket{\psi_{x, y}}\}_{y}$:
    \[\ket{\psi, \emptyset, \emptyset} = \sum_{y \in \calY} \alpha_{x, y} \ket{\psi_{x,y}}\]
\end{claim}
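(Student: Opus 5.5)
Fix $x \in \calX$. The natural choice is to take the states $\ket{\psi_{x,y}}$ to be the normalized projections of $\ket{\psi,\emptyset,\emptyset}$ onto the ``output $=y$'' subspaces of the evaluation. Concretely, let $V_x$ be the unitary on $\calP\times\calW_x\times\calW_y$ given by ``$\CNOT$ $x$ into $\calW_x$, then apply $\Eval$ to $(\calW_x,\calW_y,\calP)$''. Here we view the $\CNOT$ from $\calQ_x$ as classically controlled on the fixed basis value $x$, so it is just a fixed unitary on $\calW_x$. Then $\EvalCopy$, restricted to the branch $\ket{x}_{\calQ_x}$, equals $V_x^\dagger\,\CopyOutput\,V_x$, where $\CopyOutput$ adds the computational-basis value of $\calW_y$ into $\calQ_u$.

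For each $y\in\calY$, let $\Pi_y$ be the projector onto $\calW_y$ holding $y$, and set $\Pi^x_y := V_x^\dagger \Pi_y V_x$. These form a complete family of orthogonal projectors on $\calP\times\calW_x\times\calW_y$. Writing $\ket{\Psi}:=\ket{\psi,\emptyset,\emptyset}$, I would define
\[
\ket{\psi_{x,y}} := \Pi^x_y\ket{\Psi}\,/\,\|\Pi^x_y\ket{\Psi}\|
\]
whenever the norm is nonzero. For the remaining $y$, pick arbitrary unit vectors inside the range of $\Pi^x_y$ (or, if that range is trivial, in any orthogonal complement; the claim only constrains $y$ in the support of $P(x)$).

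The key steps are then as follows.
\begin{enumerate}
\item Orthonormality follows from the $\Pi^x_y$ being mutually orthogonal projectors.
\item By the definition of $P(x)$ (evaluate $\Eval$ on $\ket{x}\ket{0}\otimes\ket\psi$ and measure the output register), $\|\Pi^x_y\ket\Psi\|^2 = \Pr[P(x)\to y] = \alpha_{x,y}^2$. Here we identify the $\calW_x,\calW_y$ encodings with $\scrX,\scrY$, up to the fixed representation of $\emptyset$, which $\Eval$ is assumed to treat as the zero output register. Summing over $y$ gives $\ket\Psi = \sum_y \Pi^x_y\ket\Psi = \sum_y \alpha_{x,y}\ket{\psi_{x,y}}$.
\item For the action of $\EvalCopy$, note that $V_x\ket{\psi_{x,y}}$ lies in the range of $\Pi_y$, i.e., $\calW_y$ holds exactly $y$. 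So $\CopyOutput$ maps $\ket{u}_{\calQ_u}\otimes V_x\ket{\psi_{x,y}}$ to $\ket{u+y}_{\calQ_u}\otimes V_x\ket{\psi_{x,y}}$ without disturbing the internal registers. Applying $V_x^\dagger$ then returns exactly $\ket{\psi_{x,y}}$, which gives the displayed identity.
\end{enumerate}

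The main obstacle is the bookkeeping in the last two steps. First, one has to check that $\EvalCopy$ really factors as a $\ket{x}_{\calQ_x}$-controlled conjugation of $\CopyOutput$ by $V_x$. This holds because $\Eval$ and $\Eval^\dagger$ act only on the work and program registers, and the final $\CNOT$ of $x$ onto $\calW_x$ inverts the first (addition in the $\emptyset$ encoding is an involution). This is exactly the insulation property noted when $\EvalCopy$ was introduced. Second, one has to verify that the output distribution of $P(x)$ coincides with measuring $\calW_y$ after $V_x$, which requires matching the $\emptyset$ encoding of $\calW_y$ with the $\ket0_\scrY$ initialization. I expect both to be routine once stated carefully, with the encoding convention being the point to double-check.
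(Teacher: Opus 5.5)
Your proposal is correct and follows the paper's proof essentially step for step. Your $V_x$ is the paper's $\Eval_x$, and your normalized projections $\Pi^x_y\ket{\Psi}/\alpha_{x,y}$ are exactly its $\ket{\psi_{x,y}}$; for $y$ outside the support, the paper makes the concrete choice $\Eval_x^\dagger(\ket{\psi}\ket{\emptyset}\ket{y})$, which is a unit vector in the range of $\Pi^x_y$. The orthogonality, span, and $\EvalCopy$ computations are then the same as yours.

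One point to tighten: $\{\Pi^x_y\}_{y\in\calY}$ is not by itself a complete family, because $\calW_y$ can also hold $\emptyset$. So the span identity needs the paper's observation that the $\emptyset$ component vanishes because $P(x)$ never outputs $\emptyset$. Equivalently, you can argue directly that $\sum_{y\in\calY}\alpha_{x,y}^2=1$.
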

\begin{proof}

First, note that $\EvalCopy$ does not modify the computational-basis value of $\calQ_x$. It just applies an operation that is controlled by $\calQ_x$. When $\calQ_x$ contains value $x$, the $\EvalCopy$ step acts as follows on registers $\calQ_u \times \calW_x \times \calW_y \times \calP$:
\[\Eval_x^\dag \circ \CopyOutput \circ \Eval_x\]
where we define $\Eval_x$ to be the operation that $\CNOT$s $x$ onto register $\calW_x$ and then applies $\Eval$ to $\calW_x \times \calW_y \times \calP$.

Second, for each $x \in \calX$ and each $y$ in the support of $P(x)$, let $\ket{\psi_{x,y}}$ be the component of $\ket{\psi, \emptyset, \emptyset}$ that produces output $y$ when the program is queried on $x$. Formally, for each $x \in \calX$ and each $y$ in the support of $P(x)$,
\[\text{let } \ket{\psi_{x,y}} = \frac{1}{\alpha_{x,y}} \cdot \Eval_x^\dag \cdot (\mathbbm{I}_{\calP \times \calW_x} \otimes \ketbra{y}_{\calW_y}) \cdot \Eval_x \cdot (\ket{\psi, \emptyset, \emptyset}_{\calP \times \calW_x \times \calW_y})\]
This state is well-defined because for all $y$ in the support of $P(x)$, $\alpha_{x,y} \neq 0$.

The state $\ket{\psi_{x,y}}$ defined above has unit norm. We know that when $P$ is evaluated on $x$, the value written to $\calW_y$ will be $y$ with probability $\alpha_{x,y}^2 = \Pr[P(x) \to y]$. That means 
\begin{align*}
    \left\|(\mathbbm{I}_{\calP \times \calW_x} \otimes \ketbra{y}_{\calW_y}) \cdot \Eval_x \cdot \ket{\psi, \emptyset, \emptyset}\right\| &= \alpha_{x,y}
\end{align*}
Next, since $\Eval_x$ is
a unitary,
\begin{align*}
    \left\|\ket{\psi_{x,y}}\right\| &= \left\|\frac{1}{\alpha_{x,y}} \cdot \Eval_x^\dag \cdot (\mathbbm{I}_{\calP \times \calW_x} \otimes \ketbra{y}_{\calW_y}) \cdot \Eval_x \cdot \ket{\psi, \emptyset, \emptyset}\right\|\\
    &= \frac{1}{\alpha_{x,y}} \cdot \left\|(\mathbbm{I}_{\calP \times \calW_x} \otimes \ketbra{y}_{\calW_y}) \cdot \Eval_x \cdot \ket{\psi, \emptyset, \emptyset}\right\|\\
    &= \frac{1}{\alpha_{x,y}} \cdot \alpha_{x,y} = 1
\end{align*}

Third, let us define $\ket{\psi_{x,y}}$ for any $y$ not in the support of $P(x)$. The definition above will not suffice here because $\alpha_{x,y}=0$.
\[\text{Let } \ket{\psi_{x,y}} = \Eval_x^\dag \cdot (\ket{\psi}_\calP \ket{\emptyset}_{\calW_x}\ket{y}_{\calW_y})\]
This state is well-defined and has unit norm because $\Eval_x^\dag$ is a unitary.

Now we will prove some properties of $\{\ket{\psi_{x,y}}\}_{y \in \calY}$.

Fourth, for any $x \in \calX$, the states $\{\ket{\psi_{x,y}}\}_{y \in \calY}$ are orthogonal. This is because for any $y \in \calY$, $\Eval_x \ket{\psi_{x,y}}$ is in the span of $\mathbb{I}_{\calP \times \calW_x} \otimes \ketbra{y}_{\calW_y}$. For any two distinct values $y, y' \in \calY$, $\mathbb{I}_{\calP \times \calW_x} \otimes \ketbra{y}_{\calW_y}$ and $\mathbb{I}_{\calP \times \calW_x} \otimes \ketbra{y'}_{\calW_y}$ project onto orthogonal subspaces, so $\Eval_x \ket{\psi_{x,y}}$ and $\Eval_x\ket{\psi_{x,y'}}$ are orthogonal. Finally, $\Eval_x$ is a unitary, so it preserves inner products, and $\ket{\psi_{x,y}}$ and $\ket{\psi_{x,y'}}$ are orthogonal as well.

% for any distinct values $y, y' \in \calY$,
% \begin{align*}
%     \braket{\psi_{x,y}|\psi_{x,y'}} &= \frac{1}{\alpha_{x,y} \cdot \alpha_{x,y'}} \cdot \left[\bra{\psi, \emptyset, \emptyset} \cdot \Eval_x^\dag \cdot (\mathbbm{I}_{\calP \times \calW_x} \otimes \ketbra{y}_{\calW_y}) \cdot \Eval_x\right]\\
%     &\quad\quad\quad\cdot \left[ \Eval_x^\dag \cdot (\mathbbm{I}_{\calP \times \calW_x} \otimes \ketbra{y'}_{\calW_y}) \cdot \Eval_x \cdot \ket{\psi, \emptyset, \emptyset}\right]\\
%     &= \frac{1}{\alpha_{x,y} \cdot \alpha_{x,y'}} \cdot \bra{\psi, \emptyset, \emptyset} \cdot \Eval_x^\dag \cdot (\mathbbm{I}_{\calP \times \calW_x} \otimes \ketbra{y}_{\calW_y}) \cdot (\mathbbm{I}_{\calP \times \calW_x} \otimes \ketbra{y'}_{\calW_y}) \cdot \Eval_x \cdot \ket{\psi, \emptyset, \emptyset}\\
%     &= \frac{1}{\alpha_{x,y} \cdot \alpha_{x,y'}} \cdot \bra{\psi, \emptyset, \emptyset} \cdot \Eval_x^\dag \cdot (\mathbbm{I}_{\calP \times \calW_x} \otimes \ket{y}\braket{y|y'}\bra{y'}_{\calW_y}) \cdot \Eval_x \cdot \ket{\psi, \emptyset, \emptyset}\\
%     &= 0
% \end{align*}
% because $\braket{y|y'}=0$.

Fifth, $\ket{\psi, \emptyset, \emptyset}$ is in the span of $\{\ket{\psi_{x, y}}\}_{y \in \calY}$.
\begin{align*}
    \ket{\psi, \emptyset, \emptyset} &= \Eval_x^\dag \cdot \Eval_x \cdot \ket{\psi, \emptyset, \emptyset}\\
    &= \Eval_x^\dag \cdot (\mathbbm{I}_{\calP \times \calW_x} \otimes \mathbbm{I}_{\calW_y}) \cdot \Eval_x \cdot \ket{\psi, \emptyset, \emptyset}\\
    &= \Eval_x^\dag \cdot \left(\mathbbm{I}_{\calP \times \calW_x} \otimes \sum_{y \in \calY \cup \{\emptyset\}} \ketbra{y}\right) \cdot \Eval_x \cdot \ket{\psi, \emptyset, \emptyset}\\
    &= \sum_{y \in \calY \cup \{\emptyset\}} \Eval_x^\dag \cdot (\mathbbm{I}_{\calP \times \calW_x} \otimes \ketbra{y}) \cdot \Eval_x \cdot \ket{\psi, \emptyset, \emptyset}\\
    &= \sum_{y \in \calY} \alpha_{x, y} \ket{\psi_{x,y}}
\end{align*}
We used the fact that 
\[\mathbbm{I}_{\calP \times \calW_x} \otimes \left(\sum_{y \in \calY \cup \{\emptyset\}} \ketbra{y}\right) = \sum_{y \in \calY \cup \{\emptyset\}} \mathbbm{I}_{\calP \times \calW_x} \otimes \ketbra{y}\]
% because $\mathbbm{I}_{\calP \times \calW_x}$ and $\sum_{y \in \calY \cup \{\emptyset\}} \ketbra{y}$ are diagonal matrices.

We also used the fact that $(\mathbbm{I}_{\calP \times \calW_x} \otimes \ketbra{\emptyset}) \cdot \Eval_x \cdot \ket{\psi, \emptyset, \emptyset} = \mathbf{0}$ because $P(x)$ never outputs $\emptyset$. Likewise, for any $y$ not in the support of $P(x)$, $(\mathbbm{I}_{\calP \times \calW_x} \otimes \ketbra{y}) \cdot \Eval_x \cdot \ket{\psi, \emptyset, \emptyset} = \mathbf{0}$.

Sixth, let us show that for any $y$ in the support of $P(x)$,
\[\EvalCopy \ket{x,u}_{\calQ_x \times \calQ_u} \otimes \ket{\psi_{x,y}} = \ket{x, u + y}_{\calQ_x \times \calQ_u} \otimes \ket{\psi_{x,y}}\]

$\EvalCopy$ does not modify the computational basis value of $\calQ_x$ and acts as $\Eval_x^\dag \circ \CopyOutput \circ \Eval_x$ on the remaining registers. Let us apply $\Eval_x$:
\ifllncs
\begin{align*}
    \Eval_x (\ket{u}_{\calQ_u} \otimes \ket{\psi_{x,y}})
    &= \ket{u}_{\calQ_u} \otimes \Eval_x \cdot \Biggl(
    \frac{1}{\alpha_{x,y}} \cdot \Eval_x^\dag \cdot (\mathbbm{I}_{\calP \times \calW_x} \otimes \ketbra{y}_{\calW_y})\\
    &\quad\cdot \Eval_x \cdot \ket{\psi, \emptyset, \emptyset}
    \Biggr)\\
    &= \frac{1}{\alpha_{x,y}} \cdot \ket{u}_{\calQ_u} \otimes \Bigl(
    (\mathbbm{I}_{\calP \times \calW_x} \otimes \ketbra{y}_{\calW_y}) \cdot \Eval_x \cdot \ket{\psi, \emptyset, \emptyset}
    \Bigr)
\end{align*}
\else
\begin{align*}
    \Eval_x (\ket{u}_{\calQ_u} \otimes \ket{\psi_{x,y}}) &= \ket{u}_{\calQ_u} \otimes \Eval_x \cdot \left(\frac{1}{\alpha_{x,y}} \cdot \Eval_x^\dag \cdot (\mathbbm{I}_{\calP \times \calW_x} \otimes \ketbra{y}_{\calW_y}) \cdot \Eval_x \cdot \ket{\psi, \emptyset, \emptyset}\right)\\
    &= \frac{1}{\alpha_{x,y}} \cdot \ket{u}_{\calQ_u} \otimes \left((\mathbbm{I}_{\calP \times \calW_x} \otimes \ketbra{y}_{\calW_y}) \cdot \Eval_x \cdot \ket{\psi, \emptyset, \emptyset}\right)
\end{align*}
\fi
Next, if we apply $\CopyOutput$, this copies $y$ from $\calW_y$ to $\calQ_u$:
\ifllncs
\begin{align*}
    \CopyOutput \cdot \Eval_x \cdot \left(\ket{u} \otimes \ket{\psi_{x,y}}\right)
    &= \frac{1}{\alpha_{x,y}} \cdot \ket{u + y} \otimes \Bigl(
    (\mathbbm{I}_{\calP \times \calW_x} \otimes \ketbra{y}_{\calW_y})\\
    &\quad\cdot \Eval_x \cdot \ket{\psi, \emptyset, \emptyset}
    \Bigr)
\end{align*}
\else
\begin{align*}
    \CopyOutput \cdot \Eval_x \cdot \left(\ket{u} \otimes \ket{\psi_{x,y}}\right) &= \frac{1}{\alpha_{x,y}} \cdot \ket{u + y} \otimes \left((\mathbbm{I}_{\calP \times \calW_x} \otimes \ketbra{y}_{\calW_y}) \cdot \Eval_x \cdot \ket{\psi, \emptyset, \emptyset}\right)
\end{align*}
\fi
Finally, let us apply $\Eval_x^\dag$:
\ifllncs
\begin{align*}
    \Eval_x^\dag \cdot \CopyOutput \cdot \Eval_x \cdot \left(\ket{u} \otimes \ket{\psi_{x,y}}\right)
    &= \frac{1}{\alpha_{x,y}} \cdot \ket{u + y} \otimes \Bigl(
    \Eval_x^\dag \cdot (\mathbbm{I}_{\calP \times \calW_x} \otimes \ketbra{y}_{\calW_y})\\
    &\quad\cdot \Eval_x \cdot \ket{\psi, \emptyset, \emptyset}
    \Bigr)\\
    &= \ket{u + y} \otimes \ket{\psi_{x,y}}
\end{align*}
\else
\begin{align*}
    \Eval_x^\dag \cdot \CopyOutput \cdot \Eval_x \cdot \left(\ket{u} \otimes \ket{\psi_{x,y}}\right) &= \frac{1}{\alpha_{x,y}} \cdot \ket{u + y} \otimes \left(\Eval_x^\dag \cdot (\mathbbm{I}_{\calP \times \calW_x} \otimes \ketbra{y}_{\calW_y}) \cdot \Eval_x \cdot \ket{\psi, \emptyset, \emptyset}\right)\\
    &= \ket{u + y} \otimes \ket{\psi_{x,y}}
\end{align*}
\fi
In summary,
\[\EvalCopy \ket{x,u}_{\calQ_x \times \calQ_u} \otimes \ket{\psi_{x,y}} = \ket{x, u + y}_{\calQ_x \times \calQ_u} \otimes \ket{\psi_{x,y}}\]
\end{proof}

Let $S_x$ be the span of  $\{\ket{\psi_{x,y}}\}_{y \in \calY}$, and let $S_{\perp,x}$ be the subspace of $S_x$ that is orthogonal to $\ket{\psi,\emptyset,\emptyset}$. Since $\ket{\psi_{x,y}} \in S_{x}$, $\ket{\psi_{x,y}}$ can be written as a superposition of $\ket{\psi, \emptyset, \emptyset}$ and a state $\ket{\psi^{\perp}_{x,y}} \in S_{\perp,x}$.\\

For programs $P_1 = (\ket{\psi^1}, \Eval_1, R_1)$ and $P_2 = (\ket{\psi^2}, \Eval_2, R_2)$, let us use superscripts or subscripts for the variables defined above. For example, let $\ket{\psi^1_{x,y}}$ and $\ket{\psi^2_{x,y}}$ be the state $\ket{\psi_{x,y}}$ defined for programs $P_1$ and $P_2$, respectively. Let $\ReflectCopy_1, \EvalCopy_1, \ReflectCopy_2, \EvalCopy_2$ be defined analogously.

% On states of the form $\ket{x, u, b} \ket{D} \ket{\ldots}$ where $x' \in D$ for some $x' \neq x$, both $\Sim'(P_1)$ and $\Sim'(P_2)$ act as the identity. On the rest of the space, $\Sim'(P_i)$ acts as
%     \begin{align*}
%         \ReflectCopy_i \circ \Eval^\dagger_i \circ \CopyOutput \circ \Eval_i \circ \ReflectCopy_i,
%     \end{align*}
%     where $\ReflectCopy_i$ uses the $R_i$ reflection oracle.

For each $x$, let us define a unitary $U_x$ that maps
    \begin{align*}
        U_x : \ket{\psi^1_{x, y}} \mapsto \ket{\psi^2_{x, y}}
    \end{align*}
    for all $y \in \calY$. Such a unitary exists because $\{\ket{\psi^1_{x, y}}\}_{y}$ and $\{\ket{\psi^2_{x, y}}\}_{y}$ are each orthonormal. Next, let us define $\CU$ to be a controlled version of $U_x$ that reads the value of $x$ from $\calQ_x$ and applies $U_x$ to $\calP \times \calW_x \times \calW_y$.

    \begin{claim}\label{thm:CU}
        For any $x \in \calX$, $U_x \cdot \ket{\psi^1,\emptyset,\emptyset} = \ket{\psi^2,\emptyset, \emptyset}$.
    \end{claim}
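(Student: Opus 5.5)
Since $U_x$ is defined only through its action on the orthonormal bases $\{\ket{\psi^1_{x,y}}\}_y$ and $\{\ket{\psi^2_{x,y}}\}_y$, the natural route is to expand both initial states in these bases using \Cref{thm:EvalCopy} and use linearity. Applied to $P_1$ and $P_2$ separately, \Cref{thm:EvalCopy} gives
\[
\ket{\psi^1,\emptyset,\emptyset} = \sum_{y \in \calY} \alpha^1_{x,y} \ket{\psi^1_{x,y}}, \qquad \ket{\psi^2,\emptyset,\emptyset} = \sum_{y \in \calY} \alpha^2_{x,y} \ket{\psi^2_{x,y}},
\]
where $\alpha^i_{x,y} = \sqrt{\Pr[P_i(x) \to y]}$.

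The key step is that these coefficients coincide. By hypothesis, $P_1$ and $P_2$ implement the same functionality with $0$ error, so for every $x$ both $P_1(x)$ and $P_2(x)$ equal $f(x)$ exactly as distributions. Hence $\Pr[P_1(x)\to y] = \Pr[P_2(x) \to y]$ for all $y$, and taking nonnegative square roots gives $\alpha^1_{x,y} = \alpha^2_{x,y}$. In particular, the two programs have the same support sets, so the states $\ket{\psi^i_{x,y}}$ with nonzero coefficients are, for both programs, the ones defined by the projection formula rather than the fallback definition.

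Linearity then finishes the argument:
\[
U_x \ket{\psi^1,\emptyset,\emptyset} = \sum_y \alpha^1_{x,y}\, U_x\ket{\psi^1_{x,y}} = \sum_y \alpha^2_{x,y} \ket{\psi^2_{x,y}} = \ket{\psi^2,\emptyset,\emptyset}.
\]

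The main obstacle is a bookkeeping point rather than a conceptual one. The families $\{\ket{\psi^i_{x,y}}\}_{y\in\calY}$ must be complete orthonormal families indexed by the same set $\calY$, so that $U_x$ is a well-defined partial isometry extendable to a unitary. This holds because \Cref{thm:EvalCopy} defines a state for every $y \in \calY$, including off-support values, and proves orthonormality for all of them. The off-support states receive coefficient zero in both expansions, so their exact form does not affect the identity.
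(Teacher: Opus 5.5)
Your proof is correct and is the same argument the paper gives: expand $\ket{\psi^1,\emptyset,\emptyset}$ via \Cref{thm:EvalCopy}, use $\alpha^1_{x,y}=\alpha^2_{x,y}$ (which follows from exact equality of the output distributions), and conclude by linearity of $U_x$. One small wording point: the families $\{\ket{\psi^i_{x,y}}\}_{y}$ are orthonormal but not complete bases of the whole space, since they span only $S^i_x$; orthonormality plus a common index set is all that is needed for $U_x$ to exist, so your argument is unaffected.
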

    \begin{proof}
    First, for any $(x,y)$, $\alpha^1_{x,y} = \sqrt{\Pr[P_1(x)\to y]} = \sqrt{\Pr[P_2(x)\to y]} = \alpha^2_{x,y}$.

    Second,
    \begin{align*}
        U_x \cdot \ket{\psi^1,\emptyset, \emptyset} &= U_x \cdot \sum_{y} \alpha^1_{x,y} \ket{\psi^1_{x,y}}\\
        &= \sum_{y} \alpha^1_{x,y} \cdot U_x \cdot \ket{\psi^1_{x,y}}\\
        &= \sum_{y} \alpha^1_{x,y} \cdot \ket{\psi^2_{x,y}}\\
        &= \sum_{y} \alpha^2_{x,y} \ket{\psi^2_{x,y}}\\
        &= \ket{\psi^2,\emptyset, \emptyset} 
\end{align*}
\end{proof}

Now let us define some hybrids to transform $\Sim'(P_1)$ into $\Sim'(P_2)$.

\paragraph{Hybrid 1.} $\Sim'(P_1)$

\paragraph{Hybrid 2.} At a high level, in this hybrid we operate on initial state $\ket{\psi^1}$ but on each query it transforms the state into $\ket{\psi^2}$ and acts with the $P_2$ evaluation unitary. An important detail is that transforming from $\ket{\psi^1}$ to $\ket{\psi^2}$ can only be done with respect to some $x$ -- the $x$-value given as input to the query.
\begin{itemize}
    \item Initialize the program register with $\ket{\psi^1}$.
    \item On queries of the form $\ket{x, u, b} \ket{D}$ with $x' \in D$ such that $x' \neq x$, act as the identity and skip the following steps.
    \item \textcolor{red}{Apply $\CU$, which applies $U_x$ to $\calP \times \calW_x \times \calW_y$ controlled on the value $x$ written on $\calQ_x$.}
    \item Apply the unitary
    \begin{align*}
        \Bitflip \circ \ReflectCopy_{\textcolor{red}{2}} \circ \EvalCopy_{\textcolor{red}{2}} \circ \ReflectCopy_{\textcolor{red}{2}}
    \end{align*}
    \item \textcolor{red}{Apply $\CU^\dag$.}
\end{itemize}

\paragraph{Hybrid 3.} $\Sim'(P_2)$\\

Before showing the indistinguishability of these hybrids, we make the following claim. 

\begin{claim}\label{clm:invariant}
 At the end of any query in hybrids 1 and 2, the database and program registers maintain the following invariant: $D = \emptyset$ if and only if $\calP \times \calW_x \times \calW_y$ contain the initial program state $\ket{\psi^1, \emptyset, \emptyset}$, and $D = x$ if and only if $\calP \times \calW_x \times \calW_y$ are in some state $\ket{\phi} \in S^1_{\perp,x}$.
\end{claim}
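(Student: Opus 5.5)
I will prove \Cref{clm:invariant} by induction on the number of queries, treating both hybrids at once, since in each the state at the start of a query is a superposition over database values. The induction hypothesis is that the joint state of $(\calD,\calP\times\calW_x\times\calW_y)$, together with the adversary's registers, lies in the span of two kinds of vectors: $\ket{\emptyset}_\calD\otimes\ket{\psi^1,\emptyset,\emptyset}$, and $\ket{x}_\calD\otimes\ket{\phi}$ with $\ket{\phi}\in S^1_{\perp,x}$. The "if and only if" reading of the claim is exactly this span statement, since the two families are orthogonal. The base case is the initialization $\ket{\emptyset}\ket{\psi^1,\emptyset,\emptyset}$. Queries with $x'\notin\{x,\emptyset\}$ act as the identity and preserve the invariant trivially. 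So I only need to analyze a query with input $x$ on the two relevant components; linearity and control on $\calQ_x$ then handle superpositions.

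For hybrid 1, I use \Cref{thm:reflect-and-copy}: $\ReflectCopy_1$ is a CNOT from $\calQ_x$ into $\calD$, controlled on the internal registers being $\ket{\psi^1,\emptyset,\emptyset}$ (with $\calW_b=\ket0$). By \Cref{thm:EvalCopy}, $\EvalCopy_1$ preserves $S^1_x$; it acts diagonally on the basis $\ket{\psi^1_{x,y}}$, adding $y$ to $\calQ_u$. I track the two components separately.

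Component $\ket{\emptyset}\ket{\psi^1,\emptyset,\emptyset}$. The first $\ReflectCopy_1$ writes $x$ into $\calD$. I expand $\ket{\psi^1,\emptyset,\emptyset}=\sum_y\alpha_{x,y}\ket{\psi^1_{x,y}}$, and $\EvalCopy_1$ entangles it with $\calQ_u$. I then decompose the result as a part along $\ket{\psi^1,\emptyset,\emptyset}$ plus a part in $S^1_{\perp,x}$, which is possible because $\EvalCopy_1$ keeps the state inside $S^1_x$. The second $\ReflectCopy_1$ XORs $x$ into $\calD$ again on the first part, which returns $\calD$ to $\emptyset$, and leaves the second part with $D=x$.

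Component $\ket{x}\ket{\phi}$ with $\ket{\phi}\in S^1_{\perp,x}$. The first $\ReflectCopy_1$ does nothing, since $\phi\perp\psi^1$. The rest of the query proceeds exactly as in the previous case, so the outcome again lies in the allowed span.

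The step I expect to be the main obstacle is confirming that $\EvalCopy_1$ maps $S^1_x$ (tensored with $\calQ_u$) into itself, and that every component lands in the correct slot. In particular, the $D=x$ branch must be paired with the same $x$ that sits in $\calQ_x$. That pairing holds because all the operations are controlled on the computational-basis value of $\calQ_x$, and \Cref{thm:EvalCopy} gives the invariant subspace $S^1_x$ directly.

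For hybrid 2, I use \Cref{thm:CU}: $U_x$ maps $\ket{\psi^1,\emptyset,\emptyset}$ to $\ket{\psi^2,\emptyset,\emptyset}$. Since $U_x$ maps $S^1_x$ unitarily onto $S^2_x$, it also maps $S^1_{\perp,x}$ onto $S^2_{\perp,x}$. Conjugating by $\CU$ therefore turns the hybrid-2 query into the hybrid-1 analysis with the roles of $P_1$ and $P_2$ swapped. The $P_2$-side argument gives the invariant in terms of $\psi^2$ and $S^2_{\perp,x}$, and applying $\CU^\dag$ maps it back to the $\psi^1$ form. One caveat: on the $D=x$ branch, the $x$ used by $\CU$ must match the stored one. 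That holds because queries with a mismatched $x$ are skipped.
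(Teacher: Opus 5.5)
Your proposal is correct and follows the paper's proof: induction over queries on the span invariant. You track the $\ket{\emptyset}\ket{\psi^1,\emptyset,\emptyset}$ and $\ket{x}\ket{\phi}$ components through $\ReflectCopy_1$ (via \Cref{thm:reflect-and-copy}) and $\EvalCopy_1$ (via the $S^1_x$-invariance from \Cref{thm:EvalCopy}), with mismatched queries acting as the identity. For hybrid 2 the paper only says ``a very similar argument'' works. Your conjugation by $\CU$ makes that explicit: \Cref{thm:CU} shows that $U_x$ carries $S^1_{\perp,x}$ onto $S^2_{\perp,x}$, so you can run the $P_2$ version of the hybrid-1 analysis and map back with $\CU^\dagger$.
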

 
\begin{proof}
    We first prove this invariant for hybrid 1 using induction. For the base case: the oracle is initialized in the state $\ket{\emptyset}_{\calD} \ket{\psi^1, \emptyset, \emptyset}_{\calP \times \calW_x \times \calW_y}$, so the invariant is satisfied to begin with. For the inductive case, let us assume that at the start of a query, the invariant is satisfied. We will show that the invariant is still satisfied at the end of the query. 

    First, if $\calQ_x = \ket{x}$, $\calD = \ket{x'}$, and $x' \notin \{\emptyset, x\}$, then this query aborts, and the state on $\calD \times \calP \times \calW_x \times \calW_y$ still satisfy the invariant.
    
    Next, let us consider the case where $\calQ_x = \ket{x}$, and $\calD$ contains either $x$ or $\emptyset$. Let us also assume that the state satisfies the invariant at the start of the query. First, by \cref{thm:reflect-and-copy}, $\ReflectCopy$ acts as follows.
\begin{align*}
    \ReflectCopy :& \ket{\emptyset} \ket{\psi^1, \emptyset, \emptyset} \mapsto \ket{x} \ket{\psi^1, \emptyset, \emptyset}\\
    &\ket{x} \ket{\phi} \mapsto \ket{x} \ket{\phi}
\end{align*}
where $\ket{\phi} \in S^1_{\perp,x}$. Therefore, at the end of $\ReflectCopy$, $\calD = \ket{x}$, and $\calP \times \calW_x \times \calW_y$ are in $S^1_x$.

Second, $\EvalCopy$ is applied. When $\EvalCopy$ is applied to $\ket{x}_\calD \ket{\psi^1_{x,y}}_{\calP \times \calW_x \times \calW_y}$, it does not change the state on the $\calD \times \calP \times \calW_x \times \calW_y$ registers (by \cref{thm:EvalCopy}). This implies that $\EvalCopy$ maps states in $S^1_x$ to states in $S^1_x$. Additionally, $\EvalCopy$ does not touch $\calD$. Therefore, at the end of $\EvalCopy$, $\calD$ still contains $x$, and $\calP \times \calW_x \times \calW_y$ still lies in $S^1_x$.

Third, we apply $\ReflectCopy$ again. At the start of this operation, $\calD \times \calP \times \calW_x \times \calW_y$ is in the span of $\ket{x}\ket{\psi^1, \emptyset, \emptyset}$ and $\ket{x}\ket{\phi}$ for states $\ket{\phi} \in S^1_{\perp, x}$. $\ReflectCopy$ acts on these basis states as follows:

\begin{align*}
    \ReflectCopy :& \ket{x} \ket{\psi^1, \emptyset, \emptyset} \mapsto \ket{\emptyset} \ket{\psi^1, \emptyset, \emptyset}\\
    &\ket{x} \ket{\phi} \mapsto \ket{x} \ket{\phi}
\end{align*}
Therefore, the state at the end of this operation is in the span of states that satisfy the invariant.

Fourth, we apply $\Bitflip$. This operation acts only on $\calQ_b$, so the state at the end of this operation will still satisfy the invariant.

 A very similar argument also shows that this invariant holds for hybrid 2.
\end{proof}

\begin{claim}\label{thm:input-to-eval-copy}
    At the start of any invocation of $\EvalCopy_1$ in hybrid 1, the state of $\calP \times \calW_x \times \calW_y$ is in $S^1_x$, where $x$ is the value written on $\calQ_x$.
\end{claim}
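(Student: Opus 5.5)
We need to prove the claim that, at the start of any invocation of $\EvalCopy_1$ in hybrid 1, the state of $\calP \times \calW_x \times \calW_y$ lies in $S^1_x$, where $x$ is the value on $\calQ_x$.

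The plan is to use the invariant of \Cref{clm:invariant} together with the characterization of $\ReflectCopy$ in \Cref{thm:reflect-and-copy}. Within a single query, $\EvalCopy_1$ is invoked only after the first check and the first $\ReflectCopy_1$. So it suffices to track the state from the start of a query, where the invariant holds, through these two operations. By linearity, I will work branch by branch on computational-basis values of $\calQ_x$ and $\calD$. The adversary's registers other than $\calQ_x$ are untouched by these steps, so they can be carried along as an arbitrary entangled purification.

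Fix a branch with $\calQ_x = \ket{x}$ and $\calD = \ket{x'}$.

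\emph{Case $x' \notin \{x,\emptyset\}$.} The query aborts before $\EvalCopy_1$ is reached, so this branch contributes nothing to the state at the start of $\EvalCopy_1$.

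\emph{Case $x' = \emptyset$.} By \Cref{clm:invariant}, $\calP\times\calW_x\times\calW_y$ is in $\ket{\psi^1,\emptyset,\emptyset}$. By \Cref{thm:reflect-and-copy}, $\ReflectCopy_1$ only writes $x$ into $\calD$ and leaves this register unchanged. The Fifth step of \Cref{thm:EvalCopy} shows that $\ket{\psi^1,\emptyset,\emptyset}=\sum_y \alpha^1_{x,y}\ket{\psi^1_{x,y}}$, so this state lies in $S^1_x$ for every $x$.

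\emph{Case $x' = x$.} The invariant says the state is some $\ket{\phi}$ in $S^1_{\perp,x'} = S^1_{\perp,x}$, which is a subspace of $S^1_x$. Since $\ket\phi\perp\ket{\psi^1,\emptyset,\emptyset}$, \Cref{thm:reflect-and-copy} shows $\ReflectCopy_1$ acts as the identity here.

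Combining the cases, every surviving branch has $\calP\times\calW_x\times\calW_y$ in $S^1_x$, with $x$ the value on $\calQ_x$. Because $S^1_x$ depends on $x$, the overall state is a superposition of the form $\sum_x \ket{x}_{\calQ_x}\otimes(\text{vector in } S^1_x\otimes\text{other registers})$. This is exactly the sense in which the claim should be read: the relevant statement is the projector $\sum_x \ketbra{x}_{\calQ_x}\otimes\Pi_{S^1_x}$ fixing the state.

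The main obstacle is one subtlety in applying \Cref{clm:invariant}. The invariant is stated for superpositions: the joint state on $\calD\times\calP\times\calW_x\times\calW_y$ lies in the span of $\ket{\emptyset}\ket{\psi^1,\emptyset,\emptyset}$ and $\ket{x'}\otimes S^1_{\perp,x'}$, over all $x'$. The invariant is also indexed by the recorded value $x'$, not by the current query $x$. So I must check that the relevant subspace is $S^1_{\perp,x'}$, and that the only non-aborting branches with $x'\neq\emptyset$ have $x'=x$. This is precisely what the first abort check guarantees. I would make this rigorous by writing the projector onto the invariant subspace, conditioning on $\calQ_x=x$, and observing that the abort check combined with $\ReflectCopy_1$ maps that subspace into $\ket{x}_{\calD}\otimes S^1_x$.
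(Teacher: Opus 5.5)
Your proposal is correct and follows essentially the same route as the paper. You invoke the invariant of \Cref{clm:invariant}, discard branches with $x'\notin\{x,\emptyset\}$ via the abort check, and use \Cref{thm:reflect-and-copy} to see that the first $\ReflectCopy_1$ either only rewrites $\calD$ (on $\ket{\emptyset}\ket{\psi^1,\emptyset,\emptyset}$) or acts trivially (on $\ket{x}\otimes S^1_{\perp,x}$). Your explicit observation that $\ket{\psi^1,\emptyset,\emptyset}\in S^1_x$ for every $x$ (from the fifth step of \Cref{thm:EvalCopy}), and your projector reading of the claim, make precise what the paper leaves implicit.
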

\begin{proof}
    \Cref{clm:invariant} says that at the start of any query, the state of $\calD \times \calP \times \calW_x \times \calW_y$ is in $\ket{\emptyset}\ket{\psi^1, \emptyset, \emptyset}$ or $\ket{x'}\ket{\phi}$ for some $\ket{\phi} \in S^1_{\perp,x'}$. In the second case, we can assume that $x'=x$ because otherwise, the query will immediately abort, and $\EvalCopy_1$ will not be executed.

    Next, the query to $\Sim'(P)$ applies $\ReflectCopy_1$, which acts as follows (\cref{thm:reflect-and-copy}):
\begin{align*}
    \ReflectCopy_1 :& \ket{\emptyset} \ket{\psi^1, \emptyset, \emptyset} \mapsto \ket{x} \ket{\psi^1, \emptyset, \emptyset}\\
    &\ket{x} \ket{\phi} \mapsto \ket{x} \ket{\phi}
\end{align*}
for any $\ket{\phi} \in S^1_{\perp,x}$. In either case, $\ReflectCopy_1$ maps states in $S^1_x$ to states in $S^1_x$.

Next, the only time that $\EvalCopy_1$ is invoked is right after the first invocation of $\ReflectCopy$. We've shown that at the start of $\EvalCopy_1$, the state of $\calP \times \calW_x \times \calW_y$ is in $S^1_x$.
\end{proof}

\begin{claim}
    Hybrids 1 and 2 are perfectly indistinguishable.
\end{claim}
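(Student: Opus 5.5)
Hybrid 2 is obtained from hybrid 1 by conjugating each query with $\CU$; I would show that on every state reachable in hybrid 1, this conjugated query unitary acts exactly like the query unitary of $\Sim'(P_1)$. The view is then identical query by query, and the claim follows by induction on the number of queries. Both hybrids act as the identity on the abort branch ($x'\notin\{x,\emptyset\}$), so it suffices to fix a query value $x$ on $\calQ_x$ and restrict to database contents $\emptyset$ or $x$. Because $\CU$ is controlled on the computational-basis value of $\calQ_x$, and no step modifies that value, I may treat $U_x$ as a fixed unitary on $\calP\times\calW_x\times\calW_y$ throughout the query. By \Cref{clm:invariant} and \Cref{thm:input-to-eval-copy}, the relevant internal state at the start of the query lies in $\ket{\emptyset}\ket{\psi^1,\emptyset,\emptyset}$ or in $\ket{x}\otimes S^1_{\perp,x}$, so all intermediate states live in $\calD\otimes S^1_x$.

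Next I would verify, component by component on this subspace, that $U_x^\dagger\,\ReflectCopy_2\, U_x=\ReflectCopy_1$ and $U_x^\dagger\,\EvalCopy_2\,U_x=\EvalCopy_1$ (with $\Bitflip$ commuting trivially). For $\ReflectCopy$, \Cref{thm:reflect-and-copy} says $\ReflectCopy_i$ is a $\CNOT$ from $\calQ_x$ to $\calD$ controlled on the internal state being $\ket{\psi^i,\emptyset,\emptyset}$ (with $\calW_b=\ket0$). \Cref{thm:CU} gives $U_x\ket{\psi^1,\emptyset,\emptyset}=\ket{\psi^2,\emptyset,\emptyset}$, and since $U_x$ is unitary it maps $S^1_{\perp,x}$ onto $S^2_{\perp,x}$. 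The controlling projectors therefore correspond under conjugation, and the identity holds. For $\EvalCopy$, \Cref{thm:EvalCopy} says $\EvalCopy_i$ acts on the basis $\ket{\psi^i_{x,y}}$ by $\ket{u}\mapsto\ket{u+y}$. Since $U_x\ket{\psi^1_{x,y}}=\ket{\psi^2_{x,y}}$, the conjugation gives the identity on all of $S^1_x$.

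The step I expect to be delicate is basis vectors $\ket{\psi_{x,y}}$ with $y$ outside the support of $P(x)$. There \Cref{thm:EvalCopy} only asserts the action for $y$ in the support, while $U_x$ is still defined on those vectors. The fix is that the reachable state never has weight on such vectors. In $\ket{\psi^1,\emptyset,\emptyset}$ their coefficient is $\alpha_{x,y}=0$. The set of states in the span of the in-support vectors is preserved by $\EvalCopy_1$, by \Cref{thm:EvalCopy}, and by $\ReflectCopy_1$, which only acts nontrivially along $\ket{\psi^1,\emptyset,\emptyset}$. The supports coincide for $P_1$ and $P_2$ because the two programs implement the same functionality. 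Hence I would strengthen the invariant of \Cref{clm:invariant} to say that the state lies in $\mathrm{span}\{\ket{\psi^1_{x,y}} : y\in\mathrm{supp}\,P(x)\}$.

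A second subtlety is a query whose $\calQ_x$ is a superposition over several values of $x$ while the database is $\emptyset$. This is handled by linearity, since every operator is block-diagonal in the value of $\calQ_x$. Combining the above, each query unitary of hybrid 2 agrees with that of hybrid 1 on the reachable subspace, so the two oracles are perfectly indistinguishable.
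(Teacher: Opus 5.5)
Your proposal is correct and follows the paper's proof: conjugate each query by $\CU$, get $\CU^\dagger\circ\ReflectCopy_2\circ\CU=\ReflectCopy_1$ from \Cref{thm:reflect-and-copy} and \Cref{thm:CU}, show that $\EvalCopy_1$ and $\CU^\dagger\circ\EvalCopy_2\circ\CU$ agree on $S^1_x$ via \Cref{thm:EvalCopy}, and use \Cref{clm:invariant} and \Cref{thm:input-to-eval-copy} to restrict to reachable states before cancelling adjacent $\CU,\CU^\dagger$ pairs. Your extra care about out-of-support $y$ addresses a point the paper glosses over, since its observation quantifies over all $y$ while \Cref{thm:EvalCopy} is stated only for $y$ in the support; your strengthened invariant fixes this, and so would the one-line check from the definition $\ket{\psi_{x,y}}=\Eval_x^\dagger\ket{\psi,\emptyset,y}$ that the $\EvalCopy$ formula holds for those $y$ as well.
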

\begin{proof}
First, we make the following observations:
\begin{enumerate}
    \item $\CU^\dag \circ \Bitflip \circ \CU = \Bitflip$. This is because $\Bitflip$ acts only on $\calQ_b$, and $\CU$ acts only on $\calQ_x \times \calP \times \calW_x \times \calW_y$. Since they act on disjoint registers, they commute, so $\CU^\dag \circ \Bitflip \circ \CU = \CU^\dag \circ \CU \circ \Bitflip = \Bitflip$.
    \item $\CU^\dagger \circ \ReflectCopy_2 \circ \CU = \ReflectCopy_1$. 
    
    First, $\ReflectCopy_2$ is equivalent to the operation that applies $\CNOT$ from $\calQ_x$ to $\calD$ controlled on $\calP \times \calW_x \times \calW_y \times \calW_b$ being in the state $\ket{\psi^2, \emptyset, \emptyset, 0}$ (\cref{thm:reflect-and-copy}). Second, $\CU$ acts as follows (\cref{thm:CU}):
    \[\ket{x}_{\calQ_x}\ket{\psi^1, \emptyset, \emptyset, 0} \to \ket{x}_{\calQ_x}\ket{\psi^2, \emptyset, \emptyset, 0}\]
    That implies that $\CU^\dag \circ \ReflectCopy_2 \circ \CU$ applies $\CNOT$ from $\calQ_x$ to $\calD$ controlled on the state $\ket{\psi^1, \emptyset, \emptyset, 0}$. This is the same as $\ReflectCopy_1$.
    \item \label{item:obs2}For every $x, y$,
    \begin{align*}
        \EvalCopy_1 (\ket{x,u} \ket{\psi^1_{x,y}}) &= \ket{x, u + y} \ket{\psi^1_{x,y}}\\
        &= \CU^\dagger \circ \EvalCopy_2 (\ket{x, u} \ket{\psi^2_{x,y}})\\
        &= \CU^\dagger \circ \EvalCopy_2 \circ \CU (\ket{x, u} \ket{\psi^1_{x,y}})
    \end{align*}
    \item For every $x, y$, and every state $\ket{\phi} \in S^1_x$, we have that
    \begin{align*}
        \EvalCopy_1 (\ket{x, u} \ket{\phi}) = \CU^\dagger \circ \EvalCopy_2 \circ \CU (\ket{x, u} \ket{\phi})
    \end{align*}
    This follows from the previous observation (\Cref{item:obs2}).
\end{enumerate}

By \cref{thm:input-to-eval-copy}, in hybrid 1, $\EvalCopy_1$ is only invoked on states in $S^1_x$, and on these states, $\EvalCopy_1$ acts the same as $\CU^\dagger \circ \EvalCopy_2 \circ \CU$. Therefore, in hybrid 1, we can replace $\EvalCopy_1$ with $\CU^\dagger \circ \EvalCopy_2 \circ \CU$, and the change will be perfectly indistinguishable.

Let us put everything together. We can replace
    \[\Bitflip \circ \ReflectCopy_1 \circ \EvalCopy_1 \circ \ReflectCopy_1\]
from hybrid 1 with
\ifllncs
\begin{align*}
    &\textcolor{red}{\CU^\dag} \circ \Bitflip \circ \textcolor{red}{\CU} \circ \textcolor{red}{\CU^\dag} \circ \ReflectCopy_{\textcolor{red}{2}} \circ \textcolor{red}{\CU}\\
    &\quad\circ \textcolor{red}{\CU^\dag} \circ \EvalCopy_{\textcolor{red}{2}} \circ \textcolor{red}{\CU} \circ \textcolor{red}{\CU^\dag} \circ \ReflectCopy_{\textcolor{red}{2}} \circ \textcolor{red}{\CU}
\end{align*}
\else
\[\textcolor{red}{\CU^\dag} \circ \Bitflip \circ \textcolor{red}{\CU} \circ \textcolor{red}{\CU^\dag} \circ \ReflectCopy_{\textcolor{red}{2}} \circ \textcolor{red}{\CU} \circ \textcolor{red}{\CU^\dag} \circ \EvalCopy_{\textcolor{red}{2}} \circ \textcolor{red}{\CU} \circ \textcolor{red}{\CU^\dag} \circ \ReflectCopy_{\textcolor{red}{2}} \circ \textcolor{red}{\CU}\]
\fi
Then when we cancel out adjacent applications of $\CU$ and $\CU^\dag$, the operation becomes:
\ifllncs
\begin{align*}
    &\textcolor{red}{\CU^\dag} \circ \Bitflip \circ \ReflectCopy_{\textcolor{red}{2}} \circ \EvalCopy_{\textcolor{red}{2}}\\
    &\quad\circ \ReflectCopy_{\textcolor{red}{2}} \circ \textcolor{red}{\CU}
\end{align*}
\else
\[\textcolor{red}{\CU^\dag} \circ \Bitflip \circ \ReflectCopy_{\textcolor{red}{2}} \circ \EvalCopy_{\textcolor{red}{2}} \circ \ReflectCopy_{\textcolor{red}{2}} \circ \textcolor{red}{\CU}\]
\fi
This is the sequence of operations found in hybrid 2. Therefore, hybrids 1 and 2 are perfectly indistinguishable.

% We use the invariant in \Cref{clm:invariant}: At the end of any query in hybrids 1 and 2: either $D = \emptyset$ and $\calP \times \calW_x \times \calW_y = \ket{\psi^1, \emptyset, \emptyset}$, or $D = x$ and $\calP \times \calW_x \times \calW_y$ is in some state $\ket{\phi} \in S^1_{\perp, x}$.

% First, if $D = \emptyset$, then the observations above imply that a query in hybrid 1 acts the same as a query in hybrid 2.

% Second, if $D = x$ and $\calQ_x$ contains $x' \neq x$, then both hybrids abort and act as the identity. 

% Third, if $D = x$ and $\calQ_x$ contains $x$, then the query in the two hybrids acts the same because the program register is in the span of the $S^1_{\perp,x}$.
% \bhaskar{This only seems to need the fact that if and when $\EvalCopy_1$ is executed, then the state on $\calD \times \calP \times \calW_x \times \calW_y$ is in the span of $\{\ket{x}\ket{\psi^1_{x,y}}\}_{x \in \calX, y \in \calY}$.}
\end{proof}

\begin{claim}
    Hybrids 2 and 3 are perfectly indistinguishable. 
\end{claim}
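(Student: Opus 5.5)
We need to show Hybrid 2 (the $\CU$-conjugated $\Sim'(P_2)$ operation, started from $\ket{\psi^1,\emptyset,\emptyset}$) is perfectly indistinguishable from Hybrid 3 ($\Sim'(P_2)$ started from $\ket{\psi^2,\emptyset,\emptyset}$). The plan mirrors the swap-invariant argument of \Cref{thm:Sim-prime-equivalent-to-SEQ-oracle}. There, two oracles differed only by an internal unitary on hidden registers. Here we track an invariant relating the joint state in Hybrid 2 to that in Hybrid 3.

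\textbf{The invariant.} At the end of every query, the global state of Hybrid 3 equals the global state of Hybrid 2 after applying a correction $T$ that acts only on the hidden registers $\calD\times\calP\times\calW_x\times\calW_y$. Concretely, $T$ is defined on the two invariant sectors of \Cref{clm:invariant}:
\begin{itemize}
\item on $\ket{\emptyset}_\calD\ket{\psi^1,\emptyset,\emptyset}$ it maps to $\ket{\emptyset}_\calD\ket{\psi^2,\emptyset,\emptyset}$;
\item on $\ket{x}_\calD\otimes S^1_{\perp,x}$ it applies $U_x$, landing in $\ket{x}_\calD\otimes S^2_{\perp,x}$.
\end{itemize}
Since $U_x$ maps $S^1_x$ onto $S^2_x$ and fixes the correspondence $\ket{\psi^1,\emptyset,\emptyset}\mapsto\ket{\psi^2,\emptyset,\emptyset}$ by \Cref{thm:CU}, it maps $S^1_{\perp,x}$ isometrically onto $S^2_{\perp,x}$. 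The sectors for different $x$ and for $\emptyset$ are orthogonal because they are labeled by $\calD$. Hence $T$ is a partial isometry on the reachable subspace, and we extend it to a unitary arbitrarily.

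\textbf{Base case.} The initial states match under $T$ by the first bullet.

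\textbf{Inductive step.} Consider a query with $\calQ_x=x$. If $\calD$ holds some $x'\notin\{x,\emptyset\}$, both hybrids act as the identity. Because $T$ only touches hidden registers and preserves the value of $\calD$, the invariant is unaffected.

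Otherwise $\calD\in\{\emptyset,x\}$. On this part of the space, Hybrid 2 applies $\CU^\dagger\,G\,\CU$ and Hybrid 3 applies $G$, where $G=\Bitflip\circ\ReflectCopy_2\circ\EvalCopy_2\circ\ReflectCopy_2$. The key observation is that on the reachable states of Hybrid 2 with $\calD\in\{\emptyset,x\}$, the controlled unitary $\CU$ (with control $x$) coincides with $T$. This follows directly from the definition of $T$ together with \Cref{thm:CU}.

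So before the query, Hybrid 3's state is $T\ket{\Psi}=\CU\ket{\Psi}$. After the query:
\begin{itemize}
\item Hybrid 3 holds $G\,\CU\ket{\Psi}$;
\item Hybrid 2 holds $\CU^\dagger G\,\CU\ket{\Psi}$.
\end{itemize}
We therefore need $T\,\CU^\dagger G\,\CU\ket{\Psi}=G\,\CU\ket{\Psi}$. By \Cref{clm:invariant}, Hybrid 2's post-query state is reachable. It has $\calD\in\{\emptyset,x\}$ with the same control value $x$, since $\calQ_x$ is never modified. On such states $T=\CU$, so $T\,\CU^\dagger$ acts as the identity and the required equality holds.

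\textbf{Conclusion.} By induction, the adversary's registers $(\calA,\calQ)$ agree in the two hybrids up to a unitary on the hidden registers. Tracing those registers out gives identical views for any number of queries.

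\textbf{Main obstacle.} The delicate step is justifying that $\CU$ agrees with $T$ both before and after the query. This requires knowing that Hybrid 2's state stays in the span of the two sectors. For the post-query state, it must lie in $\ket{\emptyset}\ket{\psi^1,\emptyset,\emptyset}$ or $\ket{x}\otimes S^1_{\perp,x}$. That is exactly the Hybrid 2 half of \Cref{clm:invariant}, which the paper asserts by a "similar argument." I would make it explicit by conjugating: the invariant for $\Sim'(P_2)$ in $P_2$-language pulls back under $\CU$ to the $P_1$-language invariant.
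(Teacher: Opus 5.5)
Your proof is correct and is essentially the paper's argument. Both rest on the invariant of \Cref{clm:invariant} and on the fact that, on reachable branches, $\CU$ acts independently of its control on the fresh sector, and with control equal to the value recorded in $\calD$ on the other sector; this is why consecutive $\CU^\dagger,\CU$ pairs cancel.

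Your explicit intertwiner $T$ is effectively $\CU$ controlled on $\calD$ rather than $\calQ_x$. It makes the paper's informal cancellation argument rigorous, including absorbing the final $\CU^\dagger$ into a hidden-register unitary, which the paper leaves implicit. Your conjugation argument for the Hybrid 2 half of the invariant is also sound.
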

\begin{proof}
    In hybrid 2, the first application of $\CU$ during the first query to the simulator converts $\ket{\psi^1, \emptyset, \emptyset}$ into $\ket{\psi^2, \emptyset, \emptyset}$, which is consistent with the initial state of $\ket{\psi^2, \emptyset, \emptyset}$ in hybrid 3. We would now like to argue that the applications of $\CU$ and $\CU^\dagger$ cancel each other, but this is not true in general: in between queries, the adversary can act on the input query register, so that we are effectively applying a $U_{x'} \circ U^\dagger_x$ operation on the internal state of the oracle, which is not the identity operation in general.

    We make use of the invariant from \Cref{clm:invariant}: In every branch of the superposition of hybrid 2, after the end of a query, either $D = \emptyset$ and $\calP \times \calW_x \times \calW_y = \ket{\psi^1, \emptyset, \emptyset}$, or $D$ contains $x$, and $\calP \times \calW_x \times \calW_y$ contains $\ket{\phi} \in S^1_{\bot,x}$.

    In the branch with $\ket{\psi^1, \emptyset, \emptyset}$, $\CU$ acts the same no matter what the control register is, and always transforms $\ket{\psi^1, \emptyset, \emptyset}$ to $\ket{\psi^2, \emptyset, \emptyset}$, In this branch, the applications of $\CU$ and $\CU^\dagger$ cancel each other. In the branch with $\ket{\phi} \ket{x}$, where $\ket{\phi} \in S^1_{\bot,x}$, if $x'\neq x$, the next query acts as the identity on this branch, so we can ignore it.
\end{proof}
This shows that $\Sim'(P_1)$ and $\Sim'(P_2)$ are perfectly indistinguishable and completes the proof of \Cref{lem:testable-bpotp-sim-p1-p2-perfect}.
\end{proof}

\begin{lemma}[Simulation of the CSEQ oracle]\label{thm:simulating-CSEQ}
    Given any (possibly oracle-aided) testable quantum program $P$ that implements $f$ with $0$ error, an oracle for $\Sim'(P)$ (\cref{def:sim-for-CSEQ}) is perfectly indistinguishable from $O_f^\CSEQ$ to any adversary making an unbounded number of quantum queries.
\end{lemma}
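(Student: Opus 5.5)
The plan is to chain the two lemmas proved just above, with the canonical testable program $P_f = (\ket{+}, \Eval_f, R_+)$ as an intermediate. Let $P$ be any testable quantum program that implements $f$ with $0$ error, possibly oracle-aided. Then the proof proceeds in two steps.

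First, I would observe that $P_f$ is itself a testable quantum program implementing $f$ with $0$ error. Indeed, measuring $\Eval_f$ applied to $\ket{x,0}\otimes\ket{+}$ yields $f(x;r)$ for uniform $r$, and $R_+$ is the required reflection. So both $P$ and $P_f$ satisfy the hypotheses of \Cref{lem:testable-bpotp-sim-p1-p2-perfect}. That lemma gives that $\Sim'(P)$ and $\Sim'(P_f)$ are perfectly indistinguishable to any adversary making an unbounded number of quantum queries.

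Second, \Cref{lem:bpqotp-seq-sim-prime} gives that $\Sim'(P_f)$ and $O^\CSEQ_f$ are perfectly indistinguishable under unbounded queries. Perfect indistinguishability is transitive, because it is equality of the adversary's reduced final state for every query strategy. Hence $\Sim'(P)$ is perfectly indistinguishable from $O^\CSEQ_f$.

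The only real obstacle is the parenthetical ``possibly oracle-aided''. \Cref{lem:testable-bpotp-sim-p1-p2-perfect} was argued only through the orthonormal decompositions $\{\ket{\psi_{x,y}}\}$ of \Cref{thm:EvalCopy}, the reflection behaviour of \Cref{thm:reflect-and-copy}, and the invariant of \Cref{clm:invariant}. I would check that these use $\Eval$ and $R$ purely as unitaries. If $P$ queries a (possibly stateful) oracle, I would absorb the oracle's internal register into the program register $\calP$. Following the footnote to \Cref{thm:sim-for-SEQ-oracle}, I would assume that $R$ reflects about the joint initial state of program and oracle. Then $\Eval$ is still a unitary on the enlarged register, and the arguments go through verbatim. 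This step needs to be stated explicitly but involves no new computation.
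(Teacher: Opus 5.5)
Your proposal is correct and matches the paper's proof essentially step for step. You chain \Cref{lem:bpqotp-seq-sim-prime} (relating $O^\CSEQ_f$ to $\Sim'(P_f)$) with \Cref{lem:testable-bpotp-sim-p1-p2-perfect} (relating $\Sim'(P_f)$ to $\Sim'(P)$), and you handle the oracle-aided case the same way the paper does: $\Sim'$ uses only black-box $\Eval$ and $R$, and any oracle's internal pure state is absorbed into the program state $\ket{\psi}$.
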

\begin{proof}
    First, $\Sim'$ only needs black-box access to the $\Eval$ and $R$ operations of $P$. It does not access $\ket{\psi}$ directly. $P$ may even be oracle-aided, and the oracle may maintain an internal pure state, which is considered part of $\ket{\psi}$.

    Second, $P_f$ (defined in \cref{lem:bpqotp-seq-sim-prime}) and $P$ are testable quantum programs. $P_f$ implements $f$ with $0$ error. \Cref{lem:bpqotp-seq-sim-prime} says that $\Sim'(P_f)$ and $O_f^\CSEQ$ are perfectly indistinguishable after an unbounded number of quantum queries.

    Third, since $P$ also implements $f$ with $0$ error, \cref{lem:testable-bpotp-sim-p1-p2-perfect} implies that oracles for $\Sim'(P_f)$ and $\Sim'(P)$ are perfectly indistinguishable after an unbounded number of quantum queries. Therefore, $\Sim'(P)$ and $O_f^\CSEQ$ are also perfectly indistinguishable after an unbounded number of quantum queries.
\end{proof}

\begin{remark}
    For classical functionalities, any testable OTP compiler that satisfies CSEQ security is best-possible among testable programs, in the sense of \cref{def:bp-testable-q}. This follows from \Cref{thm:simulating-CSEQ}.
    \end{remark}

\begin{theorem}\label{thm:cseq-seq-equivalence}
$ $
\begin{enumerate}
    \item For any sets of bitstrings $\calX, \calR, \calY$, there exists a Q.P.T. simulator $\Sim_1$ such that for every randomized function $f: \calX \times \calR \to \calY$, $O^\CSEQ_f$ and $\Sim_1^{O^\SEQ_{\Phi_f}}$ are perfectly indistinguishable.

    \item For any sets of bitstrings $\calX, \calR, \calY$, \textbf{for which $|\calX| > 1$}, there exists a Q.P.T. simulator $\Sim_2$ such that for every randomized function $f: \calX \times \calR \to \calY$, $\Sim_2^{O^\CSEQ_f}$ and $O^\SEQ_{\Phi_f}$ are perfectly indistinguishable.
\end{enumerate}
    % \begin{itemize}
    %     \item $O^\CSEQ_f$ and $\Sim_1^{O^\SEQ_{\Phi_f}}$ are perfectly indistinguishable, and 
    %     \item $\Sim_2^{O^\CSEQ_f}$ and $O^\SEQ_{\Phi_f}$ are perfectly indistinguishable.
    % \end{itemize}
\end{theorem}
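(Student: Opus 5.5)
For part 1, I would build a testable program out of $O^\SEQ_{\Phi_f}$ and then use the simulation of the CSEQ oracle from a testable program. Concretely, the program $P^{\mathsf{seq}}$ has empty program state, and its $\Eval$ is a single call to $O^\SEQ_{\Phi_f}$ on the query register $\calQ=(\calQ_x,\calQ_u,\calQ_b)$. The reflection is taken about the oracle's initial internal state $\ket{0}_\calC\ket{\emptyset}_\calD\ket{0,0}_\calW$. This is allowed by the footnote to \Cref{thm:sim-for-SEQ-oracle} and by \Cref{thm:simulating-CSEQ}, which permit an oracle-aided program whose oracle keeps internal state, provided the reflection correctly reflects around that initial state.

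The point I would have to make rigorous is that this reflection can be implemented using only queries to $O^\SEQ_{\Phi_f}$. My plan is to query on a fixed dummy input and read off whether the computed flag flipped. Starting from the initial state, a query computes $U_{\Phi_f}$ and flips $\calC$, and therefore also flips $\calQ_b$. From any internal state orthogonal to the initial one, the query acts either as the identity or as a conjugation that leaves $b$ unchanged. Once $\calQ_b$ has been copied out, a second query uncomputes, so $R$ is realized as query, CNOT of the $b$ bit onto the flag, query.

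Next, $\Eval$ must implement $f$ with $0$ error. By \Cref{def:SEQ-channel}, a single query from the initial state maps $x\mapsto f(x;H(x))$ with a fresh uniform $H(x)$, so the output distribution is exactly $f(x)$. I would then define $\Sim_1 := \Sim'(P^{\mathsf{seq}})$ and invoke \Cref{thm:simulating-CSEQ} to obtain perfect indistinguishability from $O^\CSEQ_f$. The main obstacle in this part is checking that the "$b$-flip" test really equals the projector onto the initial internal state. This may require a small modification, such as dedicating a fresh $b$ register, so that the test is insensitive to the adversary's $\calQ_b$ value.

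For part 2, $\Sim_2$ must simulate $O^\SEQ_{\Phi_f}$ from $O^\CSEQ_f$. By \Cref{lem:cseq-sim-equals-cseq-canonical}, any Stinespring dilation of $\Phi_f$ may be used. I would exhibit a dilation $U'$ whose private register is the CSEQ oracle's internal state $(\calD,\calW)$ together with a flag, and whose action is one CSEQ query. The obstruction is that $\Phi_f$ never refuses a query, whereas $O^\CSEQ_f$ refuses when $D$ holds some $x'\neq x$. However, the SEQ interface only ever applies $U$ from the all-zero private state, and then uncomputes it via the $\calC$-controlled $U^\dagger$. So a refusal can only occur in branches where $\calP\neq 0$, where SEQ acts as $U P_\perp U^\dagger$.

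The key step is to show that from $\calC=1$, the composite operation (CSEQ-inverse, test whether the database is empty, CSEQ) matches SEQ's block form \eqref{eq:OU-block}. I would realize the "test empty" step by choosing a second input $x_1\neq x$ and checking whether a CSEQ query on $x_1$ is refused. This is exactly where $|\calX|>1$ is used: with a single input nothing is ever refused, and emptiness cannot be detected. I expect this detection-via-refusal gadget, and verifying that it acts coherently without disturbing the recorded entry, to be the main obstacle. After that, a hybrid argument mirroring \Cref{lem:bpqotp-seq-sim-prime}, which moves between compressed-oracle $\StdDecomp$ and $\Record\circ\Decomp$, should finish the proof.
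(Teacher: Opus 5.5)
Your two-part architecture is the paper's. In Part 1 you wrap $O^\SEQ_{\Phi_f}$ as a testable oracle-aided program implementing $f$ and hand it to the CSEQ simulator of \Cref{def:sim-for-CSEQ}. In Part 2 you wrap $O^\CSEQ_f$ as a testable program implementing $\Phi_f$ and hand it to the simulator of \Cref{def:sim-for-SEQ-oracle}.

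The gap is in how you realize the reflections. In Part 1 you claim that, from any internal state orthogonal to $\ket{0}_\calC\ket{0^m}_\calP$, a dummy query leaves $b$ unchanged. This is false. From $\calC=1$ the query first applies $U_{\Phi_f}^\dagger$ to the dummy register and $\calP$. On the component that lands in $\calP=\ket{0^m}$, the flag is reset to $0$ and the dummy's $b$ has been flipped; this is the $P_0U^\dagger$ block of \eqref{eq:OU-block}.

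For example, suppose $f(x;r)=g(x)$ ignores $r$. A single $\Eval$ call then leaves the oracle in $\ket{1}_\calC\ket{0^m}_\calP$, because the compressed-oracle database is uncomputed back to empty. This state is orthogonal to the initial one, yet your test reports ``initial'' with certainty. For general $f$ the misfire occurs with partial amplitude and also leaves the dummy register entangled with $\calP$. So your $R$ is not $\mathsf{X}\otimes\ketbra{\psi}+I\otimes(I-\ketbra{\psi})$, and the hypothesis of \Cref{thm:simulating-CSEQ} fails. It may be that inside the simulator of \Cref{def:sim-for-CSEQ} your $R$ is only ever applied to the initial program state. Establishing that, however, would mean re-proving \Cref{thm:reflect-and-copy} and \Cref{lem:testable-bpotp-sim-p1-p2-perfect} under a weaker hypothesis, which you do not do.

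The missing idea is to let the $b$ output of the $\Eval$ calls themselves serve as the test. The fresh $b$ register you mention is needed anyway, since the simulator passes only $\calW_x,\calW_y$ to $\Eval$. But it should replace the dummy query, not support it. The paper puts a cache register $\calB$, initialized to $\ket{0}$, into the program state and routes the SEQ query's $b$ into it on every $\Eval$ call. This gives an invariant:
\begin{itemize}
\item Any branch in which $\calC$ flips applies exactly one of $U_{\Phi_f}$, $U_{\Phi_f}^\dagger$, and each of these flips $b$.
\item Any branch in which $\calC$ does not flip applies $P_\perp$ or $U_{\Phi_f}P_\perp U_{\Phi_f}^\dagger$, and both leave $b$ alone.
\end{itemize}
Hence $\calB=\calC$ on all reachable states. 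Combined with $\calC=0\Rightarrow\calP=\ket{0^m}$, this makes $R$ simply ``flip $\calE$ iff $\calB=0$'', with no oracle queries at all.

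Part 2 has the analogous problem. A CSEQ query on a single input $x_1$ is also answered when the database already records $x_1$, so it cannot tell that state apart from the empty database. You need two fixed distinct inputs $x_0,x_1$. Query $x_0$, copy its $b$, and uncompute; do the same for $x_1$; flip $\calE$ iff both copied bits are $1$; then uncompute the whole sequence. This is \Cref{thm:O-CSEQ-can-implement-Phi-f}, and it is exactly where $|\calX|>1$ enters. Once that program is testable, \Cref{thm:sim-for-SEQ-oracle} finishes Part 2 directly, and no compressed-oracle hybrid is needed.
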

\begin{proof}
$ $

\paragraph{Simulating $O_f^\CSEQ$ with $\Sim_1$:} Let $\Sim'$ be the simulator from \cref{def:sim-for-CSEQ}. By \cref{thm:simulating-CSEQ}, if $\Sim'$ is given black-box access to any testable quantum program $P$ that implements $f$ with $0$ error, then $\Sim'(P)$ is perfectly indistinguishable from $O^\CSEQ_f$. The program $P$ may be oracle-aided, and the oracle may maintain an internal pure state.

Next, $O^\SEQ_{\Phi_f}$ can be used to construct such a testable quantum program that implements $f$ with $0$ error (\cref{thm:O-SEQ-can-implement-f}). Composing these two procedures yields $\Sim_1$.

\paragraph{Simulating $O_{\Phi_f}^\SEQ$ with $\Sim_2$:} $\Sim_2$ mainly uses the simulator $\Sim'$ from \cref{def:sim-for-SEQ-oracle} (this is a different $\Sim'$ than the one used to construct $\Sim_1$). This $\Sim'$ takes any testable quantum program that implements $\Phi_f$ and perfectly simulates $O^\SEQ_{\Phi_f}$ (\cref{thm:sim-for-SEQ-oracle}). The program can be oracle-aided, and the oracle can even maintain an internal state. $\Sim'$ only needs black-box access to the program's $\Eval$ and $R$ operations. 

Next, we can use $O^\CSEQ_f$ to construct an (oracle-aided) testable quantum program that implements $\Phi_f$ as long as $|\calX|>1$ (\cref{thm:O-CSEQ-can-implement-Phi-f}). Then if we run $\Sim'$ on this program, it will perfectly simulate $O^\SEQ_{\Phi_f}$.
\end{proof}

\begin{lemma}\label{thm:O-SEQ-can-implement-f}
    $O^\SEQ_{\Phi_f}$ can be used to construct an (oracle-aided) testable quantum program that implements $f$ with $0$ error.
\end{lemma}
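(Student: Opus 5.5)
Regard $O^\SEQ_{\Phi_f}$ itself as the evaluation machinery of a stateful, oracle-aided program. Its internal registers $(\calP,\calC)$, with $\calP=\calD\times\calW$, start in the pure state $\ket{\emptyset}_\calD\ket{0,0}_\calW\ket{0}_\calC$; this state is the program state $\ket{\psi}$ that a testable program needs. Define the program $P=(\ket{\psi},\Eval,R)$ as follows. $\Eval$ acts on a query $\ket{x,u}$: it appends a fresh bit register $\calQ_b$ in $\ket{0}$, makes one query to $O^\SEQ_{\Phi_f}$ on $\ket{x,u,b}$, and then handles $b$. On a well-initialized internal state, one query applies $U_{\Phi_f}$, so the result is $\ket{x,u\oplus f(x;r),1}$ entangled with the compressed-oracle database. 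Unitarity is kept by applying $\mathsf{X}$ to $\calQ_b$ afterwards, which returns it to $\ket{0}$ deterministically in the well-initialized branch, and then discarding it. Measuring the output register in the computational basis gives $f(x;r)$ with $r$ distributed as a fresh compressed random-oracle value, i.e.\ uniform. So $P(x)$ equals $f(x)$ exactly, giving $0$ error. I would check this by expanding $S(U_{\Phi_f})$ on $\ket{x,u,0}\ket{\emptyset}\ket{0,0}\ket{0}_\calC$ with the block form of \Cref{lem:cseq-sim-equals-cseq-canonical}: $P_0$ projects onto the initial $\calP$, so the output is $U_{\Phi_f}\ket{x,u,0}\ket{\emptyset,0,0}\otimes\ket{1}_\calC$.

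Next comes testability. Per the footnote to \Cref{thm:sim-for-SEQ-oracle}, it suffices to give a unitary $R$ that reflects around the initial internal state of the oracle. The plan is to implement $R$ using the oracle's own self-adjoint structure. Query $O^\SEQ_{\Phi_f}$ on a fixed dummy input $\ket{x_0,0,0}$. In the initial-state branch this moves $\calC$ to $1$ and writes a computed output, so the internal state leaves the initial subspace in a recognizable way. Then use a query-side test, controlled on the external flag, to detect that this happened, and query again to uncompute.

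I expect this to be the main obstacle. The reflection must act exactly on the hidden registers, even though the test is performed only through $\calQ$. The issue is that $O^\SEQ$ acts as the identity on all internal states outside $\mathrm{span}\{P_0\ket{\cdot}\ket0_\calC,\ U_{\Phi_f}P_0\ket{\cdot}\ket1_\calC\}$. I would exploit this as follows. Query on $\ket{x_0,0,0}$, then read the $b$-bit of $\calQ$. It equals $1$ exactly when the internal state was initial, because $U_{\Phi_f}$ flips $b$. It equals $0$ otherwise, since then the oracle acts trivially on this query register, or at most applies $U^\dagger U$. Copy this bit to the flag qubit with a $\CNOT$, then query again to uncompute, using self-adjointness ($S^2=I$). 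The result is $R=\mathsf{X}\otimes\ketbra{\psi}+I\otimes(I-\ketbra{\psi})$. The delicate case is an internal state with $\calC=1$ but not of the form $U P_0$; there I must check that the dummy query on a fresh $\calQ$ does not flip $b$. This follows from the block form: in that branch the oracle applies $UP_\perp U^\dagger$ restricted appropriately.

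Finally, one has to confirm that $\Eval$ and $R$ are efficient given oracle access, and that the program's state is pure and lives inside the oracle, as the footnote permits. Then $P$ is an oracle-aided testable program implementing $f$ with $0$ error, which proves the lemma.
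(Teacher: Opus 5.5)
There is a genuine gap: your reflection $R$, and the handling of the $b$-bit in $\Eval$, ignore the reset branch of the SEQ oracle.

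\textbf{Why the dummy-query test fails.} In the block form you invoke, the $\calC=1$ column of $O^\SEQ_{\Phi_f}$ is $\ket{0}_\calC P_0U^\dagger + \ket{1}_\calC UP_\perp U^\dagger$, not just the second term. Suppose applying $U_{\Phi_f}^\dagger$ to your fresh dummy register leaves $\calP$ with support on $\ket{0^m}$. Then the oracle resets $\calC$ to $0$, step 3 is idle, and the dummy's $b$-bit stays flipped to $1$.

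This branch is reached from states orthogonal to $\ket{\psi}$. Concretely, if $f$ ignores $r$, the compressed-oracle query and its uncomputation cancel. So after one honest evaluation the internal state is $\ket{1}_\calC\ket{0^m}_\calP\perp\ket{\psi}$. Your dummy query then resets with certainty and returns $b=1$, your $R$ flips the flag, and the second query merely restores $\calC=1$. Thus $R$ acts as $\mathsf{X}$ on a state it must fix.

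For randomized $f$, right after one honest evaluation on some $x\neq x_0$, the reset branch has probability $\sum_y p_x(y)^2\cdot\sum_{y'}p_{x_0}(y')^2>0$, where $p_x$ is the distribution of $f(x)$. So $R$ is never the exact reflection. Moreover, $b$ is in superposition when you copy it out, so the second query need not return the dummy register to $\ket{x_0,0,0}$; it can leave garbage entangled with the oracle.

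\textbf{Why $\Eval$ fails.} The same branch breaks your $\Eval$. On evaluations after the first, the appended $b$-bit ends in a superposition of $1$ (reset) and $0$ (no reset). So ``apply $\mathsf{X}$ and discard'' is not unitary, and $\Eval^\dagger$, which the downstream simulators apply, is undefined.

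\textbf{The missing idea.} Keep that bit rather than probing the oracle. Put a cache register $\calB$, initialized to $\ket{0}$, into the program state, and let $\Eval$ query $O^\SEQ_{\Phi_f}$ on $\calQ\times\calB$.

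Since $U_{\Phi_f}$ (and $U_{\Phi_f}^\dagger=U_{\Phi_f}$) flips $b$ unconditionally, each query flips $\calB$ exactly when it flips $\calC$. Hence $\calB=\calC$ in the computational basis after any sequence of operations. Also, $\calC$ can return to $0$ only through the step-2 reset at $\calP=\ket{0^m}$, after which step 3 is idle, so $\calC=0$ forces $\calP=\ket{0^m}$.

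Therefore, on every reachable state, $\calB=0$ if and only if the program state is $\ket{\psi}=\ket{0}_\calB\ket{0}_\calC\ket{0^m}_\calP$. The reflection is then simply $\mathsf{X}$ on $\calE$ controlled on $\calB=0$, with no oracle query at all, and $\Eval$ is a genuine unitary. This is the paper's construction; your first-evaluation correctness argument carries over unchanged.
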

The program is oracle-aided since it makes queries to $O^\SEQ_{\Phi_f}$, and the program state includes the internal state of the oracle.

\begin{proof}
Let us construct a testable quantum program $P = (\ket{\psi}, \Eval, \mathsf{R})$ that implements $f$. 
% Let $O_f$ refer to either $O^\SEQ_f$ or $O^\CSEQ_{\Phi_f}$. $P$ will make queries to $O_f$. 
\begin{itemize}
    \item Program State $\ket{\psi}$:
    % Let the program state $\ket{\psi}$ be the internal state of $O^\SEQ_{\Phi_f}$ concatenated with several work registers that are initialized as follows. For two distinct values $x_0, x_1 \in \calX$,
    %     \[\calQ_0 = \ket{x_0,0,0}, \calQ_1 = \ket{x_1, 0, 0}, \calB_0 = \ket{0}, \calB_1 = \ket{0}, \calB = \ket{+}\]
    Let the program state be the internal state of $O^\SEQ_{\Phi_f}$ concatenated with a cache register $\calB$ that is initialized to $\ket{0}$. 
    Note that the internal state of $O^\SEQ_{\Phi_f}$ is defined as a pure state, so the initial program state $\ket{\psi}$ is indeed pure.

    \item $\Eval$: The program's $\Eval$ operation simply queries $O^\SEQ_{\Phi_f}$. $\Eval$ takes an external query register $\calQ$ with basis states of the form $\ket{x, u}$ where $(x,u) \in \calX \times \calY$. Then $\Eval$ queries $O^\SEQ_{\Phi_f}$ on $\calQ \times \calB$:
    \[\ket{x,u}_\calQ \otimes \ket{b}_\calB \to \ket{x,u \oplus y}_\calQ \otimes \ket{b \oplus b'}_\calB\]
    Finally, $\Eval$ outputs $\calQ$.

    \item $R$: The reflection operation $R$ takes an external register $\calE$. If $\calB$ stores $0$, then $R$ applies $\mathsf{X}$ to $\calE$. 
\end{itemize}

    \begin{claim}
        The program $P$ defined above implements $f$ with $0$ error (\cref{def:quantum-samp-program}).
    \end{claim}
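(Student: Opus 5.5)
We need to show that for every input $x$, evaluating $P$ once on $\ket{x,0}_\calQ$ from the initial state $\ket{\psi}$ and measuring the output register yields exactly the distribution $f(x)$. The plan is to trace a single query to $O^\SEQ_{\Phi_f}$ from its initial internal state. Before any query, the flag $\calC$ is $\ket0$ and the private register of $\Phi_f$ (the database $\calD$ and the workspace $\calW$) is in its initial state $\ket{\emptyset}_\calD\ket{0,0}_\calW$. By \Cref{def:CSEQ-oracle}, on this first query step 1 of the oracle is inactive, step 2 flips $\calC$ to $1$, and step 3 applies $U_{\Phi_f}$ to $(\calQ\times\calB,\calP)$. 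So the first evaluation of $P$ acts exactly as one application of the Stinespring unitary $U_{\Phi_f}$ on $\ket{x,u}_\calQ\ket{b}_\calB\ket{\emptyset}_\calD\ket{0,0}_\calW$.

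Next, I would compute $U_{\Phi_f}$ on this state using \Cref{def:SEQ-channel}. Copying $x$ into $\calW$ and querying the compressed oracle with an empty database prepares a uniform superposition over $r\in\calR$, entangled with the database entry $(x,r)$. Step 3 of $\Phi_f$ then XORs $f(x;r)$ into $u$ and flips $b$. Uncomputing the compressed oracle query and the copy of $x$ returns $\calW$ to $\ket{0,0}$ but leaves $\calD$ recording $r$. The resulting state is of the form
\[
\frac{1}{\sqrt{|\calR|}}\sum_{r\in\calR}\ket{x,u\oplus f(x;r)}_\calQ\ket{b\oplus1}_\calB\otimes\ket{\mathrm{db}_r}_{\calD\calW},
\]
where the database states $\ket{\mathrm{db}_r}$ are orthonormal for distinct $r$. (In the compressed-oracle picture this is the decompressed standard-basis state $\ket{(x,r)}$, up to the $\StdDecomp$ relabelling, which preserves orthonormality.)

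Measuring $\calQ$ with $u=0$ then gives $f(x;r)$ for a uniformly random $r$. This is exactly the distribution $f(x)$, so $P$ implements $f$ with $0$ error. Tracing out $\calB$ and the internal registers does not affect this distribution.

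The main obstacle is justifying that the compressed-oracle database branches are orthogonal for different $r$. If they were not, interference could distort the output distribution. I would handle this by invoking the standard equivalence between the compressed oracle and a uniformly random function $H$ when only the query register is observed. Under that equivalence, the output on a single query is $f(x;H(x))$ with $H(x)$ uniform, which is the same as sampling $r$ uniformly and outputting $f(x;r)$.
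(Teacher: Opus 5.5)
Your proposal is correct and follows the paper's argument. On the first query from the initial state, $O^{\SEQ}_{\Phi_f}$ skips step 1, flips $\calC$, and applies exactly $U_{\Phi_f}$, whose compressed-oracle query samples a uniform $r$ and writes $f(x;r)$ into the output register. You also make explicit the orthogonality of the database branches, or equivalently the compressed-oracle/random-oracle equivalence; the paper's two-line proof leaves this implicit.
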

    \begin{proof}
        Given a query of the form $\ket{x, 0}$, the program's $\Eval$ operation queries $O^\SEQ_{\Phi_f}$ on $\ket{x, 0, b}$. The first time $O^\SEQ_{\Phi_f}$ is queried, it applies the unitary $U_{\Phi_f}$, which uses a compressed oracle to compute $y = f(x;r)$ for a uniformly random $r \getsr \calR$. This is exactly the output distribution of $f(x)$.
    \end{proof}

    The following claim shows that $P$ is indeed testable.

    \begin{claim}
        After any number of $\Eval$ and $R$ operations, applying $R$ is equivalent to applying the function $\mathsf{X} \otimes \ketbra{\psi} + I \otimes (I - \ketbra{\psi})$ to $\calE$ and the current program state.
    \end{claim}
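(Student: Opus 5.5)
We need to prove: after any number of $\Eval$ and $R$ operations, $R$ acts as $\mathsf{X}\otimes\ketbra{\psi} + I\otimes(I-\ketbra{\psi})$ on $\calE$ and the current program state. Here $\ket{\psi}$ is the initial internal state of $O^\SEQ_{\Phi_f}$, namely $\ket{0^m}_{\calP}\ket{0}_{\calC}$, together with $\ket{0}_\calB$.

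The approach is an invariant argument. First I characterise the reachable states, meaning the states of the joint register (oracle internals $\calP,\calC$; cache $\calB$; adversary registers) reachable from the start by $\Eval$ queries and $R$ calls. Throughout, $R$ itself leaves $\calB$ and the oracle registers untouched; it only flips $\calE$ controlled on $\calB=0$. So it suffices to show the following invariant: on the reachable subspace, the projector onto ``$\calB=\ket{0}$'' coincides with the projector onto ``oracle internal state $=\ket{0^m}_\calP\ket{0}_\calC$ and $\calB=\ket0$''. Equivalently, $\calB$ records the oracle's flag $\calC$ exactly, and $\calC=0$ holds only together with $\calP=\ket{0^m}$.

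To establish the invariant, I use the block form \eqref{eq:OU-block} of $O^\SEQ_{U}$. From it, the oracle maps the subspace $\{\calC=0,\calP=0^m\}$ isometrically into $\{\calC=1\}\cap \mathrm{range}(U P_0)$, and maps that range back. It acts as the identity on $\{\calC=0,\calP\neq 0^m\}$, and on $\{\calC=1\}\cap U P_\perp$-range. The key point is that the bit $b$ returned to $\calB$ is exactly the computed flag of $\Phi_f$, which flips on each effective application of $U_{\Phi_f}$ or $U_{\Phi_f}^\dagger$. Hence on the component where $\calC$ toggles, $\calB$ toggles simultaneously, and on components where the oracle acts as identity, neither toggles. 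Starting from $\calB=0,\calC=0,\calP=0^m$, induction over queries then gives the following: every reachable state is supported on $\{\calB=0,\calC=0,\calP=0^m\}\oplus\{\calB=1,\calC=1\}$. The one exception would be that the identity branch $\{\calC=0,\calP\ne0^m\}$ is never entered. That holds because $\calC=0$ arises only by undoing $U$ from the $U P_0$-range, which lands back in $\calP=0^m$.

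The main obstacle is making the ``$b$ flips exactly when $\calC$ flips'' step rigorous. On the subspace $\{\calC=1\}\cap \mathrm{range}(UP_\perp)$, the oracle applies $U P_\perp U^\dagger$, which is the identity there. But I must check that states with $\calC=1$ from a genuine first application stay in $\mathrm{range}(UP_0)\oplus\mathrm{range}(UP_\perp)$ in a way consistent with $\calB=1$. This is where I would invoke that the adversary cannot touch $\calP,\calC,\calB$. $\calB$ is changed only through $\Eval$, and the $b$-coordinate flips under $U_{\Phi_f}$ (Definition~\ref{def:SEQ-channel}) independently of $\calP$. So $\calB\oplus\calC$ is conserved by every $\Eval$ call, since steps 1 and 3 both toggle $b$ and cancel whenever both act. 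Once this conservation law is in hand, the invariant follows. Then $R$, controlled on $\calB=0$, is exactly controlled on the program state being $\ket{\psi}$ restricted to reachable states, which proves the claim.
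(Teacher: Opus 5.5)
Your proposal is correct and follows essentially the paper's argument. Both show that $\calB$ and $\calC$ always carry the same computational-basis value and that $\calC=0$ forces $\calP=\ket{0^m}$, so on reachable states the condition ``$\calB=0$'' coincides with the program being in $\ket{\psi}$. The only cosmetic difference is that you obtain the $b$-flip under $U_{\Phi_f}^\dagger$ directly from $U_{\Phi_f}$ acting as $\mathsf{X}$ on the $b$ bit (phrased as conservation of $\calB\oplus\calC$), whereas the paper invokes the involution $U_{\Phi_f}^\dagger=U_{\Phi_f}$ (Claim~\ref{thm:U-Phi-f-involution}).
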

    \begin{proof}
        Here is some intuition for why $R$ works as the reflection oracle. First, $O^\SEQ_{\Phi_f}$'s internal state includes a register $\calC$ that is $\ket{0}$ if and only if $O^\SEQ_{\Phi_f}$ is in its initial state. Second, any time that $\calC$ is changed, $O^\SEQ_{\Phi_f}$ flips the output bit $b$. This implies that $\calC$ and $\calB$ always store the same value after any query, and that $\calB = \ket{0}$ if and only if the program state is in its initial state. Next, we will make this argument formal.\\
    
        The program state comprises three registers: $\calC$ and $\calP$, which contain $O^\SEQ_{\Phi_f}$'s state, and $\calB$. The initial state of $\calB \times \calC \times \calP$ is $\ket{\psi} := \ket{0} \otimes \ket{0} \otimes \ket{0^m}$.
        
        We will show that after any number of $\Eval$ and $R$ operations, $\calB = \ket{0}$ if and only if $\calB \times \calC \times \calP = \ket{\psi}$. 

        After any number of $\Eval$ and $R$ operations, $\calB$ and $\calC$ store the same computational-basis value. First, $\calB \times \calC$ are initialized to $\ket{0} \otimes \ket{0}$. Next, the only time that $\calC$'s computational-basis value changes is potentially during an $\Eval$ operation, specifically during step \ref{CSEQ-oracle-step-2} of the query to $O^\SEQ_{\Phi_f}$. If $\calC$ is not flipped in step \ref{CSEQ-oracle-step-2}, then this call to $O^\SEQ_{\Phi_f}$ acts as the identity, and $\calB$ is not flipped either. On the other hand, if $\calC$ is flipped in step \ref{CSEQ-oracle-step-2}, then this call to $O^\SEQ_{\Phi_f}$ applies $U_{\Phi_f}$ or $U_{\Phi_f}^\dag$, which are equal according to \cref{thm:U-Phi-f-involution}. $U_{\Phi_f}$ flips the value of $b$. Since the program records the value of $b$ on register $\calB$, that means that $\calB$ is flipped on any query that flips $\calC$. Therefore, after any number of $\Eval$ or $R$ operations, $\calB \times \calC$ store $(0,0)$ or $(1,1)$.

        After any number of $\Eval$ and $R$ operations, if $\calC = \ket{0}$ then $\calP = \ket{0^m}$. First, $\calC \times \calP$ are initialized to $\ket{0} \otimes \ket{0^m}$, so the invariant is initially true. Second, $R$ does not change the state of $\calC \times \calP$. Third, the first $\Eval$ query flips $\calC$ to $\ket{1}$ with certainty. If any future $\Eval$ query flips $\calC$ back to $\ket{0}$, then the check in step \ref{CSEQ-oracle-step-2} of $O^\SEQ_{\Phi_f}$ must have found that $\calP = \ket{0^m}$ and flipped $\calC$ to $\ket{0}$. Then step \ref{CSEQ-oracle-step-3} would have acted as the identity because $\calC = \ket{0}$, so at the end of the query, it is still true that $\calP = \ket{0^m}$. This argument extends to show that after any number of queries, either $\calC = \ket{1}$ or $\calC \times \calP = \ket{0} \otimes \ket{0^m}$.

        The previous discussion implies that after any number of $\Eval$ and $R$ operations, if $\calB = \ket{0}$, then $\calC = \ket{0}$, and $\calP = \ket{0^m}$, so $\calB \times \calC \times \calP = \ket{\psi}$. Next, if $\calB = \ket{1}$, then $\calB \times \calC \times \calP \neq \ket{\psi}$. Therefore, $\calB = \ket{0}$ if and only if $\calB \times \calC \times \calP = \ket{\psi}$. 
        
        Finally, $R$ applies $\mathsf{X}$ to $\calE$ if $\calB = \ket{0}$. This is equivalent to applying $\mathsf{X}$ to $\calE$ if $\calB \times \calC \times \calP$ is in state $\ket{\psi}$.

    \end{proof}

\begin{claim}\label{thm:U-Phi-f-involution}
        For the unitary $U_{\Phi_f}$ defined in \cref{def:SEQ-channel}, $U^\dag_{\Phi_f} = U_{\Phi_f}$.
    \end{claim}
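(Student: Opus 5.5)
The approach is to track $U_{\Phi_f}$ on computational basis states of $\calQ\times\calD\times\calW$ and show that it is a product of involutions arranged palindromically: $U_{\Phi_f} = A^{-1}\, M\, A$, with $M$ an involution. Concretely, from \Cref{def:SEQ-channel} we have
\[
U_{\Phi_f} = \mathsf{Copy}^{\dagger}\, \mathsf{CO}^{\dagger}\, M_f\, \mathsf{CO}\, \mathsf{Copy},
\]
where $\mathsf{Copy}$ is step~\ref{step:copy-x}, $\mathsf{CO}$ is the compressed-oracle query of step~\ref{step:query-CO}, and $M_f$ is the map of step~\ref{step:answer-query}. Taking adjoints gives
\[
U_{\Phi_f}^\dagger = \mathsf{Copy}^{\dagger}\, \mathsf{CO}^{\dagger}\, M_f^\dagger\, \mathsf{CO}\, \mathsf{Copy}.
\]
So it suffices to show $M_f^\dagger = M_f$, that is, $M_f^2 = I$.

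For $M_f^2 = I$, I would check basis states. $M_f$ sends $\ket{x,u,b}_\calQ\ket{D}\ket{x,r}_\calW$ to $\ket{x,u\oplus f(x;r),b\oplus 1}\ket{D}\ket{x,r}$. The controls $x,r$ sit in $\calW$ and are left unchanged. The targets $u$ and $b$ are XOR-updated by quantities that depend only on those controls. Applying $M_f$ twice therefore returns $u$ and $b$ to their original values. On basis states where the $\calW$ contents do not match $\calQ_x$ (the case ``acts as identity'' or an arbitrary extension), I would define the action uniformly as XOR by $f(w_x;w_r)$ and $1$, which is still self-inverse. XOR-controlled maps are involutions by construction, so this case needs nothing more.

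The one subtlety is the convention for $\mathsf{CO}$. If ``query the compressed oracle'' is the $\StdDecomp\circ\mathsf{CStO}'\circ\StdDecomp$ isometry used in Hybrid~0, then $\mathsf{CO}$ is itself an involution, since $\StdDecomp$ and $\mathsf{CStO}'$ are each involutions. In that case ``uncompute step~\ref{step:query-CO}'' can be read either as applying $\mathsf{CO}$ again or as applying $\mathsf{CO}^\dagger$; both agree, and the palindromic structure is unaffected. This is the step I expect to need care. I would state explicitly that step~4 applies $\mathsf{CO}^\dagger$ followed by $\mathsf{Copy}^\dagger$, which makes $U_{\Phi_f}$ a conjugation of $M_f$ regardless of whether $\mathsf{CO}$ is an involution. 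From there the identity $U_{\Phi_f}^\dagger = A^\dagger M_f^\dagger A = A^\dagger M_f A = U_{\Phi_f}$ follows immediately.
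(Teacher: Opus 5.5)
Your proposal is correct and follows the paper's proof: both write $U_{\Phi_f}$ as the palindromic product $\text{step 1}^\dagger\,\text{step 2}^\dagger\,\text{step 3}\,\text{step 2}\,\text{step 1}$ and take the adjoint, so that the only property needed is that step 3 is an XOR-controlled involution. Your observation that the copy and compressed-oracle steps need not themselves be involutions is accurate. The paper asserts those facts in its first paragraph, but its displayed calculation only uses self-inverseness for step 3.
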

    \begin{proof}
        First, steps \ref{step:copy-x} - \ref{step:answer-query} of $U_{\Phi_f}$ are each their own inverses. Step \ref{step:copy-x} CNOTs $x$ from $\calQ$ to $\calW$, which can be uncomputed by applying step \ref{step:copy-x} a second time. 
        
        Step \ref{step:query-CO} queries the compressed oracle. The compressed oracle is perfectly indistinguishable from a random oracle when queried as a black box, and any query to a random oracle can be uncomputed by querying the random oracle a second time. Therefore, step \ref{step:query-CO} can be uncomputed by computing step \ref{step:query-CO} a second time.

        Step \ref{step:answer-query} CNOTs some values that are computed from $\calW$ onto $\calQ$. This can be uncomputed by applying step \ref{step:answer-query} a second time.

        Second, $U_{\Phi_f}$ and $U_{\Phi_f}^\dag$ apply the following sequences of operations:
        \begin{align*}
            U_{\Phi_f} &= \text{step \ref{step:copy-x}}^\dag \cdot \text{step \ref{step:query-CO}}^\dag \cdot \text{step \ref{step:answer-query}} \cdot \text{step \ref{step:query-CO}} \cdot \text{step \ref{step:copy-x}}\\
            U_{\Phi_f}^\dag &= \left(\text{step \ref{step:copy-x}}^\dag \cdot \text{step \ref{step:query-CO}}^\dag \cdot \text{step \ref{step:answer-query}} \cdot \text{step \ref{step:query-CO}} \cdot \text{step \ref{step:copy-x}}\right)^\dag\\
            &= \text{step \ref{step:copy-x}}^\dag \cdot \text{step \ref{step:query-CO}}^\dag \cdot \text{step \ref{step:answer-query}}^\dag \cdot \left(\text{step \ref{step:query-CO}}^\dag\right)^\dag \cdot \left(\text{step \ref{step:copy-x}}^\dag\right)^\dag\\
            &= \text{step \ref{step:copy-x}}^\dag \cdot \text{step \ref{step:query-CO}}^\dag \cdot \text{step \ref{step:answer-query}} \cdot \text{step \ref{step:query-CO}} \cdot \text{step \ref{step:copy-x}}\\
            &= U_{\Phi_f}
        \end{align*}
    \end{proof}
\end{proof}

\begin{lemma}\label{thm:O-CSEQ-can-implement-Phi-f}
    If $|\calX| > 1$, then $O^\CSEQ_f$ can be used to construct an (oracle-aided) testable quantum program that implements $\Phi_f$.
\end{lemma}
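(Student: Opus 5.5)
The plan is to take the program state to be the internal state of $O^\CSEQ_f$, to let $\Eval$ be a single query to $O^\CSEQ_f$, and to build the reflection $R$ from a few extra queries to $O^\CSEQ_f$ on two fixed distinct inputs. The assumption $|\calX|>1$ is what supplies those two inputs.

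\textbf{Program state and evaluation.} Let $P=(\ket{\psi},\Eval,R)$, where $\ket{\psi}$ is the initial internal state of $O^\CSEQ_f$, namely $\ket{0}_\calW\otimes\ket{\emptyset}_\calD$. This state is pure. $\Eval$ forwards the query register $\calQ$, with basis states $\ket{x,u,b}$, to one query of $O^\CSEQ_f$.

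To see that $P$ implements $\Phi_f$, compare \Cref{def:SEQ-security} with \Cref{def:SEQ-channel}. On the empty database, step~\ref{step:SEQ-check} of $O^\CSEQ_f$ never triggers. The remaining steps of $O^\CSEQ_f$ are then literally the steps of $U_{\Phi_f}$, with the same database register initialized to $\ket{\emptyset}$. Hence applying $\Eval$ to $\rho_\calQ\otimes\ket{\psi}$ and tracing out $\calW\times\calD$ yields exactly $\Phi_f(\rho)$.

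\textbf{The reflection.} Fix distinct $x_0,x_1\in\calX$. I claim that after any sequence of $\Eval$ and $R$ calls, the database register is supported on three kinds of states: $\ket{\emptyset}$, states recording the single input $x_0$, and states recording some single input other than $x_0$ (possibly $x_1$). The subspace ``$\calD=\ket{\emptyset}$'' is then exactly the span of $\ket{\psi}$, because $\calW$ stays $\ket{0}$ between queries. By the CSEQ rule, a query on input $x_i$ is answered, and flips its $b$ bit, if and only if the database is empty or contains $x_i$.

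So $R$, acting on the external qubit $\calE$, does the following.
\begin{enumerate}
\item For $i\in\{0,1\}$: prepare a fresh ancilla query register $\ket{x_i,0,0}$, query $O^\CSEQ_f$, CNOT the resulting $b$ bit into a fresh qubit $t_i$, and query $O^\CSEQ_f$ again on the same ancilla register to uncompute.
\item Apply $\mathsf{X}$ to $\calE$ controlled on $t_0=t_1=1$.
\item Repeat the first step to erase $t_0,t_1$ and discard the ancilla registers, which are back in their initial state $\ket{x_i,0,0}$.
\end{enumerate}
Branch by branch, $t_0\wedge t_1=1$ holds exactly when $\calD=\ket{\emptyset}$. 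A database containing $x_0$ sets only $t_0$; one containing $x_1$ sets only $t_1$; any other single entry sets neither. Therefore $R=\mathsf{X}\otimes\ketbra{\psi}+I\otimes(I-\ketbra{\psi})$, which makes $P$ testable.

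\textbf{Main obstacle.} The step I expect to need the most care is showing that ``query, copy $b$, query again'' returns the oracle's internal registers exactly to their prior state. This must hold in every branch, including branches where $\calD$ holds a superposition of $r$-values for some $x$. That is exactly where gentleness could fail and the construction would stop being a clean reflection.

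My plan is to show that the CSEQ isometry is self-inverse on the reachable subspace. In branches where step~\ref{step:SEQ-check} triggers, the query is the identity, so there is nothing to check. In the remaining branches, the query acts as the operator already identified in Hybrid~1 of the proof of \Cref{lem:bpqotp-seq-sim-prime}: $\mathsf{X}_{\calQ_b}\circ\StdDecomp\circ(\text{CNOT of } f(x;r))\circ\StdDecomp$. Each factor is an involution, and the factors commute appropriately, so the whole operator is an involution, as in \Cref{thm:U-Phi-f-involution}. Two queries on the same register therefore cancel. Moreover, the copied bit $b$ equals the branch's ``answered'' indicator and is independent of $r$. So the CNOT into $t_i$ does not entangle $t_i$ with the database contents beyond that indicator, which is exactly the information the reflection is supposed to read out.

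Finally, I will check that the reachability invariant is preserved. By Claim 4.7 of \cite{gupte2025quantum}, the database holds at most one entry, and $R$ itself does not change the database. This closes the induction.
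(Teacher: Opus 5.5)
Your proposal is correct and follows the paper's construction essentially verbatim: the program state is the oracle's internal state, $\Eval$ is a single CSEQ query, and $R$ coherently records whether queries on two distinct inputs $x_0,x_1$ are both answered, flipping $\calE$ exactly when the database is empty. The differences are minor. You use fresh ancillas that are restored at the end, whereas the paper stores $\calQ_0,\calQ_1,\calB_0,\calB_1$ inside the program state. You also explicitly justify the paper's ``uncompute'' steps: on the reachable subspace the CSEQ query is an involution, and the copied $b$ bit depends only on the answered indicator. This also gives $\Eval^\dagger=\Eval$, which the paper leaves implicit.
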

\begin{proof}
    Let us construct a testable quantum program $P = (\ket{\psi}, \Eval, \mathsf{R})$ that implements $\Phi_f$.
    % It is essentially the same as the construction in the proof of \cref{thm:O-SEQ-can-implement-f}, except that $O^\SEQ_{\Phi_f}$ is replaced with $O^\CSEQ_f$, and there is no $\calB$ register.
    \begin{itemize}
        \item Program State $\ket{\psi}$: Let the program state $\ket{\psi}$ be the internal state of $O^\CSEQ_f$ concatenated with several work registers. Let $x_0, x_1 \in \calX$ be two distinct values, and then initialize the work registers as follows:
        \[\calQ_0 = \ket{x_0,0, 0}, \calQ_1 = \ket{x_1, 0, 0}, \calB_0 = \ket{0}, \calB_1 = \ket{0}\]
        Note that the internal state of $O^\CSEQ_f$ is defined as a pure state, so $\ket{\psi}$ is indeed pure.

        \item $\Eval$: The program's $\Eval$ operation simply queries $O^\CSEQ_f$.

        \item $R$: The reflection operation $R$ should test whether the program state is in its initial state. Part of the program state is contained in the oracle $O^\CSEQ_f$, so our construction cannot access it directly. Instead, we can make queries to $O^\CSEQ_f$ to test if its internal state is in its initial state. Our strategy is to query $O^\CSEQ_f$ on two different inputs, while uncomputing between queries. If the oracle answers both queries, then its state began in the initial state. If it rejects at least one query, then its database must already record a query.

        Formally, $R$ takes as input a single-qubit register $\calE$ and acts as follows.
    \begin{enumerate}
        \item Query $O^\CSEQ_f$ on $\calQ_0$. The resulting state is $\ket{x_0, y_0, b_0}_{\calQ_0}$ for some $y_0 \in \calY$ and $b_0 \in \bit$.\label{step:reflection-x-0-query}
        \item Copy $b_0$ onto $\calB_0$.\label{step:reflection-copy-b-0}
        \item Uncompute \cref{step:reflection-x-0-query}.\label{step:reflection-uncompute-x-0-query}
        \item Query $O^\CSEQ_f$ on $\calQ_1$. The resulting state is $\ket{x_1, y_1, b_1}_{\calQ_1}$ for some $y_1 \in \calY$ and $b_1 \in \bit$.\label{step:reflection-x-1-query}
        \item Copy $b_1$ onto $\calB_1$.\label{step:reflection-copy-b-1}
        \item Uncompute \cref{step:reflection-x-1-query}.\label{step:reflection-uncompute-x-1-query}
        \item If $\ket{b_0, b_1}_{\calB_0 \times \calB_1} = \ket{1, 1}$, then apply $\mathsf{X}$ (bitflip) to register $\calE$.\label{step:reflection-bitflip}
        \item Uncompute steps \ref{step:reflection-x-0-query} - \ref{step:reflection-uncompute-x-1-query}.\label{step:reflection-uncompute-all-previous-steps}
    \end{enumerate}
\end{itemize}

    \begin{claim}
        $P$ implements $\Phi_f$ (with $0$ error).
    \end{claim}
    \begin{proof}
        When we apply $\Eval$ to the user's query register $\calQ$ and the initial program state $\ket{\psi}$, it queries $O^\CSEQ_f$. Since the database of $O^\CSEQ_f$ is initialized to $\ket{\emptyset}$, step \ref{step:SEQ-check} of $O^\CSEQ_f$ does nothing during this query.

        If we omit step \ref{step:SEQ-check} of $O^\CSEQ_f$, then the remaining steps are exactly the same as $U_{\Phi_f}$. Therefore, applying $\Eval$ with program state $\ket{\psi}$ computes the same operation as $\Phi_f$.
    \end{proof}

    The following claim shows that $P$ is indeed testable.

    \begin{claim}
        After any number of $\Eval$ and $R$ operations, applying $R$ is equivalent to applying the function $\mathsf{X} \otimes \ketbra{\psi} + I \otimes (I - \ketbra{\psi})$ to $\calE$ and the current program state.
    \end{claim}
    \begin{proof}
        Let us consider the case where $P$'s program state is in its initial state $\ket{\psi}$. We will show that in this case, $R$ applies $\mathsf{X}$ to $\calE$ and restores the program state to its initial state. 
        
        Step \ref{step:reflection-x-0-query} of $R$ queries $O^\CSEQ_f$ on input $\ket{x_0, 0, 0}$. Since we are in the initial state, $O^\CSEQ_f$ responds to the query and flips $b_0$ to $1$ with certainty. Step \ref{step:reflection-copy-b-0} copies $b_0$ to another register $\calB_0$, but this step does not change the joint state on $\calQ_0$ and the internal registers of $O^\CSEQ_f$ because the value $b_0 = 1$ is deterministic. Step \ref{step:reflection-uncompute-x-0-query} uncomputes step \ref{step:reflection-x-0-query}. Steps \ref{step:reflection-x-0-query} and \ref{step:reflection-uncompute-x-0-query} act only on $\calQ_0$ and the internal state of $O^\CSEQ_f$, and the state on these registers is unchanged by step \ref{step:reflection-copy-b-0}. Therefore the state of $\calQ_0$ and the internal state of $O^\CSEQ_f$ is the same at the end of step \ref{step:reflection-uncompute-x-0-query} as it was at the start of step \ref{step:reflection-x-0-query}. The only difference in the program's state is that at the end of step \ref{step:reflection-uncompute-x-0-query}, $\calB_0$ contains $\ket{1}$.

        Next, steps \ref{step:reflection-x-1-query} - \ref{step:reflection-uncompute-x-1-query} are the same as steps \ref{step:reflection-x-0-query} - \ref{step:reflection-uncompute-x-0-query} except the query register is $\calQ_1 = \ket{x_1,0,0}$. By the same argument as before, the state at the end of step \ref{step:reflection-uncompute-x-1-query} is the same as the state at the start of step \ref{step:reflection-x-0-query} except that $\calB_0 = \ket{1}$ and $\calB_1 = \ket{1}$.

        Next, step \ref{step:reflection-bitflip} applies $\mathsf{X}$ to $\calE$ with certainty. This step does not change the state on the program's registers because the application of $\mathsf{X}$ occurs with probability $1$. Finally, step \ref{step:reflection-uncompute-all-previous-steps} uncomputes all steps except for the application of $\mathsf{X}$. At the end of $R$, the program state is the initial state $\ket{\psi}$, and the only change from the beginning of $R$ is the application of $\mathsf{X}$ to $\calE$.\\
        
        Next, let us consider the case where the program state is orthogonal to $\ket{\psi}$. We only need to consider states that are reachable from a sequence of $\Eval$ and $R$ operations. We will show that in this case, $R$ acts as the identity.

        If the program state is orthogonal to $\ket{\psi}$, then the database register $\calD$ must contain a non-empty database $D$. The program state comprises the registers: $\calW \times \calD$, which are the internal state of $O^\CSEQ_f$, as well as $\calQ_0 \times \calQ_1 \times \calB_0 \times \calB_1$. After any query to $O^\CSEQ_f$, the $\calW$ register is in the $\ket{0}$ state because any values written to it have been uncomputed during the query. Likewise, the state of $\calQ_0 \times \calQ_1 \times \calB_0 \times \calB_1$ at the start of any $R$ operation is the same as its initial state (\cref{thm:work-registers-in-initial-state}). After any number of queries to $\Eval$ and $R$, the only register that might not be in its initial state is $\calD$. If the program state is orthogonal to $\ket{\psi}$, then $\calD$ must contain a non-empty database. We will consider two types of non-empty databases: (1) every entry of $D$ (the only entry, really) is of the form $(x_0, r)$ for some $r \in \calR$, and (2) $D$ contains $(x', r)$ for some $x' \neq x_0$ and some $r \in \calR$.
    
        Now let us step through $R$ in the first case, where every entry of $D$ is of the form $(x_0, r)$ for some $r \in \calR$. Step \ref{step:reflection-x-0-query} queries $O^\CSEQ_f$ on input $\ket{x_0, 0, 0}$, and the query is answered ($b_0=1$). Step \ref{step:reflection-copy-b-0} copies $b_0$ to $\calB_0$, and step \ref{step:reflection-uncompute-x-0-query} uncomputes step \ref{step:reflection-x-0-query}. The state of the program at the end of step \ref{step:reflection-uncompute-x-0-query} is the same as the state at the beginning of step \ref{step:reflection-x-0-query} except that $\calB_0 = \ket{1}$. Next, step \ref{step:reflection-x-1-query} queries $O^\CSEQ_f$ on $x_1$. This query is rejected ($b_1=0$) because $D$ contains $(x_0,r)$. By step \ref{step:reflection-bitflip}, $\calB_0 \times \calB_1 = \ket{1,0}$, so step \ref{step:reflection-bitflip} acts as the identity. Finally, step \ref{step:reflection-uncompute-all-previous-steps} uncomputes steps \ref{step:reflection-x-0-query} - \ref{step:reflection-uncompute-x-1-query}, so $R$ acts as the identity on the program state.

        Finally, let us step through $R$ in the second case, where $D$ contains $(x', r)$ for some $x' \neq x_0$ and some $r \in \calR$. Then the query to $O^\CSEQ_f$ in step \ref{step:reflection-x-0-query} will be rejected ($b_0 = 0$). Skipping ahead, at the start of step \ref{step:reflection-bitflip}, $\calB_0 = \ket{0}$, so this step acts as the identity. Finally, step \ref{step:reflection-uncompute-all-previous-steps} uncomputes steps \ref{step:reflection-x-0-query} - \ref{step:reflection-uncompute-x-1-query}, so $R$ acts as the identity on the program state.
        
        % In constrast, if at the start of the query $D$ contains $(x',r)$ for some $x' \neq x_0$ and some $r \in \calR$, then the result is that $b_0=0$. that means that at the start of step \ref{step:reflection-x-0-query}, the database $D$ contained values of the form $(x_0,r)$ for some $r \in \calR$. If $b_0=1$, that means that at the start of step \ref{step:reflection-x-0-query}, the database $D$ contained values of the form $(x',r)$ for some $x' \neq x_0$ and some $r \in \calR$.
        
        % Next, step \ref{step:reflection-uncompute-x-0-query} uncomputes step \ref{step:reflection-x-0-query}. The state at the end of step \ref{step:reflection-uncompute-x-0-query} is the same as the state at the start of step \ref{step:reflection-x-0-query} except that $\calD$ is entangled with $\calB_0$. If $D$ contains entries of the form $(x_0, r)$, then $b_0 = 1$, and otherwise $b_0 = 0$.

        % Next, step \ref{step:reflection-x-1-query} queries $O^\CSEQ_f$ on input $\ket{x_1, 0, 0}$. If $b_0=1$, then $D$ contains $(x_0, r)$, and then $O^\CSEQ_f$ will not respond to the query in step \ref{step:reflection-x-1-query}. In this case $b_1=0$ at the end of step \ref{step:reflection-x-1-query}. 
        
        % At the start of step \ref{step:reflection-bitflip}, at least one of $\calB_0$ or $\calB_1$ will contain $0$, so step \ref{step:reflection-bitflip} acts as the identity. Finally, step \ref{step:reflection-uncompute-all-previous-steps} uncomputes steps \ref{step:reflection-x-0-query} - \ref{step:reflection-uncompute-x-1-query}, so $R$ acts as the identity on the program state.

        In summary, we've shown that for every program state that is reachable by a sequence of $\Eval$ and $R$ operations, if the program state is orthogonal to $\ket{\psi}$, then $R$ acts as the identity on the program state.
    \end{proof}

        \begin{claim}\label{thm:work-registers-in-initial-state}
        After any number of $\Eval$ and $R$ operations, the work registers are in their initial state:
        \[\calQ_0 = \ket{x_0,0, 0}, \calQ_1 = \ket{x_1, 0, 0}, \calB_0 = \ket{0}, \calB_1 = \ket{0}\]
    \end{claim}
    \begin{proof}
        Before any $\Eval$ and $R$ operations have been executed, the work registers $\calQ_0 \times \calQ_1 \times \calB_0 \times \calB_1$ are in their initial state. Next, $\Eval$ does not modify the work registers. Finally, it suffices to analyze an $R$ operation and show that if the work registers are in their initial state at the start of the operation, then they will return to their initial state at the end of the operation. 

        Let us split up the program's registers into two groups. Let register $\calR$ comprise all components of the work registers $\calQ_0 \times \calQ_1 \times \calB_0 \times \calB_1$ except the values $x_0$ and $x_1$ written on $\calQ_0$ and $\calQ_1$ respectively. Let $\calS$ comprise the internal registers of $O^\CSEQ_f$ and the values $x_0$ and $x_1$ written on $\calQ_0$ and $\calQ_1$.
        
        Next, no step of $R$ or $O^\CSEQ_f$ modifies the computational basis value of $x_0$ and $x_1$ written on $\calQ_0$ and $\calQ_1$. Therefore these values remain in unchanged by $R$.
        
        % First, the initial state of the work registers is an eigenstate of the computational basis. Second, all the steps of $R$, including the steps of $O^\CSEQ_f$, only act on the work registers in the computational basis. More formally, every step of $R$ falls into one of the following categories:
        % \begin{itemize}
        %     \item The step does not operate on the work registers.
        %     \item The step applies an operation to a non-work register that is controlled on the computational-basis value of the work registers.
        %     \item The step uses the state of a non-work register to map the work registers from one computational basis state to another.
        % \end{itemize}
        
        % Now let us step through an execution of $R$. Steps \ref{step:reflection-x-0-query} - \ref{step:reflection-uncompute-x-1-query} of $R$ may modify the work registers. Then step \ref{step:reflection-bitflip} does not change the computational basis state of the work registers Next, step \ref{step:reflection-bitflip} just applies an $\mathsf{X}$ operation to $\calE$ controlled on the value of $\calB_0 \times \calB_1$. Finally, step \ref{step:reflection-uncompute-all-previous-steps} uncomputes steps \ref{step:reflection-x-0-query} - \ref{step:reflection-uncompute-x-1-query}. 

        Steps \ref{step:reflection-x-0-query} - \ref{step:reflection-uncompute-x-1-query} of $R$ may modify register $\calR$, but they only do so by CNOT-ing a value onto $\calR$ that was computed from the state on register $\calS$. Next, step \ref{step:reflection-bitflip} applies an $\mathsf{X}$ operation to $\calE$ controlled on the value of $\calR$. However, step \ref{step:reflection-bitflip} does not change the state of $\calS$. Finally, step \ref{step:reflection-uncompute-all-previous-steps} uncomputes steps \ref{step:reflection-x-0-query} - \ref{step:reflection-uncompute-x-1-query}. Again, this entails CNOT-ing a value onto $\calR$ that is computed from the state on $\calS$. The value that is CNOT-ed in step \ref{step:reflection-uncompute-all-previous-steps} is the same as the value that was CNOT-ed during steps \ref{step:reflection-x-0-query} - \ref{step:reflection-uncompute-x-1-query} because the state on $\calS$ is unchanged. Therefore step \ref{step:reflection-uncompute-all-previous-steps} returns the state of $\calR$ to its initial state.

        In summary, all components of $\calQ_0 \times \calQ_1 \times \calB_0 \times \calB_1$ are returned to their initial state at the end of $R$.
    \end{proof}
\end{proof}

\section{Impossibility of one-time correct sampling QSIO}
\label{sec:sampling-qsio}

In this section, we show that our impossibility for a best-possible one-time compiler in \Cref{sec:impossibility-of-bp-otp} rules out the existence of obfuscation schemes that satisfy a natural definition of quantum state indistinguishability obfuscation for \textit{sampling} programs. Previously, quantum state indistinguishability obfuscation has been studied only in the setting where the quantum program implements a deterministic classical functionality with negligible error~\cite{bartusek2023obfuscation, gunn2024quantum}. 

A natural generalization of the definition of quantum state IO to quantum programs designed for sampling tasks is the following. It requires obfuscations of any two programs that implement the same sampling task to be indistinguishable. Crucially, for two programs to be equal in this sense, they need to evaluate the same distributions on the first query, but there are no guarantees on their equivalence on subsequent queries. A priori, one can image that it might be possible to indistinguishably obfuscate such a pair of programs by enforcing some kind of one-time guarantee, so both obfuscations stop working after one evaluation. The impossibility in this section (\Cref{cor:no-qsio-sampling}) says that this is not possible. Indeed, if it was possible, then it would be a best-possible one-time program. This is formalized in \Cref{lem:qsio-sampling-implies-bpotp}.

We also give a more direct proof of this impossibility, without going through the intermediate primitive of best-possible one-time programs. This proof is simpler to describe and will hopefully shed light on the core idea of the impossibility.

We leave open the question of coming up with a feasible notion of quantum state IO for sampling programs. Our notion of stateful obfuscation in \Cref{sec:stateful-io-evidence} (a notion of quantum state IO where the two programs must have the same behavior on polynomially many (forward and inverse) queries, instead of just the first one) can be seen as an attempt to make progress in this direction.

\begin{definition}[Quantum State Obfuscation for Quantum Sampling Programs]\label{defn:qsio-sampling}
    A quantum state obfuscator for quantum sampling programs is a q.p.t.~algorithm $\mathsf{QObf}$ with the following syntax:
    $$\mathsf{QObf}(1^\lambda, \rho, C)\rightarrow (\ket{\tilde{\psi}}, \tilde{C}).$$
    The obfuscator takes as input a security parameter $1^\lambda$ and a quantum program $(\ket{\psi}, C)$ and outputs an obfuscated circuit $(\ket{\tilde{\psi}}, \tilde{C})$.
    \begin{itemize}
        \item \textbf{Correctness:} Suppose a family of quantum sampling programs $\{\ket{\psi_\lambda}, C_\lambda\}_{\lambda \in \mathbb{N}}$ implements a family of sampling functionalities $\{D_\lambda: \{0,1\}^{n(\lambda)}\rightarrow \{0,1\}^{m(\lambda)}\}_{\lambda \in \mathbb{N}}$
        for every $x\in\{0,1\}^{n(\lambda)}$ up to negligible error $\negl[\lambda]$. Then, the family of quantum sampling programs $\{\ket{\tilde{\psi}_\lambda}, \tilde{C}_\lambda\}$ where 
        $$(\ket{\tilde{\psi}_\lambda}, \tilde{C}_\lambda)\leftarrow \mathsf{QObf}(1^\lambda, \ket{\psi_\lambda}, C_\lambda)$$
        also implements the same sampling functionality up to negligible error.
        \item \textbf{Indistinguishability Obfuscation:} For every pair of families of quantum sampling programs $\{\ket{\psi_{\lambda, 0}}, C_{\lambda, 0}\}_{\lambda \in \mathbb{N}}$ and $\{\ket{\psi_{\lambda, 1}}, C_{\lambda, 1}\}_{\lambda \in \mathbb{N}}$ that both implement the same sampling functionality $\{D_\lambda: \{0,1\}^{n(\lambda)} \rightarrow \{0,1\}^{m(\lambda)}\}_{\lambda \in \mathbb{N}}$ up to negligible error in $\lambda$, and furthermore satisfy that the program descriptions have the same length $|\ket{\psi_{\lambda, 0}}, C_{\lambda, 0}| = |\ket{\psi_{\lambda, 1}}, C_{\lambda, 1}|$ for every $\lambda$, for every q.p.t.~adversary $\{A_\lambda\}_{\lambda \in \mathbb{N}}$,
        \begin{align*}
            \left|\Pr\left[1 \leftarrow A_\lambda\left(1^\lambda, \mathsf{QObf}(1^\lambda, \ket{\psi_{\lambda, 0}}, C_{\lambda, 0})\right)\right] - \Pr\left[1 \leftarrow A_\lambda\left(1^\lambda, \mathsf{QObf}(1^\lambda, \ket{\psi_{\lambda, 1}}, C_{\lambda, 1})\right)\right]\right| \le \negl[\lambda].
        \end{align*}
        
    \end{itemize}
\end{definition}

\begin{lemma}\label{lem:qsio-sampling-implies-bpotp}
    Suppose there exists a quantum state indistinguishability obfuscator for quantum sampling programs for a class of classical functionalities randomized $\calF$. Then, there must exist a one-time compiler that satisfies the best-possible one-time security guarantee of \Cref{def:bp-sim} for $\calF$.
\end{lemma}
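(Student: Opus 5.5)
The plan follows the Goldwasser--Rothblum argument that iO is best-possible obfuscation. Given the sampling QSIO $\mathsf{QObf}$ of \Cref{defn:qsio-sampling}, define the compiler
\[
\OTP^*(f) := \mathsf{QObf}\bigl(1^\secp, (\ket{0}, C_f)\bigr),
\]
where $C_f$ is the canonical quantum sampling program for $f$. Concretely, $C_f$ prepares a uniform superposition over $r$ in fresh ancillas, measures it (or simply leaves it unobserved), computes $f(x;r)$ into $\scrY$, and is padded to a fixed length $\ell(|f|)$. This padding length must be at least the length of any program $P$ the best-possible definition will compare against; since \Cref{def:bp-sim} only quantifies over $P$ with $|P| = |f|$, padding both $C_f$ and $P$ to a common length $\ell(|f|)$ is allowed by the padding convention of the preliminaries. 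Correctness of $\OTP^*$ (\Cref{def:otp}) then follows directly from QSIO correctness, because $(\ket{0},C_f)$ implements $f$ with zero error.

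For security, fix an adversary $\calA$ and define the simulator
\[
\Sim(1^\secp, P) := \calA\bigl(1^\secp, \mathsf{QObf}(1^\secp, \mathrm{pad}(P))\bigr).
\]
Here $\mathrm{pad}(P)$ pads $P$ to length $\ell(|f|)$, which is possible because $|P|=|f|$ is known from $P$ itself. $\Sim$ is QPT since $\mathsf{QObf}$ and $\calA$ are. For any $f\in\calF_\secp$ and any $P$ implementing $f$ with error $\mu$, the programs $\mathrm{pad}(P)$ and $\mathrm{pad}(C_f)$ implement the same sampling functionality up to negligible error and have equal length. QSIO security, applied with the adversary $D\circ\calA$, then gives
\[
\bigl|\Pr[1\leftarrow D(\calA(\OTP^*(f)))]-\Pr[1\leftarrow D(\Sim(P))]\bigr|\le\negl,
\]
which is exactly \Cref{def:bp-sim}.

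The main obstacle is a mismatch of formalisms. QSIO in \Cref{defn:qsio-sampling} is stated for families indexed by $\lambda$ with pure program states and a fixed functionality family. \Cref{def:bp-sim}, by contrast, quantifies over arbitrary (possibly mixed) $P$ chosen after $\Sim$, with a negligible bound that must be uniform over $f$ and $P$. I would handle mixed $\rho$ in two steps. First, purify $\rho$ into an auxiliary register that the evaluation circuit ignores. This keeps the implemented functionality unchanged, at the cost of padding. Second, interpret the QSIO guarantee non-uniformly: choosing any sequence of worst-case pairs $(f_\lambda, P_\lambda)$ yields a pair of families to which QSIO security applies, and if no single negligible function worked uniformly, this sequence would be a contradiction. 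If the purification step turns out to conflict with length constraints, the fallback is to note that the analogous argument in \Cref{sec:impossibility-of-bp-otp} only ever uses mixtures of classical programs. In that case the simulator can work separately on each pure component of the mixture and then average, although this requires that each component individually implements $f$, which may fail.
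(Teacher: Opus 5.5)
Your main line is the paper's proof. $\OTP^*(f)$ is the QSIO obfuscation of a canonical sampling program for $f$; the paper puts $\ket{f}\otimes\sum_r\ket{r}$ in the program state rather than in the circuit, which makes no difference. For each $\calA$ the simulator is $\Sim(1^\secp,P):=\calA(1^\secp,\mathsf{QObf}(1^\secp,P))$, and QSIO security applied to the two equivalent programs gives \Cref{def:bp-sim}. Your explicit padding of $P$ and your worst-case-sequence argument for uniformity are details the paper leaves implicit. The paper, however, applies QSIO security directly to the possibly mixed program $P$. That is, it reads \Cref{defn:qsio-sampling} as covering mixed-state programs: the syntax there is $\mathsf{QObf}(1^\lambda,\rho,C)$, and the alternate proof of \Cref{cor:no-qsio-sampling} obfuscates a mixed-state program.

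Your workaround for mixed $\rho$ does not work, and you should drop it.
\begin{itemize}
\item The simulator is handed only a copy of $\rho$, not a purifying register, so it cannot form $(\ket{\Psi}_{AB},C')$.
\item Even granting that this pure program exists, QSIO security speaks about $\mathsf{QObf}(\ket{\Psi}_{AB},C')$. That object may depend arbitrarily on $B$ and says nothing about $\mathsf{QObf}(\rho,\Eval)$, which is what your $\Sim$ actually outputs.
\item Because $\Sim$ is linear in its input state, $\Sim(\rho,\Eval)=\sum_i p_i\Sim(\ket{\psi_i},\Eval)$ for every decomposition of $\rho$. A guarantee for pure programs can therefore only be applied componentwise, which requires each component to implement $f$.
\end{itemize}
This componentwise route fails for the decomposition the mixture comes with in \Cref{thm:d-is-lossy}: each $P_r$ implements a constant function in $\calC_\secp$, not $d$. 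In general no suitable decomposition need exist. Take $\rho=I/2$ on one qubit with $\Eval$ measuring it in the $Z$, $X$ or $Y$ basis depending on $x$: the mixture outputs a uniform bit on every input, but every pure component has a unit Bloch vector and so is biased on some input.

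The step your proof actually needs is QSIO security for the (padded) mixed-state program itself. Under that reading, which is the paper's, your argument is complete and coincides with the paper's proof.
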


\begin{proof}
By assumption, let $\QObf$ be an obfuscator that satisfies quantum state indistinguishability obfuscation for quantum sampling programs for $\calF$. Consider the following quantum sampling program $P_f = (\ket{\psi_f}, C_f)$ that implements classical randomized functions $f:\calX \times \calR \rightarrow \calY$ for $f\in\calF$: the program state is a description of $f$, along with a uniform superposition over $\calR$,
\begin{align*}
    \ket{\psi_f} = \ket{f} \otimes \sum_{r\in\calR} \ket{r}.
\end{align*}
The circuit $C_f$ maps
\begin{align*}
    \ket{x, u}_\calQ \otimes \ket{r, f}_\calP \mapsto \ket{x \oplus f(x; r)}_\calQ \otimes \ket{r, f}_\calP.
\end{align*}
Without loss of generality, we assume that all functionalities in $\calF$ are padded to the same length.
Define $\OTP^* (1^\lambda, f) := \QObf(1^\lambda, P_f)$ for $f \in \calF$. We claim that $\OTP^*$ satisfies the best-possible one-time security guarantee of \Cref{def:bp-sim}.
For every q.p.t.\ adversary $\calA$, define a corresponding $\Sim(1^\lambda, P) \rightarrow \calA(1^\lambda, \QObf(P))$.
Then, for every $f\in\calF$, for every program $P$ that implements $f$, by the security of quantum state indistinguishability obfuscation, it must hold that for all q.p.t.~distinguishers $D$,
\begin{align*}
    \left|\Pr[1 \leftarrow D(1^\lambda, f, \calA(\OTP^*(1^\lambda, f)))] - \Pr[1 \leftarrow D(1^\lambda, f, \Sim(1^\lambda, P))] \right|
\end{align*}
Otherwise, there exists some $f \in \calF$ and some program $P$ that implements $f$ such that the following two distributions are distinguishable:
\begin{align*}
   \calA(\OTP^*(1^\lambda, f)) &= \calA(\QObf(1^\lambda, P_f)) \text{  and}\\
   \Sim(1^\lambda, P) &= \calA(1^\lambda, \QObf(P)),
\end{align*}
even though $P_f$ and $P$ implement the same classical randomized functionality $f$.
\end{proof}

\begin{corollary}\label{cor:no-qsio-sampling}
    There exists a class of sampling functionalities $\calD$ such that there is no quantum state indistinguishability obfuscator (as defined in \Cref{defn:qsio-sampling}) for $\calD$.
\end{corollary}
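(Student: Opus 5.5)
The plan is to take $\calD$ to be the function family $\calF = \calC \cup \calD \cup \calE$ built from a lossy PKE scheme in the proof of \Cref{thm:impossibility-of-bp-otp-from-lossy-pke}. Then combine \Cref{lem:qsio-sampling-implies-bpotp} with \Cref{thm:impossibility-of-bp-otp-from-lossy-pke} by contraposition.

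Suppose a quantum state indistinguishability obfuscator for sampling programs (\Cref{defn:qsio-sampling}) existed for this class. Then \Cref{lem:qsio-sampling-implies-bpotp} yields a one-time compiler $\OTP^*(1^\secp, f) := \QObf(1^\secp, P_f)$ that is best-possible for the class in the sense of \Cref{def:bp-sim}. That compiler must behave incompatibly on the lossy family, as shown via \Cref{thm:A-c-outputs-0}, \Cref{thm:A-e-outputs-1}, \Cref{thm:d-is-injective} and \Cref{thm:d-is-lossy}. This contradicts the assumed existence of lossy PKE, which we take from LWE via \Cref{thm:lossy-pke-from-LWE} or from group actions via \Cref{thm:lossy-PKE-from-group-actions}. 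The statement of the corollary thus holds under the same assumptions as \Cref{thm:impossibility-of-bp-otp}.

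Before applying the lemma, I would check that its hypotheses match exactly. First, \Cref{defn:qsio-sampling} quantifies over families of programs implementing a functionality up to negligible error. This corresponds to the statistical-equivalence version of \Cref{def:bp-sim} (the LWE instantiation). In the group-action case the error is zero, which is also covered.

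Second, the impossibility proof uses mixed-state programs $P = (\rho, \Eval_\pk)$ with $\rho$ uniform over $\ket{r}$. \Cref{defn:qsio-sampling} is written for pure $\ket{\psi}$. I would handle this by purifying: replace $\rho$ with $\sum_r \ket{r}\ket{r}$ and keep the second copy as an untouched register inside the program. The first-query output distribution is then unchanged. Since \Cref{thm:d-is-lossy} only needs the simulator's acceptance probability on $P$ to equal the average over the $P_r$, I would instead argue directly. Apply the iO security to the pair $(P_f, P_r)$ and to $(P_d, P_{\text{purified}})$ and average.

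Third, I would pad all programs to a common length, as the lemma's proof already assumes.

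The main obstacle I expect is this mixed-versus-pure mismatch. If purification breaks the averaging argument, because the simulator now sees a single pure program and not a mixture, my fallback is the direct proof sketched in the text. Take $d$ with lossy $\pk$. By iO, $\QObf(P_d)$ is indistinguishable from $\QObf$ of the purified constant-sampler program. Running $\calA$ on the latter yields $0$: its output register is (after tracing the purifying copy) a function of $r$ alone, so the $\scrX\scrX'$ EPR pair is undisturbed. Meanwhile \Cref{thm:d-is-injective} via mode indistinguishability forces $\calA$ to output $1$ on $\QObf(P_d)$. The two conclusions contradict each other, which proves the corollary.
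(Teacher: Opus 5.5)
Your top-level argument is the paper's: \Cref{cor:no-qsio-sampling} follows from \Cref{lem:qsio-sampling-implies-bpotp} and \Cref{thm:impossibility-of-bp-otp-from-lossy-pke}. The paper's direct proof likewise feeds the \emph{mixed} state $|\calR_\Enc|^{-1}\sum_r\ket{r}\bra{r}$ to the obfuscator and uses linearity.

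The gap is in the step you flag as the main obstacle. Both your purification repair and your fallback rest on the claim that $\calA$ outputs $0$ on $\QObf(P_{\mathrm{pur}})$. Your justification (``the output register is a function of $r$ alone'') is a property of $P_{\mathrm{pur}}$, not of its obfuscation. Correctness of $\QObf$ only fixes the first-query output distribution, here that of $d$. Entangling implementations such as $(x,r)\mapsto\Enc(\pk,x;r)$ have exactly that distribution, and by the argument of \Cref{thm:d-is-injective}, $\QObf(P_d)$ must in fact be of this kind. What forces non-entangling behavior is correctness on the \emph{constant} programs $P_r=(\ket{r},\Eval_\pk)$, transferred to the $d$-program by linearity of $\QObf$ in its input state. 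With the purifying copy inside the program register, $\QObf$ acts coherently on $\ket{\Phi}\bra{\Phi}$, where $\ket{\Phi}\propto\sum_r\ket{r}\ket{r}$. Its off-diagonal terms block any decomposition of $\QObf(P_{\mathrm{pur}})$ as an average of the $\QObf(P_r)$; for example, the obfuscator could first CNOT one copy onto the other and be left with a plain superposition. Applying iO to pairs involving $P_r$ does not help either, since $P_r$ implements $c_r$, not $d$.

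There are two ways to close this. The first is to read \Cref{defn:qsio-sampling} as admitting mixed program states; its syntax is $\QObf(1^\lambda,\rho,C)$, and \Cref{lem:qsio-sampling-implies-bpotp} uses it that way. In that case your first paragraph already suffices. The second, if you insist on pure states, is to twirl with random signs instead of purifying:
\begin{itemize}
\item For $s\in\{\pm1\}^{\calR_\Enc}$ let $Q_s=(\ket{\phi_s},\Eval_\pk)$ with $\ket{\phi_s}=|\calR_\Enc|^{-1/2}\sum_r s_r\ket{r}$. Each $Q_s$ is pure and implements $d$ with error $\mu$, because the measured output ignores the signs.
\item QSIO security then gives $\Pr[\calA(\QObf(P_d))\to1]\le\Pr[\calA(\QObf(Q_s))\to1]+\negl$, uniformly in $s$ and $\pk$. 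Otherwise a worst-case choice for each $\secp$ gives a pair of families violating the definition.
\item Since $\mathbb{E}_s\ket{\phi_s}\bra{\phi_s}=|\calR_\Enc|^{-1}\sum_r\ket{r}\bra{r}$, averaging over $s$ and using linearity bounds the right side by $\mathbb{E}_r\Pr[\calA(\QObf(P_r))\to1]+\negl$.
\item That quantity is negligible by correctness on constant functions, exactly as in \Cref{thm:A-c-outputs-0}.
\end{itemize}
This contradicts \Cref{thm:A-e-outputs-1} combined with mode indistinguishability, as you intended.
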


\begin{proof}[Alternate proof of \Cref{cor:no-qsio-sampling}]
Consider the following families of classical circuits.
\begin{enumerate}
    \item $C_{\pk_{\lossy}, r^*}(x; r) := \Enc(\pk_\lossy, 0; r^*)$, where $\pk_\lossy$ is sampled as a lossy encryption key.
    \item $D_{\pk_\lossy}(x; r) := \Enc(\pk_\lossy, x; r)$
    \item $E_{\pk_\inj}(x; r) := \Enc(\pk_\inj, x; r)$
\end{enumerate}
We assume that the obfuscator preserves functionality, so in the case of the $C$ circuits (which are deterministic and constant), evaluation does not entangle the input and output registers. We can test entanglement between the input and output registers by evaluating on the uniform superposition of inputs $\ket{+} := \sum_x \ket{x}$, measuring the output registers, and then checking that the input register is still in the $\ket{+}$ state.

In the $E$ programs, however, by the correctness of the encryption scheme, evaluation must create entanglement the input and output registers. Since the only difference in the $D$ and $E$ circuits is that the public key is sampled from the lossy and injective modes respectively, they are indistinguishable, and therefore evaluation of $D$ must also entangle the input and output registers.

However, consider the following two programs:
\begin{itemize}
    \item Program description contains the mixed state $\sum_{r^*} \ketbra{\Enc(\pk_\lossy, 0; r^*)}$, and the evaluation algorithm simply copies the contents of the program register onto the output register. 
    \item Program description is simple the classical circuit $D_{\pk_\lossy}$ and evaluation algorithm is to evaluate this classical circuit on fresh randomness by producing $\ket{+}_{\calR}$.
\end{itemize}
For every input $x$, these programs have the same output distributions. By padding, we can ensure their descriptions have the same length. So if we had IO for sampling quantum circuits (as in \Cref{defn:qsio-sampling}), the obfuscations should be indistinguishable. However, they are not: The obfuscation of the first program is going to be a mixture of obfuscations over $C_{\pk_\lossy, r^*}$, where the mixture is taken over $r^*$. Since each element of the mixture does not entangle the input and output registers, so does the overall mixture. But as we have already seen, the obfuscation of $D$ must be entangling the input and output registers.
\end{proof}

\section{A Generic Multi-Observable Attack against Testable Programs}
\label{sec:many-observables-attack}

This section formalizes the attack sketched in the discussion section: a testable program can always be used to estimate arbitrarily many output observables on many inputs.

\begin{theorem}[Marriott--Watrous Empirical Estimator {\cite[proof of Theorem~4, Fig.~2]{MW05}}]
\label{thm:mw-empirical-estimator}
Fix a verifier unitary $A$ and projectors $\Pi_1,\Delta_1$ as in the proof of \cite[Theorem 4]{MW05}, and define
\[
Q \colonequals (I\otimes \bra{0^k})A^\dagger \Pi_1 A(I\otimes \ket{0^k}).
\]
For any integer $N\ge 1$, the Marriott--Watrous procedure $B$ (proof of \cite[Theorem 4]{MW05}) outputs bits $z_1,\ldots,z_N\in\{0,1\}$.
If the input witness is an eigenvector of $Q$ with eigenvalue $p\in[0,1]$, then
\[
\Pr[(z_1,\ldots,z_N)=z]=p^{w(z)}(1-p)^{N-w(z)}
\]
for every $z\in\{0,1\}^N$ (see the analysis around Fig.~2 in the proof of \cite[Theorem 4]{MW05}).
Consequently, for
\[
\widetilde p \colonequals \frac{1}{N}\sum_{\ell=1}^N z_\ell,
\]
we have
\[
\Pr\!\left[\left|\widetilde p-p\right|>\alpha\right]\le 2e^{-2N\alpha^2}
\]
for every $\alpha>0$.
In particular, for $N=\Theta(\alpha^{-2}\log(1/\beta))$, $\widetilde p$ is an $(\alpha,\beta)$ additive estimator of $p$.
\end{theorem}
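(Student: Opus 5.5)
The plan is to reduce the statement to a spectral analysis of the two-reflection walk underlying the Marriott--Watrous procedure, i.e.\ Jordan's lemma. In the MW procedure $B$ one alternately measures the two projective measurements $\{\Pi_1,I-\Pi_1\}$ (after applying $A$) and $\{\Delta_1,I-\Delta_1\}$ with $\Delta_1 = I\otimes\ketbra{0^k}$ (before applying $A$). The outcomes $z_\ell$ are read off from agreement or disagreement between consecutive measurement results, following Fig.~2 of \cite{MW05}. I will take the procedure and the projectors exactly as in \cite[Theorem 4]{MW05}. So the work is to verify the claimed outcome distribution and then apply a standard concentration bound.

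\textbf{Step 1 (Jordan decomposition).} Write $P_0 := A^\dagger\Pi_1 A$ and $P_1 := \Delta_1$, two projectors on the same space. By Jordan's lemma the space splits into invariant subspaces of dimension at most $2$. On each $2$-dimensional block, $P_0$ and $P_1$ are rank-one projectors onto vectors $\ket{v_j},\ket{w_j}$ with $|\braket{v_j|w_j}|^2=p_j$. The operator $Q = (I\otimes\bra{0^k})P_0(I\otimes\ket{0^k})$ is the compression of $P_0$ onto $\mathrm{im}(P_1)$. Its eigenvectors are exactly the vectors $\ket{w_j}$ with eigenvalues $p_j$, together with the one-dimensional blocks, which have $p\in\{0,1\}$.

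\textbf{Step 2 (Markov chain on a block).} Suppose the input witness $\ket{\phi}$ is an eigenvector of $Q$ with eigenvalue $p$. Then $\ket{\phi}\otimes\ket{0^k}$ lies in a single Jordan block, and all alternating measurements keep the state inside that block. Within the block, every measurement of $P_0$ or $P_1$ collapses the state onto $\ket{v},\ket{v^\perp},\ket{w}$ or $\ket{w^\perp}$. The transition probabilities between these states are $p$ or $1-p$. I will check, case by case, that each consecutive pair of outcomes defining $z_\ell$ yields $z_\ell=1$ with probability exactly $p$. I will also check that this probability is independent of the history: the post-measurement state is one of the block basis vectors, and from either basis vector the relevant transition probability is the same ($p$ for "staying aligned", $1-p$ for flipping). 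This shows the $z_\ell$ are i.i.d.\ Bernoulli$(p)$, which gives $\Pr[z]=p^{w(z)}(1-p)^{N-w(z)}$. The degenerate blocks with $p\in\{0,1\}$ need a separate but trivial check.

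\textbf{Step 3 (concentration).} Given i.i.d.\ Bernoulli$(p)$ samples, Hoeffding's inequality gives $\Pr[|\widetilde p-p|>\alpha]\le 2e^{-2N\alpha^2}$. Setting $N=\lceil \alpha^{-2}\ln(2/\beta)/2\rceil$ makes the right-hand side at most $\beta$, which is the $(\alpha,\beta)$ estimator claim.

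\textbf{Main obstacle.} I expect Step 2 to be the main obstacle, specifically matching the exact definition of $z_\ell$ in Fig.~2 of \cite{MW05}. That is, I must confirm whether $z_\ell$ records the outcome of the $\Pi_1$ measurement or the parity of consecutive outcomes, and that under this definition the outcomes are genuinely independent rather than merely having marginal probability $p$. My first attempt will be to write the $2\times 2$ transition matrix on $\{\ket{w},\ket{w^\perp}\}$ for one full round and show that the conditional probability of $z_\ell=1$ equals $p$ from both states.
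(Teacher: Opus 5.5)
Your proposal is correct and follows the same route as the paper. The paper gives no argument of its own: it defers the i.i.d.\ Bernoulli$(p)$ claim to the two-dimensional invariant-subspace analysis of Marriott--Watrous and then applies Hoeffding. In that analysis $z_i$ is the agreement indicator $[y_i = y_{i-1}]$ with $y_0 = 1$, which is exactly your ``staying aligned'' event; the raw measurement outcomes would not be i.i.d., so this is the right choice.

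One point needs tightening: when $Q$ has repeated eigenvalues, the witness need not lie in a single block of a pre-fixed Jordan decomposition. Instead, work directly in $\mathrm{span}\{\ket{\phi}\ket{0^k}, A^\dagger\Pi_1 A\ket{\phi}\ket{0^k}\}$, which is invariant under both projectors because $Q\ket{\phi}=p\ket{\phi}$.
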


\begin{corollary}[Gentleness of Laplace Noise Measurement {\cite[Corollary 6]{AR19-qdp}}]
\label{cor:laplace-gentle}
Let $L_\sigma$ denote the Laplace-noise measurement applied to $n$ registers.
Then $L_\sigma$ is $O(\sqrt{n}/\sigma)$-gentle on product states.
\end{corollary}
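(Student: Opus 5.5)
The plan is to follow the route of \cite{AR19-qdp}, which goes from quantum differential privacy to gentleness, instead of bounding the post-measurement disturbance directly. First I will fix the meaning of $L_\sigma$. Each of the $n$ registers carries an observable $O_i$ with eigenvalues in $[0,1]$. The measurement $L_\sigma$ measures each $O_i$, adds independent Laplace noise of scale $\sigma$ to the sum $\sum_i O_i$, and outputs the noisy value. Concretely, $L_\sigma$ is the POVM whose elements are the operators
\[
M_t = \sum_{\lambda} \tfrac{1}{2\sigma} e^{-|t-\lambda|/\sigma}\, \Pi_\lambda ,
\]
where $\Pi_\lambda$ are the joint spectral projectors of $\sum_i O_i$. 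I will take its canonical implementation via the square roots $\sqrt{M_t}$. Since this implementation is a function of commuting observables with noise added, its Kraus operators are diagonal in a product eigenbasis.

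The first step is to show that $L_\sigma$ is $\varepsilon$-quantum differentially private with $\varepsilon = 1/\sigma$. Here neighbouring states are product states that differ on a single register. Changing one register shifts each eigenvalue $\lambda$ of the sum by at most $1$. The Laplace density then gives the operator bound $M_t \preceq e^{1/\sigma} M_{t'}$ pointwise in the appropriate sense, exactly as in the classical Laplace mechanism. This yields $\Pr[L_\sigma(\rho)\in S] \le e^{\varepsilon}\Pr[L_\sigma(\rho')\in S]$ for all neighbouring product states $\rho,\rho'$ and all events $S$.

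The second step invokes the general theorem of \cite{AR19-qdp} that an $\varepsilon$-DP measurement, implemented by its canonical square-root instrument, is $O(\varepsilon\sqrt n)$-gentle on product states. The mechanism I would reproduce is as follows. Gentleness on a product state $\rho=\bigotimes_i\rho_i$ reduces to showing that the outcome distribution conditioned on $t$ is close to the unconditioned one, in the sense of the fidelity between $\rho$ and the post-measurement state. One writes each $\rho_i$ in its eigenbasis, so the state becomes a classical product distribution over eigenvalue strings together with phases. Then one uses the classical fact that an $\varepsilon$-DP algorithm run on a product distribution changes the posterior of the input only by $O(\varepsilon\sqrt n)$ in total variation. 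Via Hellinger distance and \Cref{lemma:hellinger-trace-distance}, this controls the trace distance between $\rho$ and the post-measurement state. The $\sqrt n$ comes from a concentration (Azuma/KL-chain) bound on the sum of $n$ independent log-likelihood ratios, each bounded by $\varepsilon$. Substituting $\varepsilon=1/\sigma$ gives the claimed $O(\sqrt n/\sigma)$.

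The main obstacle is the second step. It needs a quantum version of the classical DP-to-posterior-stability argument that handles coherences and not only diagonal entries. I would first check that, because $\sqrt{M_t}$ is diagonal in a product eigenbasis of the $O_i$, the problem reduces to the case where each $\rho_i$ is diagonal in the eigenbasis of $O_i$, up to phases that the square-root instrument preserves. After that reduction, only the classical bound has to be carried through Hellinger and fidelity. If the reduction fails for non-commuting $\rho_i$ and $O_i$, I would instead simply cite \cite[Corollary~6]{AR19-qdp} for this step.
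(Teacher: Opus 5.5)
Your route is the one behind the cited \cite[Corollary~6]{AR19-qdp}: the Laplace mechanism is $1/\sigma$-DP, and DP \emph{product} measurements are $O(\varepsilon\sqrt n)$-gentle on product states. The paper itself gives no argument beyond that citation. Your sketch of the second step, however, has three problems.

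\textbf{The product hypothesis.} The DP$\Rightarrow$gentleness theorem is not general. It needs the measurement to be a classical post-processing of independent single-register measurements, and it fails for general measurements. For example, with high probability a projector $\Pi$ onto a Haar-random half-dimensional subspace has $\Tr(\Pi\rho)\approx 1/2$ on every product state. So it is $\varepsilon$-DP for exponentially small $\varepsilon$, yet projecting onto it moves any product pure state by trace distance about $1/\sqrt2$. $L_\sigma$ does have this product structure, and your argument relies on exactly that structure, so the hypothesis belongs in the statement.

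\textbf{Hellinger, not total variation.} The classical input must be a Hellinger (or KL) bound. \Cref{lemma:hellinger-TV} only gives $H\le\sqrt{2\,\mathrm{TV}}$. A TV bound of $O(\sqrt n/\sigma)$ would therefore give only $O(n^{1/4}\sigma^{-1/2})$-gentleness, which is weaker precisely in the nontrivial regime $\sigma>\sqrt n$. The KL route you mention does work:
\begin{itemize}
\item Let $p$ be the product distribution of $O$-eigenvalue strings and $g_t(\vec\lambda)=\frac{1}{2\sigma}e^{-|t-\sum_i\lambda_i|/\sigma}$. At fixed $t$, $\log g_t$ has bounded differences $\varepsilon=1/\sigma$. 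This pointwise bound is what you need; your inequality between $M_t$ and $M_{t'}$ is the wrong formulation.
\item Set $Z=\log g_t-\mathbb{E}_p\log g_t$. McDiarmid gives $\log\mathbb{E}_p e^{sZ}\le s^2n\varepsilon^2/8$.
\item Convexity of this log-MGF, together with its nonnegativity at $s=1$, bounds the KL divergence of the posterior $p(\cdot\mid t)\propto p\,g_t$ from $p$ by $n\varepsilon^2/2$. Hence $H^2\le n\varepsilon^2/4$.
\end{itemize}

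\textbf{The reduction to diagonal $\rho_i$.} The step you flag as the main obstacle is not valid as stated, and it is also unnecessary. You cannot reduce to $\rho_i$ diagonal in the $O_i$-eigenbasis: the coherences of $\rho$ are rescaled by $\sqrt{g_t(\vec\lambda)g_t(\vec\lambda')}$ and contribute to the trace distance. Writing $\rho_i$ in its \emph{own} eigenbasis is the wrong decomposition altogether. Your ``up to phases'' remark points at the right idea, and purification makes it precise:
\begin{itemize}
\item Purify each $\rho_i$ and write the purified product state as $\ket{\Psi}=\sum_{\vec\lambda}\sqrt{p(\vec\lambda)}\ket{\phi_{\vec\lambda}}$. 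Here $p(\vec\lambda)=\prod_i\Tr(\Pi^{(i)}_{\lambda_i}\rho_i)$, and $\ket{\phi_{\vec\lambda}}$ is the normalized image of $\ket{\Psi}$ under $\bigotimes_i\Pi^{(i)}_{\lambda_i}$, acting as the identity on the purifying registers.
\item These vectors are orthonormal and carry all the phases.
\item $\sqrt{M_t}$ acts on each such eigenspace as the scalar $\sqrt{g_t(\vec\lambda)}$. So the post-measurement state is $\sum_{\vec\lambda}\sqrt{p(\vec\lambda\mid t)}\ket{\phi_{\vec\lambda}}$, and its overlap with $\ket{\Psi}$ is the Bhattacharyya coefficient $1-H^2$.
\item \Cref{lemma:hellinger-trace-distance}, plus contractivity of trace distance under tracing out the purification, gives gentleness $\sqrt{1-(1-H^2)^2}\le\sqrt2\,H\le\sqrt n/(\sqrt2\,\sigma)$ for every outcome $t$.
\end{itemize}
This needs no commutation assumption between $\rho_i$ and $O_i$, so your fallback of citing \cite{AR19-qdp} for this step is not needed.
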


We use the following finite-outcome wrapper abstracted from the camera-ready theorem statement \cite[Theorem 7]{AR19-qdp} and the full-version derivation \cite[Proposition 55, Theorem 56, and \S7.3 (proof of Theorem 7)]{AR19-qdp-full-v1}.
\begin{theorem}[Safe-Use Wrapper for Finite-Outcome Estimation]
\label{thm:ar19-safe-estimation}
Fix a finite outcome set $\mathcal{Y}$ and a classical post-processing rule $\mathcal{C}:\mathcal{Y}\to\{\mathrm{pass},\mathrm{fail}\}$.
Suppose $\mathcal{A}$ is an estimation subroutine whose output lies in $\mathcal{Y}$.
Then there is a compiled subroutine $\widetilde{\mathcal{A}}$ with two guarantees:
\begin{enumerate}
    \item \textbf{Accuracy/copy complexity.}
    If there exists an $(\alpha,\beta)$ additive estimator for the same quantity that uses $n$ copies of the input state, then $\widetilde{\mathcal{A}}$ uses
    \[
    n + \poly[1/\alpha,1/\beta]
    \]
    copies and matches that estimator's output with probability at least $1-\beta$.
    \item \textbf{Coherent safety with side information.}
    For every joint pure state
    \[
    \ket{\Psi}=\sum_i c_i\ket{u_i}_W\ket{v_i}_S,
    \]
    where $S$ is arbitrary side information, define
    \[
    p_i\colonequals \Pr[\mathcal{C}(\mathcal{A}(\ket{u_i}))=\mathrm{pass}],\qquad
    p\colonequals\sum_i |c_i|^2p_i.
    \]
    If $\widetilde{\mathcal{A}}$ is applied to register $W$ and the post-processing output is $\mathrm{pass}$, then the post-measurement joint state $\ket{\widetilde{\Psi}}$ obeys
    \[
    \left\|\ket{\Psi}-\ket{\widetilde{\Psi}}\right\|_2\le \alpha(1-p)^2.
    \]
\end{enumerate}
\end{theorem}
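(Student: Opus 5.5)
The statement to prove is \Cref{thm:ar19-safe-estimation}, the safe-use wrapper. I would not prove it from scratch. Instead I would assemble it from \Cref{cor:laplace-gentle} and \Cref{thm:mw-empirical-estimator}, following the structure of \cite[Proposition 55, Theorem 56, \S7.3]{AR19-qdp-full-v1}. The compiled subroutine $\widetilde{\mathcal{A}}$ has three parts. First, run $\mathcal{A}$ coherently, writing its outcome in $\mathcal{Y}$ to an ancilla instead of measuring it. Second, replace the sharp readout by a Laplace-noise measurement $L_\sigma$ on a suitably encoded statistic, for example the empirical mean over the $n$ copies. Third, apply the classical rule $\mathcal{C}$ and, on $\mathrm{pass}$, uncompute the coherent part.

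\textbf{Accuracy.} Let $n$ be the copy count of the given $(\alpha,\beta)$ estimator. I would add $\poly[1/\alpha,1/\beta]$ extra copies, chosen so that the Laplace noise of scale $\sigma$ shifts the estimate by at most $\alpha$ except with probability $\beta$. The tail bound $\Pr[|\mathrm{Lap}(\sigma)|>t]=e^{-t/\sigma}$ shows that taking $\sigma=\Theta(\alpha\cdot m/\log(1/\beta))$ for $m$ total copies suffices. Finiteness of $\mathcal{Y}$ lets us round the noisy value to the nearest element, so it agrees with the baseline output with probability $1-\beta$. This gives the copy bound $n+\poly[1/\alpha,1/\beta]$.

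\textbf{Coherent safety.} Expand $\ket{\Psi}=\sum_i c_i\ket{u_i}\ket{v_i}$. Since the wrapper acts only on $W$, linearity reduces the problem to two things: controlling each branch $\ket{u_i}$, and controlling the cross terms created by the $\mathrm{pass}$ projection. On each branch, \Cref{cor:laplace-gentle} bounds the disturbance of the noisy measurement by $O(\sqrt m/\sigma)$, and this can be made at most $\alpha$ by the choice of $\sigma$. Conditioning on $\mathrm{pass}$ then renormalizes by $p$. I would compute the overlap $\braket{\Psi|\widetilde\Psi}$ branch-wise, bounding it below by $\sum_i|c_i|^2 p_i$ weighted by the per-branch gentleness, and then convert the overlap to a $2$-norm distance.

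\textbf{Main obstacle.} The hard step is obtaining the exact form $\alpha(1-p)^2$, which vanishes quadratically as $p\to1$. Gentleness only yields an additive $\alpha$. The extra $(1-p)^2$ has to come from noting that branches with $p_i\approx 1$ are almost unchanged by the projection onto $\mathrm{pass}$, so their rejected amplitude is of order $\sqrt{1-p_i}$, and that this enters the distance squared. My first attempt would be to write the post-measurement state as $\Pi_{\mathrm{pass}}\ket{\Psi}/\sqrt{p}$ and bound $\|\Pi_{\mathrm{fail}}\ket{\Psi}\|^2=1-p$. I would then combine this with the gentleness bound through a triangle inequality and keep the product term rather than the sum.

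If the quadratic dependence cannot be recovered this way, I would fall back on citing \cite[Theorem 7]{AR19-qdp} directly. That source establishes the bound for product states. I would lift it to arbitrary side information $S$ using its stated purification argument in \S7.3, since the wrapper acts trivially on $S$.
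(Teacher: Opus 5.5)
The step you flag as the main obstacle cannot be completed, because item~(2) with the factor $(1-p)^2$ is false as stated. Take $\mathcal{C}\equiv\mathrm{pass}$. Then $p_i=1$ for every component, so $p=1$, and item~(2) demands $\ket{\widetilde\Psi}=\ket\Psi$ exactly.

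Set up the following counterexample.
\begin{itemize}
\item Let the quantity be $\Tr(\ketbra{0}\rho)$ for a qubit $\rho$.
\item Let $\mathcal{A}$ be the empirical frequency of outcome $0$, which outputs exactly $1$ on $\ket{0}^{\otimes n}$ and exactly $0$ on $\ket{1}^{\otimes n}$.
\item Let $W$ hold the $n'$ copies that $\widetilde{\mathcal{A}}$ consumes, and let $\ket\Psi=(\ket{0}^{\otimes n'}\ket{0}_S+\ket{1}^{\otimes n'}\ket{1}_S)/\sqrt2$.
\end{itemize}
Zero disturbance forces $(K_y\otimes I_S)\ket\Psi=\ket\Psi\otimes\ket{a_y}$ for every outcome $y$, where $K_y$ is its Kraus operator (possibly into an ancilla). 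Projecting $S$ onto $\ket0$ and onto $\ket1$ gives $K_y\ket{0}^{\otimes n'}=\ket{0}^{\otimes n'}\ket{a_y}$ and $K_y\ket{1}^{\otimes n'}=\ket{1}^{\otimes n'}\ket{a_y}$ with the same $\ket{a_y}$. So $\widetilde{\mathcal{A}}$ has the identical output distribution $\|a_y\|^2$ on both inputs. This contradicts item~(1) for $\beta<1/2$.

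The bound also fails for product inputs with no side information. Run the same argument on the isometry $\sum_y\ket{y}\otimes K_y$ along the non-orthogonal chain $\ket0,\ket+,\ket1$. This refutes your claim that \cite{AR19-qdp} establishes it there.

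The example also shows where your outline breaks. Each branch is a product state that this estimator leaves untouched, so per-branch gentleness from \Cref{cor:laplace-gentle} says nothing about the cross terms. Those cross terms are exactly what an informative estimate must destroy. Separately, a triangle inequality only produces a sum, so there is no valid way to ``keep the product term.'' The pass projection alone moves the state by $\bigl\|\Pi_{\mathrm{pass}}\ket\Psi/\sqrt p-\ket\Psi\bigr\|_2=\sqrt{2-2\sqrt p}\ge\sqrt{1-p}$. The gentleness term does not vanish as $p\to1$.

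Your fallback is exactly the paper's own proof, which is nothing more than citing \cite[Theorem 7]{AR19-qdp} and \S7.3, Proposition 55 and Theorem 56 of \cite{AR19-qdp-full-v1}, under a coherent encoding of $\mathcal{A}$ with $\mathcal{C}$ appended and $\mathrm{pass}$ as acceptance. It inherits the same defect. The paper itself calls the camera-ready Theorem~7 only qualitative, and the counterexample shows that no argument, purification-based or otherwise, can yield $\alpha(1-p)^2$ for an arbitrary predicate $\mathcal{C}$. A correct statement must tie the disturbance to how concentrated the estimate is on the actual input, as in gentleness on product states, not to the pass probability alone.

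There is also a smaller issue in your item~(1). Rounding a noisy value to the nearest point of $\mathcal{Y}$ does not make it \emph{match} the reference estimator's output with probability $1-\beta$. Two values within $\alpha$ of each other can round to different grid points, so you would need a coupling or random-offset rounding.
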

\begin{proof}
\cite[Theorem 7]{AR19-qdp} (camera-ready version) gives the qualitative statement that such estimation subroutines can be used coherently inside larger quantum computations.
For the quantitative wrapper we use the full-version derivation: \cite[proof of Theorem 7 in \S7.3]{AR19-qdp-full-v1} gives the coherent estimate-and-uncompute template (including the explicit copy scaling parameter), while \cite[Proposition 55 and Theorem 56]{AR19-qdp-full-v1} formalize the garbage-uncomputation/QSampling step.
To obtain the finite-outcome form above, encode $\mathcal{A}$ coherently so that its output register stores $y\in\mathcal{Y}$, append the predicate $\mathcal{C}(y)$, and treat ``$\mathrm{pass}$'' as the acceptance event.
Under this encoding, the full-version safe-subroutine argument yields item (2), with $p_i$ equal to the pass probability on component $\ket{u_i}$ and $p=\sum_i |c_i|^2p_i$.
Item (1) is the corresponding polynomial-overhead consequence of that full-version analysis, instantiated with an $(\alpha,\beta)$ reference estimator using $n$ copies.
%We use this wrapper as a black-box consequence of \cite[Theorem 7]{AR19-qdp} in the sequel.
\end{proof}

\begin{lemma}[From Joint $\ell_2$-Closeness to Reduced Trace Distance]
\label{lem:l2-to-reduced-trace}
Let $\ket{\Psi},\ket{\widetilde{\Psi}}$ be pure states on registers $WS$, and let
\[
\rho_W\colonequals \Tr_S\ketbra{\Psi},\qquad
\widetilde{\rho}_W\colonequals \Tr_S\ketbra{\widetilde{\Psi}}.
\]
Then
\[
\frac12\|\rho_W-\widetilde{\rho}_W\|_1
\le
\frac12\left\|\ketbra{\Psi}-\ketbra{\widetilde{\Psi}}\right\|_1
\le
\left\|\ket{\Psi}-\ket{\widetilde{\Psi}}\right\|_2.
\]
\end{lemma}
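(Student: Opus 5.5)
The plan is to prove the two inequalities separately: the first is monotonicity of trace distance under partial trace, and the second is the standard pure-state relation between trace distance and the Euclidean distance of the vectors.

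For the first inequality, $\rho_W - \widetilde{\rho}_W = \Tr_S(\ketbra{\Psi} - \ketbra{\widetilde{\Psi}})$, and the partial trace is a quantum channel. It is therefore contractive in trace norm. Concretely, I would use the variational characterization $\|X\|_1 = \max_{\|M\|\le 1} |\Tr(MX)|$ and note that for any $M_W$ with $\|M_W\|\le 1$ we have $\Tr(M_W \Tr_S X) = \Tr((M_W\otimes I_S) X)$, where $\|M_W\otimes I_S\|\le 1$. Hence the maximum over operators on $W$ is at most the maximum over operators on $WS$, which gives the first inequality.

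For the second inequality, write $\ket{\Psi}=\ket{a}$ and $\ket{\widetilde{\Psi}}=\ket{b}$, both unit vectors. The trace distance of pure states is $\frac12\|\ketbra{a}-\ketbra{b}\|_1 = \sqrt{1-|\braket{a|b}|^2}$; this follows since the operator $\ketbra{a}-\ketbra{b}$ has rank at most $2$, is traceless, and has eigenvalues $\pm\sqrt{1-|\braket{a|b}|^2}$. Meanwhile $\|\ket{a}-\ket{b}\|_2^2 = 2 - 2\,\mathrm{Re}\braket{a|b}$. Setting $c=|\braket{a|b}|\le 1$ and using $\mathrm{Re}\braket{a|b}\le c$, it suffices to show $1-c^2 \le 2-2c$. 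This is equivalent to $(1-c)^2\ge 0$, which always holds. Taking square roots gives the second inequality.

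The only mildly delicate point is the eigenvalue computation for the rank-two difference of projectors; I would carry it out by restricting to $\mathrm{span}\{\ket a,\ket b\}$ and computing the $2\times2$ determinant. If the problem statement allows subnormalized $\ket{\widetilde{\Psi}}$ (e.g.\ a post-measurement state before renormalization), this step needs care. In that case I would either renormalize first or bound $\|\ketbra{a}-\ketbra{b}\|_1\le \|\ket a(\bra a-\bra b)\|_1+\|(\ket a-\ket b)\bra b\|_1\le 2\|\ket a-\ket b\|_2$ directly via the triangle inequality, which gives the second inequality in general.
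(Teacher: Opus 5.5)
Your proposal is correct and follows the same route as the paper. The first inequality is contractivity of trace distance under partial trace. For the second, the paper uses the pure-state formula $\sqrt{1-|\langle \Psi\mid\widetilde{\Psi}\rangle|^2}$ and bounds it by $\|\ket{\Psi}-\ket{\widetilde{\Psi}}\|_2$; your variational argument and your $(1-c)^2\ge 0$ check supply the details the paper leaves implicit.
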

\begin{proof}
The first inequality is contractivity of trace distance under partial trace.
For pure states, the middle term equals $\sqrt{1-|\langle \Psi\mid \widetilde{\Psi}\rangle|^2}$, which is at most $\|\ket{\Psi}-\ket{\widetilde{\Psi}}\|_2$.
\end{proof}

\begin{lemma}[Multi-observable estimation attack]
\label{lem:many-observables-attack}
Let $P=(\ket{\psi},\Eval,R)$ be a testable quantum program that one-time implements a channel $\Phi$.
Let $x_1,\ldots,x_n$ be efficiently preparable input states, and let $O_1,\ldots,O_t$ be efficiently measurable Hermitian observables on the output register with $\|O_i\|_\infty\le 1$.
Then for every $\epsilon,\delta>0$, there is a QPT attacker with oracle access to $(\Eval,\Eval^\dag,R)$ that outputs values $\widehat{\mu}_{i,j}$ for all $i\in[t],j\in[n]$ such that
\[
\Pr\left[\forall i\in[t],j\in[n]:\left|\widehat{\mu}_{i,j}-\Tr\left(O_i\Phi(x_j)\right)\right|\le \epsilon\right]\ge 1-\delta.
\]
The attacker runs in time $\poly[|P|,t,n,1/\epsilon,1/\delta]$.
\end{lemma}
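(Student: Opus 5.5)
The plan is to treat each pair $(i,j)$ as a Marriott--Watrous estimation problem on the program register, and then to use the Laplace-noise safe-use wrapper so that the $tn$ estimations can be run one after another on the same program state.

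First I will set up a single estimation. Fix $(i,j)$, and let $O_i=\sum_k \lambda_k \Pi_k$ be the spectral decomposition, with eigenvalues $\lambda_k\in[-1,1]$. It suffices to estimate the quantity $p_{i,j}=\Tr(M_i\Phi(x_j))$ for the two-outcome POVM element $M_i=(I+O_i)/2$, since $\Tr(O_i\Phi(x_j))=2p_{i,j}-1$. (If $O_i$ is only efficiently measurable, I will implement $M_i$ by measuring $O_i$ and flipping a coin with bias $(1+\lambda)/2$; this purifies into an ancilla-based projector.) The verifier unitary $A$ is then as follows: prepare a purification of $x_j$ in fresh ancillas, apply $\Eval$ to the input/output and program registers, and let $\Pi_1$ be the coherent version of ``$M_i$ accepts''. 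The witness register is the program register $\calP$. The role of $\Delta_1$ (the projector onto the initial state) is played exactly by the reflection oracle $R$: one can measure $\{\ketbra{\psi},I-\ketbra{\psi}\}$ by applying $R$ to $\ket0\otimes\cdot$, measuring the flag, and uncomputing with $R$. Because $\ket\psi$ is a pure witness, $Q\ket\psi$ has $\bra\psi Q\ket\psi=p_{i,j}$. The Marriott--Watrous alternation applied to the starting state $\ket\psi$ then gives $N$ bits whose mean concentrates around $p_{i,j}$. The reason is that $\ket\psi$ lies in the span of $Q$'s eigenvectors, and the outcome statistics average the eigenvalue according to the squared amplitudes, with expectation $\langle\psi|Q|\psi\rangle$. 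By \Cref{thm:mw-empirical-estimator}, together with the observation that the final projection onto $\ket\psi$ succeeds when the eigenvalue is sampled, I get an $(\alpha,\beta)$ estimate using $N=O(\alpha^{-2}\log(1/\beta))$ calls to $\Eval,\Eval^\dagger,R$.

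The issue is that after one run the program register has collapsed onto an eigenvector of $Q$ rather than returning to $\ket\psi$. The next step is therefore to make each estimator gentle. I will compile the estimator through \Cref{thm:ar19-safe-estimation}, using the Laplace-noise measurement of \Cref{cor:laplace-gentle}. The output is rounded to a grid of width $\epsilon/4$ and passed through a pass predicate that is true whenever the estimate is well defined. This gives a coherent estimate-and-uncompute procedure. Its output is correct with probability $1-\beta$, and it disturbs the joint state of the program register and the attacker's side information by at most $\alpha'$ in $\ell_2$. By \Cref{lem:l2-to-reduced-trace}, that disturbance translates into trace-distance disturbance on the program register.

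Finally I will run the $tn$ compiled estimators sequentially, with each one starting from the state left by the previous one. A hybrid argument compares this with an ideal world where every estimator starts from exactly $\ket\psi$. Each step costs at most $O(\alpha')$ in trace distance, so the total deviation is $O(tn\alpha')$. Setting $\alpha'=\delta/(4tn)$, $\beta=\delta/(4tn)$, and the estimation accuracy to $\epsilon/2$, a union bound gives success probability at least $1-\delta$. All parameters are polynomial, so the running time is $\poly[|P|,t,n,1/\epsilon,1/\delta]$.

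The main obstacle I expect is the gentleness step. I need to check that the hypotheses of \Cref{thm:ar19-safe-estimation} apply here. The Marriott--Watrous estimator is not a product-copy estimator, and the program state is not an eigenvector of $Q$, so the ``$n$ copies'' language does not directly fit. My first attempt will be to view the MW procedure as a quantum phase estimation of the operator $(2\Pi_1-I)(2\Delta_1-I)$ restricted to the relevant two-dimensional subspaces. The estimated eigenphase determines $p$ on each Jordan block. I would then add Laplace noise to the phase register before uncomputing, and bound the disturbance directly by a Jordan-lemma decomposition of $\ket\psi$. This avoids invoking the copy-based wrapper verbatim.
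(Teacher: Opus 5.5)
\textbf{There is a genuine gap in your single-estimation step.} You take $Q$ to be the ordinary Marriott--Watrous operator for a verifier that only runs $\Eval$ and the $M_i$-test. You then treat $\ket{\psi}$ as a generic superposition $\sum_k c_k\ket{v_k}$ of $Q$-eigenvectors with eigenvalues $p_k$. In that picture, \Cref{thm:mw-empirical-estimator} does not give concentration around $\bra{\psi}Q\ket{\psi}$. The MW bits are distributed as the mixture $\sum_k |c_k|^2\,\mathrm{Bin}(N,p_k)$, so $\widetilde p$ lands near a randomly sampled eigenvalue $p_k$. Only its \emph{expectation} equals $p_{i,j}=\sum_k |c_k|^2 p_k$. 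For example, if $Q$ has eigenvalues $0$ and $1$ with equal weight in $\ket{\psi}$, the estimator outputs $0$ or $1$, while the target is $1/2$. Laplace noise, the safe-use wrapper, and your phase-estimation fallback over the Jordan decomposition of $\ket{\psi}$ cannot repair this. They control disturbance, not the fact that the readout is one block's eigenvalue rather than the weighted average.

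\textbf{The missing idea, which is the crux of the paper's proof, is to make $\ket{\psi}$ an exact eigenvector.} The paper's verifier $A_{i,j}$ first uses $R$ to check that $\calP$ is in $\ket{\psi}$ and rejects otherwise. The MW operator then becomes $Q_{i,j}=\Pi_\psi M_{i,j}\Pi_\psi$ with $\Pi_\psi=\ket{\psi}\bra{\psi}$, so $Q_{i,j}\ket{\psi}=p^\star_{i,j}\ket{\psi}$. Now \Cref{thm:mw-empirical-estimator} applies verbatim and yields i.i.d.\ Bernoulli$(p^\star_{i,j})$ bits.

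Your remark that $R$ ``plays the role of $\Delta_1$'' would amount to the same thing if taken literally. Alternating against the rank-one projector onto $\ket{\psi}\ket{0^k}$ confines the process to a single two-dimensional Jordan block with $\cos^2\theta=p_{i,j}$. But your later analysis (collapse onto an eigenvector of $Q$; ``the program state is not an eigenvector'') contradicts that reading. Once the eigenvector structure is used, both of those concerns disappear.

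The rest of the paper's proof then goes as follows:
\begin{enumerate}
\item Before each call, perform a refresh measurement $\{\Pi_\psi,I-\Pi_\psi\}$ via $R$.
\item Add Laplace noise and quantize the estimate.
\item Apply the safe-use wrapper with pass predicate ``no noise overflow.'' Every component has pass probability $1-q$, so the disturbance is at most $q^2/2\le\tau$.
\item Take a union bound over the $tn$ calls, which matches your hybrid accounting.
\end{enumerate}
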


\begin{proof}
Let $N\colonequals tn$, and index the pairs $(i,j)\in[t]\times[n]$ in any fixed order
\[
(i_1,j_1),\ldots,(i_N,j_N).
\]

For each $i$, define the two-outcome effect
\[
E_i \colonequals (I+O_i)/2.
\]
Since $\|O_i\|_\infty\le 1$ and $O_i$ is Hermitian, we have $0\preceq E_i\preceq I$.
For each input $x_j$, define
\[
p_{i,j}^\star \colonequals \Tr\!\left(E_i\Phi(x_j)\right),\qquad
\mu_{i,j}^\star \colonequals \Tr\!\left(O_i\Phi(x_j)\right)=2p_{i,j}^\star-1.
\]

\paragraph{Single-call primitive from MW + Laplace.}
Fix $(i,j)$ and parameters $\alpha,\beta,\tau>0$.
Because $O_i$ is efficiently measurable and $0\preceq E_i\preceq I$, the two-outcome POVM $\{E_i,I-E_i\}$ has an efficient coherent (Naimark) implementation.

We define a verifier $A_{i,j}$ that first checks whether the program register is in $\ket{\psi}$ (using $R$), and rejects otherwise; conditioned on passing that check, it applies $\Eval$ on input $x_j$, performs the coherent two-outcome test for $E_i$ on the output register, and uncomputes with $\Eval^\dag$.
For this verifier, the corresponding MW operator has the form
\[
Q_{i,j}=\Pi_\psi M_{i,j}\Pi_\psi,
\]
where $\Pi_\psi=\ketbra{\psi}$ and $M_{i,j}$ is the acceptance effect of the coherent $E_i$-test under $\Eval$.
Hence $\ket{\psi}$ is an eigenvector of $Q_{i,j}$ with eigenvalue $p_{i,j}^\star$.

Let $m_0$ be the number of MW rounds used by the underlying (pre-compiler) estimator.
Now define $\mathsf{Est}_{i,j}^{\alpha,\beta,\tau}$:
\begin{enumerate}
    \item Refresh: measure $\{\Pi_\psi,I-\Pi_\psi\}$ using $R$.
    If reject, return a failure flag.
    \item Conditioned on refresh success (so the program state is exactly $\ket{\psi}$), run $m_0$ MW rounds for $A_{i,j}$ and write outcome bits to fresh registers $Z_1,\ldots,Z_{m_0}$.
    \item Let $S=\sum_{\ell=1}^{m_0} Z_\ell$, and sample $\eta\sim \mathrm{Lap}(\sigma)$.
    If $|\eta|>\alpha m_0/4$, return failure.
    Otherwise compute
    \[
    \widetilde p_{\mathrm{cont}}=\operatorname{clip}_{[0,1]}\!\left(\frac{S+\eta}{m_0}\right),
    \]
    then output the quantized value
    \[
    \widetilde p\colonequals Q_r(\widetilde p_{\mathrm{cont}}),
    \]
    where $Q_r$ rounds to the nearest point of $\mathcal{Y}_r=\{0,2^{-r},\ldots,1\}$ and $r$ is chosen so $2^{-r}\le \alpha/4$.
\end{enumerate}

Because refresh-success inputs are exactly $\ket{\psi}$ and $\ket{\psi}$ is an eigenvector of $Q_{i,j}$ with eigenvalue $p_{i,j}^\star$, \Cref{thm:mw-empirical-estimator} gives i.i.d. Bernoulli bits with mean $p_{i,j}^\star$.
Therefore, for the underlying (pre-compiler) call,
\begin{align*}
\Pr\!\left[\text{failure flag}\ \text{or}\ \left|\widetilde p-p_{i,j}^\star\right|>\alpha\right]
&\le
\Pr\!\left[\frac{|\eta|}{m_0}>\frac{\alpha}{4}\right]
+
\Pr\!\left[\left|\widetilde p-p_{i,j}^\star\right|>\alpha\ \wedge\ \frac{|\eta|}{m_0}\le\frac{\alpha}{4}\right]\\
&\le
\Pr\!\left[\left|\frac{S}{m_0}-p_{i,j}^\star\right|>\frac{\alpha}{4}\right]
+\Pr\!\left[\frac{|\eta|}{m_0}>\frac{\alpha}{4}\right]\\
&\le 2e^{-m_0\alpha^2/8}+e^{-\alpha m_0/(4\sigma)},
\end{align*}
where the second inequality uses that, on the event
\[
\left|\frac{S}{m_0}-p_{i,j}^\star\right|\le\frac{\alpha}{4}
\quad\text{and}\quad
\frac{|\eta|}{m_0}\le\frac{\alpha}{4},
\]
we have
\[
\left|\widetilde p_{\mathrm{cont}}-p_{i,j}^\star\right|\le\frac{\alpha}{2},
\]
and then quantization contributes at most $2^{-r}\le \alpha/4$, so
\[
\left|\widetilde p-p_{i,j}^\star\right|\le\frac{3\alpha}{4}<\alpha.
\]
The output alphabet is finite ($\mathcal{Y}_r\cup\{\bot\}$), as required by \Cref{thm:ar19-safe-estimation}.

Let
\[
q\colonequals \Pr[|\eta|>\alpha m_0/4]=e^{-\alpha m_0/(4\sigma)}.
\]
Because the failure flag is triggered exactly by the event $|\eta|>\alpha m_0/4$, and $\eta$ is sampled independently of the program register, the pass probability is exactly $1-q$ for every component in Theorem~\ref{thm:ar19-safe-estimation}(2).

Choosing
\[
\sigma=\Theta(\sqrt{m_0}/\tau),\qquad
m_0=\Theta\!\left(\frac{1}{\alpha^2}\log\frac1\beta+\frac{1}{\alpha^2\tau^2}\log^2\frac1{\beta\tau}\right),
\]
ensures
\[
2e^{-m_0\alpha^2/8}+q\le \beta/3,\qquad q\le \sqrt{\tau}.
\]
Thus the reference estimator has call-failure probability at most $\beta/3$.

Now apply \Cref{thm:ar19-safe-estimation} directly to this reference estimator with continuation rule
\[
\mathcal{C}(y)=
\begin{cases}
\mathrm{pass} & y\neq \bot,\\
\mathrm{fail} & y=\bot.
\end{cases}
\]
and compiler parameters $(1/2,\beta/3)$.
Let $\widetilde{\mathcal{A}}_{i,j}$ denote the compiled subroutine.
By Theorem~\ref{thm:ar19-safe-estimation}(1), $\widetilde{\mathcal{A}}_{i,j}$ disagrees with the reference estimator with probability at most $\beta/3$, so by union bound
\[
\Pr[\widetilde{\mathcal{A}}_{i,j}\ \text{call fails}]
\le
\beta/3+\beta/3
<
\beta.
\]

For disturbance, consider any joint pure input state as in Theorem~\ref{thm:ar19-safe-estimation}(2).
Because the continuation event is exactly $|\eta|\le \alpha m_0/4$, independent of the program component, every component has pass probability $1-q$.
Hence $p_i=1-q$ for all $i$, so $p=1-q$.
Conditioned on pass, Theorem~\ref{thm:ar19-safe-estimation}(2) gives
\[
\left\|\ket{\Psi}-\ket{\widetilde{\Psi}}\right\|_2
\le
\frac12(1-p)^2
=
\frac{q^2}{2}
\le
\tau.
\]
Applying \Cref{lem:l2-to-reduced-trace} yields the same $\tau$ bound on reduced program-register trace distance.
Therefore the compiled subroutine controls disturbance for the \emph{entire} call (MW rounds plus Laplace readout), even with arbitrary side information.

Adding the AR19 compiler overhead
\[
m_{\mathrm{wrap}}=\poly[1/\alpha,1/\beta,1/\tau],
\]
yields:
\begin{enumerate}
    \item per-call failure probability (failure flag or additive error $>\alpha$) at most $\beta$;
    \item post-call program-state disturbance at most $\tau$ from $\ketbra{\psi}$ on refresh-success executions;
    \item time/query complexity
    \[
    \poly[|P|,m_0+m_{\mathrm{wrap}}]=\poly[|P|,1/\alpha,1/\beta,1/\tau].
    \]
\end{enumerate}

\paragraph{Attack algorithm.}
Set
\[
\alpha\colonequals \epsilon/4,\qquad
\beta\colonequals \delta/(2N),\qquad
\tau\colonequals \delta/(2N).
\]
For $k=1,\ldots,N$, run $\mathsf{Est}_{i_k,j_k}^{\alpha,\beta,\tau}$.
If it returns failure, halt and output failure.
Otherwise, with returned value $\widetilde p_k$, output
\[
\widehat\mu_{i_k,j_k}\colonequals 2\widetilde p_k-1.
\]

\paragraph{Failure probability.}
The first refresh succeeds with probability $1$ because the initial program state is $\ket{\psi}$.
After a successful call, property (2) above gives
\[
\frac12\left\|\rho'-\ketbra{\psi}\right\|_1\le\tau.
\]
Hence the next refresh fails with probability at most
\[
1-\Tr(\Pi_\psi\rho')\le \frac12\left\|\rho'-\ketbra{\psi}\right\|_1\le\tau.
\]
Each successful-refresh call has call-failure probability at most $\beta$.
By union bound over all $N$ calls, total failure probability is at most
\[
N\tau+N\beta=\delta.
\]

\paragraph{Accuracy on success.}
Condition on no failure.
Then every call satisfies
\[
\left|\widetilde p_k-p_{i_k,j_k}^\star\right|\le \alpha.
\]
Therefore, for each $k$,
\[
\left|\widehat\mu_{i_k,j_k}-\mu_{i_k,j_k}^\star\right|
=2\left|\widetilde p_k-p_{i_k,j_k}^\star\right|
\le 2\alpha
=\epsilon/2
<\epsilon.
\]
So all $N=tn$ estimates are simultaneously $\epsilon$-accurate with probability at least $1-\delta$.

Finally, total running time and query complexity are
\[
N\cdot \poly[|P|,1/\alpha,1/\beta,1/\tau]
=\poly[|P|,t,n,1/\epsilon,1/\delta].
\]
\end{proof}

% \ifllncs
% \section{Deferred Proofs}
% \input{J-c-outputs-0}
% \input{F-A-e-outputs-1}
% \input{I-d-is-injective}
% \input{H-d-is-lossy}
% \input{K-sim-for-SEQ-oracle}
% % \input{E-Sim-prime-equiv-SEQ}
% % \input{G-SEQ-equals-SEQ-canonical}
% \input{D-ideal-sqO}
% \else
% \fi

\end{document}